\documentclass[11pt,letterpaper]{article}


\def\stocmode{0}

\def\jamesmode{0}
\def\fastmode{0}
\def\nnrmode{1}
\def\entmaxmode{0}
\def\showauthornotes{0}
\def\showtableofcontents{1}
\def\showkeys{0}
\def\showdraftbox{1}
\def\showcolorlinks{1}
\def\usemicrotype{1}
\def\showfixme{1}


\ifnum\fastmode=0
\usepackage{etex}
\fi


\ifnum\fastmode=0
\usepackage[l2tabu, orthodox]{nag}
\fi


\usepackage{xspace,enumerate}

\usepackage[dvipsnames]{xcolor}

\ifnum\fastmode=0
\usepackage[T1]{fontenc}
\usepackage[full]{textcomp}
\fi

\usepackage[american]{babel}


\usepackage{mathtools}




\usepackage{amsthm}

\newtheorem{theorem}{Theorem}[section]
\newtheorem*{theorem*}{Theorem}

\newtheorem{proposition}[theorem]{Proposition}
\newtheorem*{proposition*}{Proposition}
\newtheorem{lemma}[theorem]{Lemma}
\newtheorem*{lemma*}{Lemma}
\newtheorem{corollary}[theorem]{Corollary}
\newtheorem*{conjecture*}{Conjecture}
\newtheorem{fact}[theorem]{Fact}
\newtheorem*{fact*}{Fact}

\newtheorem*{hypothesis*}{Hypothesis}

\theoremstyle{definition}
\newtheorem{definition}[theorem]{Definition}

\theoremstyle{remark}

\newtheorem*{claim*}{Claim}
\newtheorem{remark}[theorem]{Remark}
\newtheorem*{remark*}{Remark}

\newtheorem*{observation*}{Observation}


\usepackage[
letterpaper,
top=0.7in,
bottom=0.9in,
left=1in,
right=1in]{geometry}



\ifnum\jamesmode=0
\usepackage{pxfonts} 
\usepackage{textcomp} 
\usepackage{amsmath}
\usepackage[scr=rsfso]{mathalfa}
\usepackage{bm} 
\linespread{1.08}
\fi

\ifnum\jamesmode=1
\usepackage{lmodern}
\usepackage{textcomp} 
\usepackage{amsmath,amssymb}
\usepackage[scr=rsfso]{mathalfa}
\usepackage{bm}
\fi


\ifnum\showkeys=1
\usepackage[color]{showkeys}
\fi


\makeatletter
\@ifpackageloaded{hyperref}
{}
{\usepackage[pagebackref]{hyperref}}

\ifnum\showcolorlinks=1
\hypersetup{
pagebackref=true,
colorlinks=true,
urlcolor=blue,
linkcolor=blue,
citecolor=OliveGreen}
\fi

\ifnum\showcolorlinks=0
\hypersetup{
pagebackref=true,
colorlinks=false,
pdfborder={0 0 0}}
\fi

\usepackage{prettyref}


\newcommand{\savehyperref}[2]{\texorpdfstring{\hyperref[#1]{#2}}{#2}}

\newrefformat{eq}{\savehyperref{#1}{\textup{(\ref*{#1})}}}
\newrefformat{lem}{\savehyperref{#1}{Lemma~\ref*{#1}}}
\newrefformat{def}{\savehyperref{#1}{Definition~\ref*{#1}}}
\newrefformat{thm}{\savehyperref{#1}{Theorem~\ref*{#1}}}
\newrefformat{cor}{\savehyperref{#1}{Corollary~\ref*{#1}}}
\newrefformat{cha}{\savehyperref{#1}{Chapter~\ref*{#1}}}
\newrefformat{sec}{\savehyperref{#1}{Section~\ref*{#1}}}
\newrefformat{app}{\savehyperref{#1}{Appendix~\ref*{#1}}}
\newrefformat{tab}{\savehyperref{#1}{Table~\ref*{#1}}}
\newrefformat{fig}{\savehyperref{#1}{Figure~\ref*{#1}}}
\newrefformat{hyp}{\savehyperref{#1}{Hypothesis~\ref*{#1}}}
\newrefformat{alg}{\savehyperref{#1}{Algorithm~\ref*{#1}}}
\newrefformat{rem}{\savehyperref{#1}{Remark~\ref*{#1}}}
\newrefformat{item}{\savehyperref{#1}{Item~\ref*{#1}}}
\newrefformat{step}{\savehyperref{#1}{step~\ref*{#1}}}
\newrefformat{conj}{\savehyperref{#1}{Conjecture~\ref*{#1}}}
\newrefformat{fact}{\savehyperref{#1}{Fact~\ref*{#1}}}
\newrefformat{prop}{\savehyperref{#1}{Proposition~\ref*{#1}}}
\newrefformat{prob}{\savehyperref{#1}{Problem~\ref*{#1}}}
\newrefformat{claim}{\savehyperref{#1}{Claim~\ref*{#1}}}
\newrefformat{relax}{\savehyperref{#1}{Relaxation~\ref*{#1}}}
\newrefformat{red}{\savehyperref{#1}{Reduction~\ref*{#1}}}
\newrefformat{part}{\savehyperref{#1}{Part~\ref*{#1}}}


\newcommand{\Sref}[1]{\hyperref[#1]{\S\ref*{#1}}}

\usepackage{nicefrac}

\let\nfrac=\nicefrac



\ifnum\fastmode=0
\ifnum\usemicrotype=1
\usepackage{microtype}
\fi
\fi

\ifnum\showauthornotes=1
\newcommand{\Authornote}[2]{{\sffamily\small\color{red}{[#1: #2]}}}
\newcommand{\Authornotecolored}[3]{{\sffamily\small\color{#1}{[#2: #3]}}}
\newcommand{\Authorcomment}[2]{{\sffamily\small\color{gray}{[#1: #2]}}}
\newcommand{\Authorstartcomment}[1]{\sffamily\small\color{gray}[#1: }

\newcommand{\Authorfnote}[2]{\footnote{\color{red}{#1: #2}}}
\newcommand{\Authorfixme}[1]{\Authornote{#1}{\textbf{??}}}
\newcommand{\Authormarginmark}[1]{\marginpar{\textcolor{red}{\fbox{\Large #1:!}}}}
\else
\newcommand{\Authornote}[2]{}
\newcommand{\Authornotecolored}[3]{}
\newcommand{\Authorcomment}[2]{}
\newcommand{\Authorstartcomment}[1]{}

\newcommand{\Authorfnote}[2]{}
\newcommand{\Authorfixme}[1]{}
\newcommand{\Authormarginmark}[1]{}
\fi

\newcommand{\Dnote}{\Authornote{D}}

\newcommand{\Jcomment}{\Authorcomment{J}}
\newcommand{\Dcomment}{\Authorcomment{D}}

\newcommand{\Pnote}{\Authornote{P}}

\newcommand{\Jnote}{\Authornote{J}}




\ifnum\showfixme=0

\fi

\usepackage{boxedminipage}


\newcommand{\Paren}[1]{\left(#1\right)}

\newcommand{\Bigparen}[1]{\Big(#1\Big)}

\newcommand{\Brac}[1]{\left[#1\right]}


\newcommand{\Abs}[1]{\left\lvert#1\right\rvert}

\newcommand{\card}[1]{\lvert#1\rvert}

\newcommand{\set}[1]{\{#1\}}

\newcommand{\norm}[1]{\lVert#1\rVert}






\newcommand{\iprod}[1]{\langle#1\rangle}

\newcommand{\Esymb}{\mathbb{E}}
\newcommand{\Psymb}{\mathbb{P}}

\DeclareMathOperator*{\E}{\Esymb}

\DeclareMathOperator*{\ProbOp}{\Psymb}

\renewcommand{\Pr}{\ProbOp}







\newcommand{\suchthat}{\;\middle\vert\;}



\newcommand{\textparen}[1]{\text{(#1)}}

\ifx\because\undefined
\newcommand{\because}[1]{\textparen{because #1}}
\else
\renewcommand{\because}[1]{\textparen{because #1}}
\fi




\newcommand{\bits}{\{0,1\}}










\newcommand{\vbig}{\vphantom{\bigoplus}}


\newcommand{\defeq}{\stackrel{\mathrm{def}}=}

\newcommand{\seteq}{\mathrel{\mathop:}=}

\newcommand{\from}{\colon}



\newcommand{\mper}{\,.}
\newcommand{\mcom}{\,,}

\newcommand\bdot\bullet


\ifx\mathds\undefined 
\newcommand{\Ind}{\mathbb I}
\else
\newcommand{\Ind}{\mathds 1}
\fi




\DeclareMathOperator{\Tr}{Tr}

\DeclareMathOperator{\opt}{opt}









\newcommand{\N}{\mathbb N}
\newcommand{\R}{\mathbb R}

\newcommand{\Rnn}{\R_+}



\newcommand{\problemmacro}[1]{\texorpdfstring{\textup{\textsc{#1}}}{#1}\xspace}

\newcommand{\pnum}[1]{{\footnotesize #1}}

\newcommand{\ug}{\problemmacro{UG}}

\newcommand{\maxcut}{\problemmacro{max cut}}

\newcommand{\maxthreesat}{\problemmacro{max \pnum{3}-sat}}


\newcommand{\cC}{\mathcal C}
\newcommand{\cD}{\mathcal D}

\newcommand{\cF}{\mathcal F}
\newcommand{\cG}{\mathcal G}

\newcommand{\cL}{\mathcal L}
\newcommand{\cM}{\mathcal M}

\newcommand{\cS}{\mathcal S}
\newcommand{\cT}{\mathcal T}

\newcommand{\scrP}{\mathscr P}

\newcommand{\bbR}{\mathbb R}

\newcommand{\bbE}{\mathbb E}

\renewcommand{\leq}{\leqslant}
\renewcommand{\le}{\leqslant}
\renewcommand{\geq}{\geqslant}
\renewcommand{\ge}{\geqslant}

\ifnum\showdraftbox=1

\else

\fi


\let\epsilon=\varepsilon

\numberwithin{equation}{section}



\newcommand\MYcurrentlabel{xxx}

\newcommand{\MYstore}[2]{%
  \global\expandafter \def \csname MYMEMORY #1 \endcsname{#2}%
}

\newcommand{\MYload}[1]{%
  \csname MYMEMORY #1 \endcsname%
}

\newcommand{\MYnewlabel}[1]{%
  \renewcommand\MYcurrentlabel{#1}%
  \MYoldlabel{#1}%
}

\newcommand{\MYdummylabel}[1]{}

\newcommand{\torestate}[1]{%
  \let\MYoldlabel\label%
  \let\label\MYnewlabel%
  #1%
  \MYstore{\MYcurrentlabel}{#1}%
  \let\label\MYoldlabel%
}

\newcommand{\restatetheorem}[1]{%
  \let\MYoldlabel\label
  \let\label\MYdummylabel
  \begin{theorem*}[Restatement of \prettyref{#1}]
    \MYload{#1}
  \end{theorem*}
  \let\label\MYoldlabel
}

\newcommand{\restatelemma}[1]{%
  \let\MYoldlabel\label
  \let\label\MYdummylabel
  \begin{lemma*}[Restatement of \prettyref{#1}]
    \MYload{#1}
  \end{lemma*}
  \let\label\MYoldlabel
}

\newcommand{\restateprop}[1]{%
  \let\MYoldlabel\label
  \let\label\MYdummylabel
  \begin{proposition*}[Restatement of \prettyref{#1}]
    \MYload{#1}
  \end{proposition*}
  \let\label\MYoldlabel
}

\newcommand{\restatefact}[1]{%
  \let\MYoldlabel\label
  \let\label\MYdummylabel
  \begin{fact*}[Restatement of \prettyref{#1}]
    \MYload{#1}
  \end{fact*}
  \let\label\MYoldlabel
}

\newcommand{\restate}[1]{%
  \let\MYoldlabel\label
  \let\label\MYdummylabel
  \MYload{#1}
  \let\label\MYoldlabel
}


\newcommand{\addreferencesection}{
  \phantomsection
\ifnum\stocmode=0
  \addcontentsline{toc}{section}{References}
\else
  \addcontentsline{toc}{section}{References \hspace*{1in} --------- End of extended abstract ---------}
\fi

}


\newcommand{\sse}{\subseteq}

\newcommand{\e}{\epsilon}
\newcommand{\eps}{\epsilon}


\let\origparagraph\paragraph
\renewcommand{\paragraph}[1]{\vspace*{-7pt}\origparagraph{#1.}}


\allowdisplaybreaks


\sloppy



\usepackage{paralist}


\usepackage{comment}


\let\pref=\prettyref

\newcommand*{\vertiii}[1]{{\left\vert\kern-0.25ex\left\vert\kern-0.25ex\left\vert #1
    \right\vert\kern-0.25ex\right\vert\kern-0.25ex\right\vert}}

\newcommand*{\qe}[2]{S(#1\,\|\, #2)}
\newcommand*{\ce}[2]{D(#1\,\|\, #2)}

\renewcommand{\Ind}{\bm 1}
\ifnum\jamesmode=0
\fi
\let\1\Ind

\newcommand*{\hi}{\mathsf{high}}
\newcommand*{\low}{\mathsf{low}}

\newcommand*{\inst}{\Im}

\newcommand*{\gtwo}{\gamma_{2\to 2}^{\mathsf{psd}}}
\newcommand*{\psdrank}{\mathrm{rk}_{\mathsf{psd}}}
\newcommand*{\apsdrank}{\rho_{\mathsf{psd}}}
\newcommand*{\nnrank}{\mathrm{rk}_{+}}
\newcommand*{\sosdeg}{\deg_{\mathsf{sos}}}
\newcommand*{\juntadeg}{\deg_{\mathsf{J}}}

\newcommand{\polytope}[1]{\text{\textup{\textsc{#1}}}}

\newcommand*{\corr}{\polytope{corr}}
\newcommand*{\tsp}{\polytope{tsp}}
\newcommand*{\cut}{\polytope{cut}}
\newcommand*{\stab}{\polytope{stab}}

\newcommand*{\maxpi}{\problemmacro{max-$\Pi$}}
\newcommand*{\maxscrP}{\problemmacro{max-$\scrP$}}

\newcommand*{\sleq}{\preceq}
\newcommand*{\sgeq}{\succeq}

\DeclareMathOperator{\Id}{\mathrm{Id}}


\newcommand*{\uId}{\mathcal{U}}

\newcommand*{\annotaterel}[2]{\stackrel{\scriptscriptstyle\text{#1}}{#2}}


\setcounter{page}{1}

\title{{\bf Lower bounds on the size of semidefinite
\\ programming relaxations}
\ifnum\stocmode=1
\\ {\small\sf [extended abstract]}
\fi
}

\author{%
James R. Lee\thanks{University of Washington.}
\and Prasad Raghavendra\thanks{UC Berkeley.}
\and David Steurer\thanks{Cornell University.}
}

\date{}

\begin{document}

\maketitle
\thispagestyle{empty}

\begin{abstract}

We introduce a method for proving lower bounds on the efficacy of semidefinite programming (SDP) relaxations
for combinatorial problems.  In particular, we show that
the cut, TSP, and stable set polytopes on $n$-vertex graphs
are not the linear image of the feasible region of any SDP (i.e., any spectrahedron) of dimension less than $2^{n^{\delta}}$,
for some constant $\delta > 0$.
This result yields the first super-polynomial lower bounds
on the semidefinite extension complexity of any explicit family of polytopes.

Our results follow from a general
technique for proving lower bounds on the positive semidefinite rank
of a matrix.  To this end, we establish a close connection between arbitrary SDPs and those arising from
the sum-of-squares SDP hierarchy.  For approximating maximum constraint satisfaction problems,
we prove that SDPs of  polynomial-size are equivalent in power to those
arising from degree-$O(1)$ sum-of-squares relaxations.
This result implies, for instance, that no family of polynomial-size SDP relaxations
can achieve better than a $7/8$-approximation for \textsc{max {\footnotesize 3}-sat}.

\bigskip
\noindent
{\small \textbf{Keywords:}
   semidefinite programming, sum-of-squares method, lower bounds on positive-semidefinite rank, approximation complexity, quantum learning, polynomial optimization.
}
\end{abstract}

\clearpage

\ifnum\showtableofcontents=1
{
\tableofcontents
\thispagestyle{empty}
 }
\fi

\clearpage

\section{Introduction}

Convex characterizations and relaxations of combinatorial problems have been
a consistent, powerful theme in the theory of algorithms since its inception.
Linear and semidefinite programming relaxations have been particularly useful
for the efficient computation of approximate solutions to NP-hard problems (see,
for instance, the books \cite{WS11,VaziraniBook01}).
In some sense, semidefinite programs (SDPs) can be seen as combining
the rich expressiveness of linear programs with the global geometric power of spectral methods.
For many fundamental combinatorial problems, this provides a genuinely new
structural and computational perspective \cite{GW95,KMS98,ARV09}.
Indeed, for an array of optimization problems, the best-known approximation algorithms can
only be achieved via SDP relaxations.

It has long been known that integrality gaps for linear programs (LPs) can often
lead to gadgets for NP-hardness of approximation reductions (see, e.g., \cite{DBLP:conf/stoc/LundY93,kcenter05,HK03}).
Furthermore, assuming the Unique Games Conjecture \cite{DBLP:conf/stoc/Khot02a},
it is known that integrality gaps for SDPs can
be translated directly into hardness of approximation results \cite{DBLP:conf/focs/KhotKMO04,Austrin10,Raghavendra08}.
All of this suggests that the computational model underlying LPs and SDPs is remarkably powerful.

Thus it is a natural (albeit ambitious) goal to
characterize the computational power of this model.  If $\mathrm{P} \neq \mathrm{NP}$, we do not expect
to find polynomial-size families of SDPs that yield arbitrarily good approximations
to NP-hard problems.  (See \cite{Rothvoss2013,DBLP:conf/esa/BrietDP13} for a discussion
of how this follows formally from the assumption $\mathrm{NP} \nsubseteq \mathrm{P}/\mathrm{poly}$.)

In the setting of linear programs (LPs), the search for a model and characterization
began in a remarkable work of Yannakakis \cite{Yannakakis91}.
He proved that the TSP and matching polytopes do not admit \emph{symmetric} linear programming
formulations of size $2^{o(n)}$, where $n$ is the number of vertices
in the underlying graph.  In the process, he laid the structural framework (in terms
of nonnegative factorizations) that would underlie all future work in the subject.
It took over 20 years before Fiorini, Massar, Pokutta, Tiwary, and de Wolf \cite{FMPTW12}
were able to remove the symmetry assumption and obtain a lower bound of $2^{\Omega(\sqrt{n})}$ on the
size of any LP formulation.  Soon afterward, Rothvo\ss\ \cite{DBLP:conf/stoc/Rothvoss14}
gave a lower bound of $2^{\Omega(n)}$ on the size of any LP formulation for the matching polytope
(and also TSP), completing Yannakakis' vision.

\Dnote{IMPORTANT: as Prasad noted it would be good to highlight the computational model proposed in the work of Yannakakis.
In contrast to other studied computational models it captures the best known algorithms for many basic problems.
We can also say that nonnegative rank and psd rank lower bounds can in principle separate P and NP (if I remember the discussion with Prasad correctly, this claim might even be buried somewhere in Yannakakis's paper.).
We see this result as support for the richness of the computational model.}

\Jnote{Point out similarity to ``dense model theorems,'' related issues.}

Despite the progress in understanding the power of LP formulations,
it remained a mystery whether there were similar strong lower bounds in the setting of SDPs.
An analogous positive semidefinite factorization framework
was provided in \cite{FMPTW12,GouveiaParriloThomas2011}.
Following the LP methods of \cite{DBLP:conf/focs/ChanLRS13},
the papers  \cite{DBLP:conf/coco/LeeRST14,FSP14} proved
exponential
lower bounds  on the size of \emph{symmetric} SDP formulations for NP-hard constraint satisfaction problems (CSPs).

In the present work, we prove strong lower bounds
on the size of general SDP formulations for the cut, TSP, and stable set polytopes.
Moreover, we show that polynomial-size SDP relaxations cannot achieve
arbitrarily good approximations for many NP-hard constraint satisfaction problems.
For instance, no polynomial-size family of relaxations
can achieve better than a $7/8$-approximation for \maxthreesat.  More
generally, we show that the low-degree sum-of-squares SDP
relaxations yield the best approximation among all polynomial-sized
families of relaxations for max-CSPs.

This is achieved by relating arbitrary SDP formulations to those coming from
the sum-of-squares SDP hierarchy\footnote{This hierarchy
is also frequently referred to as the Lasserre SDP hierarchy.} \cite{Lasserre01,Parrilo00,Shor87}, analogous to
our previous work with Chan relating LP formulations to the Sherali--Adams hierarchy \cite{DBLP:conf/focs/ChanLRS13}.
The SDP setting poses a number of significant challenges.
At a very high level, our approach can be summarized as follows:  Given an arbitrary SDP formulation
of small size, we use methods from quantum entropy maximization and online convex optimization (often going by
the name ``matrix multiplicative weights update'') to \emph{learn} an approximate low-degree sum-of-squares
formulation on a subset of the input variables.
In the next section, we present a formal overview of our results, and a discussion
of the connections to quantum information theory, real algebraic geometry, and proof complexity.

\paragraph{Organization}  The results of this work fall along two
broad themes, lower bounds on spectrahedral lifts of specific
polytopes and lower bounds on SDP relaxations for constraint
satisfaction problems.
Both sets of results are consequences
of a general method for proving lower bounds
on positive semidefinite rank.
For the convenience of the reader, we have
organized the two themes in two self-contained trajectories.  Thus, lower
bounds on spectrahedral lifts can be accessed via
\pref{sec:spectr-lifts-polyt}, \pref{sec:sos-degree},
\pref{sec:proofoverview}, \pref{sec:theproof} and
\pref{sec:corr}.  The lower bounds for constraint
satisfaction problems can be reached through \pref{sec:optimization},
\pref{sec:sos-degree}, \pref{sec:proofoverview},
\pref{sec:theproof} and \pref{sec:maxcsp}.

We also present general results on approximating density operators
against families of linear tests through quantum learning in
\pref{sec:learning}.  Finally, in \pref{sec:nnr}, we exhibit applications of our
techniques to non-negative rank; in particular, this is used
to give a simple, self-contained proof of a lower bound on the nonnegative rank of the unique disjointness
matrix.

\Pnote{update the reference to nnr and approximation sections better}

\subsection{Spectrahedral lifts of polytopes}
\label{sec:spectr-lifts-polyt}
Polytopes are an appealing and useful way to encode many combinatorial optimization problems.
For example, the traveling salesman problem on $n$ cities is equivalent to optimizing linear functions over the traveling salesman polytope, i.e., the convex hull of characteristic vectors $\Ind_{C}\in \{0,1\}^{\binom n 2} \subseteq \mathbb R^{\binom n 2}$ of $n$-vertex Hamiltonian cycles $C$ (viewed as edge sets).
If a polytope admits polynomial-size LP or SDP formulations, then we can optimize linear functions over the polytope in polynomial time (exactly for LP formulations and up to arbitrary accuracy in the case of SDP formulations).
Indeed, a large number of efficient, exact algorithms for combinatorial
optimization problems can be explained by small LP or SDP formulations of the underlying polytope.
(For approximation algorithms, the characterization in terms of compact formulations of polytopes is not as direct \cite{DBLP:conf/focs/BraunFPS12}.
In \pref{sec:semid-relax-constr}, we will give a direct characterization for approximation algorithms in terms of the original combinatorial problem.)

\Dnote{should decide on psd vs PSD}

\paragraph{Positive semidefinite lifts}
Fix a polytope $P \subseteq \mathbb R^n$ (e.g., the traveling salesman polytope described above).
We are interested in the question of whether there exists a low-dimensional SDP that captures $P$.
Let $\cS_+^k$ denote
the cone of symmetric, $k \times k$ positive semidefinite matrices embedded
naturally in $\mathbb R^{k \times k}$.  If there exists
an affine subspace $\cL \subseteq \mathbb R^{k \times k}$ and a linear
map $\pi : \mathbb R^{k \times k} \to \mathbb R^n$ such
that
\[
P = \pi(\cS_+^k \cap \cL)\,,
\]
one says that $P$ admits a \emph{positive-semidefinite (psd) lift of size $k$.} (This terminology is taken from \cite{FGPRT14}.)
We remark that the intersection of a PSD cone with an affine subspace
is often referred to as a \emph{spectrahedron.}

The point is that in order to optimize a linear function $\ell\from \R^n\to \R$ over the polytope $P$, it is enough to optimize the linear function $\ell\circ \pi\from \R^{k\times k}\to \R$ over the set $\cS^k_+ \cap \cL$ instead,
\begin{equation*}\label{eq:sdp-opt}
\min_{x \in P}\, \ell(x) = \min_{y \in \cS_+^k \cap \cL}  \ell\circ \pi (y)\mper
\end{equation*}
Here, the optimization problem on the right is a semidefinite programming problem in $k$-by-$k$ matrices.
This idea also goes under the name of a \emph{semidefinite extended formulation} \cite{FMPTW12}.

\Dnote{TODO:
psd rank of polytopes is undefined at this point.
need to define something.}

\paragraph{The positive-semidefinite rank of explicit polytopes}
We define the \emph{positive-semidefinite (psd) rank} of a polytope $P$, denoted $\psdrank(P)$, to be the smallest number $k$ such that there exists a psd lift of size $k$.
(Our use of the word ``rank'' will make sense soon---see \pref{sec:sos-degree}.)
Bri\"et, Dadush, and Pokutta
\cite{DBLP:conf/esa/BrietDP13}
showed (via a counting argument)
that there exist 0/1 polytopes in $\R^n$ with exponential psd rank.
In this work, we prove the first super-polynomial lower bounds on the psd rank of explicit polytopes:
The correlation polytope  $\corr_n \subseteq \mathbb R^{n^2}$ is given by
\[
\corr_n = \mathrm{conv}\left(\{ xx^T : x \in \{0,1\}^n \}\right).
\]
In \pref{sec:psd-corr}, we show the following strong lower bound on its psd rank.
\begin{theorem}\label{thm:corr-lb}
For every $n \geq 1$, we have
\[
\psdrank(\corr_n) \geq 2^{\Omega(n^{2/13})}\,.
\]
\end{theorem}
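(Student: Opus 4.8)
The plan is to derive \pref{thm:corr-lb} from two ingredients developed elsewhere in the paper: the general method for lower-bounding positive-semidefinite rank via the sum-of-squares hierarchy (\pref{sec:sos-degree}, \pref{sec:theproof}), and a sum-of-squares degree lower bound for an explicit optimization problem naturally attached to $\corr_n$ (\pref{sec:corr}). Before invoking the machinery I would first explain why a purely combinatorial attack --- of the kind that yields the $2^{\Omega(n)}$ \emph{nonnegative} rank bound --- cannot work here. For each $a\in\{0,1\}^{n}$ the inequality $\langle 2\,\mathrm{diag}(a)-aa^{T},X\rangle\le 1$ is valid on $\corr_n$: at a vertex $X=xx^{T}$ the left-hand side equals $2t-t^{2}$ with $t=\langle a,x\rangle\in\{0,1,\dots,n\}$, and $2t-t^{2}\le 1$ for integers $t$, with slack exactly $(1-\langle a,x\rangle)^{2}$. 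Since the psd rank of a polytope equals the psd rank of its slack matrix and does not increase under passing to submatrices, $\psdrank(\corr_n)$ is at least the psd rank of the unique-disjointness pattern $M_n[a,x]=(1-\langle a,x\rangle)^{2}$ --- precisely the matrix whose nonnegative rank is $2^{\Omega(n)}$. But $M_n$ has a rank-one psd factorization of size only $n+1$: with $w_a=(1,-a_1,\dots,-a_n)$ and $w_x=(1,x_1,\dots,x_n)$ in $\R^{n+1}$ one has $M_n[a,x]=\langle w_a,w_x\rangle^{2}=\langle w_aw_a^{T},w_xw_x^{T}\rangle$. So no submatrix of a slack matrix can witness a super-polynomial bound, and the psd setting genuinely requires a new idea.

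The route is instead to exploit that a size-$r$ psd lift of $\corr_n$ yields an SDP of size $r$ optimizing linear functions over $\corr_n$, i.e.\ (approximately) solving the Boolean quadratic problem $\max_{x\in\{0,1\}^{n}}x^{T}Cx$, and to show this is impossible for small $r$ by comparison with sum-of-squares. Concretely I would: (i) pass, as in \pref{sec:spectr-lifts-polyt}, from a psd lift of $\corr_n$ to an SDP formulation of the associated Boolean optimization problem; (ii) apply the general theorem of \pref{sec:theproof} relating SDP size to sum-of-squares degree --- via the quantum entropy-maximization / matrix-multiplicative-weights argument, a formulation of size $r$ can be converted into a degree-$d$ pseudo-expectation $\pE$ over a subset of $m$ of the variables whose objective value is close to that of the SDP (hence to the true optimum), with $r$, $d$, and $m$ quantitatively linked (morally $r\ge 2^{d^{\Omega(1)}}$, after the losses in the learning step and in the passage from lifts to formulations); and (iii) invoke the sum-of-squares integrality gap of \pref{sec:corr} --- an explicit family of $O(1)$-local Boolean optimization instances on $m$ variables (of the Grigoriev / random-CSP type) whose degree-$d$ sum-of-squares value remains bounded away from the integral optimum for every $d\le m^{\Omega(1)}$. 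Steps (ii) and (iii) are inconsistent unless $r$ is large; chasing the exponents through the chain (psd lift $\to$ SDP formulation $\to$ degree-$d$ SOS on $m\le n^{\Theta(1)}$ variables $\to$ size bound $2^{\Omega(d^{\Omega(1)})}$) and optimizing the parameters gives $\psdrank(\corr_n)\ge 2^{\Omega(n^{2/13})}$; the peculiar exponent $2/13$ is nothing more than the composition of these polynomial losses.

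The main obstacle is step (ii), the general SDP-to-sum-of-squares conversion. In the LP world nonnegative factorizations of a slack matrix translate fairly transparently into Sherali--Adams-type certificates, but an arbitrary psd factorization is merely a collection of PSD matrices with no visible polynomial structure, so one cannot read off a pseudo-expectation directly. Recovering an honest low-degree sum-of-squares object requires the quantum entropy-maximization step --- running a matrix-multiplicative-weights / online-convex-optimization process on density operators against the linear tests coming from the SDP --- and both making this produce a valid degree-$d$ $\pE$ and keeping its quantitative overhead under control are the technical heart of the argument. Everything else --- the validity of the separating inequalities above, the slack-matrix identity, and substituting the sum-of-squares gap --- is comparatively routine.
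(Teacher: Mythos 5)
Your outline follows essentially the same route as the paper's proof: read a psd lift of $\corr_n$ as a psd factorization of a slack (sub)matrix, use the quantum entropy-maximization / matrix-multiplicative-weights argument plus random restriction to convert a small factorization into a low-degree sum-of-squares certificate, and contradict a sum-of-squares degree lower bound of Grigoriev's. The preliminary observation that the unique-disjointness pattern $(1-\langle a,x\rangle)^2$ has a rank-$(n+1)$ psd factorization is correct and is useful motivation. There are, however, two places where the outline is imprecise in a way that would derail the exponent chase rather than merely change constants.

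First, the hard function must be a nonnegative \emph{quadratic}: \pref{prop:quadric-intro} is what lets $M_n^f$ sit inside a slack matrix of $\corr_n$, and it requires $f(x)=\langle A,xx^T\rangle+a_0\ge 0$ on $\{0,1\}^m$. Your parenthetical ``Grigoriev / random-CSP type'' is therefore misleading: a random $k$-XOR instance with $k\ge 3$ gives a degree-$k$ polynomial, not a quadratic, and yields a psd rank lower bound for a matrix that has nothing to do with $\corr_n$. The function the paper uses is specifically Grigoriev's Knapsack tautology $f(x)=\bigl(\sum_{i=1}^m x_i - m/2\bigr)^2 - 1/4$ (for odd $m$), which is simultaneously a nonnegative quadratic and has $\sosdeg(f)=m+1$. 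Relatedly, your sentence ``no submatrix of a slack matrix can witness a super-polynomial bound'' overclaims: it is precisely a submatrix (namely $M_n^f$ for this Knapsack $f$) that does the job, and your own steps (ii)--(iii), once they localize to a degree-$d$ pseudo-expectation over $m$ of the $n$ coordinates, are implicitly lower-bounding the psd rank of that submatrix. What you have shown is that the unique-disjointness submatrix is useless, not that submatrices are.

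Second, the quantitative bound \eqref{eq:quantitative} in \pref{thm:gtwo-pseudodist} carries a factor $\lVert D\rVert_\infty^{\Theta(d)}$, where $D$ is the degree-$d$ pseudo-density witnessing $\sosdeg(f)>d$. If $\lVert D\rVert_\infty$ were allowed to be exponential in $m$ --- as it can be for a generic separating functional --- one would be forced to take $m\lesssim\log n$, and the lower bound degrades to $n^{O(\log n)}$, quasi-polynomial rather than $2^{n^{\Omega(1)}}$. Getting $2^{\Omega(n^{2/13})}$ requires recasting Grigoriev's separating linear functional as a pseudo-density with $\lVert D\rVert_\infty\le m^{3/2}$; this is \pref{thm:knapsack-pseudo}, and it involves writing the functional as a symmetric interpolation on Hamming levels and bounding the interpolation coefficients. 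That $\ell_\infty$ control is load-bearing and is what produces the exponent $2/13$; it is not accounted for by calling step (iii) ``comparatively routine.''
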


The importance of the correlation polytope $\corr_n$ lies in the fact that
a number of interesting polytopes from combinatorial optimization contain
a face that linearly projects to $\corr_n$ .  We first define a few different families
of polytopes and then recall their relation to $\corr_n$.

For $n \geq 1$,
let $K_n = ([n], {[n] \choose 2})$ be the complete graph on $n$ vertices.
For a set $S \subseteq [n]$, we use $\partial S \subseteq {[n] \choose 2}$ to denote the set of edges
with one endpoint in $S$ and the other in $\bar S$, and we use the notation $\1_{\partial S} \in \mathbb R^{n \choose 2}$
to denote the characteristic vector of $S$.  The \emph{cut polytope on $n$ vertices} is defined by
\[
\cut_n = \mathrm{conv} \left(\{ \1_{\partial S} : S \subseteq [n] \} \right).
\]
Similarly, if $\tau$ is a salesman tour of $K_n$ (i.e., a Hamiltonian cycle), we use $\1_{E(\tau)} \in \mathbb R^{n \choose 2}$ to
denote the corresponding indicator of the edges contained in $\tau$.  In that case, the \emph{TSP polytope} is given by
\[
\tsp_n = \mathrm{conv} \left(\{ \1_{E(\tau)} : \text{ $\tau$ is a Hamiltonian cycle} \} \right).
\]
Finally, consider an arbitrary $n$-vertex graph $G=([n], E)$.  We recall that a subset of vertices $S \subseteq [n]$
is an \emph{independent set} (also called a stable set) if there are no edges between vertices in $S$.  The \emph{stable set
polytope of $G$} is given by
\[
\stab_n(G) = \mathrm{conv} \left( \{\1_S \in \mathbb R^{n} : S \textrm{ is an independent set in $G$}\}\right).
\]
By results of \cite{DeSimone90} and \cite{FMPTW12} (see \pref{prop:polytope-relations}), \pref{thm:corr-lb} directly implies the following lower bounds on the  psd rank of the cut, TSP, and stable set polytopes.
\begin{corollary}
The following lower bounds hold for every $n \geq 1$,
\begin{align*}
& \psdrank(\cut_n) \geq 2^{\Omega(n^{2/13})}, \\
& \psdrank(\tsp_n) \geq 2^{\Omega(n^{1/13})}, \\
\max_{\textrm{$n$-vertex $G$}} & \psdrank(\stab_n(G))  \geq 2^{\Omega(n^{1/13})}.
\end{align*}
\end{corollary}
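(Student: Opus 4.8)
The plan is to leverage the fact that the psd rank of a polytope is monotone under the two elementary polyhedral operations of passing to a face and taking a linear image, and then to feed \pref{thm:corr-lb} through the standard reductions — recorded in \pref{prop:polytope-relations} — that realize $\corr_n$ (up to a polynomial blow-up in dimension) inside the cut, TSP, and stable-set polytopes. Concretely, the first step is to record the monotonicity lemma: if a polytope $Q$ is a linear image of a face $F$ of a polytope $P$, then $\psdrank(Q) \le \psdrank(P)$. This is immediate from the definition of a psd lift: given $P = \pi(\cS_+^k \cap \cL)$ and a face $F = P \cap \{\langle c,\cdot\rangle = c_0\}$ with $\langle c,\cdot\rangle \le c_0$ valid on $P$, the preimage of $F$ inside the spectrahedron is $\cS_+^k \cap \cL \cap \{\langle \pi^\ast c,\cdot\rangle = c_0\}$, which is again the intersection of $\cS_+^k$ with an affine subspace, so $F$ inherits a psd lift of size $k$; composing $\pi$ with the linear map defining $Q$ then exhibits a psd lift of $Q$ of size $k$.

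With monotonicity in hand, the corollary is assembled by invoking \pref{prop:polytope-relations}. By De Simone's covariance transformation \cite{DeSimone90}, $\corr_{n-1}$ and $\cut_n$ are affinely isomorphic, so $\psdrank(\cut_n) = \psdrank(\corr_{n-1}) \ge 2^{\Omega((n-1)^{2/13})} = 2^{\Omega(n^{2/13})}$, which is the first inequality (the shift from $n-1$ to $n$ only affects the hidden constant). For TSP and stable set I would use the classical reductions of \cite{FMPTW12}: the cut polytope $\cut_m$ is a linear projection of a face of $\tsp_N$ with $N = O(m^2)$, and likewise $\cut_m$ is a linear projection of a face of $\stab_N(G)$ for an explicit graph $G$ on $N = O(m^2)$ vertices. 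Applying the monotonicity lemma together with the already-proved bound $\psdrank(\cut_m) \ge 2^{\Omega(m^{2/13})}$, and choosing $m = \Theta(\sqrt{n})$, yields $\psdrank(\tsp_n) \ge 2^{\Omega(n^{1/13})}$ and $\max_{\text{$n$-vertex }G}\psdrank(\stab_n(G)) \ge 2^{\Omega(n^{1/13})}$.

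All of the genuine difficulty lives in \pref{thm:corr-lb}; within the corollary the only thing that needs checking is that the reductions imported from \cite{DeSimone90,FMPTW12} are indeed of the required ``linear image of a face'' form — as opposed to merely being polynomial-time reductions between the associated optimization problems — and that their dimension blow-up is at most quadratic. The quadratic blow-up is exactly what keeps the bound meaningful: the target estimate $2^{n^{\Omega(1)}}$ is preserved under replacing $n$ by $\Theta(\sqrt n)$, whereas an exponential blow-up would destroy it. Both the structural form and the quadratic size bound are precisely what \pref{prop:polytope-relations} supplies, so no work beyond combining it with the monotonicity lemma is needed. The main (and essentially only) obstacle is therefore confirming that the cited polyhedral reductions fit the ``face-plus-projection'' template; I expect this to be routine given the explicit constructions in those references.
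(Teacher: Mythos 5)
Your proposal is correct and follows the paper's intended (unwritten) argument: reduce to \pref{thm:corr-lb} via \pref{prop:polytope-relations} together with the monotonicity of psd rank under passing to a face and under linear projection, which you prove correctly straight from the definition of a psd lift. The only step left implicit is the monotonicity of $\psdrank(\tsp_n)$ and $\max_G \psdrank(\stab_n(G))$ in $n$ (fix a city as unused; add isolated vertices), which is needed to pass from the specific dimensions $a_m, b_m = O(m^2)$ supplied by \pref{prop:polytope-relations} to an arbitrary target dimension $n$, but this is routine.
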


\subsection{Semidefinite relaxations and constraint satisfaction}
\label{sec:optimization}
\label{sec:semid-relax-constr}
We now formalize a computational model of semidefinite relaxations for combinatorial optimization problems and prove strong lower bounds for it.
Unlike the polytope setting in the previous section, this model also allows us to capture approximation algorithms directly.
\Dnote{TODO: say that reader can skip this section if they are only interested in the polytope results}

Consider the following general optimization problem:%
\footnote{In this section, we restrict our discussion to optimization problems over the discrete cube.
Some of our results also apply to other problems, e.g., the traveling salesman problem (albeit only for exact algorithms).}
Given a low-degree function $f\from \{0,1\}^n\to \R$, represented by its coefficients as a multilinear polynomial,
\begin{equation}
\label{eq:hypercube-optimization}
  \begin{gathered}
    \text{maximize }f(x)\\
    \text{subject to } x\in \{0,1\}^n\mper
  \end{gathered}
\end{equation}
Many basic optimization problems are special cases of this general problem, corresponding to functions $f$ of a particular form:
For the problem of finding the maximum cut in a graph $G$ with $n$ vertices, the function $f$ outputs on input $x\in \{0,1\}^n$ the number of edges in $G$ that cross the bipartition represented by $x$, i.e., $f(x)$ is the number of  edges $\{i,j\}\in E(G)$ with $x_i\neq x_j$.
Similarly, for \maxthreesat on a 3CNF formula $\varphi$ with $n$ variables, $f(x)$ is the number of clauses in $\varphi$ satisfied by the assignment $x$.
More generally, for any $k$-ary boolean constraint satisfaction problem, the function $f$ counts the number of satisfied constraints.
Note that in these examples, the functions have at most degree $2$, degree $3$, and degree $k$, respectively.

Algorithms with provable guarantees for these kinds of problems---either implicitly or explicitly---certify upper bounds on the optimal value of instances.
(Indeed, for solving the decision version of these optimization problems, it is enough to provide such certificates.)
It turns out that the best-known algorithms for these problems are captured by certificates of a particularly simple form, namely sums of squares of low-degree polynomials.
The following upper bounds on problems of the form \pref{eq:hypercube-optimization} are equivalent to the relaxations obtained by the sum-of-squares SDP hierarchy \cite{Lasserre01,Parrilo00,Shor87}.
For $f : \{0,1\}^n \to \R$, we use $\deg(f)$ to denote the degree of the unique
multilinear real polynomial agreeing with $f$ on $\{0,1\}^n$; see \pref{sec:prelims}.

\newcommand{\sosub}{\overline{\textup{sos}}}

\Jnote{Need to be careful here about what degree means.}

\begin{definition}
  The \emph{degree-$d$ sum-of-squares upper bound} for a function $f \from \{0,1\}^n\to \R$, denoted $\sosub_d(f)$, is the smallest number $c\in \R$ such that $c-f$ is a sum of squares of functions of degree at most $d/2$, i.e., there exists functions $g_1,\ldots,g_t\from \{0,1\}^n\to \R$ for some $t\in \N$ with $\deg(g_1),\ldots,\deg(g_t)\le d/2$ such that the following identity between functions on the discrete cube holds:
  \begin{displaymath}
    c- f = g_1^2 + \cdots g_t ^2\mper
  \end{displaymath}
\end{definition}
Every function $f$ satisfies $\sosub_d(f)\ge \max (f)$ since sums of squares of real-valued functions are nonnegative pointwise.
For $d \geq 1$,
the problem of computing $\sosub_d(f)$ for a given function $f\from \{0,1\}^n\to \R$ (of degree at most $d$) is a semidefinite program of size
at most $1+ n^{d/2}$ (see, e.g., Theorem \ref{thm:gtwo-pseudodist}).%
\footnote{Moreover, for every $d\in \N$, there exists an $n^{O(d)}$-time algorithm based on the ellipsoid method that, given $f$, $c$, and $\e>0$, distinguishes between the cases $\sosub_d(f)\ge c$ and $\sosub_d(f)\le c-\e$ (assuming the binary encoding of  $f$, $c$, and $\e$ is bounded by $n^{O(d)}$).}

The $\sosub_d$ upper bound is equivalent to the degree-$d$
sum-of-squares (also known as the level-$d/2$ Lasserre) SDP bound, and for small values
of $d$, these upper bounds underlie the best-known approximation algorithms for several
optimization problems.
For example, the Goemans--Williamson algorithm for \maxcut is based on the upper bound $\sosub_2$.
If we let $\alpha_{GW}\approx 0.878$ be the approximation ratio of
this algorithm, then every graph $G$ satisfies $\max (f_G) \ge
\alpha_{GW}\cdot \sosub_2(f_G)$ where the function $f_G$ measures cuts in $G$, i.e., $f_G(x)\coloneq \sum_{ij\in E(G)}(x_i-x_j)^2$.

A natural generalization of low-degree sum-of-squares certificates is
obtained by summing squares of functions in a low-dimensional
subspace.  We can formulate this generalization as a {\it non-uniform}
model of computation that captures general semidefinite programming
relaxations.  First, we make the following definition for a subspace
of functions.

\begin{definition}
\label{def:subspace-u-sos-ub}
  For a subspace $U$ of real-valued functions on $\{0,1\}^n$, the \emph{subspace-$U$ sum-of-squares upper bound} for a function $f\from \{0,1\}^n\to \R$, denoted $\sosub_U(f)$, is the smallest number $c\in \R$ such that $c-f$ is a sum of squares of functions from $U$, i.e., there exist a $g_1,\ldots,g_t\in U$ such that $c-f=g_1^2+\cdots + g_t^2$ is an identity of functions on $\{0,1\}^n$.
\end{definition}

Here, the subspace $U$ can be thought of as ``non-uniform advice'' to an
algorithm, where its dimension $\dim(U)$ is the size of advice.
In fact, if we fix this advice $U$, the problem of computing $\sosub_U(f)$ for a given function $f$ has a semidefinite programming formulation of size $\dim(U)$.%
\footnote{Under mild conditions on the subspace $U$, there exists a boolean circuit of size $(\dim U)^{O(1)}$ that given a constant-degree function $f$, and number $c\in \R$ and $\e>0$, distinguishes between the cases $\sosub_U(f)\ge c$ and $\sosub_U(f)\le c-\e$ (assuming the bit encoding length of $f$, $c$, and $\e$ is bounded by $(\dim U)^{O(1)}$.).
Note that since we will prove lower bounds against this model, the possibility that some subspaces might not correspond to small circuits does not weaken our results.
}
Moreover, it turns out that the generalization captures, in a certain precise
sense, all possible semidefinite programming relaxations for
\pref{eq:hypercube-optimization}.
The dimension of the subspace corresponds to the size of the SDP.
See \pref{sec:sdp-model} for a detailed discussion of the model.

In this work, we exhibit unconditional lower bounds in
this powerful non-uniform model of computation.  For example, we show that
the \maxthreesat problem cannot be approximated to a factor better
than $7/8$ using a polynomial-size family of SDP relaxations.  Formally, we
show the following lower bound for \maxthreesat.
\begin{theorem} \label{thm:maxthreesat}
  For every $s>7/8$, there exists a constant $\alpha>0$ such that for every $n\in \N$ and every linear subspace $U$ of functions $f\from \{0,1\}^n\to \R$ with
  \begin{displaymath}
    \dim U \le n^{\alpha \frac{\log n}{\log \log n}}\mcom
  \end{displaymath}
  there exists a \maxthreesat instance $\inst$ on $n$ variables such that $\max(\inst)\le s$ but $\sosub_U(\inst)=1$ (i.e., $U$ fails to achieve a factor-$s$ approximation for \maxthreesat).
\end{theorem}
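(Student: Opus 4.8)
The plan is to obtain \pref{thm:maxthreesat} by combining two ingredients: (i) the general reduction, developed in \pref{sec:proofoverview}, \pref{sec:theproof} and specialized to constraint satisfaction in \pref{sec:maxcsp}, from an arbitrary small-dimensional subspace-$U$ relaxation (the non-uniform model of \pref{def:subspace-u-sos-ub}) to the degree-$d$ sum-of-squares hierarchy, and (ii) the classical linear-degree sum-of-squares (Lasserre) integrality gaps of Grigoriev and Schoenebeck, which show that even degree-$\Omega(m)$ sum of squares cannot refute a random $3$-XOR instance on $m$ variables, hence cannot certify anything better than the $7/8$ barrier for \maxthreesat. \textbf{Step 1} is the reduction: the claim to establish is that there is a universal constant $c$ such that, if $U$ is a subspace of functions on $\{0,1\}^n$ with $\dim U \le n^{cd}$ for some $d \le O(\log n/\log\log n)$, and if $\sosub_U$ achieves a factor-$s$ approximation for \maxthreesat on \emph{all} $n$-variable instances, then for every $\epsilon>0$ the bound $\sosub_{O_\epsilon(d)}$ achieves a factor-$(s-\epsilon)$ approximation on \maxthreesat instances over $m$ variables, where $m$ is polynomially related to $n$. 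The engine is the quantum-learning / matrix-multiplicative-weights procedure of \pref{sec:learning}: starting from the small psd factorization witnessing that $\dim U$ is small, and an adversary that at each round exhibits a violated low-degree sum-of-squares test, one learns after $O(\log \dim U)$ rounds a density operator on a random small block $S$ of variables that is $\epsilon$-consistent, against all degree-$O(d)$ tests, with a genuine degree-$O(d)$ pseudoexpectation $\pE$; this $\pE$ certifies that $\sosub_{O(d)}(\inst|_S)$ is essentially as large as $\sosub_U(\inst)$.

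\textbf{Step 2} records the hard instance. Let $\psi$ be a random $3$-XOR instance on $m$ variables with $\Delta m$ clauses, $\Delta=\Delta(s)$ a large constant, and let $\inst_0$ be the \maxthreesat instance obtained from $\psi$ by \Hastad's standard four-clause gadget, so that satisfying a XOR constraint satisfies all four gadget clauses while violating it satisfies exactly three. Then with high probability $\max(\inst_0) = \tfrac34 + \tfrac14\cdot\max_x(\text{XOR-value of }x) \le \tfrac34 + \tfrac14\bigl(\tfrac12 + O(\Delta^{-1/2})\bigr) \le s$ once $\Delta$ is large enough; on the other hand, by the Grigoriev/Schoenebeck bound, degree-$d$ sum of squares fails to refute $\psi$ for all $d\le \epsilon_\Delta\, m$, and hence $\sosub_d(\inst_0)=1$ for such $d$, i.e.\ degree-$d$ sum of squares certifies only the trivial bound $1$ on $\inst_0$.

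\textbf{Step 3} assembles the two. Given $U$ with $\dim U \le n^{\alpha\log n/\log\log n}$, choose $d = \Theta(\alpha\log n/\log\log n)$ so that $\dim U \le n^{cd}$, and let $m$ be polynomial in $n$ as in Step 1 (in particular $O(d)\le \epsilon_\Delta m$). Take $\inst$ to be the random $3$-XOR-gadget instance of Step 2 planted on the $n$ variables (padding with trivially satisfied clauses if $m<n$); then $\max(\inst)\le s$. I claim $\sosub_U(\inst)=1$. Suppose not, so $U$ certifies a nontrivial upper bound; feeding the corresponding subspace-$U$ sum-of-squares certificate into the Step 1 machinery produces, for a block $S$ selected by a union bound over the $\binom nm$ possible blocks against the small learning-failure probability, a degree-$O(d)$ certificate showing $\sosub_{O(d)}(\inst|_S)<1$. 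But $\inst|_S$ is again a random $3$-XOR-gadget instance on $|S|$ variables, which by Step 2 has $\sosub_{O(d)}(\inst|_S)=1$ (the generic block chosen by the union bound is one of those for which the Grigoriev/Schoenebeck lower bound applies) — a contradiction. Hence $\sosub_U(\inst)=1$, which is exactly the statement, with $\alpha$ determined by $c$, by $\Delta(s)$, and by the union-bound bookkeeping; it is the balance $\log\dim U$ versus $\log\binom nm$ that forces the exponent $\log n/\log\log n$.

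\textbf{Main obstacle.} The real work is entirely in Step 1: proving that an information-theoretically small psd ``advice'' matrix cannot, against the exponentially large family of all degree-$d$ sum-of-squares tests, achieve anything a degree-$O(d)$ pseudoexpectation cannot. Unlike the LP / Sherali--Adams setting, where one approximates the relaxation by a distribution over juntas, here one must run matrix multiplicative weights on density operators and control spectral concentration uniformly over the whole test family, while simultaneously keeping the learned block $S$ small enough that the union bound over choices of $S$ survives. It is precisely this tension — the quantum entropy budget $\log\dim U$ against the $\log\binom nm$ cost of the union bound — that produces the $\log n/\log\log n$ exponent, and making that trade-off as tight as claimed (rather than losing, say, an extra $\log\log n$ factor) is the delicate quantitative point.
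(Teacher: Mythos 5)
Your high-level plan --- combine the SDP-to-sos reduction with the Grigoriev--Schoenebeck linear-degree sos gap for \threesat and balance parameters --- matches the paper's (\pref{thm:csplogn} and the remark following it), but two of your specific claims are wrong in ways that break the argument as written. Quantitatively: the small hard gadget lives on $m\asymp\log n$ variables, not polynomially many in $n$, and the $\log n/\log\log n$ exponent is not produced by a ``quantum-entropy budget versus union bound'' trade-off. It is forced by the interaction of \pref{fact:pseudodist-fourier} (the degree-$d$ pseudo-density $D$ on $m$ variables witnessing the gap has $\|D\|_\infty\le 1+m^{d}$) with the quantitative bound \eqref{eq:quantitative}, which pays a multiplicative $\|D\|_\infty^{-3/2}$ and needs roughly $n\gtrsim m^{\Theta(d)}$ before the $n^{d/4}$ gain dominates. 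Balancing this against Grigoriev's degree-$\Omega(m)$ hardness is what gives $m\asymp\log n$ and $d\asymp\log n/\log\log n$. The learning step's entropy budget enters only as a logarithm in the degree bound \eqref{eq:degree-bound}; there is no union bound anywhere in the proof.

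Structurally, Step~3 as you wrote it does not yield a contradiction. You plant a single gadget instance $\inst$ on one $m$-block and pad, and then try to restrict a subspace-$U$ certificate for $\inst$ to a random block $S$. But for most of the $\binom{n}{m}$ subsets $S$, the restriction $\inst|_S$ overlaps the planted block only partially or not at all, and such an $\inst|_S$ is trivially (or almost trivially) satisfiable --- so nothing prevents $\sosub_{O(d)}(\inst|_S)<1$, and the claimed contradiction evaporates. The paper never plants one instance. The functional $L_D(N)=\E_S\E_x D(x_S)N(S,x)$ is an \emph{average} over $S$, and the matrix $M^f_n$ of \eqref{eq:restriction} has as its $S$-th row the slack $c-\inst_S(x)$ where $\inst_S$ is the hard gadget translated onto block $S$ --- so \emph{every} row is a hard instance. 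The averaging argument then shows $L_D$ separates $M^f_n$ from all low-psd-rank nonnegative matrices, and, via \pref{prop:intro-csp2} (the matrix $M^f_n$ being a submatrix of $M^{n,\Pi}_{c,s}$), the instance that a small $U$ fails on is some $\inst_S$ with $S$ depending on $U$ --- not the one you fixed in advance.
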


Our main result is a characterization of
an optimal semidefinite
programming relaxation for the class of constraint satisfaction
problems among all families of SDP relaxations of similar size.
Roughly speaking, we show that the $O(1)$-degree sum-of-squares relaxations
are {\it optimal} among all polynomial-sized SDP relaxations for constraint
satisfaction problems.
Towards stating our main result, we define the class of constraint
satisfaction problems.  For the sake of clarity, we restrict
ourselves to boolean constraint satisfaction problems although the
results hold in greater generality.
%

For a finite collection $\scrP$ of $k$-ary predicates $P\from \{0,1\}^k\to \{0,1\}$, we let \maxscrP denote the following optimization
problem:
An instance $\inst$ consists of boolean variables $X_1,\ldots,X_n$ and
a collection of $\scrP$-constraints $P_1(X)=1,\ldots,P_M(X)=1$ over
these variables.
A $\scrP$-constraint is a predicate $P_0\from \{0,1\}^n\to \{0,1\}$ such
that $P_0(X)=P(X_{i_1},\ldots,X_{i_k})$ for some $P\in\scrP$ and \emph{distinct} indices
$i_1,\ldots,i_k\in[n]$.
The objective is to find an assignment $x\in \{0,1\}^n$ that satisfies as
many of the constraints as possible, that is, which maximizes
\begin{displaymath}
  \inst(x)\defeq \frac{1}{M} \sum_{i=1}^M P_i(x)\mper
\end{displaymath}

We denote the optimal value of an assignment for $\inst$ as
$\opt(\inst)=\max_{x\in\{0,1\}^n}\inst(x)$.
For example, \maxcut corresponds to the case where $\scrP$ consists of
the binary inequality predicate.
For \maxthreesat, $\scrP$ contains all eight 3-literal disjunctions,
e.g., $X_1\vee \bar X_2\vee \bar X_3$.


Next, we discuss how to compare the quality of upper bound
certificates of the form $\sosub_U$.
Let $\Pi$ be a boolean CSP and let $\Pi_n$ be the restriction of $\Pi$
to instances with $n$ boolean variables.
As discussed before, the problem $\Pi_n$ could for example be \maxcut on graphs with $n$ vertices or \maxthreesat on formulas with $n$ variables.
We say that a subspace $U\subseteq \R^{\{0,1\}^n}$ achieves a \emph{$(c,s)$-approximation for $\Pi_n$} if every instance $\inst\in \Pi_n$ satisfies
\begin{equation}
\label{eq:c-s-approximation}
  \max(\inst)\le s \quad
  \Longrightarrow\quad \sosub_U(\inst)\le c \mper
\end{equation}
In other words, the upper bound $\sosub_U$ allows us to distinguish\footnotemark\ between the cases $\max(\inst)\le s$ and $\max(\inst)>c$ for all instances $\inst\in \Pi_n$.
\footnotetext{In order to distinguish between the cases
	$\max(\inst)\le s$ and $\max(\inst)>c$ it is enough to
	check whether $\inst$ satisfies $\sosub_U(\inst)\le c$.
In the case $\max(\inst)\le s$, we know that $\sosub_U(\inst)\le c$ by \pref{eq:c-s-approximation}.
On the other hand, in the case $\max(\inst)>c$, we know that $\sosub_U(\inst)>c$ because $\sosub_U(\inst)$ is always an upper bound on $\max(\inst)$.}

We prove the following theorem, which shows that for every
boolean CSP, the approximation guarantees obtained by the degree-$d$ sum-of-squares upper bound (also
known as the the level-$d/2$ Lasserre SDP relaxation) are optimal among all semidefinite programming relaxations of size at
most $n^{cd}$ for some universal constant $c > 0$.
\begin{theorem}\label{thm:csp-intro}
  Let $\Pi$ be boolean constraint satisfaction problem and let $\Pi_n$ be the set of instances of $\Pi$ on $n$ variables.
  Suppose that for some $m,d\in \N$, the subspace of degree-$d$ functions $f\from \{0,1\}^m\to\R$ fails to achieve a $(c,s)$-approximation for $\Pi_m$ (in the sense of \pref{eq:c-s-approximation}).
  Then there exists a number $\alpha=\alpha(\Pi_m,c,s)>0$ such that for all $n\in \N$, every subspace $U$ of functions $f\from \{0,1\}^n\to\R$ with $\dim U \le \alpha \cdot (n/\log n)^{d/4}$ fails to achieve a $(c,s)$-approximation for $\Pi_n$.
\end{theorem}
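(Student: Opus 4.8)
The plan is to argue the contrapositive. Fix $\Pi,m,d,c,s$ as in the hypothesis and let $\inst^\star\in\Pi_m$ witness the failure of degree-$d$ sum-of-squares, i.e.\ $\max(\inst^\star)\le s$ while $\sosub_d(\inst^\star)>c$. Let $U\subseteq\R^{\{0,1\}^n}$ be \emph{any} subspace achieving a $(c,s)$-approximation for $\Pi_n$; it suffices to show $\dim U>\alpha\cdot(n/\log n)^{d/4}$ for a suitable $\alpha=\alpha(\Pi_m,c,s)>0$. First I would unwind the model: since $U$ is $(c,s)$-approximating, every $\inst\in\Pi_n$ with $\max(\inst)\le s$ has a certificate $c-\inst=g_1^2+\cdots+g_t^2$ with $g_1,\dots,g_t\in U$, and reading this identity off at each assignment exhibits a positive-semidefinite factorization of psd-rank at most $\dim U+1$ of the slack matrix of $\Pi_n$ (rows indexed by instances with $\max\le s$, columns by assignments, entry $c-\inst(x)\ge0$). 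So the task reduces to a lower bound on this psd rank.

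Next I would embed the hard $m$-variable instance into $\Pi_n$ in all ways at once. For an injection $\sigma\from[m]\hookrightarrow[n]$ let $\sigma(\inst^\star)\in\Pi_n$ be the instance that renames variable $j$ to $\sigma(j)$ and leaves the remaining $n-m$ variables unconstrained; distinctness of indices within each constraint is preserved, so $\sigma(\inst^\star)$ is a legal $\Pi_n$-instance, and since the extra variables appear in no constraint $\max(\sigma(\inst^\star))=\max(\inst^\star)\le s$. Hence each $c-\sigma(\inst^\star)$ has a sum-of-squares certificate with squares from $U$, and the submatrix of the slack matrix with rows $\{c-\sigma(\inst^\star)\}_\sigma$ is invariant under the coordinate action of $S_n$ while still admitting a psd factorization of rank $r:=\dim U+1$.

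The heart of the proof is to \emph{learn}, from this symmetric family of psd-rank-$r$ certificates, a single certificate for $c-\sigma(\inst^\star)$ whose squared terms are juntas of few coordinates. This is precisely the general psd-rank lower-bound method of \pref{sec:proofoverview}--\pref{sec:theproof}: one runs a matrix multiplicative-weights / quantum entropy-maximization procedure that maintains a density operator on $\R^r$ and, as long as the current candidate is not a valid sum-of-squares representation, updates it along a violated linear test. Since the von Neumann entropy of a density operator on $\R^r$ is at most $\log r$ and the tests have controlled width, the procedure terminates quickly; averaging over the embeddings $\sigma$ lets every violated test be localized to a small random set of coordinates, so the junta-degree accumulated by the representation stays at most $k\le 2\log_{n/\log n}(\dim U)$, the factor $\log n$ in the denominator arising from the cost, over the course of the procedure, of keeping the $m$ special variables and the coordinates touched by the successive tests disjoint within $[n]$. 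A final closing step — legitimate because we work over the finite cube and the condition ``$\sosub_U(\cdot)\le c$'' is closed — converts the approximate object the weights update naturally produces into an exact identity. Restricting the unconstrained $n-m$ coordinates to zero and relabelling then yields $c-\inst^\star=h_1^2+\cdots+h_t^2$ on $\{0,1\}^m$ with each $h_i$ a junta of at most $k$ coordinates.

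To finish, a function on $\{0,1\}^m$ depending on at most $k$ coordinates has multilinear degree at most $k$, so this certificate witnesses $\sosub_{2k}(\inst^\star)\le c$. As $\sosub_{d'}(\inst^\star)$ is non-increasing in $d'$ and $\sosub_d(\inst^\star)>c$, we must have $2k>d$, i.e.\ $2\log_{n/\log n}(\dim U)>d/2$, which rearranges to $\dim U>(n/\log n)^{d/4}$ after absorbing the error terms and the finitely many small-$n$ cases (where no injection $[m]\hookrightarrow[n]$ exists) into $\alpha$; taking the contrapositive gives the theorem. The main obstacle is the learning step: bounding the \emph{junta-degree}, rather than merely the dimension, of the extracted certificate forces one to genuinely exploit the coordinate symmetry of the embedded family and to interleave the matrix multiplicative-weights analysis with a random-embedding argument, and one must additionally upgrade the approximate sum-of-squares object that the weights update yields to an exact polynomial identity over $\{0,1\}^m$.
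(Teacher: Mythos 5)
Your high-level framing — reduce via \pref{prop:intro-csp2} to a psd-rank lower bound for the slack matrix, embed the hard $m$-variable instance into $\Pi_n$ through all injections so that $M_n^f$ appears as a submatrix, and invoke the multiplicative-weights/entropy machinery of \pref{sec:proofoverview}--\pref{sec:theproof} — matches the paper. The core step as you describe it, however, does not go through, because you ask the learning procedure to produce something it cannot.

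You want to extract from the rank-$\dim U$ factorization an \emph{exact} identity $c-\inst^\star = h_1^2+\cdots+h_t^2$ with each $h_i$ a $k$-junta, and then derive $2k>d$. The multiplicative-weights step (\pref{lem:singletest}, \pref{thm:operator-approx-intro}) gives no such thing: it yields a density $\tilde Q = p(F)^2/\Tr(p(F)^2)$ satisfying only the scalar inequality $\Tr(F\tilde Q)\le\Tr(FQ)+\e$ against the aggregated test $F_x=\E_S D(x_S)P_S$. The pointwise identities $c-\sigma(\inst^\star)(x)=\Tr(P_S Q_x)$ are not preserved when $Q$ is replaced by $\tilde Q$; in particular $\Tr(P_S\tilde Q_x)$ need not be close to $f(x_S)$ for any fixed $(S,x)$, and is never exhibited as a sum of squares of juntas. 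Your ``closing'' step invokes closedness of $\{\inst : \sosub_U(\inst)\le c\}$, but closedness helps only if one has a sequence of genuine certificates converging to a limit; here one has a single approximate object that is not a certificate, and sending $\e\to 0$ blows up $\deg(p)$. The paper never takes this primal route. It stays dual: it fixes a degree-$d$ pseudo-density $D$ with $\E_x D(x)(c-\inst_0(x))<-\e$, forms the linear functional $L_D$, and proves (\pref{thm:psdrank-separation}) that $L_D(N)\ge -\e$ whenever $N$ has small psd rank — the learning error $\e$ is \emph{balanced against} the $-\e$ slack of the pseudo-density rather than eliminated. Relatedly, the reduction of effective degree from $\deg(p)\cdot m$ down to $\le d/2$ within the random $m$-set $S$ is Fourier-analytic (\pref{lem:matrix-2}: a random $m$-subset sees little of the high-degree mass of a degree-$\ell$ polynomial), not a junta-coordinate count over the iterations; and the $\log n$ enters through the relative-entropy bound $\qe{Q}{\uId}\lesssim\log(\psdrank(N)/\lVert N\rVert_1)$ that controls $\deg(p)$, not through keeping coordinates disjoint.
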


The theorem has several immediate concrete consequences for specific
boolean CSPs.  First, we know that $O(1)$-degree sos upper bounds do
not achieve an approximation ratio better than $7/8$ for \maxthreesat
\cite{Grigoriev2001, schoenebeck2008linear}, therefore \pref{thm:csp-intro}
implies that polynomial-size SDP relaxations for \maxthreesat cannot achieve an approximation ratio better than $7/8$.
In fact, a quantitatively stronger version of the above theorem yields \pref{thm:maxthreesat}.

Another concrete consequence of this theorem is that if there exists a polynomial-size family
of semidefinite programming relaxations for \maxcut that achieves an approximation ratio better than $\alpha_{\mathrm{GW}}$, then also a $O(1)$-degree sum-of-squares upper bound achieves such a ratio.
This assertion is especially significant in light of the notorious Unique Games
Conjecture one of whose implications is that it is NP-hard to
approximate \maxcut to a ratio strictly better than $\alpha_{GW}$.
%

\Jcomment{I comment the reference to the UGC.  It does not seem (to me) to follow from \pref{thm:csp-intro}.}


\Dnote{say for what problems the quantitative thing holds.}

%

\Jnote{Should we delay this discussion to the results section?  There are many things that are undefined
here, and I wouldn't be able to understand them as a reader.  For instance, a ``problem $\Pi_n$''  What is a problem?}



\Dnote{explain csps here. say what it means to have polynomial degree integrality gaps. cite the known examples. state the result that we get matching quasi polynomial lower bounds.}
\Pnote{If we state the above theorem for general problems, how about deferring the definition of csps and just directly
skipping to ``A concrete consequence}

\subsection{Positive semidefinite rank and sum-of-squares degree}
\label{sec:sos-degree}


In order to prove our results on spectrahedral lifts and semidefinite relaxations, the
factorization perspective will be essential.
In the  LP setting, the characterization of polyhedral lifts and LP relaxations in terms of nonnegative factorizations is a significant contribution of Yannakakis \cite{Yannakakis91}.
In the SDP setting, the analogous characterization is in terms of positive semidefinite factorizations
 \cite{FMPTW12,GouveiaParriloThomas2011}.

\begin{definition}[PSD rank]
Let $M \in \mathbb R_+^{p \times q}$ be a matrix with non-negative entries.  We say that $M$ admits
a \emph{rank-$r$ psd factorization} if there exist positive semidefinite matrices $\{A_i : i \in [p]\},
\{B_j : j \in [q]\} \subseteq \cS_+^r$ such that $M_{i,j}= \Tr(A_i B_j)$ for all $i \in [p], j \in [q]$.
We define $\psdrank(M)$ to be the smallest $r$ such that $M$ admits a rank-$r$ psd factorization.
We refer to this value as the \emph{psd rank of $M$.}
\end{definition}

\Dnote{try to explain here how psd factorizations are qualitatively different from nonnegative factorizations}
Nonnegative factorizations correspond to the special case that the matrices $\{A_i\}$ and $\{B_j\}$ are restricted to be diagonal.
A rank-$r$ nonnegative factorization can equivalently be viewed as a sum of $r$ rank-$1$ nonnegative factorizations (nonnegative rectangles).
Indeed, this viewpoint is crucial for all lower bounds on nonnegative factorization.
In contrast, rank-$r$ psd factorizations do not seem to admit a good characterization in terms of rank-$1$ psd factorizations.
This difference captures one of the main difficulties of proving psd rank lower bounds.

\paragraph{Main theorem}

Consider a nonnegative function $f\from \{0,1\}^n \to \R_+$ on the $n$-dimensional discrete cube.
We say that $f$ has a \emph{sum-of-squares (sos) certificate of degree $d$}
if there exist functions $g_1, \ldots, g_k\from \{0,1\}^n\to \R$
such that $\deg(g_1),\ldots,\deg(g_k) \leq d/2$, and $f(x)=\sum_{i=1}^k g_i(x)^2$
for all $x \in \{0,1\}^n$.
(Here, the $\deg(g)$ denotes the degree of $g$ as a multilinear polynomial.
We refer to Section \ref{sec:prelims} for the precise definition.)
We then define the \emph{sos degree of $f$}, denoted $\sosdeg(f)$, to be the minimal
$d$ such that $f$ has a degree-$d$ sos certificate.

This notion is closely related\footnote{For the sake of simplicity,
we have only defined this notion for functions on the discrete cube.
In more general settings, one has to be a bit more careful;
we refer to \cite{GV02}.}
 to (a special case of) the Positivstellensatz
proof system of Grigoriev and Vorobjov \cite{GV02}.
We refer to the surveys \cite{LaurentSurvey09, DBLP:journals/corr/BarakS14} and the introduction of \cite{OZ12}
for a review of such proof systems and their relationship to semidefinite programming.

With this notion in place, we can now present a representative theorem
that embodies our approach.
For a point $x \in \bbR^n$ and a subset $S \subseteq [n]$, we
denote by $x_S \in \bbR^{|S|}$ the vector $x_S = (x_{i_1}, x_{i_2}, \ldots, x_{i_{|S|}})$
where $S = \{i_1, i_2, \ldots, i_{|S|}\}$ and $i_1 < i_2 < \cdots < i_{|S|}$.
For a function $f : \{0,1\}^m \to \Rnn$ and a number $n \geq m$,
we define the following central object:
The $\binom{n}{m} \times 2^n$-dimensional real matrix $M_n^f$ is given by
\begin{equation}\label{eq:restriction}
M^f_n(S,x) \defeq f(x_S)\,,
\end{equation}
where $S \subseteq [n]$ runs over all subsets of size $m$ and $x \in \{0,1\}^n$.

\begin{theorem}[Sum-of-squares degree vs. psd rank]
\label{thm:sos-vs-psd}
For every $m \geq 1$ and $f \from \{0,1\}^m \to \Rnn$,
there exists a constant $C > 0$ such that the following holds.
For $n \geq 2m$,
if $d+2 = \sosdeg(f)$, then
\[
1+n^{1+ d/2} \geq \psdrank(M^f_n) \geq C \left(\frac{n}{\log n}\right)^{d/4}\,.
\]
\end{theorem}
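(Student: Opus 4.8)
The plan is to prove the two inequalities separately; the upper bound is routine and the lower bound carries all the weight. For the upper bound, write $\sosdeg(f)=d+2$, so there are $g_1,\dots,g_t\from\{0,1\}^m\to\R$ of degree at most $1+d/2$ with $f=\sum_i g_i^2$ on $\{0,1\}^m$. Let $\cM$ be the set of multilinear monomials $\prod_{j\in T}x_j$ on $[n]$ with $\card T\le 1+d/2$; a routine count gives $\card\cM\le 1+n^{1+d/2}$. For $x\in\{0,1\}^n$ put $v_x\in\R^{\cM}$, $v_x(\mu)=\mu(x)$; for a size-$m$ set $S$ let $c^{(i,S)}\in\R^{\cM}$ be the coefficient vector of $g_i$ after identifying its $m$ variables with the coordinates of $S$, so that $g_i(x_S)=\langle c^{(i,S)},v_x\rangle$. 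Then
\[
M^f_n(S,x)=f(x_S)=\sum_{i}\langle c^{(i,S)},v_x\rangle^2=\Big\langle \textstyle\sum_i c^{(i,S)}(c^{(i,S)})^\dagger,\ v_x v_x^\dagger\Big\rangle\mper
\]
Hence $A_S:=\sum_i c^{(i,S)}(c^{(i,S)})^\dagger\sge 0$ and $B_x:=v_xv_x^\dagger\sge 0$ form a psd factorization of $M^f_n$ of rank $\card\cM\le 1+n^{1+d/2}$.

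For the lower bound I would argue by contraposition: starting from a psd factorization $M^f_n(S,x)=\langle A_S,B_x\rangle$ with $A_S,B_x\in\cS_+^r$ and $r$ small, I would manufacture a sum-of-squares certificate for $f$ of degree at most $d$, contradicting $\sosdeg(f)=d+2$. After a harmless normalization, treat the $B_x$ as sub-normalized density operators and each row $S$ as a linear ``measurement'' $x\mapsto\langle A_S,B_x\rangle=f(x_S)$. The decoding is a quantum analogue of the rectangle-covering argument of Chan--Lee--Raghavendra--Steurer: maintain a candidate density operator $\rho$ on $\R^r$, initialized to $\Id/r$; repeatedly locate a row $S$ on which $\langle A_S,\rho\rangle$ badly mispredicts $f(x_S)$ for a typical $x$, update $\rho$ by a matrix-multiplicative-weights (quantum Gibbs) step with $A_S$ as the loss operator, and append $S$ to a running ``explanation''. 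The matrix-MMW regret bound caps the number of updates at $O(\eps^{-2}\log r)$ for target accuracy $\eps$, and the accumulated explanation---read through a Cholesky factorization of the final $\rho$---produces functions whose square-sum approximates $f$ on every size-$m$ coordinate window; a standard perturbation then upgrades this to an exact low-degree sos certificate for $f$. This is exactly the setting of the quantum-learning results of \pref{sec:learning}.

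The delicate point is the degree of the explanation: a naive loop yields degree $O(\log r)$, whereas the claimed bound needs degree only about $\log r/\log n$. This is where the ``extra'' coordinates $[n]\setminus[m]$ enter. Because $M^f_n$ records $f$ on \emph{all} $\binom nm$ coordinate windows, a low-rank factorization is heavily over-constrained, and symmetrizing each update over a uniformly random size-$m$ window---equivalently, a random restriction of $[n]$ down to $m$ coordinates---contracts the high-degree part of the operators that drive the loop; one can thereby keep every explanation in the loop of degree at most $D$ provided $r< C(n/\log n)^{D/4}$. Taking $D=d$ and invoking $\sosdeg(f)=d+2$ gives the contradiction, so $\psdrank(M^f_n)\ge C(n/\log n)^{d/4}$. (The exponent $d/4$, rather than the $d/2$ suggested by the monomial count above, reflects a genuine square-root loss in the learning step, where the ``measurements'' one uses must be kept sufficiently non-redundant.)

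I expect the main obstacle to be precisely this degree-versus-dimension accounting in the learning loop: proving that the Gibbs updates stay, up to negligible error, supported on degree-$\le d$ functions; that symmetrizing over random windows does not spoil the regret guarantee; and that the resulting approximate identity $f\approx\sum_i g_i^2$ upgrades to an exact sos certificate without raising the degree---all while $f$ is an arbitrary nonnegative function rather than a $0/1$ one, which forces careful normalization of the $A_S$ and $B_x$ and control of the dynamic range of $f$. The upper bound, the reduction to building an sos certificate, and the counting are routine by comparison.
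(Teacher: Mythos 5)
Your upper bound is correct and essentially coincides with the paper's: both build an explicit psd factorization out of the monomial feature map of degree $\le 1+d/2$, with $P_S$ the Gram matrix of the (embedded) coefficient vectors of the $g_i$ and $Q_x = v_xv_x^\dagger$.

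For the lower bound, the high-level ingredients you list (matrix multiplicative weights / quantum Gibbs updates, random restrictions to kill degree, the $d/4$ rather than $d/2$ exponent) are all genuinely present in the paper's argument, but the route you take through them has a real gap. You propose a \emph{primal} argument: reconstruct, from a small psd factorization, an approximate low-degree sum-of-squares decomposition of $f$, and then ``upgrade'' it by a ``standard perturbation'' to an exact degree-$d$ sos certificate, contradicting $\sosdeg(f)=d+2$. That final upgrade step is not available. Sos degree is not stable under small perturbations: if $f(x_0)=0$ for some $x_0$ (which happens in all the interesting examples, e.g.\ the knapsack function), then any approximating sum of squares overshoots at $x_0$ by some $\eps>0$, and subtracting a constant to fix the value at $x_0$ destroys nonnegativity elsewhere; there is no generic way to convert an $\eps$-approximate degree-$d$ sos certificate into an exact one of comparable degree. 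Moreover, the learning step as you describe it can only give the approximation \emph{on average} over $(S,x)$ (the regret bound controls an averaged quantity), not pointwise ``on every size-$m$ coordinate window'' as your sketch asserts, so even the input to the perturbation step would be weaker than claimed.

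The paper avoids this entirely by working on the \emph{dual} side. From $\sosdeg(f)>d$ one extracts a degree-$d$ pseudo-density $D$ with $\E_x D(x)f(x)<-\eps$ for some $\eps=\eps(f)>0$; this $\eps$ is precisely the slack that absorbs all the approximation errors, and it is the source of the $f$-dependent constant $C$ in the theorem. One then studies the fixed linear functional $L_D(N)=\E_{S,x}D(x_S)N(S,x)$. The learning step (\pref{thm:operator-approx-intro}) and the random-restriction degree-reduction step (\pref{thm:overview-degree-reduction}) are used to show that $L_D(N)\ge-\eps$ whenever $N$ has small psd rank (after the rescaling of \pref{thm:overview-psd-factorization}); since $L_D(M^f_n)<-\eps\lVert M^f_n\rVert_\infty$, this gives the lower bound by contraposition, with no need to ever produce an actual sos certificate. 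So the pseudo-density is not an optional bookkeeping device---it is what makes the ``approximate $\Rightarrow$ exact'' issue disappear. Your sketch also misattributes the $d/4$: it does not come from having to keep measurements ``non-redundant,'' but from two concrete square-root losses---the Cauchy--Schwarz step in the degree-reduction lemma (which turns the $(\ell m/n)^{d/2}$ Fourier-mass bound into a $(\ell m/n)^{d/4}$ correlation bound) and the appearance of $\psdrank(N)^2$ (rather than $\psdrank(N)$) in the rescaled factorization bound.
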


\Pnote{ should $\sosdeg(f) > d$, so that we can find a degree $d$
pseudodistribution? there is an off by one error}

\begin{remark}
The reader might observe that the matrix in \eqref{eq:restriction} looks very similar to the
``pattern matrices'' defined by Sherstov \cite{Sherstov2011}.
This comparison is not unfounded;
some high-level aspects of our proof are quite similar to his.
Random restrictions are a powerful tool for analyzing functions over the discrete cube.
We refer to \cite[Ch. 4]{OD14} for a discussion of their utility in the
context of discrete Fourier analysis.  They were also an important tool
in the work \cite{DBLP:conf/focs/ChanLRS13} on lower bounds for LPs.
Accordingly, one hopes that our methods may have additional applications in
communication complexity.
This would not be surprising, as there is a model of quantum communication
that exactly captures psd rank (see \cite{FMPTW12}).
\end{remark}

\paragraph{Connection to spectrahedral lifts of polytopes}

The connection to psd lifts proceeds as follows.
Let $\{x_1, x_2, \ldots, x_{v}\} \subseteq P$ be such that $P = \mathrm{conv}(V)$ is the convex hull of $V$, and also fix
a representation
\[
P = \left\{ \vphantom{\bigoplus} x \in \mathbb R^n : \langle a_i, x\rangle \leq b_i \,\, \forall i \in [m]\right\}\,.
\]
The \emph{slack matrix $S$} associated to $P$ (and our chosen representation) is the matrix $S \in \Rnn^{m \times v}$ given by
$S_{i,j} = b_i - \langle a_i, x_j\rangle$.   It is not difficult to see that $\psdrank(S)$ does not depend on
the choice of representation.  It turns out that the psd rank of $S$ is precisely the minimum size
of a psd lift of $P$.

\begin{proposition}[\cite{FMPTW12,GouveiaParriloThomas2011}]
\label{prop:gpt}
For every $n,k \geq 1$, every polytope $P \subseteq \mathbb R^n$ and every slack matrix $S$ associated to $P$, it
holds that $\psdrank(S) \leq k$ if and only if $P$ admits a psd lift of size $k$.
\end{proposition}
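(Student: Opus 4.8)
I would prove the two directions of the equivalence separately, as they have quite different character. Throughout, fix a description $P=\{x\in\R^n:\langle a_i,x\rangle\le b_i,\ i\in[m]\}$ and vertices $x_1,\dots,x_v$ of $P$, so the slack matrix has entries $S_{i,j}=b_i-\langle a_i,x_j\rangle\ge 0$; note that since $P$ is bounded, the normals $\{a_i\}$ span $(\R^n)^\ast$.

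\textbf{Factorization $\Rightarrow$ lift.} Suppose $S_{i,j}=\Tr(A_iB_j)$ with $A_i,B_j\in\cS_+^k$. I would pick indices $I_0\subseteq[m]$ with $|I_0|=n$ such that $\{a_i:i\in I_0\}$ is a basis of $(\R^n)^\ast$, and define the affine map $\pi:\R^{k\times k}\to\R^n$ to be the unique map satisfying $\langle a_i,\pi(Y)\rangle=b_i-\Tr(A_iY)$ for all $i\in I_0$. Set $\cL=\{Y\in\R^{k\times k}:\Tr(A_iY)=b_i-\langle a_i,\pi(Y)\rangle\text{ for all }i\in[m]\}$, an affine subspace. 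Using the factorization identity one checks that $B_j\in\cS_+^k\cap\cL$ and $\pi(B_j)=x_j$ for every $j$; since $\pi(\cS_+^k\cap\cL)$ is convex it then contains $\conv\{x_j\}=P$. For the reverse inclusion, any $Y\in\cS_+^k\cap\cL$ satisfies $\langle a_i,\pi(Y)\rangle=b_i-\Tr(A_iY)\le b_i$ for all $i$ (because $\Tr(A_iY)\ge 0$ for PSD matrices), hence $\pi(Y)\in P$. So $P=\pi(\cS_+^k\cap\cL)$. The map produced here is affine rather than linear, but this is immaterial: one may translate $P$ away from the origin and absorb the constant into a linear functional that is constant on $\cL$, or simply homogenize.

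\textbf{Lift $\Rightarrow$ factorization.} Now suppose $P=\pi(\cS_+^k\cap\cL)$. For each vertex $x_j$, fix $B_j\in\cS_+^k\cap\cL$ with $\pi(B_j)=x_j$ (possible since $x_j\in P$). For each inequality $i$, the affine functional $h_i(Y):=b_i-\langle a_i,\pi(Y)\rangle$ is nonnegative on $\cS_+^k\cap\cL$, since $\pi$ maps this set into $\{x:\langle a_i,x\rangle\le b_i\}$. The core step is to convert this nonnegativity into a PSD matrix: I would homogenize the affine section $\cL$ (working in $\R^{k\times k}\times\R$) and apply conic duality — self-duality of $\cS_+^k$ together with the identity $(\cC\cap\cM)^\ast=\overline{\cC^\ast+\cM^\perp}$ for a closed convex cone $\cC$ and a subspace $\cM$ — to obtain $A_i\in\cS_+^k$ and $c_i\ge 0$ with $h_i(Y)=\langle A_i,Y\rangle+c_i$ for all $Y\in\cL$. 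Boundedness of $P$ is what makes that closure exact here: every recession direction of $\cS_+^k\cap\cL$ is sent by $\pi$ into the recession cone of $P$, which is $\{0\}$. Finally, if the $i$-th inequality is facet-defining it is tight at some $x^\ast\in P$; lifting $x^\ast$ to $Y^\ast\in\cS_+^k\cap\cL$ gives $0=h_i(Y^\ast)=\langle A_i,Y^\ast\rangle+c_i$ with both summands $\ge 0$, forcing $c_i=0$. Therefore $S_{i,j}=h_i(B_j)=\langle A_i,B_j\rangle=\Tr(A_iB_j)$, so $(\{A_i\},\{B_j\})$ is a rank-$k$ psd factorization of $S$. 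Redundant rows are handled by writing the inequality, via Farkas, as a nonnegative combination of facet inequalities plus a nonnegative constant and combining the matrices accordingly — or one simply appeals to the (representation-independent) value of $\psdrank(S)$ and reduces to a facet-defining description.

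\textbf{Main obstacle.} The first implication is a routine explicit construction. All the content sits in the second one, and specifically in the conic-duality step: one must set up the homogenization of the \emph{affine} subspace $\cL$ correctly, argue that the dual-cone identity holds with equality rather than mere containment in this situation (where boundedness of $P$ is essential), and then observe that the additive constant $c_i$ vanishes on facet-defining rows — this last point is exactly what makes the correspondence rank-exact instead of losing an additive constant's worth of dimension. The rest reduces to bookkeeping with the slack identities.
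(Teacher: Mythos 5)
The paper does not give a proof of this proposition; it is cited from \cite{FMPTW12,GouveiaParriloThomas2011}, so I evaluate your argument on its own terms. The ``factorization $\Rightarrow$ lift'' direction is essentially right, and the affine-versus-linear bookkeeping you flag is a standard, fixable point. The gap is in ``lift $\Rightarrow$ factorization'': you drop the closure in $(\cC\cap\cM)^\ast=\overline{\cC^\ast+\cM^\perp}$ by appealing to boundedness of $P$, via the observation that $\pi$ kills recession directions of $\cS_+^k\cap\cL$. That is not a sufficient condition. The closure is exact under a Slater-type hypothesis on $\cS_+^k\cap\cL$ (equivalently, after homogenization, $\cM^\perp\cap\cS_+^k=\{0\}$), which is unrelated to boundedness of $P$. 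A concrete failure: take $k=2$, let $\cL=\{Y:Y_{11}=0\}$ inside the symmetric $2\times 2$ matrices, and $\pi(Y)=Y_{12}$. Then $\cS_+^2\cap\cL=\{Y:Y_{11}=Y_{12}=0,\ Y_{22}\ge 0\}$ and $P=\pi(\cS_+^2\cap\cL)=\{0\}$, which is bounded; yet the slack functional $h(Y)=-Y_{12}$ of the valid inequality $x\le 0$ is nonnegative on the spectrahedron but cannot be written as $\langle A,Y\rangle+c$ on $\cL$ with $A\succeq 0$ and $c\ge 0$: matching coefficients on $\cL$ forces $A_{12}=-\tfrac12$ and $A_{22}=0$, and no PSD matrix has that shape. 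Adding the constraint $Y_{22}=1$ to $\cL$ makes the spectrahedron itself a bounded (single) point, and the obstruction persists, so neither notion of boundedness rescues the duality step.

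The proposition is nonetheless true, and the idea your proof is missing is facial reduction. When Slater fails, $\cS_+^k\cap\cL$ is contained in a proper face $F$ of $\cS_+^k$, and $F$ is linearly isomorphic to $\cS_+^{k'}$ for some $k'<k$; replacing the ambient cone by $F$ (and intersecting $\cL$ accordingly) yields a \emph{proper} lift of size $k'\le k$, on which your conic-duality argument does go through and produces a rank-$k'\le k$ factorization of the slack matrix. That reduction, not boundedness of $P$, is the step that makes the implication work in the cited references; as written, the ``boundedness makes the closure exact'' claim is incorrect and the argument has a genuine hole.
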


Thus our goal in this paper becomes one of proving lower bounds on psd rank.
With this notation, we have a precise way to characterize the lack of previous progress:  Before this work,
there was no reasonable method available to prove lower bounds on the psd rank of explicit matrices.
The characterization of \pref{prop:gpt}
explains our abuse of notation in
\pref{thm:corr-lb}, writing $\psdrank(P)$ to denote the psd rank
of any slack matrix associated to a polytope $P$.

\pref{thm:sos-vs-psd} is already enough to show that $\psdrank(\corr_n)$ must grow faster than
any polynomial in $n$, as we will argue momentarily.  In \pref{sec:proof-of-main-theorem}, we present a more refined version
(using a more robust version of sos degree) that will allow us to achieve a lower bound of the form $\psdrank(\corr_n) \geq 2^{\Omega(n^{\delta})}$
for some $\delta > 0$.

%
\Jnote{Should mention here or earlier some known lower bounds.  There are definitely some of the polynomial regime.
Also good to mention that there were psd rank upper bounds for some hard slack matrices for LP.}

\Dnote{would be good to add here the intuitive view on psd rank.
right now it is sort of mysterious how it is connected to semidefinite programming.
instead it should be obvious.
maybe the view in terms of sum of squares of functions in a subspace is useful.
i think it is also important to say that one should think of the rows of the matrix as instances and the columns as solutions.}


\Jnote{Mention Nisan-Wigderson generator as also being of the same form.}

\Dnote{the following discussion about the relationship between $M^f_n$ and (slack matrices) of polytopes is a bit more complicated than necessary.
if $f$ is quadratic then the matrix corresponds to the correlation polytope.
}
\Dnote{IMPORTANT: this polytope view seems quite restrictive (since it doesn't distinguish what kind of function $f$, e.g., max cut, knapsack, or something else).
it would be good to explain these things better.
otherwise people wonder why we have this general theorem about $M^f_n$ when it only gives lower bounds for the correlation polytope.}

Given \pref{thm:sos-vs-psd}, in order to prove a lower bound on $\psdrank(\corr_n)$, we should find,
for every $d \geq 1$, a number $m$ and a function $f : \{0,1\}^m \to \Rnn$ such that $\sosdeg(f) \geq d$
and such that $M_n^f$ is a submatrix of some slack matrix associated to $\corr_n$.
To this end, it helps to observe the following (we recall the proof in Section \ref{sec:corr}).

\begin{proposition}
\label{prop:quadric-intro}
If $f \from \{0,1\}^m \to \Rnn$ is a nonnegative quadratic function over $\{0,1\}^m$,
then for any $n \geq m$, the matrix $M_n^f$ is a submatrix of some slack matrix associated to $\corr_n$.
\end{proposition}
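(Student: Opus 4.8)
The plan is to realize each entry $M^f_n(S,x) = f(x_S)$ of the matrix as the slack of an explicit valid inequality of $\corr_n$, evaluated at the vertex $xx^T$. First I would recall the shape of a slack matrix of $\corr_n$: taking the generating set $V = \{xx^T : x \in \{0,1\}^n\}$ and any representation $\corr_n = \{Y : \langle A_i, Y\rangle \le b_i,\ i \in I\}$ by finitely many valid inequalities (here $\langle A_i, Y\rangle = \Tr(A_i^T Y)$, and we may take each $A_i$ symmetric), the associated slack matrix has rows indexed by $I$, columns indexed by $x \in \{0,1\}^n$, and entry $b_i - \langle A_i, xx^T\rangle = b_i - x^T A_i x$. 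So it suffices to produce, for each size-$m$ subset $S \subseteq [n]$, a symmetric matrix $A^{(S)}$ and a scalar $b_S$ such that $x^T A^{(S)} x \le b_S$ for all $x \in \{0,1\}^n$ and $b_S - x^T A^{(S)} x = f(x_S)$ for all such $x$; then the slack matrix for any representation containing these inequalities has $M^f_n$ as the submatrix on the rows $\{(A^{(S)}, b_S)\}_{|S| = m}$ and all columns.

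Next I would put $f$ into multilinear form. Since $f \from \{0,1\}^m \to \Rnn$ has degree at most $2$, we may write $f(y) = c + \sum_{k \in [m]} \beta_k y_k + \sum_{1 \le k < l \le m} \alpha_{kl}\, y_k y_l$ for reals $c, \beta_k, \alpha_{kl}$. Given $S = \{i_1 < i_2 < \cdots < i_m\} \subseteq [n]$, I would set $b_S := c$ and define the symmetric matrix $A^{(S)} \in \R^{n \times n}$ to be supported on $S \times S$, with $A^{(S)}_{i_k i_k} := -\beta_k$ and $A^{(S)}_{i_k i_l} = A^{(S)}_{i_l i_k} := -\tfrac12 \alpha_{kl}$ for $k < l$, and all other entries zero. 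Using $x_i^2 = x_i$ for $x_i \in \{0,1\}$, one computes $x^T A^{(S)} x = -\sum_k \beta_k x_{i_k} - \sum_{k<l} \alpha_{kl} x_{i_k} x_{i_l} = c - f(x_S)$ for every $x \in \{0,1\}^n$. Hence $b_S - x^T A^{(S)} x = f(x_S)$, and since $f \ge 0$ on $\{0,1\}^m$ and $x_S \in \{0,1\}^m$, this quantity is nonnegative for all $x$; that is, $(A^{(S)}, c)$ is a valid inequality for $\corr_n$ (it holds at all vertices $xx^T$, hence, by linearity, on their convex hull). Note that $A^{(S)}$ itself need not be positive semidefinite — only validity of the inequality is needed.

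To conclude, I would choose the representation of $\corr_n$ appearing in the definition of the slack matrix to be any finite list of valid inequalities that contains the $\binom{n}{m}$ inequalities $\{(A^{(S)}, c) : S \subseteq [n],\ |S| = m\}$; for instance, take a facet description of $\corr_n$ and append these inequalities, which does not change the polytope. For the resulting slack matrix (with respect to $V = \{xx^T : x \in \{0,1\}^n\}$), restricting to the rows indexed by the $(A^{(S)}, c)$ and keeping all columns $x \in \{0,1\}^n$ recovers exactly $M^f_n$ by the computation above, since the $2^n$ points $xx^T$ are pairwise distinct. Therefore $M^f_n$ is a submatrix of a slack matrix associated to $\corr_n$.

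I do not expect a genuine obstacle here: the argument is a direct unwinding of the definitions of $\corr_n$, its slack matrices, and $M^f_n$. The only point requiring a little care is the homogenization bookkeeping — the constant term $c$ of $f$ must be carried by the right-hand side $b_S = c$ of the inequality rather than by the matrix $A^{(S)}$, and the linear terms $\beta_k y_k$ are absorbed into the diagonal of $A^{(S)}$ using the idempotence $x_i^2 = x_i$ on the cube — together with the (immediate) observation that validity of $(A^{(S)}, c)$ is precisely the hypothesis that $f$ is nonnegative on $\{0,1\}^m$.
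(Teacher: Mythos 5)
Your proof is correct and follows essentially the same route as the paper's: a nonnegative quadratic on the cube becomes a valid linear inequality for $\corr_n$ once linear terms are folded onto the diagonal via $x_i^2 = x_i$, and its slack at the vertex $xx^T$ is $f(x_S)$. You are somewhat more explicit than the paper about two bookkeeping points — constructing the matrix $A^{(S)}$ supported on $S\times S$ for each subset $S$, and appending these inequalities to an existing finite description of $\corr_n$ — but the underlying idea is identical.
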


Given the preceding proposition, the following result of Grigoriev on the Knapsack tautologies
completes our quest for a lower bound.
\begin{theorem}[\cite{DBLP:journals/cc/Grigoriev01}]
\label{thm:grigor}
For every odd integer $m \geq 1$,
the function $f : \{0,1\}^m \to \Rnn$ given by
\begin{equation}\label{eq:knapsack}
f(x) = \left(\frac{m}{2}-\sum_{i=1}^m x_i\right)^2 - \frac14
\end{equation}
has $\sosdeg(f) \geq m+1$.
\end{theorem}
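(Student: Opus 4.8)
The plan is to prove the statement via a dual certificate (equivalently, a low‑degree pseudo‑distribution). Write $m=2k+1$ with $k\ge 1$ (the case $m=1$ is degenerate, since then $f\equiv 0$ on the cube). First I would reduce the claim to a single assertion: it suffices to show that $f$ is \emph{not} a sum of squares of multilinear polynomials each of degree at most $k$. Indeed, the classes of functions admitting a degree‑$d$ sos certificate are nested increasing in $d$, so $\sosdeg(f)\ge m+1=2k+2$ is equivalent to the absence of a certificate at level $d=2k+1$, and since multilinear degrees are integers, $\deg(g)\le (2k+1)/2$ means $\deg(g)\le k$. Two elementary facts about $f$: as a function on $\{0,1\}^m$ it equals $(|x|-k)(|x|-k-1)$ with $|x|=\sum_i x_i$; and since $m$ is odd, $\tfrac m2-|x|$ is a nonzero half‑integer unless $|x|\in\{k,k+1\}$, so $f\ge 0$ on the cube with equality exactly on the two middle Hamming slices. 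In particular, in any sos certificate $f=\sum_j g_j^2$, every $g_j$ vanishes on both middle slices.

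Next I would pass to the dual. The set of functions expressible as $\sum_j g_j^2$ with $\deg(g_j)\le k$ is a closed convex cone in the finite‑dimensional space $\R^{\{0,1\}^m}$ (it is the image of a PSD cone under a linear map whose kernel meets the cone only at $0$), so by separation $f$ lies outside it if and only if there is a linear functional $\pE$ on degree‑$\le 2k$ functions on $\{0,1\}^m$ — a \emph{degree‑$2k$ pseudo‑expectation} — with $\pE[g^2]\ge 0$ for every $g$ of degree at most $k$ and with $\pE[f]<0$. Since $f$ is $S_m$‑invariant, averaging a candidate over $S_m$ preserves both properties, so I may take $\pE$ to be $S_m$‑invariant. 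Such a $\pE$ is encoded by a pseudo‑moment sequence $a_0,a_1,\dots,a_{2k}$, where $a_j=\pE[\prod_{i\in S}x_i]$ for any $|S|=j$; normalizing $a_0=1$ and using $|x|^2=|x|+2\sum_{i<j}x_ix_j$ on the cube, a direct expansion gives
\[
\pE[f]\;=\;k(k+1)\;+\;m(m-1)\,(a_2-a_1)\,.
\]
The requirement that $\pE$ be positive semidefinite on squares of degree‑$\le k$ functions is exactly positive semidefiniteness of the $S_m$‑invariant moment matrix $M=\big(a_{|S\cup T|}\big)_{S,T\in\binom{[m]}{\le k}}$, and, by the representation theory of $S_m$ acting on $\binom{[m]}{\le k}$ (equivalently, the combinatorics of the Johnson scheme), $M\succeq 0$ decomposes into $k+1$ positive‑semidefiniteness conditions on explicit blocks $M^{(0)},\dots,M^{(k)}$ of sizes $k+1,k,\dots,1$ whose entries are fixed linear combinations of $a_0,\dots,a_{2k}$.

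Finally I would exhibit the sequence. A convenient anchor is the genuine moment sequence $a_j^\star=\binom kj/\binom mj$ of the uniform distribution on the slice $|x|=k$: it yields $\pE[f]=0$ and a singular positive‑semidefinite $M$ whose kernel is the space of degree‑$\le k$ functions vanishing on that slice. I would then set $a_j=a_j^\star+\e\,\delta_j$ and choose a symmetric perturbation $\delta$ with $\delta_0=0$ and $\delta_2<\delta_1$ so that each block stays positive semidefinite for all small $\e>0$ — which, since each $M^{(\ell)}(a^\star)$ is only semidefinite, merely requires $\delta$ to act nonnegatively on the relevant kernels. Such a $\delta$ exists because the PSD conditions involve only $a_0,\dots,a_{2k}=a_0,\dots,a_{m-1}$ and never the top pseudo‑moment $a_m$: the cone of valid degree‑$2k$ pseudo‑moment sequences is therefore strictly larger than the cone of genuine symmetric moment sequences in exactly the direction $a_2-a_1$ (genuine symmetric distributions satisfy $a_2-a_1\ge -k(k+1)/(m(m-1))$, with equality only on mixtures of the two middle slices, and this value makes $\pE[f]=0$). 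For any such choice, $\pE[f]=k(k+1)+m(m-1)(a_2-a_1)<0$, which shows $f$ is not such a sum of squares and hence $\sosdeg(f)\ge m+1$.

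The main obstacle — and the one place the oddness of $m$ is genuinely used — is this last verification: producing an \emph{explicit} perturbation $\delta$ and checking that all blocks $M^{(\ell)}$ remain positive semidefinite while $a_2-a_1$ is strictly decreased. This is the computation carried out (in Positivstellensatz language) in Grigoriev's paper; it can be organized recursively in $k$, using the explicit spectra of the Johnson scheme (the relevant orthogonal polynomials being the Hahn polynomials) to control the blocks. The remaining ingredients — the nesting reduction, the separation argument, the symmetrization, and the formula for $\pE[f]$ — are routine.
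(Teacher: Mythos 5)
Your setup --- the reduction to sums of squares of degree-$\le k$ polynomials, the closedness of that cone, symmetrization, the identity $\pE[f]=k(k+1)+m(m-1)(a_2-a_1)$, the anchor $a^\star_j=\binom{k}{j}/\binom{m}{j}$ giving $\pE[f]=0$, and the Johnson-scheme block-diagonalization into $k+1$ blocks --- is all correct, and you rightly note that the case $m=1$ is vacuous. The gap is in the perturbation step. The claim that a valid $\delta$ with $\delta_2<\delta_1$ exists ``because the PSD conditions involve only $a_0,\dots,a_{m-1}$ and never $a_m$'' is not a deduction: the premise shows only that the set of valid pseudo-moment sequences is a spectrahedron in $(a_1,\dots,a_{m-1})$-coordinates; it says nothing about whether that spectrahedron strictly contains the projection of the genuine symmetric moment simplex, nor whether the boundary point $a^\star$ admits a feasible direction with $\delta_2<\delta_1$. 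That strict containment is precisely Grigoriev's theorem, so this step is circular as written.

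Moreover, the ``computation carried out in Grigoriev's paper'' that you punt to is not a perturbation argument, and neither is the proof the paper gives (\pref{thm:knapsack-pseudo}). Grigoriev's functional is the explicit half-integer continuation $a_j=\binom{m/2}{j}/\binom{m}{j}$ of your slice formula, for which a two-line expansion shows $\pE\bigl[(\sum_i x_i-m/2)^2\bigr]=0$ and hence $\pE[f]=-\tfrac{1}{4}<0$ immediately, with no perturbation needed; the only step deferred to Grigoriev (his Lemma 1.4) is the single positivity verification $\pE[p^2]\ge 0$ for $\deg p\le m/2$. The cleanest repair of your proposal is therefore to replace your anchor by this half-integer sequence and delete the perturbation entirely. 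If you insist on the perturbation route, you must actually exhibit $\delta$ and check that each block $M^{(\ell)}(a^\star+\epsilon\delta)$ stays PSD for small $\epsilon>0$; that cannot be derived from the absence of $a_m$ in the constraints.
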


Note that since $m/2$ is not an integer, \eqref{eq:knapsack}
is nonnegative for all $x \in \{0,1\}^m$.
It turns out that in order to prove stronger lower bounds for $\corr_n$,
we will require a lower bound on the \emph{approximate} sos degree of $f$.
Thus the Knapsack tautologies \eqref{eq:knapsack} will be studied carefully
in \pref{sec:psd-corr}.
In \pref{sec:overview}, we discuss the proof of \pref{thm:sos-vs-psd}
in some detail.  Then in \pref{sec:theproof}, we present a quantitatively
stronger theorem and its proof.

\paragraph{Connection to semidefinite relaxations and constraint satisfaction}

Fix now numbers $k,n \geq 1$ and a boolean CSP $\Pi$.  Fix a
pair of constants $0 \leq s \leq c \leq 1$.  Suppose our goal is to
show a lower bound on the size of SDP relaxations that yield a
$(c,s)$-approximation on instances of size $n$.
It turns out that this task reduces to proving a lower bound on the
positive semidefinite rank of an explicit matrix $M$ indexed by problem instances and problem solutions (points on the discrete cube in our case).
\begin{proposition}
\label{prop:intro-csp2}
  For any boolean CSP $\Pi_n$ and any constants $0 \leq s<c \leq 1$, let $U$ be a subspace of minimal dimension that achieves a $(c,s)$-approximation for $\Pi_n$.
  Denote the set of instances
  \[\Pi_n^{\le s}=\{\inst \mid \max(\inst)\le s\}\,.\]
  Let $M\from \Pi_n^{\le s}\times \{0,1\}^n\to \R$ denote the matrix
  \begin{displaymath}
    M(\inst, x ) = c - \inst(x)\mper
  \end{displaymath}
  Then, $ \psdrank(M)^2\geq \dim(U) \geq \psdrank(M)$.
\end{proposition}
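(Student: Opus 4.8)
The plan is to prove the two bounds $\psdrank(M)\le \dim(U)$ and $\dim(U)\le \psdrank(M)^2$ separately; combined with the minimality of $U$ they give the proposition. (Note first that $M$ has nonnegative entries, so $\psdrank(M)$ is defined: for $\inst\in\Pi_n^{\le s}$ we have $\inst(x)\le \max(\inst)\le s\le c$, hence $M(\inst,x)=c-\inst(x)\ge 0$.) Both directions are a translation through the same dictionary: if $A=\sum_\ell a_\ell a_\ell^T$ is a Gram decomposition of a psd matrix and $B=bb^T$ is rank one, then $\Tr(AB)=\sum_\ell \langle a_\ell,b\rangle^2$ is visibly a sum of squares; conversely a sum of squares of fixed linear combinations can be packaged back into such a trace.

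\textbf{From a subspace to a psd factorization ($\psdrank(M)\le\dim U$).} Fix a subspace $U$ achieving a $(c,s)$-approximation for $\Pi_n$, set $r=\dim U$, and fix any basis $u_1,\dots,u_r$ of $U$. For each $\inst\in\Pi_n^{\le s}$ we have $\max(\inst)\le s$, hence $\sosub_U(\inst)\le c$, so $c-\inst=\sum_\ell g_\ell^2$ with $g_\ell\in U$. Writing $g_\ell=\sum_i\lambda_{\ell i}u_i$ and expanding, $c-\inst(x)=\sum_{i,j}(A_\inst)_{ij}\,u_i(x)u_j(x)$, where $(A_\inst)_{ij}:=\sum_\ell\lambda_{\ell i}\lambda_{\ell j}$ defines a matrix $A_\inst\in\cS_+^r$ (it is a Gram matrix). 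Putting $B_x:=v_xv_x^T\in\cS_+^r$ with $v_x:=(u_1(x),\dots,u_r(x))$, one checks $\Tr(A_\inst B_x)=\sum_{i,j}(A_\inst)_{ij}u_i(x)u_j(x)=c-\inst(x)=M(\inst,x)$. Thus $(\{A_\inst\},\{B_x\})$ is a rank-$r$ psd factorization of $M$, so $\psdrank(M)\le r=\dim U$; applied to the minimal $U$ this gives $\dim(U)\ge\psdrank(M)$.

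\textbf{From a psd factorization to a subspace ($\dim U\le\psdrank(M)^2$).} Let $r=\psdrank(M)$ and fix a rank-$r$ psd factorization $M(\inst,x)=\Tr(A_\inst B_x)$ with $A_\inst,B_x\in\cS_+^r$. Write $B_x=L_xL_x^T$ (say $L_x=B_x^{1/2}$) and $A_\inst=\sum_{\ell=1}^{r}w_{\inst,\ell}w_{\inst,\ell}^T$ (at most $r$ rank-one terms, via the spectral decomposition). Then $\Tr(A_\inst B_x)=\sum_{\ell}\|L_x^Tw_{\inst,\ell}\|^2=\sum_{\ell,k}\big(\sum_i(L_x)_{ik}(w_{\inst,\ell})_i\big)^2$. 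The crucial point is that the dependence on the solution $x$ enters only through the $r^2$ numbers $(L_x)_{ik}$: let $U:=\mathrm{span}\{\,x\mapsto(L_x)_{ik}\ :\ i,k\in[r]\,\}$, so $\dim U\le r^2$. For each fixed $\inst\in\Pi_n^{\le s}$ and each $\ell,k$, the function $x\mapsto\sum_i(L_x)_{ik}(w_{\inst,\ell})_i$ lies in $U$, and the displayed identity writes $c-\inst=M(\inst,\cdot)$ as a sum of squares of functions from $U$; hence $\sosub_U(\inst)\le c$. Since this holds for every instance with $\max(\inst)\le s$, the subspace $U$ achieves a $(c,s)$-approximation, so the minimal such subspace has dimension at most $r^2=\psdrank(M)^2$.

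\textbf{Where the work is.} The argument is mostly bookkeeping once the dictionary is fixed, so there is no genuinely hard step. The only delicate point is the second direction: in a $\sosub_U$-certificate the squared functions must all lie in a single subspace $U$ that does not vary with $\inst$. This forces the asymmetric treatment of the two factors — the instance-dependent $A_\inst$ is absorbed into the coefficient vectors $w_{\inst,\ell}$, while the solution-dependent $B_x$ alone generates $U$ — and it is precisely this asymmetry that produces the quadratic loss $r\mapsto r^2$ between $\psdrank(M)$ and $\dim U$.
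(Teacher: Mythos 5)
Your proof is correct and follows essentially the same route as the paper: the first direction packages a $\sosub_U$-certificate into a psd factorization via a basis of $U$, and the second direction takes the square root $L_x=B_x^{1/2}$ (the paper's $R(x)=Q(x)^{1/2}$) and spans a subspace $\tilde U$ from its $r^2$ entry-functions. The only (cosmetic) difference is that you write $\Tr(A_\inst B_x)$ as an explicit double sum of squares by spectrally decomposing $A_\inst$, where the paper abbreviates this as $\|\sqrt{\Lambda_\inst}R(x)\|_F^2$.
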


Before describing the proof of this proposition, observe that together with our main theorem the proposition implies \pref{thm:csp-intro} (optimality of degree-$d$ sum-of-squares for approximating boolean CSPs):
If $\inst_0$ is a \maxscrP instance on $m$ variables with $\max(\inst_0)\le s$ and $\sosub_d(\inst_0)>c$, then $f=c-\inst_0$ has sos degree larger than $d$.
Our main theorem gives a lower bound on the psd rank of the matrix $M^f_n$.
Since this matrix is a submatrix of the matrix in \pref{prop:intro-csp2}, our psd rank lower bound implies a lower bound on the minimum dimension of a subspace achieving a $(c,s)$-approximation for $\maxscrP_n$.

\begin{proof}[Proof of \pref{prop:intro-csp2}]
Set $r = \dim(U)$.  Fix a basis $q_1,\ldots, q_r : \bits^n \to \R$
for the subspace $U$.  Define the function $Q : \bits^n \to \cS_r^+$
by setting $(Q(x))_{ij} \seteq q_i(x) q_j(x)$ for all $i,j \in [r]$.
Notice that for any $q \in U$, we can write $q = \sum_{i=1}^r \lambda_i q_i$
and thus $q(x)^2 = \Tr(\Lambda Q(x))$ where $\Lambda \in \cS_r^+$ is
defined by $\Lambda_{ij} \seteq \lambda_i \lambda_j$.

Since $U$ achieves a $(c,s)$-approximation for $\Pi_n$, for every
instance $\inst \in \Pi_n^{\leq s}$ we will have
$\sosub_{U}(\inst) \leq c$.  By definition of
$\sosub_{U}(\inst)$, this implies that $c - \inst = \sum_i
g_i^2$ for $g_i \in U$.  By expressing each $g_i^2$ as $g_i^2 =
\Tr(\Lambda_i Q(x))$
for some $\Lambda_i \in \cS_r^+$ we get,
$$ M(\inst, x) = c - \inst(x) = \sum_i \Tr(\Lambda_i Q(x)) = \Tr\left(
\Lambda_{\inst} Q(x)\right) \mper$$
This yields an explicit psd factorization of $M$ certifying that
$\psdrank(M) \leq \dim(U)$.

Conversely, by definition of $\psdrank(M)$ there exists positive
semidefinite matrices $\{\Lambda_{\inst} : \inst \in \Pi_n^{\leq s}\}$, $\{ Q(x) : x \in \bits^n \} \sse \cS_r^+$ such that $M(\inst,x) =
\Tr(\Lambda_{\inst} Q(x))$.  Denote by $R(x) \seteq Q(x)^{1/2}$
the positive semidefinite square root, and consider
the subspace $\tilde{U}
\seteq \mathrm{span}\{ (R(x))_{ij} \} \sse \R^{\bits^n}$.  Clearly,
the dimension of $\tilde{U}$ is at most $\psdrank(M)^2$.  Further, for each instance
$\inst \in \Pi_n^{\leq s}$, we can write
$$ c - \inst = M_{\inst, x} = \Tr(\Lambda_{\inst} Q(x)) =
\Tr(\Lambda_{\inst} R(x)^2) = \norm{\sqrt{\Lambda_\inst} R(x)}_F^2
\mper$$
Observe that $\norm{\sqrt{\Lambda_{\inst}}R(x)}_F^2$ is a sum of
squares of functions from the subspace $\tilde{U}$\footnote{Here,$\|\cdot\|_F$ denotes
the Frobenius norm.}.  Therefore we have
$\sosub_{\tilde{U}}(\inst) \leq c$, showing that $\sosub_{\tilde{U}}$ yields a
$(c,s)$-approximation to $\Pi_n$.  Since $U$ is the minimal subspace
yielding a $(c,s)$-approximation, we have $\dim(U) \leq
\dim(\tilde{U}) \leq \psdrank(M)^2$.
\end{proof}

\section{Proof overview and setup} \label{sec:proofoverview}

\subsection{Preliminaries}
\label{sec:prelims}

We write $[n] \defeq \{1,2,\ldots,n\}$ for $n \in \mathbb N$.
We will often use the notation $\E_x$ to denote a uniform averaging operator
where $x$ assumes values over a finite set.  For instance, if $x \in \{-1,1\}^n$, then
$\E_x f(x) = 2^{-n} \sum_{x \in \{-1,1\}^n} f(x)$.  The domain of the operator
should always be readily apparent from context.
We also use asymptotic notation:  For two expressions $A$ and $B$, we write $A \leq O(B)$ if
there exists a universal constant $C$ such that $A \leq C \cdot B$.
We also sometimes write $A \lesssim B$ to denote $A \leq O(B)$.
The notation $A \geq \Omega(B)$ similarly denotes $B \lesssim A$, and
the notations $A = \Theta(B)$ and $A \asymp B$ are both used to denote
the conjunction of $A \lesssim B$ and $B \lesssim A$.
For a real number $x > 0$, we use $\log x$ to denote the natural logarithm of $x$.

\medskip
\noindent
{\bf Inner product spaces and norms.}
Let $H$ denote a finite-dimensional vector space over $\R$ equipped with an inner product $\langle \cdot,\cdot \rangle$
and the induced Euclidean norm $|\cdot|$.
All vector spaces we consider here will be of this kind.
We use $\cM(H)$ to denote the set of self-adjoint linear operators on $H$, and
$\cD(H)\subseteq \cM(H)$ for the set of density operators on $H$, i.e.,
those positive semidefinite operators with trace one.
We will use the standard Loewner ordering $\succeq$ on $\cM(H)$.

If $H$ comes equipped with a canonical (ordered) orthonormal basis (as will always be the case throughout), we represent linear operators on $H$ by matrices with rows and columns indexed by the basis elements.
In this case, $\cM(H)$ consists of symmetric matrices and $\cD(H)$ consists of symmetric, positive semidefinite matrices whose diagonal entries summing to one.
If $A \in \cM(H)$ is positive semidefinite, we use $A^{1/2}$ to denote the positive semidefinite square root of $A$.

Given a linear operator $A : H \to H$, we define the operator, trace, and Frobenius norms, respectively:
\begin{align*}
\lVert A\rVert &= \max_{x \neq 0} \frac{|Ax|}{|x|} \\
\lVert A\rVert_* &= \Tr(\sqrt{A^T A}) \\
\lVert A\rVert_F &= \sqrt{\Tr(A^T A)}\,.
\end{align*}
Recall $\Tr(A^T B) \leq \|A\| \cdot \|B\|_{*}$ and the Cauchy-Schwarz inequality $\Tr(A^T B) \leq \|A\|_F \|B\|_F$.
For a matrix $M$, we use $\|M\|_{\infty}$ to denote the maximum absolute value of an entry in $M$.

\medskip
\noindent
{\bf Fourier analysis and degree over the discrete cube.}
We use $L^2(\{-1,1\}^n)$ to denote the Hilbert space of real-valued functions $f : \{-1,1\}^n \to \mathbb R$.
This space is equipped with the natural inner product under the uniform measure: $\langle f,g\rangle = \E_x f(x) g(x)$.
We recall the Fourier basis:  For $S \subseteq [n]$, one has $\chi_S(x) = \prod_{i \in S} x_i$.
The functions $\{\chi_S : S \subseteq [n]\}$ form an orthonormal basis for $L^2(\{-1,1\}^n)$.
We can decompose $f$ in the Fourier basis as $f = \sum_{S \subseteq [n]} \hat f(S) \chi_S$.

We will use $\deg(f)$ to denote the degree of $f$ as a multi-linear polynomial on the discrete cube:
$\deg(f) \defeq \max \{ |S| : \hat f(S) \neq 0 \}$.
Note that by identifying $\{0,1\}$ and $\{-1,1\}$, we can define
$\deg(f)$ for functions $f : \{0,1\}^n \to \mathbb R$ as well.
(Since the change of domains is given by the linear map $x \mapsto 2x-1$,
the degree of polynomial representations do not change.)

If we are given a matrix-valued function $M : \{-1,1\}^n \to \mathbb R^{p \times q}$,
we can decompose $M$ as $M = \sum_{S \subseteq [n]} \hat M_S \chi_S$ where $(\hat M_S)_{ij} = \widehat{M_{ij}(S)}$,
and $\deg(M) = \max \{ \deg(M_{ij}) : i \in [p], j \in [q] \}$.
We refer to the book \cite{OD14} for additional
background on boolean Fourier analysis.

\medskip
\noindent
{\bf Quantum information theory.}
The {\em von-Neumann entropy} of a density operator $X$ is denoted $S(X) = -\Tr (X \log X)$.
For two density operators $X$ and $Y$ over the same vector space, the {\em quantum relative entropy} of $X$ with respect to $Y$ is the
quantity $\qe{X}{Y} = \Tr (X\cdot (\log X - \log Y))$.
Here, the operator function $\log$ is defined on positive operators as $\log X = - \sum_{k=0}^\infty \tfrac 1 k (\Id-X)^k$.
In general, for a function $g : I \to \mathbb R$ analytic on an open interval $I \subseteq \mathbb R$ and a symmetric
operator $X \in \cM(H)$, we define $g(X)$ via its Taylor series, with the understanding
that the spectrum of $X$ should lie in $I$.
Finally, we will often use the notation $\uId = \frac{\Id}{\Tr(\Id)}$ to
denote the uniform density matrix (i.e., the maximally mixed state), where the dimension
of the identity matrix $\Id$ is clear from context.
We refer to \cite{Wilde2013} for a detailed account of quantum information theory.

\subsection{Factorizations, quantum learning, and pseudo-densities}
\label{sec:overview}

First, we recall the setup of the main theorem in the paper.
Fix $m \geq 1$, a function $f : \{0,1\}^m \to \Rnn$ and let $d+2 = \sosdeg(f)$.  We define the
matrix $M_n^f$ as in \eqref{eq:restriction}.  Our goal is to show a
lower bound on the positive semidefinite rank of the matrix $M_n^f$.

Suppose we had a psd factorization
\begin{equation}\label{eq:mn-factor}
M_n^f(S,x) = \Tr(P(S) Q(x))
\end{equation}
witnessing $\psdrank(M) \leq r$.
First, we observe that lower bound on $\sosdeg(f)$ already precludes
certain \emph{low degree} psd factorizations.  More precisely, if
$R(x) = Q(x)^{1/2}$ then $\deg(R)$ is constrained to be at least $d/2$.
For the sake of contradiction, let us suppose $\deg(R) < d/2$.
For any row $M_n^f(S, \cdot)$ of the matrix $M_n^f$  we will have,
$$ f(x_S) = \Tr(P(S) R(x)^2) = \| \sqrt{P(S)} R(x)\|_F^2 \mper
$$
This contradicts $\sosdeg(f) = d+2$ since $\|\sqrt{P(S)} R(x)\|_F^2$
is a sum of squares of a polynomials of degree less than $d/2$.
\Pnote{}

\paragraph{Pseudo-densities and low degree psd factorizations}
By appealing to convex duality, it is possible to construct a {\it certificate} that the
matrix $M_n^f$ does not admit {\it low degree} factorizations.
The certificate consists of a linear functional that separates $M_n^f$
from the convex hull of matrices that admit low degree psd
factorizations.  Formally, if we define the convex set $\cC_{d}$ of
non-negative matrices as,
\begin{displaymath}
	\cC_d \defeq \left\{ N: \binom{n}{m} \times \bits^n \to \R
	\suchthat N(S,x) =
\Tr(P(S) R(x)^2), P(S) \succeq 0, \deg(R(x)) < d/2   \right\}
\end{displaymath}
then we will construct a linear functional $L$ on $\binom{n}{m} \times 2^n$ matrices such that
\begin{equation}\label{eq:separator}
  L(M_n^f) < 0, \textrm{ but } L(N) \geq 0 \textrm{ for all } N \in
  \cC_d\,.
\end{equation}
The linear functional is precisely the one derived from what
we refer to as a {\em pseudo-density}.

A {\em degree-$d$ pseudo-density} is a mapping $D\from \{0,1\}^m\to
\R$ such that $\E_x D(x) = 1$ and $\E_x D(x) g(x)^2 \ge 0$ for all
functions $g\from \{0,1\}^m\to \R$ with $\deg(g) \leq
d/2$.\footnote{Note that a degree-$d$ pseudo-density does not
necessarily have degree $d$ as a function on the discrete cube.}
Observe that for any probability distribution over $\{0,1\}^n$, its density function relative to the uniform distribution on $\{0,1\}^n$
satisfies the conditions of a degree-$d$ pseudo-density for every $d\in\N$.
One has the following characterization:
\begin{equation}\label{eq:sos-char}
\sosdeg(f) = \min \left\{ d \geq 0 : \E_x D(x) f(x) \geq 0 \textrm{ for every degree-$d$ pseudo-density $D$} \right\}.
\end{equation}
In other words, the sos degree of a function is larger than $d$ if and only if there exists a degree-$d$ pseudo-density
$D$ such that $\E_x D(x) f(x) < 0$.
To verify this, note that if $\sosdeg(f) > d$, then $f$ is not in the closed, convex
cone generated by the squares
of polynomials of degree at most $d/2$.  Now the required pseudo-density $D$ corresponds exactly
to (the normal vector of) a hyperplane separating $f$ from this cone.

Of course, if $D$ is an actual density (with respect to the uniform measure on $\{0,1\}^m$),
then $\E_x D(x) f(x)$ is precisely the expectation of $f$ under $D$.
For a pseudo-density $D$,
the corresponding linear functional $f \mapsto \E_x D(x) f(x)$ is referred
to as a {\em pseudo-expectation} in previous papers (see, e.g., \cite{DBLP:conf/stoc/BarakBHKSZ12,DBLP:conf/focs/ChanLRS13}),
and the map $D$ is called a {\em pseudo-distribution} in \cite{BarakKS14}.
Over finite domains, these notions are interchangeable.  We use the language
of densities here in anticipation of future applications to infinite domains
and non-uniform background measures (in the context of nonnegative rank,
this occurs already in \pref{sec:corr-lp}).
\Dnote{}
\Pnote{}

\medskip

Now fix a degree-$d$ pseudo-density $D$ with
$\E_x D(x) f(x) < 0.$
We define the following linear functional on matrices $N\from \binom{[n]}{m}\times \{0,1\}^n\to \R$:
\begin{equation}\label{eq:ldn}
  L_D(N)
  \defeq \E_{|S|=m} \E_x D(x_S) \cdot N(S,x)
  \mper
\end{equation}
Consider a matrix $N \in \cC_d$ which admits a low degree psd
factorization given by $N(S,x) = \Tr(P(S) R(x)^2)$.
Then since $D$ is a degree-$d$ pseudo-density,
we would have
\[ \label{eq:ldlowdegree}
L_D(N)  =
\E_{|S|=m} \E_x D(x_S) \Tr\left(P(S) R(x)^2\right) = \E_{|S|=m} \E_x
D(x_S) \|\sqrt{P(S)} R(x)\|_F^2  \geq 0 \mper
\]
However, since $D$ is negatively correlated with $f$,
\begin{equation} \label{eq:ldonf}
L_D(M_n^f) = \E_{|S|=m} \E_x D(x_S) \cdot M_n^f(S,x) = \E_{|S|=m} \E_x
D(x_S) \cdot f(x_S)  < -\epsilon \mper
\end{equation}
for some $\epsilon > 0$.

The core of our psd rank lower bound is to show that the linear functional $L_D$ in fact
separates the matrix $M_n^f$ from {\em all} low
rank psd factorizations, thereby certifying a lower bound on
$\psdrank(M_n^f)$.
Roughly speaking, the idea is to approximate an arbitrary psd
factorization using low degree factorizations with respect to the
linear functional $L_D$, and then appeal to the lower bound
\eqref{eq:ldlowdegree} for low
degree factorizations.

Formally, for a number $r\ge 1$, consider the following set $\cC_r$ of nonnegative matrices,
\begin{displaymath}
	\cC_r \defeq \left\{\vbig N \in \Rnn^{\binom{n}{m} \times
\bits^n} : \psdrank(N)\le r\cdot \lVert  N \rVert_1, ~ \lVert N\rVert _\infty \le 1\right\}\mper
\end{displaymath}
Here, $\lVert  N \rVert_1$ is the average of the entries of $N$ and $\lVert N\rVert _\infty$ is the maximum entry of $N$.
In the rest of the section, we will present an argument that unless
$r$ is very large, every matrix $N \in \cC_r$ satisfies $L_D(N) \geq
-\epsilon$.  Since $L_D(M_n^f) < -\epsilon$, this implies that
the linear functional $L_D$ separates $M_n^f$ from the
convex hull of $\cC_r$, thereby certifying a lower bound on
$\psdrank(M_n^f)$.

Fix a matrix $N \in \cC_r$.  It is
instructive to have the situation $\norm{N}_1,\norm{N}_\infty = \Theta(1)$ in mind for the rest of this
outline.  By definition of $\cC_r$, the matrix $N$ admits a psd
factorization of rank $O(r)$.  In light of the above discussion, our goal is to approximate the
matrix $N$ by a low degree factorization with respect to the
functional $L_D$.
A low degree approximation for $N$ is constructed in two steps.

\medskip
\noindent
{\bf Well-behaved factorizations.}
The first step involves obtaining a nicer
factorization of $N$.
Toward this end, we define the quantity
$$\gamma_r(M) \defeq \sup \left\{ \max_{i,j} \|A_i\| \cdot \|B_j\|_{*}
: N_{ij} = \Tr(A_i B_j), A_i,B_j\in \cS_+^r \quad \forall i \in [p], j
\in [q]\right\}\mcom$$
associated with a matrix $M \in \R_+^{p \times q}$.  The following lemma is proved by Bri\"et,
Dadush, and Pokutta \cite{DBLP:conf/esa/BrietDP13} (see also the discussion in
  \cite{FGPRT14}).

\begin{lemma}[Factorization rescaling]\label{lem:scaling}
For every nonnegative matrix $M$ with $\psdrank(M)\le r$, the following holds:
\[\gamma_r(M) \leq r^2\, \|M\|_{\infty}\,.\]
\end{lemma}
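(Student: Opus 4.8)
Reading the optimization in the definition of $\gamma_r$ as an infimum over rank-$r$ psd factorizations (as is forced for the quantity to be finite — with a supremum one could inflate some $A_i$ along a direction in $\bigcap_j\ker B_j$), it suffices to exhibit, for an arbitrary $M$ with $\psdrank(M)\le r$, a factorization $M_{ij}=\Tr(A_iB_j)$ with $A_i,B_j\in\cS_+^r$ and $\max_{i,j}\Norm{A_i}\cdot\Norm{B_j}_*\le r\Norm{M}_\infty$; this is even stronger than the stated bound. Using the homogeneity $\gamma_r(\lambda M)=\lambda\,\gamma_r(M)$ for $\lambda>0$, we may assume $\Norm{M}_\infty=1$ and aim for $\gamma_r(M)\le r$.

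\emph{Step 1 (reduction to a single normalizer).} Start from any psd factorization $M_{ij}=\Tr(A_iB_j)$ with $A_i,B_j\in\cS_+^r$. For any $R\succ 0$ with $R\succeq A_i$ for all $i$, the congruence $A_i':=R^{-1/2}A_iR^{-1/2}$, $B_j':=R^{1/2}B_jR^{1/2}$ is again a rank-$r$ psd factorization of $M$, and since $A_i\preceq R$ we get $\Norm{A_i'}=\lmax(R^{-1/2}A_iR^{-1/2})\le 1$, while $\Norm{B_j'}_*=\Tr(B_j')=\Tr(B_jR)$. Hence it is enough to find such an $R$ with $\max_j\Tr(B_jR)\le r$; that is, to show
\[
V^*\ :=\ \inf\Set{\max_j\Tr(B_jR)\suchthat R\succ 0,\ R\succeq A_i\ \forall i}\ \le\ r .
\]
(Requiring $R$ merely PSD does not change the infimum, since replacing $R$ by $R+\epsilon\Id$ is feasible and perturbs the objective by $O(\epsilon)$.)

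\emph{Step 2 (duality and the dimension bound).} Write $\max_j\Tr(B_jR)=\max_{\mu\in\Delta_q}\Tr(B_\mu R)$ with $B_\mu:=\sum_j\mu_jB_j$ and $\Delta_q$ the probability simplex. The objective is bilinear in $(R,\mu)$ and $\Delta_q$ is compact, so Sion's minimax theorem gives $V^*=\sup_{\mu\in\Delta_q}\inf\{\Tr(B_\mu R):R\succeq A_i\ \forall i\}$; for each fixed $\mu$ the inner SDP satisfies Slater's condition ($R=(1+\max_i\Norm{A_i})\Id$ is strictly feasible), so strong duality turns it into $\max\{\sum_i\Tr(\Lambda_iA_i):\Lambda_i\succeq 0,\ \sum_i\Lambda_i=B_\mu\}$. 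Fix $\mu$ and a dual-feasible $(\Lambda_i)$. From $0\preceq\Lambda_i\preceq B_\mu$ each $\Lambda_i$ is supported on $\mathrm{range}(B_\mu)$, so we may write $\Lambda_i=B_\mu^{1/2}P_iB_\mu^{1/2}$ with $P_i\succeq 0$ (using the inverse of $B_\mu$ on its range, and $P_i=0$ on its kernel); then $\sum_iP_i=\Pi$, the orthogonal projection onto $\mathrm{range}(B_\mu)$, whence $\sum_i\Tr(P_i)=\rank(B_\mu)\le r$. For each $i$,
\[
\Tr(\Lambda_iA_i)=\Tr\bigl(P_i\,B_\mu^{1/2}A_iB_\mu^{1/2}\bigr)\ \le\ \Bignorm{B_\mu^{1/2}A_iB_\mu^{1/2}}\cdot\Tr(P_i)\ \le\ \Tr\bigl(A_iB_\mu\bigr)\cdot\Tr(P_i)\ =\ \Bigl(\sum_j\mu_jM_{ij}\Bigr)\Tr(P_i)\ \le\ \Tr(P_i),
\]
using $\Tr(XY)\le\Norm{X}\Tr(Y)$ for PSD $X,Y$, then $\Norm{X}\le\Tr(X)$ together with $\Tr(B_\mu^{1/2}A_iB_\mu^{1/2})=\Tr(A_iB_\mu)$, and finally $\sum_j\mu_jM_{ij}\le\Norm{M}_\infty=1$. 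Summing over $i$ gives $\sum_i\Tr(\Lambda_iA_i)\le r$, hence $V^*\le r$; undoing the normalization $\Norm{M}_\infty=1$ yields $\gamma_r(M)\le r\Norm{M}_\infty\le r^2\Norm{M}_\infty$.

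The main obstacle, and the key idea, is the reduction in Step 1: one cannot balance the two sides of a factorization directly, because forcing $\Norm{A_i'}\le 1$ for \emph{all} $i$ via one normalizer $R\succeq\sum_iA_i$ makes each $\Norm{B_j'}_*=\Tr(B_jR)$ into a column sum of $M$, which is unbounded in terms of $\Norm{M}_\infty$. Insisting instead only on $R\succeq A_i$ individually (and then optimizing) is what makes the dimension $r$ — rather than the number of rows — control the bound: in Step 3 the dual variables $\Lambda_i$, renormalized through $B_\mu$, assemble into a POVM on $\mathrm{range}(B_\mu)$, a space of dimension at most $r$. The remaining points (unbounded feasible set for $R$, and invocation of Sion's theorem and SDP strong duality) are routine given the Slater-type observations above.
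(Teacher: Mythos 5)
Your proof is correct, and in fact it establishes the stronger bound $\gamma_r(M)\le r\,\|M\|_\infty$, which of course implies the stated $r^2\,\|M\|_\infty$. The paper itself does not prove \pref{lem:scaling}; it states the bound and cites Bri\"et--Dadush--Pokutta, so there is no ``paper proof'' to match against. You also correctly flag what is evidently a typo in the display defining $\gamma_r$: it must be an infimum (or minimum) over rank-$r$ psd factorizations, not a supremum — a supremum would be $+\infty$ as soon as $\bigcap_j\ker B_j\neq 0$, and the way $\gamma_r$ is used in the proof of \pref{thm:overview-psd-factorization} (fixing a factorization with $\max_{i,j}\|A_i\|\cdot\|B_j\|_*=\gamma$, then rescaling so that $\|A_i\|\le\gamma$, $\|B_j\|_*\le 1$) only makes sense under the infimum reading.

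Your argument itself is clean: the conjugation in Step 1 correctly reduces the problem to bounding $V^*=\inf\{\max_j\Tr(B_jR):R\succ 0,\ R\succeq A_i\;\forall i\}$; the passage $\max_j\Tr(B_jR)=\max_{\mu\in\Delta_q}\Tr(B_\mu R)$, Sion's minimax (bilinear objective, $\Delta_q$ compact, feasible $R$-set convex and objective bounded below by $0$), and SDP strong duality via the strictly feasible $R=(1+\max_i\|A_i\|)\Id$ are all applied correctly. In the dual bound, the deduction $\ker B_\mu\subseteq\ker\Lambda_i$ from $0\preceq\Lambda_i\preceq B_\mu$, the renormalization $\Lambda_i=B_\mu^{1/2}P_iB_\mu^{1/2}$ with $\sum_iP_i=\Pi_{\mathrm{range}(B_\mu)}$, and the chain $\Tr(\Lambda_iA_i)\le\|B_\mu^{1/2}A_iB_\mu^{1/2}\|\Tr(P_i)\le\Tr(A_iB_\mu)\Tr(P_i)\le\Tr(P_i)$ are each correct, and summing gives $\le\rank(B_\mu)\le r$. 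The only cosmetic blemish is the stray reference to ``Step 3'' in your closing paragraph, which should read ``Step 2.''

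One comparative remark: the key structural move — replacing a single normalizer $R\succeq\sum_iA_i$ (which would give a row-count-dependent bound) by the inequalities $R\succeq A_i$ individually, and then extracting dimension $r$ from the dual POVM on $\mathrm{range}(B_\mu)$ — is exactly what lets the ambient dimension, rather than the number of rows or columns, govern the bound. This is the conceptual content of Bri\"et--Dadush--Pokutta's rescaling lemma; your duality route to it is streamlined and yields the better constant $r$ in place of $r^2$.
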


Applying the above lemma to the matrix $N$ at hand, we get a psd
factorization $N(S,x) = \Tr(P(S)Q(x))$ wherein $\norm{P(S)}$ and
$\norm{Q(x)}_*$ are bounded polynomially in $r$.
This analytic control on the factorization will be important for controlling error bounds,
but also---in a more subtle way---for the next step.

\medskip
\noindent
{\bf Learning a low-degree quantum approximation.}
The next step of the argument exploits the following phenomenon
concerning quantum learning.
Fix a $k \geq 1$ and consider a matrix-valued function $Q : \{0,1\}^n \to \cS_+^k$ such that
$\E_x \Tr(Q(x))=1$.
We will try to approximate $Q$ by a simpler mapping
with respect to a certain class of test functionals $\Lambda : \{0,1\}^n \to \cS_+^k$.
If $\tilde Q$ is the approximator, we would like that
\begin{equation}
\label{eq:test}
\left|\bbE_x \Tr\left(\vphantom{\bigoplus} \Lambda(x) (Q(x)-\tilde Q(x))\right)\right| \leq \e
\end{equation}
for some parameter $\e > 0$.  (In this case, $\tilde Q$ and $Q$ are indistinguishable to the test $\Lambda$
up to accuracy $\e$.)
One can set this up as a quantum learning problem in the following way.
We define the density matrix $U_Q = \E_x (e_x e_x^T \otimes Q(x))$
and the PSD matrix $V_{\Lambda} = \sum_x (e_x e_x^T \otimes \Lambda(x))$.\footnote{In
the quantum information literature, these are sometimes called
QC states for ``quantum/classical.''}
Note that $\bbE_x \Tr(\Lambda(x) Q(x)) = \Tr(V_{\Lambda} U_Q)$.

Now, if $\cT$ is a family of test functionals, then a canonical way of finding a ``simple''
approximation to $U_Q$ that satisfies all the tests is via the following
maximum-entropy (convex) optimization problem:
\begin{equation}\label{eq:max-ent}
\max \left\{ S(\tilde U) : \Tr(\tilde U)=1, \tilde U \succeq 0, |\Tr(V_{\Lambda} (U_Q-\tilde U))| \leq \e \,\, \forall \Lambda \in \cT \right\},
\end{equation}
where we recall that $S(\cdot)$ denotes the quantum entropy functional.
Moreover, one can attempt to solve this optimization by some form of projected sub-gradient descent.
Interpretations of this algorithm go by many names, notably the ``matrix multiplicative weights update method''
and ``mirror descent'' with quantum entropy as the regularizer; see, e.g., \cite{MR702836,MR1967286,MR2249846,DBLP:conf/stoc/AroraK07,MR2904324}
and the recent survey \cite{Bubeck2014}.

In our setting, we are not directly concerned with efficiency, but instead simplicity
of the approximator.  A key phenomenon is that when the class of tests $\cT$ is simple,
the approximator inherits this simplicity.  Moreover, one can tailor the nature of the
approximator by choosing the sub-gradient steps wisely.
In \pref{sec:gener-appr-dens} (\pref{thm:general-operator-approx}), we prove a generalization of the following approximation theorem.
(Recall that $\uId = \Id/\Tr(\Id)$ is the uniform density matrix.)

\begin{theorem}[Approximation by a low-degree square]
\label{thm:low-deg-approx}
Let $\kappa \geq 1$ and $\omega > 0$ be given.
Define
\[
\cT_{\kappa,\omega} = \left\{ \Lambda : \{0,1\}^n \to \cS_+^k : \deg(\Lambda) \leq \kappa, \|\Lambda(x)\| \leq \omega\,\,\forall x \in \{0,1\}^n \right\}.
\]
For any $Q : \{0,1\}^n \to \cS_+^k$ with $\E_x \Tr(Q(x))=1$, there is a matrix-valued function $R : \{0,1\}^n \to \cS_+^k$
with $\E_x \Tr(R(x)^2)=1$ satisfying
\begin{equation}\label{eq:deg-bound}
\frac{\deg(R)}{\kappa} \lesssim \left(1+\qe{U_Q}{\uId}\right) \frac{\omega}{\e}\,,
\end{equation}
and for all tests $\Lambda \in \cT_{\kappa,\omega}$,
\[
\left|\E_x \Tr\left(\vphantom{\bigoplus} \Lambda(x) (Q(x)-R(x)^2)\right)\right| \leq \e\,.
\]
\end{theorem}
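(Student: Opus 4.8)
The plan is to recognize \pref{thm:low-deg-approx} as an instance of the matrix multiplicative weights update method (equivalently, mirror descent for the maximum-entropy program \pref{eq:max-ent} with the von Neumann entropy as regularizer), followed by a final step that replaces the matrix exponential produced by the algorithm with a low-degree polynomial. Work throughout with the lifted density operator $U_Q = \E_x(e_x e_x^T \otimes Q(x))$ on $\R^{2^n}\otimes\R^k$ and the lifted tests $V_\Lambda = \sum_x (e_x e_x^T \otimes \Lambda(x))$, so that $\E_x \Tr(\Lambda(x)Q(x)) = \Tr(V_\Lambda U_Q)$ and, crucially, $\|V_\Lambda\| \le \omega$ (indeed $0 \preceq V_\Lambda \preceq \omega\,\Id$) for every $\Lambda \in \cT_{\kappa,\omega}$, since the blocks $\Lambda(x)$ are positive semidefinite of operator norm at most $\omega$. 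Build candidate approximators the same way, $\tilde U = \E_x(e_x e_x^T \otimes \tilde Q(x))$; then every operator in sight ($V_\Lambda$, $\log\uId = -\log(2^n k)\,\Id$, and hence every iterate) is block-diagonal with $2^n$ blocks indexed by $x$, which is what will let us read off a function of $x$ at the end.

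The algorithm: initialize $\tilde U_0 = \uId$ (so $\tilde Q_0(x) \propto \Id_k$, a degree-$0$ function). Given $\tilde U_t$, stop if $|\Tr(V_\Lambda(U_Q - \tilde U_t))| \le \e$ for all $\Lambda \in \cT_{\kappa,\omega}$; otherwise pick a violated test $\Lambda_{t+1}$ together with a sign $\sigma_{t+1}\in\{-1,1\}$ so that $\sigma_{t+1}\Tr(V_{\Lambda_{t+1}}(U_Q-\tilde U_t)) > \e$, and perform the multiplicative update
\[
\tilde U_{t+1} \;=\; \frac{\exp\!\left(\log\uId + \tfrac{\eta}{\omega}\sum_{s\le t+1}\sigma_s V_{\Lambda_s}\right)}{\Tr\exp(\,\cdot\,)}\,,\qquad \eta \asymp \e/\omega\,.
\]
Track the potential $\Phi_t := \qe{U_Q}{\tilde U_t}$, which is nonnegative with $\Phi_0 = \qe{U_Q}{\uId}$. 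Using the Golden--Thompson inequality to compare the non-commuting $\log$ and $\exp$ (the standard estimate $\Tr\exp(\log\tilde U_t + \eta G) \le \Tr(\tilde U_t\exp(\eta G)) \le 1 + \eta\Tr(\tilde U_t G) + \eta^2$ for $-\Id \preceq G \preceq \Id$, here with $G = \tfrac{\sigma_{t+1}}{\omega}V_{\Lambda_{t+1}}$), one gets $\Phi_{t+1}-\Phi_t \le -\tfrac{\eta}{\omega}\sigma_{t+1}\Tr(V_{\Lambda_{t+1}}(U_Q-\tilde U_t)) + \eta^2 \le -\Omega(\e^2/\omega^2)$ after optimizing $\eta$. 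Hence the algorithm halts after $T \lesssim (1+\qe{U_Q}{\uId})\,\omega^2/\e^2$ rounds, and at that point $\tilde U_T$ is $\e$-indistinguishable from $U_Q$ against every $\Lambda \in \cT_{\kappa,\omega}$.

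It remains to extract a low-degree $R$. By block-diagonality, the $x$-block of $\tilde U_T$ is proportional to $\exp(A(x))$, where $A(x) = \tfrac{\eta}{\omega}\sum_{s\le T}\sigma_s\Lambda_s(x)$. Since each $\Lambda_s$ has degree $\le \kappa$ we have $\deg(A)\le\kappa$, and since $\|\Lambda_s(x)\|\le\omega$ we have $\|A(x)\| \le \eta T \lesssim (1+\qe{U_Q}{\uId})\,\omega/\e$. Writing $\tilde Q_T(x) = \exp(A(x))/\E_y\Tr\exp(A(y))$, the exact choice $R_\star(x) = \exp(A(x)/2)/\sqrt{\E_y\Tr\exp(A(y))}$ is positive semidefinite with $R_\star(x)^2 = \tilde Q_T(x)$ and $\E_x\Tr(R_\star(x)^2) = 1$. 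Now replace $\exp(A(x)/2)$ by a positive-semidefinite matrix polynomial of the claimed degree: take $R(x) = q(A(x))^2/(\text{normalization})$ where $q$ is a univariate polynomial approximating $e^{z/4}$ uniformly on $[-\|A(x)\|,\|A(x)\|]$, of degree $\lesssim \|A(x)\| + \log(1/\e') \lesssim (1+\qe{U_Q}{\uId})\,\omega/\e$ (up to the lower-order logarithmic term); then $R(x)\succeq 0$, $\deg(R) \le 2\deg(q)\cdot\kappa \lesssim \kappa\,(1+\qe{U_Q}{\uId})\,\omega/\e$, and $R(x)^2 \approx \tilde Q_T(x)$, the error being absorbed into a slight worsening of $\e$ since the tests satisfy $\|V_\Lambda\|\le\omega$. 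Renormalizing so that $\E_x\Tr(R(x)^2)=1$ gives the conclusion; the more general test classes and the attendant bookkeeping are exactly what \pref{thm:general-operator-approx} is designed to handle.

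The main obstacle is the per-step potential-decrease lemma, i.e.\ showing $\qe{U_Q}{\tilde U_{t+1}} \le \qe{U_Q}{\tilde U_t} - \Omega(\e^2/\omega^2)$: this is where one must compare $\exp$ and $\log$ of non-commuting operators, which forces the use of the Golden--Thompson inequality (or Lieb concavity in sharper variants) and where the width control $\|V_\Lambda\|\le\omega$ is essential to keep the gain operators within $[-\Id,\Id]$ after rescaling. The second, more routine but still delicate, point is the terminal polynomial-approximation step: a bare Taylor truncation of $\exp(A(x)/2)$ has degree $\sim \|A(x)\|\cdot\kappa$ but need not be positive semidefinite, so one must instead approximate via the square of a lower-degree polynomial and verify, uniformly over $x\in\{0,1\}^n$, that the resulting trace-norm error is invisible to the bounded tests.
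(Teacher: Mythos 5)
Your proposal is correct and matches the paper's architecture at a high level (learn an exponential-family approximator via entropy-regularized updates, then replace the matrix exponential by a low-degree polynomial), but the core potential-decrease step is proved by a genuinely different route. You run the standard discrete matrix multiplicative weights update and control $\qe{U_Q}{\tilde U_{t+1}}-\qe{U_Q}{\tilde U_t}$ via Golden--Thompson together with $e^{\eta G}\preceq \Id+\eta G+\eta^2\Id$ for $-\Id\preceq G\preceq\Id$; the paper instead analyzes a \emph{continuous-time} mirror-descent path $Q_t\propto\exp(\log Q_0-\int_0^t\Lambda_s\,ds)$ (\pref{lem:mirror}), where $\tfrac{d}{dt}\qe{Q}{Q_t}=-\Tr(\Lambda_t(Q-Q_t))$ is an exact identity and the only inequality needed is a second-order ``stability'' bound $\Tr\bigl(X'(t)\tfrac{d}{dt}\tfrac{e^{X(t)}}{\Tr e^{X(t)}}\bigr)\le 2\lVert X'(t)\rVert^2$, proved by the Duhamel formula \pref{eq:integral-rep} and an eigenbasis computation, with no Golden--Thompson anywhere. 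Both yield the same $T\lesssim(1+\qe{U_Q}{\uId})\omega^2/\e^2$, hence the same $\deg(R)$ up to constants; the paper's variant is perhaps cleaner (no commuting-approximation inequality) and also handles a general convex test set $\cT$ in \pref{thm:general-operator-approx}, while yours is the more familiar discrete algorithm. Two small points to tighten: in the terminal step, ``approximate $e^{z/4}$ uniformly on $[-\lVert A\rVert,\lVert A\rVert]$'' should really be a \emph{multiplicative} approximation $\lvert q(z)-e^{z/4}\rvert\le\delta e^{z/4}$ (which the truncated Taylor series does give), because the additive error at the left endpoint, once divided by the normalization $\Tr e^{A}$, can be amplified by $e^{\lVert A\rVert}$; this is exactly what \pref{lem:uniapprox}--\pref{cor:taylor-exp} handle. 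Second, your choice $R=q(A)^2$ (rather than the paper's $R=p(\lambda F/2)$, which is symmetric but not PSD) actually gives $R\in\cS_+^k$ literally as the statement requires, at the harmless cost of a factor of two in $\deg(R)$.
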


In other words, the learning algorithm produces a hypothesis with error at most $\e$ for all the tests in $\cT_{\kappa,\omega}$;
moreover, the hypothesis is the square of a polynomial whose degree is not much larger than that of the tests.
The value $\omega$ corresponds to the ubiquitous ``width'' parameter and, as in most applications of the
multiplicative weights method, bounding $\omega$ will be centrally important.
The reader should also take note of the appearance of the relative entropy in the degree bound \eqref{eq:deg-bound}.
It will turn out that low psd rank factorizations will give us functions $Q : \{0,1\}^n \to \cS_+^k$ with high entropy
(and thus small relative entropy with respect to the uniform state);
this is actually a direct consequence of the factorization rescaling in \pref{lem:scaling}.

Notice that the separating functional $L_D$ induces a test of degree
at most $m$.  Therefore, if one takes for granted, as claimed above,
that $\qe{U_Q}{\frac{\Id}{\Tr(\Id)}}$ is small
when $Q$ comes from a low psd rank factorization, then
\pref{thm:low-deg-approx} suggests that we might think of $Q(x)$
as being a low-degree square. %

\medskip
\noindent

\medskip
\noindent
{\bf Proof sketch for \pref{thm:sos-vs-psd}.}
We have all the ingredients to sketch a proof of \pref{thm:sos-vs-psd}.
First, suppose that $\sosdeg(f) > d$ so that by \eqref{eq:sos-char}, there
exists a degree-$d$ pseudo-density $D$ with $\E_x f(x) D(x) < -\e\|f\|_{\infty}$ for some $\e > 0$.
(We do not specify any quantitative bound on $\e$ at the moment,
but we write it this way to indicate how one can get improved bounds
under stronger assumptions.)

Then from the definition of $M_n^f$, we have
\begin{equation}\label{eq:imp}
L_D(M^f_n) < - \e\|M^f_n\|_{\infty}\,.
\end{equation}

On the other hand, we will prove the following theorem in \pref{sec:proof-of-main-theorem}.

\begin{theorem}
  \label{thm:intro-funct}
  For every $m,d \geq 1$, every $\e \in (0,1]$, and
  every degree-$d$ pseudo-density $D : \{0,1\}^m \to \bbR$,
  there exists a number $\alpha > 0$ such that
  for every $n \geq 2m$ and every nonnegative matrix $N\from \binom{[n]}{m}\times \{0,1\}^n\to \R$ satisfying
  \begin{gather*}
    \lVert N\rVert_\infty  \le 1\mcom \textrm{ and}\\
    \tfrac 1 {\lVert  N \rVert_1}\psdrank(N)^2\le \alpha (n/\log n)^{d/2}\,,
  \end{gather*}
we have  $L_D(N)\ge -\e$.
\end{theorem}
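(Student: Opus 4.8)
The plan is to combine the factorization-rescaling lemma (\pref{lem:scaling}) with the quantum-learning approximation theorem (\pref{thm:low-deg-approx}) to show that any low-psd-rank matrix $N$ looks, through the eyes of the degree-$m$ test $L_D$, like a low-degree sum of squares, which in turn forces $L_D(N)\ge 0$ up to an additive $\e$-error because $D$ is a degree-$d$ pseudo-density. Concretely, fix $N$ satisfying the hypotheses and write $r=\psdrank(N)$. By \pref{lem:scaling} applied to $N/\lVert N\rVert_1$ (or directly, tracking the $\lVert N\rVert_1$ and $\lVert N\rVert_\infty$ factors) we obtain a psd factorization $N(S,x)=\Tr(P(S)Q(x))$ with $\lVert P(S)\rVert$ and $\lVert Q(x)\rVert_*$ bounded polynomially in $r$ and in $\lVert N\rVert_1,\lVert N\rVert_\infty$. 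After a global rescaling we may assume $\E_x\Tr(Q(x))=1$, absorbing the scaling into $P(S)$, so that each $P(S)$ has operator norm at most $\omega$ where $\omega=\poly(r)\cdot \lVert N\rVert_1$-type bound; this $\omega$ is exactly the ``width'' that will enter the degree bound.

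Next I would bound the relative entropy $\qe{U_Q}{\uId}$, where $U_Q=\E_x(e_xe_x^{\mathsf T}\otimes Q(x))$. Since $\E_x\Tr(Q(x))=1$, $U_Q$ is a density operator on a space of dimension $2^n k$, and $\qe{U_Q}{\uId}=\log(2^n k)-S(U_Q)\le \log(2^n k)\le n + \log k$; using the analytic control on $Q$ from the rescaling one in fact gets the sharper bound $\qe{U_Q}{\uId}\lesssim \log(\text{something}\cdot r)$, but even the crude $O(n+\log k)$ bound is the shape we want. Now apply \pref{thm:low-deg-approx} with $\kappa=m$ (the degree of the test induced by $L_D$) and the width $\omega$ above, and with the target accuracy $\e$ from the theorem statement; this produces $R:\{0,1\}^n\to\cS_+^k$ with $\E_x\Tr(R(x)^2)=1$, with
\[
\deg(R)\lesssim m\cdot(1+\qe{U_Q}{\uId})\cdot\frac{\omega}{\e}\,,
\]
and such that the matrix $\tilde N(S,x)\defeq \lVert N\rVert_1\cdot\Tr(P(S)R(x)^2)$ satisfies $|L_D(N)-L_D(\tilde N)|\le \e$ — here one unfolds $L_D$ as the test $\Lambda(x)=D(x_S)P(S)$ restricted to each row $S$ (after checking the width of this $\Lambda$ is $\le\omega$, using $\lVert D\rVert_\infty\le\poly(n)$ and the norm bound on $P(S)$, possibly at the cost of adjusting $\omega$ by a $\poly(n)$ factor), averaging over $S$. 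Then $\tilde N$ is, row by row, a sum of squares of functions of degree $\deg(R)<d/2$ — this is where the degree bound must beat $d/2$, which is precisely what the hypothesis $\tfrac{1}{\lVert N\rVert_1}\psdrank(N)^2\le\alpha(n/\log n)^{d/2}$ is engineered to guarantee once $\alpha$ is chosen small enough as a function of $m,d,\e,D$ — so by the pseudo-density property $L_D(\tilde N)=\E_{|S|=m}\E_x D(x_S)\lVert N\rVert_1 \lVert\sqrt{P(S)}R(x)\rVert_F^2\ge 0$, and hence $L_D(N)\ge L_D(\tilde N)-\e\ge -\e$.

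The main obstacle is the quantitative bookkeeping that connects the three quantities $\psdrank(N)$, $\qe{U_Q}{\uId}$, and $\deg(R)$, so that $\deg(R)<d/2$ exactly when $\tfrac{1}{\lVert N\rVert_1}\psdrank(N)^2\lesssim (n/\log n)^{d/2}$: one has a product of a $\poly(\psdrank)$ width, a $\qe{}$ term that I claim is only logarithmic in the relevant parameters (this logarithmic bound, rather than the naive $O(n)$, is essential to get the $(n/\log n)$ base rather than a useless constant, and it leans on the rescaling lemma giving $Q(x)$ with bounded trace norm so that $S(U_Q)$ is close to maximal), and a $\log k$-type dependence on the factorization dimension $k$ that must be controlled — when $k$ itself is $\poly(\psdrank(N))$ this all closes up. A secondary technical point is verifying that the test functional $\Lambda(x)=D(x_S)P(S)$ genuinely lies in the test class $\cT_{\kappa,\omega}$: its degree is at most $m$ (since $x\mapsto D(x_S)$ has degree $\le m$ in $x$ and $P(S)$ is a constant matrix), but its operator norm involves $\lVert D\rVert_\infty$, which I would bound by a $\poly(n)$ quantity depending only on $D$ and $m$ — this $\poly(n)$ factor is harmless because it only changes $\omega$ polynomially and hence $\deg(R)$ by a polynomial factor, which is absorbed into the choice of $\alpha$.
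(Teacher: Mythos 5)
Your proposal correctly identifies the rescaling (\pref{lem:scaling} / \pref{thm:overview-psd-factorization}), the relative-entropy bound, and the quantum-learning approximation (\pref{thm:low-deg-approx}) as the ingredients, and the high-level architecture matches the paper until the very last step, where there is a genuine gap that cannot be patched by ``choosing $\alpha$ small enough.''

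The fatal step is your claim that the hypothesis on $\psdrank(N)$ can be ``engineered to guarantee'' $\deg(R)<d/2$, so that $\tilde N$ is row-by-row a sum of squares of degree-$<d/2$ functions and the pseudo-density property applies directly. This is false. The degree bound from \pref{thm:low-deg-approx} (or from \pref{thm:operator-approx-intro}) is of the shape
\[
\deg(R)\;\lesssim\; m\cdot\bigl(1+\qe{U_Q}{\uId}\bigr)\cdot \frac{\omega}{\e}\,.
\]
Even in the best case — sharp entropy bound $\qe{U_Q}{\uId}\lesssim \log(\psdrank(N)/\lVert N\rVert_1)\lesssim d\log n$ and width $\omega=\lVert D\rVert_\infty$ a constant — this gives $\deg(R)\gtrsim m\,d\log n /\e$, which is enormously larger than $d/2$. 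Shrinking $\alpha$ does not help, because the $1$ in $(1+\qe{\cdot}{\cdot})$ and the factors $m$, $\omega$, $1/\e$ are all fixed by the problem; there is no way to make the right-hand side smaller than $d/2$. The paper says this explicitly at the end of \pref{sec:overview}: ``The one serious issue with the preceding argument is that our supposition is far too strong: One cannot expect to have $\deg(R)\le d/2$.'' The missing idea is the \emph{degree-reduction / random-restriction} step (\pref{thm:overview-degree-reduction}): $L_D$ averages $D(x_S)$ over a random $m$-subset $S$ of $[n]$, and a degree-$\ell$ function restricted to such a random $S$ has its ``$S$-degree'' pushed down by roughly $\ell m/n$ in expectation. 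This lets one handle approximators $R$ of degree as large as $n^{o(1)}$ and still get $L_D(\tilde N)\ge -\e$ with a quantitatively controlled error $(\ell m/(n-m))^{d/4}\cdot(\text{norms})$. Without this step your argument does not close.

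Two secondary points. First, you hedge inconsistently on the relative-entropy bound: you write that ``even the crude $O(n+\log k)$ bound is the shape we want,'' but later (correctly) say the logarithmic bound $\qe{U_Q}{\uId}\lesssim\log\psdrank(N)$ is essential. The crude $O(n)$ bound is in fact useless here — it would make $\deg(R)\gtrsim mn$ and the random-restriction error factor $(\ell m/n)^{d/4}$ would fail to be small. Second, your handling of the test width $\omega$ is murky: the natural bound $\lVert P(S)\rVert\lesssim \psdrank(N)^2/\eta$ from \pref{thm:overview-psd-factorization} is large, and the paper's argument relies on the normalization $\E_S P_S=\Id$ to get $\bigl\lVert \E_S D(x_S)P_S\bigr\rVert\le\lVert D\rVert_\infty$ for the width of the test $F$; a $\poly(n)$ width would ruin the degree bound rather than being ``absorbed into $\alpha$.''
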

Now if we consider the normalized matrix $N=M^f_n/ \lVert M^f_n\rVert_\infty$, we see that it satisfies the first premise $\lVert  N \rVert_\infty \le 1$
but violates the conclusion of the theorem (because of \eqref{eq:imp}).
Therefore we know that the second premise is violated, which gives the lower bound
\[
\psdrank(N)^2 > \alpha (n/\log n)^{d/2} \cdot \|N\|_1 = \alpha (n/\log n)^{d/2} \E_x f(x)\,.
\]

Since this achieves our goal, we are left to explain why \pref{thm:intro-funct} should be true,
at least when we apply it with $N=M_n^f/\|M_n^f\|_{\infty}$.
If we apply $L_D$ to the right-hand side of \eqref{eq:mn-factor}---our presumed factorization for $M^f_n$---we arrive at the expression
\begin{equation}\label{eq:Qexp}
L_D(M_n^f) = \E_x \Tr\left(\E_{|S|=m} D(x_S) P(S) Q(x)\right).
\end{equation}
We can view this as a test on $Q$ in the sense of \pref{thm:low-deg-approx}.
Since $\deg(D) \leq m$ (because $D$ is only a function of $m$ variables), this is a low-degree test.
\pref{thm:low-deg-approx} then suggests that we can replace $Q$ by a low-degree approximator
$R^2$, while losing only $\e$ in the ``accuracy'' of the test.

Since the approximation property implies that $Q(x)$ and $R(x)^2$
should perform similarly under the test (up to the ``accuracy'' $\e$), we
would conclude that $L_D(M_n^f) \geq -\e$, yielding
the conclusion of \pref{thm:intro-funct}.

\medskip
\noindent
{\bf Random restriction and degree reduction.}
The one serious issue with the preceding argument is that our supposition
is far too strong:  One cannot expect to have $\deg(R) \leq d/2$.  Indeed,
the guarantee of \pref{thm:low-deg-approx}
tells us that the approximator $R(x)$ has degree at most $K \cdot \deg(D)$ for some (possibly large) number $K$
(which itself depends on many parameters).
To overcome this problem, we use another crucial property of our functional \eqref{eq:ldn}:  It is an expectation
over small sets $S \subseteq [n]$.  If we randomly choose such a subset with $|S|=m \ll n$ and randomly choose
values $y_{\bar S}$ for the variables in $\bar S$, we expect that the
resulting (partially evaluated) polynomial $R(x_S, x_{\bar
S})|_{x_{\bar S}=y_{\bar S}}$ will satisfy $\deg(R(x_S,
x_{\bar S})|_{x_{\bar S}=y_{\bar S}}) \ll \deg(R)$.
(Strictly speaking, this will only be true in an approximate sense.)

It is precisely this degree reduction property of random restriction that saves the preceding sketch.
In the next sections, we perform a more delicate quantitative analysis capable of achieving much stronger lower bounds.
The norm $\|D\|_{\infty}$ of the pseudo-density will play a central role in this study.
Thus in \pref{sec:psd-corr}, we
show that Grigoriev's proof of \pref{thm:grigor}
can be carefully recast in the language of pseudo-densities
such that the resulting pseudo-density has small norm.

\ifnum\stocmode=1
\medskip
\fi

\ifnum\stocmode=1

\newpage

\subsection*{Acknowledgments}

This work was supported, in large part, by NSF grant CCF-1407779.
A significant fraction of the project was completed during a long-term
visit of the authors to the Simons Institute for the Theory of Computing (Berkeley)
for the program on Algorithmic Spectral Graph Theory.
The authors would also like to thank Paul Beame, Siu-On Chan, Daniel Dadush, Troy Lee,
Sebastian Pokutta, Pablo Parrilo, Mohit Singh, Ola Svensson, Thomas Rothvo\ss, and Rekha Thomas for valuable discussions
and comments.

\addreferencesection
\bibliographystyle{amsalpha}
\bibliography{bib/mr,bib/dblp,bib/scholar,bib/lpsize}

\newpage

\thispagestyle{empty}
\vspace*{\fill}
\begingroup
\centering
\begin{center} {\Huge \textbf{APPENDIX: Full paper continued.}}
\end{center}

\endgroup
\vspace*{\fill}

\newpage
\fi

\section{PSD rank and sum-of-squares degree}
\label{sec:theproof}

We now move to proving the main technical theorems of the paper
along the lines of the informal overview presented in \pref{sec:overview}.

\subsection{Analysis of the separating functional}
\label{sec:proof-of-main-theorem}

\Jnote{}
\Pnote{}

Recall that for a pseudo-density $D\from \{0,1\}^m\to \R$ and $n\ge 1$, we define a linear functional $L_D$ on matrices $N\from \binom{[n]}{m}\times \{0,1\}^n\to \R$
\begin{displaymath}
  L_D(N) \defeq \E_{x} \E_S D(x_S) N(S,x)
  \mcom
\end{displaymath}
where the expectation over $S$ is a uniform average over all $S \subseteq [n]$ with $|S|=m$ (as will be the case
throughout this section).
We will use the notation $\lVert  N \rVert_\infty =\max_{S,x} N(S,x)$ and $\lVert  N \rVert_1 = \E_{S,x}N(S,x)$.

\medskip

We prove the following quantitative version of Theorem \ref{thm:psdrank-separation}.
As discussed in \pref{sec:overview}, this theorem implies a lower bound on the $\psdrank(M_n^f)$ in terms of $\sosdeg(f)$.
This implication will be proved formally in \pref{sec:mainthms}.

\begin{theorem}[Strengthening of \pref{thm:intro-funct}]
  \label{thm:gamma-separation}
  \label{thm:psdrank-separation}
  For every $m,d \geq 1$, every $\e \in (0,1]$, and every degree-$d$ pseudo-density $D\from \{0,1\}^m \to \bbR$, there exists a number $\alpha > 0$ such that
  whenever $n \geq 2m$ and a nonnegative matrix $N\from \binom{[n]}{m}\times \{0,1\}^n\to \R$ satisfies
  \begin{gather*}
    \lVert N\rVert_\infty  \le 1\mcom\\
    \tfrac 1 {\lVert  N \rVert_1}\psdrank(N)^2\le  \alpha (n/\log n)^{d/2}\,,
  \end{gather*}
  we have $L_D(N)\ge -\e$.
Moreover, this holds for
\begin{displaymath}
    \alpha = \left (\frac{C\e}{d m^2 \lVert  D \rVert_\infty} \right  )^{d/2} \left(\frac{\e}{\lVert  D \rVert_\infty}\right)^3 \mcom
  \end{displaymath}
  where $C > 0$ is a universal constant.
\end{theorem}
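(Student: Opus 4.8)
The plan is to execute the three-step strategy of \pref{sec:overview} with quantitative control: manufacture from an arbitrary small psd factorization of $N$ a \emph{low-degree} psd factorization that is indistinguishable from it under $L_D$, and then appeal to the fact that, since $D$ is a degree-$d$ pseudo-density, $L_D$ is nonnegative on low-degree psd factorizations.

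\emph{Well-behaved factorization and entropy bound.} Set $r=\psdrank(N)$ and $\rho = r^2/\norm{N}_1$, so the hypothesis reads $\rho\le\alpha(n/\log n)^{d/2}$. By \pref{lem:scaling} there is a rank-$r$ psd factorization $N(S,x)=\Tr(P(S)Q(x))$ with $\max_{S,x}\norm{P(S)}\cdot\norm{Q(x)}_*\le r^2\norm{N}_\infty\le r^2$; after rescaling $P\leftarrow\lambda P$, $Q\leftarrow\lambda^{-1}Q$ we may assume $\E_x\Tr Q(x)=1$, and then, using $\norm{N}_1\le\max_S\norm{P(S)}\cdot\E_x\norm{Q(x)}_*$ to pin down $\lambda$, one arranges that $\max_S\norm{P(S)}$ and $\max_x\norm{Q(x)}_*$ are bounded by $O(\rho)$. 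Forming the density $U_Q=\E_x(e_x e_x^T\otimes Q(x))$, a direct computation gives $\qe{U_Q}{\uId}=\log r+\E_x\Tr(Q(x)\log Q(x))$; since $\Tr(Q(x)\log Q(x))\le\Tr(Q(x))\cdot\log\norm{Q(x)}_*$ while $\E_x\Tr Q(x)=1$, this is only $O(\log\rho)$. This is the high-entropy phenomenon alluded to in \pref{sec:overview}: a low-rank factorization forces the relative entropy to be logarithmic in $\rho$.

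\emph{Quantum learning and degree reduction.} Write $L_D(N)=\E_x\Tr\bigl(\Lambda(x)Q(x)\bigr)$ with $\Lambda(x)=\E_{|S|=m}D(x_S)P(S)$, noting $\deg(\Lambda)\le\deg(D)\le m$; adding a bounded multiple of the identity makes $\Lambda$ positive semidefinite while leaving $L_D$ unchanged (because $\E_x\Tr(Q(x)-R(x)^2)=0$ for the $R$ below), so $\Lambda$ becomes a legitimate test. Applying \pref{thm:low-deg-approx} --- in the general form \pref{thm:general-operator-approx} --- with $\kappa=m$, the width $\omega$ of this test, and accuracy $\e/2$, yields $R:\{0,1\}^n\to\cS_+^r$ with $\E_x\Tr(R(x)^2)=1$, with $\deg(R)=:t$ bounded by $O\bigl(m\,(1+\qe{U_Q}{\uId})\,\omega/\e\bigr)$, and with $|L_D(N)-L_D(\tilde N)|\le\e/2$ for $\tilde N(S,x)=\Tr(P(S)R(x)^2)$. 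To bound $L_D(\tilde N)$ from below, write $\Tr(P(S)R(x)^2)=\sum_j h_{j,S}(x)^2$ with $\deg(h_{j,S})\le t$; for a random $m$-set $S$ and random values $y$ on $\bar S$, split each restriction $h_{j,S}(x_S,y)$ into its degree-$\le d/2$ part and a remainder. The degree-$\le d/2$ parts contribute nonnegatively to $\E_{x_S}D(x_S)(\cdot)^2$ because $D$ is a degree-$d$ pseudo-density, while a hypergeometric tail estimate shows that for $|T|\le t$ one has $\Pr_{|S|=m}[\,|S\cap T|>d/2\,]\lesssim(tm/(nd))^{d/2}$, so the expected squared $L^2$-mass of all remainders is at most this quantity times $\E_{S,x}\Tr(P(S)R(x)^2)=O(\rho)$. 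Handling the cross terms carefully, so that the final error is essentially linear in the tail probability rather than in its square root, gives $L_D(\tilde N)\ge-\e/2$ --- hence $L_D(N)\ge-\e$ --- provided $(tm/(nd))^{d/2}$ is small enough relative to $\e,\rho,\norm{D}_\infty$; feeding in the bound on $t$ and the hypothesis $\rho\le\alpha(n/\log n)^{d/2}$, this sufficiency condition is exactly what the stated value of $\alpha$ is engineered to satisfy.

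\emph{Main obstacle.} The crux is the quantitative interlock of the two halves: \pref{thm:low-deg-approx} bounds $\deg(R)$ only through the test's width and the relative entropy, whereas the random-restriction collapse needs $\deg(R)\cdot m/n$ to lie well below $d$ so that the hypergeometric probability decays like $(\cdot)^{d/2}$. Making these cooperate is precisely what forces $\psdrank(N)^2/\norm{N}_1\lesssim(n/\log n)^{d/2}$ and what dictates the dependence of $\alpha$ on $m$, $d$, $\e$, and $\norm{D}_\infty$; it requires measuring the width by an averaged quantity that does not grow with the rank, the relative-entropy bound being only logarithmic in $\rho$, and a careful treatment of the high-degree remainder so that the tail probability enters linearly. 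Everything else --- the factorization rescaling, the identity shift, and the Fourier/hypergeometric estimates --- is routine once this machinery is in place.
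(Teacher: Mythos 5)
Your outline correctly identifies the three moving parts of the paper's argument — rescaling the factorization, the entropy/learning step giving a low-degree square approximator, and the random-restriction degree reduction against the pseudo-density — and it correctly flags the need for the identity shift $\Lambda\mapsto\Lambda+c\Id$ (which is harmless because $\E_x\Tr(Q(x)-R(x)^2)=0$) and the hypergeometric tail estimate $\Pr_{|S|=m}[\,|S\cap T|>d/2\,]\lesssim(tm/n)^{d/2}$. However, the central quantitative step is not carried out, and as written the argument does not close.

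The gap is in how you control the width of the test $\Lambda(x)=\E_{|S|=m}D(x_S)P(S)$. You invoke \pref{lem:scaling} and then a scalar rescaling $P\leftarrow\lambda P$, $Q\leftarrow\lambda^{-1}Q$, claiming this arranges $\max_S\lVert P(S)\rVert$ and $\max_x\lVert Q(x)\rVert_*$ to be $O(\rho)$ with $\rho=\psdrank(N)^2/\lVert N\rVert_1$. Even granting this (and the bound on $\max_x\lVert Q(x)\rVert_*$ is not actually deducible from a scalar rescaling — one can have a single $x_0$ with $\lVert Q(x_0)\rVert_*$ arbitrarily large while $\E_x\Tr Q(x)=1$), the resulting width is $\lVert\Lambda(x)\rVert\le\lVert D\rVert_\infty\,\lVert\E_S P(S)\rVert\le\lVert D\rVert_\infty\max_S\lVert P(S)\rVert=O(\lVert D\rVert_\infty\,\rho)$. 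With $\rho\le\alpha(n/\log n)^{d/2}$ this makes the degree bound from \pref{thm:low-deg-approx} of order $t\gtrsim m\,\rho\,\lVert D\rVert_\infty/\e\gtrsim n^{d/2}$, so $tm/n\gg 1$ and the hypergeometric bound gives nothing; the degree reduction step cannot fire. What the theorem actually requires is $\lVert\E_SP(S)\rVert=O(1)$ — an \emph{averaged} rather than worst-case control, as you correctly intuit in your closing paragraph — and this is not a consequence of \pref{lem:scaling} plus scalar rescaling. The paper obtains it by the regularized conjugation in \pref{thm:overview-psd-factorization}: replace $A_i$ by $\eta\lVert M\rVert_\infty\Id+A_i$, set $A=\E_i(\eta\lVert M\rVert_\infty\Id+A_i)$, and conjugate $P_i=A^{-1/2}(\eta\lVert M\rVert_\infty\Id+A_i)A^{-1/2}$, $Q_j=A^{1/2}B_jA^{1/2}$. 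This forces $\E_iP_i=\Id$ exactly (hence $\lVert F\rVert\le\lVert D\rVert_\infty$) while keeping $\lVert P_i\rVert\lesssim\psdrank(M)^2/\eta$ and $Q_j\sleq\mathrm{poly}(\psdrank(M))\,\uId$, at the price of an additive $\eta\lVert M\rVert_\infty$ error that must then be charged against $\e$. Your sketch neither performs this conjugation nor accounts for the resulting additive slack, so the width bound — and with it the entire quantitative conclusion $\alpha\sim(\e/(dm^2\lVert D\rVert_\infty))^{d/2}(\e/\lVert D\rVert_\infty)^3$ — is unsupported.

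A secondary concern: you assert that the cross terms in the degree split can be handled so that "the final error is essentially linear in the tail probability rather than in its square root." The paper's argument (see \pref{lem:matrix-1}) goes through $\lvert\lVert AB\rVert_F^2-\lVert AB_{\low}\rVert_F^2\rvert=(\lVert AB\rVert_F+\lVert AB_{\low}\rVert_F)\lvert\lVert AB\rVert_F-\lVert AB_{\low}\rVert_F\rvert$ followed by Cauchy--Schwarz, which is exactly a square-root loss, producing the exponent $d/4$ in \pref{thm:overview-degree-reduction}; that $d/4$ is what, after squaring, yields the $d/2$ exponent in the hypothesis on $\psdrank(N)^2/\lVert N\rVert_1$. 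If you could really make the error linear in the tail probability you would prove an exponent of $d$ rather than $d/2$, which is stronger than the theorem and would need a genuinely new argument; as stated it is an unsubstantiated claim, not a refinement.
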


The proof of this theorem consists of two parts.
First, we observe that if $D$ is a degree-$d$ pseudo-density,
 then $L_D(N)$ is nonnegative for all matrices $N$ that admit a factorization in terms of squares of low-degree polynomials, i.e., a factorization $N(S,x)=\Tr(A_S^2 B_x^2)$ for symmetric matrics $\{A_S\}$ and $\{B_x\}$ such that the function $x\mapsto B_x$ has degree at most $d/2$ over $\{0,1\}^n$.\footnote{For the convenience of the reader,
 we recall that the degree of the matrix-valued function $x \mapsto B_x$ is defined
 as the maximum degree of the functions $x \mapsto (B_x)_{ij}$ where $i,j$
 range over the indices of $B_x$.}
Indeed, consider such a factorization.
Then,
\[
L_D(N) = \E_{x} \E_S D(x_S) \Tr(A_S^2 B_x^2)
= \E_{S} \left(\E_x D(x_S) \|A_S B_x\|_F^2 \right) \geq 0\,,
\]
where the inequality used the fact that $D$ is a degree-$d$ pseudo-density (hence $\E_x D(x) g(x)^2 \geq 0$ whenever $\deg(g) \leq d/2$).

As explained in \pref{sec:overview}, this guarantee is not sufficient for us.
The following theorem (proved in Section \ref{sec:degree-reduction})
allows us to analyze $L_D$ even when the degree of the map $x \mapsto B_x$ is much larger than $d/2$.
(For $m\le n^{o(1)}$, it will be the case
that the linear functional $L_D$ is approximately nonnegative on $N$ even when $x\mapsto B_x$ has degree up to $n^{o(1)}$).

\begin{theorem}[Degree reduction]
\label{thm:overview-degree-reduction}
  Consider postive numbers $n \geq 1$ and $d,k,m \leq n$.
  Let $D\from \{0,1\}^m\to R$ be a degree-$d$ pseudo-density.
  Let $N' \from \binom{[n]}{m}\times \{0,1\}^n\to \R$ be a matrix that
  admits a factorization $N'(S,x)=\Tr A_S^2 B_x^2$ for symmetric
  matrices $\{A_S\}$ and $\{B_x\}$ such that the matrix-valued
  function $x\mapsto B_x$ has degree at most $\ell$.
  Then,
  \begin{displaymath}
    L_D(N') \succsim -  \left  (\frac{\ell m}{n-m}\right)^{d/4}  \lVert  D \rVert_{\infty} \cdot \Paren{\Bigparen{ \max_{S}\lVert  A_S^2  \rVert\cdot \E_x \Tr(B_x^2)}\cdot\Bigparen{ \E_x \E_S N'(S,x)}}^{1/2}\mper
  \end{displaymath}
\end{theorem}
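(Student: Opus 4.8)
The plan is to open up $L_D(N')$ using the given factorization, reduce each entry to a pseudo-density test on the restricted coordinates $x_S$, and then use a random-restriction argument to show that only the low Fourier degrees (in $x_S$) contribute.

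Set $h^{(S)}_{ij}(x)\defeq (A_SB_x)_{ij}$. Since $x\mapsto B_x$ has degree at most $\ell$ and $A_S$ is independent of $x$, each $h^{(S)}_{ij}$ is a polynomial of degree at most $\ell$ in $x$, and $\|A_SB_x\|_F^2=\sum_{i,j}h^{(S)}_{ij}(x)^2=\Tr(A_S^2B_x^2)$. Hence, writing $x=(x_S,x_{\bar S})$,
\[
L_D(N') = \E_S\E_x D(x_S)\Tr(A_S^2B_x^2) = \E_S\,\E_{x_{\bar S}}\sum_{i,j}\E_{x_S}D(x_S)\,h^{(S)}_{ij}(x_S,x_{\bar S})^2.
\]
Fix $S$, $x_{\bar S}$, $i,j$, and regard $h\defeq h^{(S)}_{ij}(\cdot,x_{\bar S})$ as a function of $x_S$; split it by Fourier degree (in $x_S$) as $h=h^{\le d/2}+h^{>d/2}$. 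Because $D$ is a degree-$d$ pseudo-density and $\deg(h^{\le d/2})\le d/2$, we have $\E_{x_S}D(x_S)(h^{\le d/2})^2\ge 0$, while Cauchy--Schwarz bounds the two remaining terms in absolute value by $\|D\|_\infty\bigparen{2\|h^{\le d/2}\|_2\|h^{>d/2}\|_2+\|h^{>d/2}\|_2^2}\le 3\|D\|_\infty\|h\|_2\|h^{>d/2}\|_2$, using $\|h^{>d/2}\|_2\le\|h\|_2$ (here $\|\cdot\|_2$ is the $L^2$ norm over uniform $x_S$). Averaging over $S,x_{\bar S}$, summing over $i,j$, and applying Cauchy--Schwarz once more gives
\[
L_D(N')\ \ge\ -3\|D\|_\infty\Bigparen{\E_S\E_x\|A_SB_x\|_F^2}^{1/2}\Bigparen{\E_S\,\E_{x_{\bar S}}\sum_{i,j}\bignorm{\bigparen{h^{(S)}_{ij}(\cdot,x_{\bar S})}^{>d/2}}_2^2}^{1/2},
\]
and the first factor equals $\bigparen{\E_x\E_S N'(S,x)}^{1/2}$ by cyclicity of the trace.

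It remains to bound the ``high-degree restricted mass'' in the second factor, which is where random restriction enters. The key point is that the $x$-dependence of $h^{(S)}_{ij}$ runs only through $B_x$, which is independent of $S$: writing $\hat B_T$ for the $T$-th Fourier coefficient matrix of $x\mapsto B_x$ (so $\hat B_T=0$ for $|T|>\ell$), one has $\widehat{h^{(S)}_{ij}}(T)=(A_S\hat B_T)_{ij}$. A standard computation (taking $\E_{x_{\bar S}}$ of the square projects onto those $T$ with a prescribed intersection with $S$) gives $\E_{x_{\bar S}}\bignorm{\bigparen{h^{(S)}_{ij}(\cdot,x_{\bar S})}^{>d/2}}_2^2 = \sum_{T\,:\,|T\cap S|>d/2}\widehat{h^{(S)}_{ij}}(T)^2$, so summing over $i,j$ and using $\sum_{i,j}(A_S\hat B_T)_{ij}^2=\|A_S\hat B_T\|_F^2\le\|A_S^2\|\,\|\hat B_T\|_F^2$ (as $A_S$ is symmetric),
\[
\E_S\,\E_{x_{\bar S}}\sum_{i,j}\bignorm{\bigparen{h^{(S)}_{ij}(\cdot,x_{\bar S})}^{>d/2}}_2^2\ \le\ \Bigparen{\max_S\|A_S^2\|}\sum_{T\,:\,|T|\le\ell}\|\hat B_T\|_F^2\cdot\Pr_S\brac{|T\cap S|>d/2}.
\]
Since $|T\cap S|$ is hypergeometric with $|T|\le\ell$, a Markov estimate on the $t$-th factorial moment (with $t$ the least integer exceeding $d/2$), together with $\binom{n-t}{m-t}/\binom nm=\prod_{i=0}^{t-1}\tfrac{m-i}{n-i}\le (m/(n-m))^t$, yields $\Pr_S[|T\cap S|>d/2]\le\binom{\ell}{t}\bigparen{m/(n-m)}^t\le\bigparen{\ell m/(n-m)}^{d/2}$. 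Combining this with $\sum_T\|\hat B_T\|_F^2=\E_x\|B_x\|_F^2=\E_x\Tr(B_x^2)$ and substituting back gives exactly
\[
L_D(N')\ \ge\ -3\|D\|_\infty\Bigparen{\tfrac{\ell m}{n-m}}^{d/4}\Bigparen{\Bigparen{\max_S\|A_S^2\|}\cdot\E_x\Tr(B_x^2)\cdot\E_x\E_S N'(S,x)}^{1/2}.
\]

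The main obstacle is precisely the $S$-dependence of the factor matrices $A_S$: a priori $h^{(S)}_{ij}$ depends on $S$ both through $A_S$ and through which coordinates are being restricted, so one cannot directly express the high-degree mass as the Fourier tail of a single fixed function averaged over random restrictions. The resolution — observing that the only $x$-dependence runs through the $S$-independent, degree-$\le\ell$ map $x\mapsto B_x$, so that $\widehat{h^{(S)}_{ij}}(T)=(A_S\hat B_T)_{ij}$ and the spectral norm $\|A_S^2\|$ can be pulled out uniformly before taking the $S$-average — is what makes the random-restriction step go through; after that only the routine hypergeometric tail bound remains, and the universal constant $3$ is harmless for the stated inequality.
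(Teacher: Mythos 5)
Your proof is correct and follows essentially the same route as the paper's: decompose into parts of low and high degree in the $S$-coordinates, use the pseudo-density property on the low part, control the cross/high contributions via Cauchy--Schwarz, and bound the expected high-degree Fourier mass via the hypergeometric tail estimate $\Pr_S[|T\cap S|>d/2]\le(\ell m/(n-m))^{d/2}$. The only cosmetic differences from the paper's Lemmas~\ref{lem:matrix-1}--\ref{lem:matrix-2} are that you work entrywise with $h^{(S)}_{ij}(\cdot,x_{\bar S})$ after fixing $x_{\bar S}$ rather than with the matrix-valued split $B=B_{S,\low}+B_{S,\hi}$, and you expand the square directly (getting the harmless constant $3$ instead of the paper's $2$) rather than using the $|x^2-y^2|=|x+y||x-y|$ factorization.
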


With this theorem in place, our goal is to approximate every matrix
$N$ with low psd rank by a matrix $N'$ that satisfies the premise of
\pref{thm:overview-degree-reduction} for a reasonable value of $\ell$.
Here, our notion of approximation is fairly weak.
We only require $L_D(N)\ge L_D(N')-\e$ for sufficiently small $\e>0$.
As a preliminary step, the following general theorem about psd factorizations allows us to assume that the factorization for $N$ is
appropriately scaled.
Recall that $\uId = \Id/\Tr(\Id)$ is the uniform density matrix.
%

\Jnote{}

\begin{theorem}[psd factorization scaling]
\label{thm:overview-psd-factorization}
  For every nonnegative matrix $M \in \mathbb R^{p \times q}$ and every $\eta \in (0,1]$, there exist psd matrices $\{P_i\}_{i \in [p]}$ and $\{Q_j\}_{j \in [q]}$ with the following properties:
  \begin{enumerate}
  \item $ M_{i,j} \leq  \Tr(P_i Q_j) \le M_{i,j}+\eta \|M\|_{\infty}$,
  \label{item:psd-factor-approx}
  \item $\frac{1}{p} \sum_{i=1}^p P_i = \Id,$
    \label{item:psd-factor-p1}
  \item $\lVert  P_i \rVert\leq
    2 \psdrank(M)^2/\eta$ for all $i\in [p]$,
    \label{item:psd-factor-p2}
    \item $Q_j \sleq \norm{M}_\infty (\eta+\psdrank(M)^{2}) \psdrank(M) \,\uId$ for all $j\in[q]$.
  \label{item:psd-factor-q}
  \end{enumerate}
\end{theorem}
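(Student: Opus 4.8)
The plan is to begin from a near-optimal psd factorization of $M$, ``whiten'' it so that the left-hand factors average to the identity, and then add a small psd correction that restores the normalization \emph{exactly} while moving each entry by at most $\eta\|M\|_\infty$. Write $r=\psdrank(M)$ and assume $M\neq 0$ (otherwise take all $P_i=\Id$ and all $Q_j=0$). First I would apply the factorization rescaling lemma (\pref{lem:scaling}) to obtain a rank-$r$ psd factorization $M_{ij}=\Tr(\tilde P_i\tilde Q_j)$ in which $\|\tilde P_i\|\cdot\|\tilde Q_j\|_*\le r^2\|M\|_\infty$ for all $i,j$; rescaling the whole factorization by the scalar $a=\max_i\|\tilde P_i\|$ (that is, replacing $\tilde P_i$ by $\tilde P_i/a$ and $\tilde Q_j$ by $a\tilde Q_j$), I may assume $\|\tilde P_i\|\le 1$ for every $i$, whence $\Tr(\tilde Q_j)=\|\tilde Q_j\|_*\le r^2\|M\|_\infty$, and in particular $\|\tilde Q_j\|\le r^2\|M\|_\infty$, for every $j$.

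Next, set $Z=\tfrac1p\sum_{i=1}^p\tilde P_i$, so that $0\preceq Z$ and $\|Z\|\le 1$; choose $\eta'=\eta/(2r^2-\eta)$, which lies in $(0,\eta/r^2]$ since $\eta\le 1\le r^2$; and put $W=Z+\eta'\Id$, so that $\eta'\Id\preceq W\preceq 2\Id$ and $\|W^{-1}\|\le 1/\eta'$. I would then define, for $i\in[p]$ and $j\in[q]$,
\begin{displaymath}
  P_i \;=\; W^{-1/2}\tilde P_i W^{-1/2}+\eta'\,W^{-1},\qquad\qquad Q_j \;=\; W^{1/2}\tilde Q_j W^{1/2},
\end{displaymath}
all of which are psd of dimension $r$. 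Cyclicity of the trace and $W^{1/2}W^{-1/2}=\Id$ give $\Tr\!\bigl(W^{-1/2}\tilde P_iW^{-1/2}\,Q_j\bigr)=\Tr(\tilde P_i\tilde Q_j)=M_{ij}$, while $0\le\Tr(\eta'W^{-1}Q_j)=\eta'\Tr(\tilde Q_j)\le\eta' r^2\|M\|_\infty\le\eta\|M\|_\infty$, which establishes \pref{item:psd-factor-approx}. Since $\tfrac1p\sum_iW^{-1/2}\tilde P_iW^{-1/2}=W^{-1/2}ZW^{-1/2}=\Id-\eta'W^{-1}$, the correction term makes $\tfrac1p\sum_iP_i=\Id$ exactly, which is \pref{item:psd-factor-p1}. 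For \pref{item:psd-factor-p2}, from $W^{-1/2}\tilde P_iW^{-1/2}\preceq\|\tilde P_i\|\,W^{-1}\preceq W^{-1}$ and $\eta'W^{-1}\preceq\Id$ we get $\|P_i\|\le\|W^{-1}\|+1\le 1/\eta'+1=2r^2/\eta$. For \pref{item:psd-factor-q}, $Q_j\preceq\|\tilde Q_j\|\,W\preceq(1+\eta')\|\tilde Q_j\|\,\Id\preceq(1+\eta')r^2\|M\|_\infty\,\Id\preceq(r^2+\eta)\|M\|_\infty\,\Id=(\eta+r^2)\,r\,\|M\|_\infty\,\uId$, using $(1+\eta')r^2\le r^2+\eta$ and $\uId=\Id/r$ for $r\times r$ matrices.

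The argument is bookkeeping once the construction is fixed, so the one point really requiring thought is the construction itself: it is the whitening \emph{followed by} the additive correction $\eta'W^{-1}$ — rather than settling for $\tfrac1p\sum_iP_i\preceq\Id$ — that lets us meet \pref{item:psd-factor-p1} on the nose, and this correction is affordable only because \pref{lem:scaling} controls the \emph{trace} of each $\tilde Q_j$, not merely its operator norm, so that $\Tr(\eta'W^{-1}Q_j)=\eta'\Tr(\tilde Q_j)$ stays at most $\eta\|M\|_\infty$. The main (if modest) obstacle is then simply to verify that a single choice of $\eta'$ satisfies all three required inequalities at once — $\eta'\Tr(\tilde Q_j)\le\eta\|M\|_\infty$, $(1+\eta')r^2\le r^2+\eta$, and $1/\eta'+1\le 2r^2/\eta$ — which the value $\eta'=\eta/(2r^2-\eta)$ does, being legitimate since $\eta\le 1\le r^2$.
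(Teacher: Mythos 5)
Your construction is correct and is essentially the paper's own: you whiten by $W^{-1/2}(\cdot)W^{-1/2}$ with $W$ equal to the average of the left factors plus a small multiple of the identity, and you offset $P_i$ by that same small multiple so that the normalization is exact — the paper writes this as $P_i = A^{-1/2}(\eta\|M\|_\infty\Id + A_i)A^{-1/2}$ with $A = \eta\|M\|_\infty\Id + \tfrac1p\sum_i A_i$, which is your $W^{-1/2}(\tilde P_i + \eta'\Id)W^{-1/2}$ after the alternative normalization $\|A_i\| \leq \gamma$, $\|B_j\|_* \leq 1$. The only differences are bookkeeping: which side you pre-normalize and the precise shift constant ($\eta\|M\|_\infty$ in the paper vs.\ your $\eta'=\eta/(2r^2-\eta)$).
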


\begin{proof}
Let $r=\psdrank(M)$.
By \pref{lem:scaling}, we have $\gamma \seteq \gamma_r(M) \leq r^2 \lVert  M \rVert_\infty$.
Fix a factorization $M_{i,j} = \Tr(A_i B_j)$ such that $\max_{i,j}
\norm{A_i} \cdot \norm{B_j}_* = \norm{M}_\infty \cdot
\psdrank(M)^2$ and $A_i,B_j\in \R^{r\times r}$.
  By an appropriate
normalization, we may assume $A_i, B_j \succeq 0$
and $\|A_i\| \leq \gamma$, $\|B_j\|_* \leq 1$.
To construct psd matrices $\{P_i\}$ and $\{Q_j\}$ with the desired
properties, make the following definitions:
\begin{align*}
A &= \eta \|M\|_{\infty} \Id + \frac{1}{p} \sum_{i=1}^p A_i \\
P_i &= A^{-1/2} (\eta \|M\|_{\infty} \Id+A_i) A^{-1/2}\\
Q_j &= A^{1/2} B_j A^{1/2}\,.
\end{align*}
Note that \pref{item:psd-factor-p1} holds by construction.
Also observe that \[\Tr(P_i Q_j) = M_{i,j} + \eta\|M\|_{\infty}  \Tr(A^{-1} A^{1/2}
	B_j
	A^{1/2}) = M_{i,j} + \eta \|M\|_{\infty} \Tr(B_j)\,,\]
verifying \pref{item:psd-factor-approx}.  Finally, we have the inequalities for all $i \in [p],j \in [q]$,
\begin{align*}
\|P_i\| & \leq \frac{1}{\eta \|M\|_{\infty}} (\eta \|M\|_{\infty} + \|A_i\|) \leq
1 + \frac{\gamma}{\eta \|M\|_{\infty}} \mcom \\
\|Q_j\|_* & \leq \|A\| \cdot \|B_j\|_* \leq \gamma+\eta \|M\|_{\infty}
 \mper
\end{align*}
The first inequality verifies \pref{item:psd-factor-p2} since $r \geq 1$ and $\eta \leq 1$.
The last inequality implies that \[Q_j \sleq (\gamma+\eta \|M\|_{\infty}) r \frac{\Id}{\Tr(\Id)} \sleq r \|M\|_{\infty} (\eta + r^2) \frac{\Id}{\Tr(\Id)}\] for all $j \in [q]$,
verifying \pref{item:psd-factor-q}.
\end{proof}

Consider a matrix of the form $N : {[n] \choose m} \times \{0,1\}^n \to \Rnn$ with $\|N\|_{\infty} \leq 1$ and
let $\e > 0$ be given.
Apply \pref{thm:overview-psd-factorization} with a value $\eta \in (0,1]$
to be chosen later to obtain a factorization
\[
N(S,x) = \Tr(P_S Q_x)
\]
satisfying the conclusions of the theorem.

We can view the matrix-valued function $x\mapsto Q_x$ as a density matrix $Q=\tfrac 1{\E_x (\Tr Q_x)}\E_x (e_x e_x^T \otimes Q_x)$.
(The first $n$ bits in this density matrix are ``classical'' and their marginal distribution has density $x\mapsto \Tr Q_x$.
If we condition $Q$ on an assignment $x\in \{0,1\}^n$ to the first $n$ bits, the resulting quantum state is $\tfrac 1{\Tr Q_x}Q_x$).
Here, the normalization factor $\tau = \E _x \Tr Q_x$ for the density matrix $Q$ satisfies
\begin{equation}
  \label{eq:tau-normalize}
  \tau = \E_x \Tr Q_x
  \annotaterel{\text{(Thm \ref{thm:overview-psd-factorization}(2))}}= \E_x \E_S \Tr P_S Q_x
  \begin{cases}
    & \annotaterel{\text{(Thm \ref{thm:overview-psd-factorization}(1)}}\ge \E_S \E_x
    N(S,x) =\lVert N \rVert_1\mcom\\
    & \annotaterel{\text{(Thm \ref{thm:overview-psd-factorization}(1))}}\le \E_S \E_x
    N(S,x) +\eta  \le 1 + \eta\mcom
  \end{cases}
\end{equation}
where the last inequality has used $\|N\|_{\infty} \leq 1$.

From \pref{thm:overview-psd-factorization}(4),
the density matrix $Q$ satisfies
\[Q \sleq \frac{(\eta + \psdrank(N)^2) \psdrank(N)}{\tau} \uId\,.\]
Therefore,
\begin{equation}
  \qe{Q}{\uId}\precsim \log (\eta \psdrank(N)/\tau) \le \log \bigl(\psdrank(N)/\lVert  N \rVert_1\bigr)\mper
  \label{eq:entropy-deficit}
\end{equation}
\pref{thm:overview-psd-factorization}(1)
allows us to lower bound $L_D(N)$ in terms of the matrix $(S,x)\mapsto \Tr(P_S Q_x)$ and value $\|D\|_{\infty}$:
\begin{align}
  L_D(N)& = \E_{x} \E_S D(x_S) N(S,x)\notag\\
  & \ge \E_x \E_S D(x_S) \cdot  \Tr (P_S Q_x)   - \eta \lVert  D \rVert_\infty  \quad \text{(by \pref{thm:overview-psd-factorization}(1))}\notag\\
  & = \tau \cdot \Tr (F Q)  -\eta \lVert  D \rVert_\infty
  \mcom
  \label{eq:sos-2}
\end{align}
where $F$ is the symmetric matrix
\begin{equation}
  \label{eq:define-f}
  F=\sum_{x \in \{0,1\}^n}  e_x e_x^T \otimes F_x\text{ with }F_x=\E_S D(x_S) P_S.
\end{equation}

\pref{thm:overview-psd-factorization}(2) allows us to upper bound the spectral norm of $F$ by
\begin{equation}
\label{eq:sos-1}
  \lVert  F \rVert \le \max_x \left\lVert  \E_S D(x_S) P_S \right\rVert \le \lVert  D \rVert_\infty \cdot \left \lVert  \E_S P_S \right \rVert \annotaterel{(Thm \ref{thm:overview-psd-factorization}(2))}= \lVert  D \rVert_\infty.
\end{equation}

The next theorem allows us to lower bound $\Tr(F Q)$
by replacing $Q$ with a simpler density matrix that is a low-degree polynomial in $F$.
(See \pref{thm:operator-approx}, where a slightly more general version is proved.)

\begin{theorem}[Density matrix approximation]
  \label{thm:operator-approx-intro}
  Let $H$ be some finite-dimensional real inner-product space.
  Let $F \in \cM(H)$ be a symmetric matrix and let $Q \in \cD(H)$ be a density matrix.
  Then, for every $\e>0$, there exists a degree-$k$ univariate
  polynomial $p$ with  $k \precsim   (1+\qe{Q}{\uId})\cdot \lVert F\rVert/\e$ such that the density matrix $\tilde Q =  \tfrac 1{\Tr p(F)^2}p(F)^2$ satisfies
  \begin{equation}
    \label{eq:operator-approx-intro}
    \Tr \Paren{\vbig F \tilde Q} \le \Tr (F Q) + \e\mper
  \end{equation}
\end{theorem}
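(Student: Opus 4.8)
The plan is to realize $\tilde Q$ as a low-degree polynomial approximation of a Gibbs state. First I would dispose of two trivial cases: if $\qe{Q}{\uId}=0$ then $Q=\uId$ and we may take $p\equiv 1$ (so $\tilde Q=Q$); and if $\lVert F\rVert\le \e/2$ then again $p\equiv 1$ works, since $\tilde Q=\uId$ gives $\Tr(F\uId)\le\lVert F\rVert\le\e/2$ while $\Tr(FQ)\ge-\lVert F\rVert\ge-\e/2$. So assume $\qe{Q}{\uId}>0$ and $\lVert F\rVert>\e/2$. Set $\beta = 2\,\qe{Q}{\uId}/\e>0$ and consider the Gibbs density matrix $\rho_\beta = e^{-\beta F}/\Tr(e^{-\beta F})$; the target polynomial will be a truncated Taylor expansion of $x\mapsto e^{-\beta x/2}$, so that $p(F)^2\approx e^{-\beta F}$ and $\tilde Q\approx\rho_\beta$.

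The first half of the argument is the standard Gibbs variational bound. Since $\log\rho_\beta=-\beta F-\bigl(\log\Tr(e^{-\beta F})\bigr)\Id$, nonnegativity of the quantum relative entropy $\qe{Q}{\rho_\beta}\ge 0$ (Klein's inequality) rearranges to $\beta\Tr(FQ)\ge S(Q)-\log\Tr(e^{-\beta F})$, with equality when $Q$ is replaced by $\rho_\beta$. Subtracting these and using the maximum-entropy bound $S(\rho_\beta)\le\log\Tr(\Id)$ together with $\log\Tr(\Id)-S(Q)=\qe{Q}{\uId}$ yields
\[
\Tr(F\rho_\beta)\ \le\ \Tr(FQ)+\frac{\qe{Q}{\uId}}{\beta}\ =\ \Tr(FQ)+\tfrac{\e}{2}\,.
\]
So it suffices to choose $p$ so that $\Tr(F\tilde Q)\le\Tr(F\rho_\beta)+\e/2$ with $\deg(p)\lesssim(1+\qe{Q}{\uId})\lVert F\rVert/\e$.

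The spectrum of $F$ lies in $[-\lVert F\rVert,\lVert F\rVert]$, and a routine Taylor-tail estimate shows that for $k\asymp \beta\lVert F\rVert+\log(\lVert F\rVert/\e)$ the degree-$k$ truncation $p$ of $e^{-\beta x/2}$ satisfies $|p(x)-e^{-\beta x/2}|\le\delta\,e^{-\beta x/2}$ for all $x$ in this interval, where $\delta\asymp\e/\lVert F\rVert$ is a small enough constant multiple of $\e/\lVert F\rVert$. Hence $p(x)^2e^{\beta x}\in[1-O(\delta),1+O(\delta)]$ on the spectrum. Diagonalizing $F=\sum_i\lambda_i\Pi_i$ over distinct eigenvalues and writing $w_i=e^{-\beta\lambda_i}\Tr(\Pi_i)>0$ and $a_i=p(\lambda_i)^2e^{\beta\lambda_i}\in[1-O(\delta),1+O(\delta)]$, we have $\Tr(p(F)^2)=\sum_i a_iw_i>0$ (so $\tilde Q$ is well defined), $\Tr(F\rho_\beta)=\bigl(\sum_i\lambda_iw_i\bigr)/\bigl(\sum_iw_i\bigr)$, and $\Tr(F\tilde Q)=\bigl(\sum_i\lambda_ia_iw_i\bigr)/\bigl(\sum_ia_iw_i\bigr)$. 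Writing $a_i=1+\delta_i$ with $|\delta_i|\le O(\delta)$ and using $|\lambda_i|\le\lVert F\rVert$ and $\sum_i|\lambda_i|w_i\le\lVert F\rVert\sum_iw_i$, these two weighted averages of the $\lambda_i$ differ by $O(\delta\lVert F\rVert)$, which is at most $\e/2$ for an appropriate choice of the implied constant in $\delta$. Combining with the previous paragraph gives $\Tr(F\tilde Q)\le\Tr(FQ)+\e$. Finally, $\beta\lVert F\rVert=2\,\qe{Q}{\uId}\,\lVert F\rVert/\e$ and, since we reduced to $\lVert F\rVert>\e/2$, we have $\log(\lVert F\rVert/\e)\le\lVert F\rVert/\e$, so $k\lesssim(1+\qe{Q}{\uId})\lVert F\rVert/\e$ as required.

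I expect the only genuinely delicate step to be the passage from the ideal Gibbs state $\rho_\beta$ to the polynomial square $\tilde Q$: one must ensure the error incurred in approximating $e^{-\beta x/2}$ does not blow up $\Tr(F\tilde Q)$ (note $F$ is not positive), which is precisely why the approximation is taken in \emph{multiplicative} error and why the weighted-average perturbation bookkeeping is done carefully; the Taylor-tail bound giving the $O(\beta\lVert F\rVert+\log(1/\delta))$ degree estimate, and the Gibbs variational inequality, are both routine. (In the more general \pref{thm:operator-approx}, the single matrix $F$ is replaced by a family of linear tests and this one-shot Gibbs/mirror-descent step is iterated, but the mechanism driving the degree bound is identical.)
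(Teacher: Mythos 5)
Your proof is correct and follows essentially the same route as the paper: Lemma~\ref{lem:singletest} is exactly your Gibbs variational bound (the paper invokes the Gibbs duality formula where you invoke Klein's inequality and $S(\rho_\beta)\le\log\Tr(\Id)$, which is equivalent), and Corollary~\ref{cor:taylor-exp} is your multiplicative Taylor approximation of the matrix exponential. The only cosmetic difference is in how the approximation error is converted to an error in $\Tr(F\,\cdot\,)$: you compare the two weighted averages of the eigenvalues directly, whereas the paper's Lemma~\ref{lem:uniapprox} first establishes a trace-norm bound $\lVert \rho_\beta - \tilde Q\rVert_*\le\e'$ and then uses $\lvert\Tr(FA)\rvert\le\lVert F\rVert\lVert A\rVert_*$; the two computations are interchangeable.
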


Apply \pref{thm:operator-approx-intro} to the density matrix $Q$ and the symmetric matrix $F$ defined above
with the value $\e$ (which is already fixed).  Let $p$ be the resulting degree-$k$ polynomial, with $k$
satisfying the bounds of the theorem.

Since the function $x \mapsto F_x$ has $\deg(F) \leq \deg(D) \leq m$ (since $D : \{0,1\}^m \to \R$),
the degree of the map $x \mapsto \tilde Q_x = \tfrac 1{\E_x \Tr(p(F_x)^2)}p(F_x)^2$ is at most $\deg (p) \cdot m = k \cdot m$.
Applying \pref{thm:overview-degree-reduction} to the matrix given by
$N'(S,x) = \Tr(P_S \cdot p(F_x)^2),$
we can give a lower bound:
\begin{align}
\left(\E _x\Tr p(F_x)^2\right)\cdot \Tr \Paren{F\cdot \tilde Q}
  &= \E_{S} \E_x D(x_S) \Tr \Paren{P_S \cdot p(F_x)^2} \nonumber \\
& \gtrsim
   - \left(\frac{k m^2}{n-m}\right)^{d/4} \cdot \lVert  D \rVert_\infty \cdot
\Paren{\Bigparen{ \max_{S}\,\lVert  P_S  \rVert\cdot \E_x \Tr(p(F_x)^2) }\cdot\Bigparen{ \E_x \E_S N'(S,x)}}^{1/2} . \nonumber  %
\end{align}
Using the fact that $\E_x \E_S N'(S,x) = \E_S\E_x \Tr P_S \cdot p(F_x)^2=\E_x \Tr p(F_x)^2$
from \pref{thm:overview-psd-factorization}(2)
and the fact that $ \max_S \lVert  P_S \rVert \leq 2 \psdrank(N)^2/\eta$
from \pref{thm:overview-psd-factorization}(3) yields
\begin{equation}
\label{eq:sos-3}
 \Tr \Paren{F\cdot \tilde Q} \gtrsim  - \left(\frac{k m^2}{n-m}\right)^{d/4} \frac{\|D\|_{\infty}}{\sqrt{\eta}} \psdrank(N)\,.
\end{equation}

We have now assembled all components of the proof of \pref{thm:psdrank-separation}.
\begin{proof}[Proof of \pref{thm:psdrank-separation}]
We lower bound the linear functional $L_D(N)$ by
\begin{align*}
  L_D (N)
  & \annotaterel{\pref{eq:sos-2}}\ge
  \tau \cdot \Tr (F Q) - \eta \lVert  D \rVert_\infty
  \\
  & \annotaterel{\pref{eq:operator-approx-intro}}\ge
  \tau \cdot  \Paren{\vbig\Tr  (F \cdot \tilde Q) - \e} -\eta \lVert  D \rVert_\infty
  \\
  &\annotaterel{\pref{eq:sos-3}}\geq
  - c \tau  \cdot \left(\frac{k m^2}{n-m}\right)^{d/4}\cdot \frac{\lVert  D \rVert_\infty}{\sqrt{\eta}} \psdrank(N) - \tau \cdot \e  -\eta \lVert  D  \rVert_\infty\,,
\end{align*}
where $c > 0$ is a universal constant.

Now set $\eta \seteq \min(\e/\|D\|_{\infty}, 1)$ and use \eqref{eq:tau-normalize} to bound $\tau \leq 1+\eta \leq 2$.
This yields
\begin{equation}
\label{eq:functional-lb}
  L_D(N) \ge - 2c \left(\frac{k m^2}{n-m}\right)^{d/4}\cdot \frac{\lVert  D \rVert_\infty^{3/2}}{\sqrt{\eps}} \psdrank(N) - 3\e
\end{equation}

Now recall that our invocation of \pref{thm:operator-approx-intro}
gives us a bound on $k = \deg(p)$:
\begin{equation}
\label{eq:degree-bound}
  k
  \precsim (1+\qe{Q}{U})\cdot \lVert  F \rVert / \e
  \annotaterel{\pref{eq:sos-1}, \pref{eq:entropy-deficit} }
  \precsim  \log \bigl(\psdrank(N)/\lVert N\rVert_1\bigr) \frac{\lVert  D \rVert_\infty}{\e}\,.
\end{equation}

  If $\psdrank(N)^2/\lVert  N \rVert_1$ satisfies the upper bound in the theorem, then the degree bound \pref{eq:degree-bound} above gives
  \begin{displaymath}
    k \precsim \tfrac 1 \e \cdot d \lVert  D \rVert_\infty\cdot \log n\mper
  \end{displaymath}
  Plugging this bound into \pref{eq:functional-lb} yields, for some constant $c' > 0$,
  \begin{displaymath}
    L_D(N) \ge - \left(\frac{c' d \|D\|_{\infty} m^2 \log n}{\e(n-m)}\right)^{d/4}\cdot \frac{\lVert  D \rVert_\infty^{3/2}}{\sqrt{\eps}} \psdrank(N) - 3\e
  \end{displaymath}
Since $\|N\|_1 \leq \|N\|_{\infty} \leq 1$,
  if $\psdrank(N)$ satisfies the upper bound in the theorem (for a sufficiently small constant $\alpha$), this lower bound is $L_D(N)\ge -4\e$ as desired (up to scaling by a factor of $4$).
\end{proof}

\subsection{Degree reduction}
\label{sec:degree-reduction}

\Jnote{Change to $\{0,1\}^m$ domain to be consistent!}

\Dnote{}

\Jnote{}

The next theorem is a restatement of \pref{thm:overview-degree-reduction}.  One should simply note that
for any symmetric matrix $A$, we have $\|A\|_F^2 = \Tr(A^2)$.

\begin{theorem}[Restatement of \pref{thm:overview-degree-reduction}]
  \label{thm:degree-reduction}
  \label{thm:matrix-version}
  Let positive integers $n \geq 1$ and $m,d,\ell \leq n$ be given.
  Suppose $A\from \binom{n}{m}\to \mathbb R^{p \times p}$ and $B\from \{0,1\}^n \to \mathbb R^{p \times p}$ are two functions
  taking {\em symmetric matrices} as values.
  Let $D\from \{0,1\}^m\to \R$ be a degree-$d$ pseudo-density and
  suppose that $\deg(B) \leq \ell$. %
  Then,
  \begin{align*}
    &\!\!\!\!\!\!\!\!\!\E_{S,x} D(x_S) \lVert  A(S) B(x) \rVert_F^2\\
    & \ge - 2  \lVert  D \rVert_{\infty}\,
    \Paren{\frac{\ell m}{(n-m)}}^{d/4} \cdot
    \Paren{\max_S\, \lVert  A(S)^2 \rVert}^{1/2}
    \left( \E_{S,x}  \lVert A(S) B(x) \rVert_F^2 \right)^{1/2} \cdot \left( \E_{x} \norm{B(x)}_F^2 \right)^{1/2} \mcom
  \end{align*}
\end{theorem}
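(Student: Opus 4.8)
The plan is to peel off the matrix structure by an eigendecomposition, reduce to a scalar claim about a degree-$d$ pseudo-density applied to the square of a single degree-$\ell$ function, and then exploit that restricting such a function to a uniformly random $m$-subset $S$ of the coordinates typically annihilates everything except the part of degree at most $d/2$ in the surviving variables $x_S$. Concretely, write $A(S)^2=\sum_k\lambda_k(S)\,v_k(S)v_k(S)^{\mathsf T}$ with $\lambda_k(S)\ge 0$ and $\{v_k(S)\}_k$ orthonormal. Then for all $S,x$,
\[
\lVert A(S)B(x)\rVert_F^2=\Tr\big(B(x)A(S)^2B(x)\big)=\sum_{k,j}\lambda_k(S)\,h_{k,j,S}(x)^2,\qquad h_{k,j,S}(x):=\big(B(x)v_k(S)\big)_j,
\]
and each $h_{k,j,S}$ is a scalar function on $\{0,1\}^n$ with $\deg(h_{k,j,S})\le\deg(B)\le\ell$. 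Since $\E_{S,x}D(x_S)\lVert A(S)B(x)\rVert_F^2=\E_S\sum_{k,j}\lambda_k(S)\,\E_xD(x_S)h_{k,j,S}(x)^2$, it suffices to lower bound $\E_xD(x_S)h(x)^2$ for a fixed $S$ and a fixed scalar $h$ of degree at most $\ell$, and then reassemble with the weights $\lambda_k(S)$.

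For the per-$S$ scalar estimate, decompose $h=h^{\mathrm{lo}}_S+h^{\mathrm{hi}}_S$, where $h^{\mathrm{lo}}_S$ keeps the characters $\chi_T$ with $\lvert T\cap S\rvert\le d/2$ and $h^{\mathrm{hi}}_S$ keeps the rest. Writing $T=W\sqcup U$ with $W\subseteq S$, $U\subseteq\bar S$ and averaging out the coordinates in $\bar S$ first, $\E_{x_{\bar S}}(h^{\mathrm{lo}}_S)^2=\sum_{U\subseteq\bar S}q_U(x_S)^2$ with every $q_U$ of degree at most $d/2$ in $x_S$; since $D$ is a degree-$d$ pseudo-density this gives $\E_xD(x_S)(h^{\mathrm{lo}}_S)^2=\sum_U\E_{x_S}D(x_S)q_U(x_S)^2\ge 0$. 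The remaining two pieces are bounded using $\lVert D\rVert_\infty$ and Cauchy--Schwarz on $\{0,1\}^n$, so that with $\lVert h^{\mathrm{lo}}_S\rVert_2^2+\lVert h^{\mathrm{hi}}_S\rVert_2^2=\lVert h\rVert_2^2$ (Parseval),
\[
\E_xD(x_S)\,h(x)^2\ \ge\ -2\lVert D\rVert_\infty\lVert h^{\mathrm{lo}}_S\rVert_2\lVert h^{\mathrm{hi}}_S\rVert_2-\lVert D\rVert_\infty\lVert h^{\mathrm{hi}}_S\rVert_2^2\ \ge\ -\lVert D\rVert_\infty\lVert h^{\mathrm{hi}}_S\rVert_2\big(2\lVert h\rVert_2+\lVert h^{\mathrm{hi}}_S\rVert_2\big).
\]

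The quantitative gain comes from the random restriction. For a fixed $T$ with $\lvert T\rvert\le\ell$, a union bound over $(\lfloor d/2\rfloor+1)$-element subsets of $T$ gives $\Pr_S\big[\lvert T\cap S\rvert>d/2\big]\le\binom{\ell}{\lfloor d/2\rfloor+1}(m/n)^{\lfloor d/2\rfloor+1}\le\big(\tfrac{\ell m}{n-m}\big)^{d/2}$; the case $\tfrac{\ell m}{n-m}>1$ is handled trivially from $\lVert A(S)B(x)\rVert_F^2\le\lVert A(S)^2\rVert\lVert B(x)\rVert_F^2$. Hence $\E_S\lVert h^{\mathrm{hi}}_{k,j,S}\rVert_2^2\le\big(\tfrac{\ell m}{n-m}\big)^{d/2}\lVert h_{k,j,S}\rVert_2^2$, and pushing the eigen-weights $\lambda_k(S)$ back through $A(S)^2$ yields both $\E_S\sum_{k,j}\lambda_k(S)\lVert h_{k,j,S}\rVert_2^2=\E_{S,x}\lVert A(S)B(x)\rVert_F^2$ and
\[
\E_S\sum_{k,j}\lambda_k(S)\lVert h^{\mathrm{hi}}_{k,j,S}\rVert_2^2\ \le\ \max_S\lVert A(S)^2\rVert\cdot\Big(\tfrac{\ell m}{n-m}\Big)^{d/2}\cdot\E_x\lVert B(x)\rVert_F^2.
\]
Summing the per-$S$ bound against the positive functional $\E_S\sum_{k,j}\lambda_k(S)(\cdot)$ and applying Cauchy--Schwarz once more — controlling the residual $\lVert h^{\mathrm{hi}}\rVert_2^2$ term via $\min(a,b)\le\sqrt{ab}$, using that it is dominated both by $\E_{S,x}\lVert A(S)B(x)\rVert_F^2$ and by the displayed bound — produces the claimed inequality up to an absolute constant.

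The step I expect to be the main obstacle is obtaining the bound in this ``two Frobenius norms'' form, with the factors $\big(\E_{S,x}\lVert A(S)B(x)\rVert_F^2\big)^{1/2}$ and $\big(\E_x\lVert B(x)\rVert_F^2\big)^{1/2}$ appearing separately and only $(\max_S\lVert A(S)^2\rVert)^{1/2}$, rather than the cruder $\max_S\lVert A(S)^2\rVert\cdot\E_x\lVert B(x)\rVert_F^2$. This precise shape is what keeps the $\psdrank$ dependence linear when \pref{thm:degree-reduction} is fed into \pref{thm:psdrank-separation}, and it forces one to carry the factor $\lVert h_{k,j,S}\rVert_2$ all the way through the cross term, to resist bounding $\lVert h^{\mathrm{hi}}_S\rVert_2$ by $\lVert h\rVert_2$ prematurely, and to apply the outer Cauchy--Schwarz over $S$ only at the very end.
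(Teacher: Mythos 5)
Your proof is correct and takes essentially the same route as the paper's: decompose into a part of degree at most $d/2$ in the $S$-variables (the pseudo-density acts nonnegatively on it after averaging $x_{\bar S}$), plus a high-degree remainder handled by Cauchy--Schwarz over $x$ and the restriction estimate $\Pr_S\left[|T\cap S|>d/2\right]\le\left(\ell m/(n-m)\right)^{d/2}$. Two presentational differences are worth recording. You peel off the matrix structure at the start via the eigendecomposition $A(S)^2=\sum_k\lambda_k(S)v_k(S)v_k(S)^T$ and work with the scalar functions $h_{k,j,S}(x)=(B(x)v_k(S))_j$, whereas the paper argues directly at the matrix level by decomposing $B=B_{S,\low}+B_{S,\hi}$ and exploiting $\E_x B_{S,\low}(x)B_{S,\hi}(x)^T=0$; your reduction is a genuine clarification, isolating exactly what is needed from $A(S)$ (only that $A(S)^2\succeq 0$). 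And whereas the paper bounds the perturbation via the factoring $|a^2-b^2|=|a+b|\cdot|a-b|$ with $a=\|A(S)B(x)\|_F$ and $b=\|A(S)B_{S,\low}(x)\|_F$ --- which eliminates the quadratic remainder automatically --- you expand the square and must control the residual $\|D\|_\infty\E_{S,x}\|A(S)B_{S,\hi}(x)\|_F^2$ separately via $\min(a,b)\le\sqrt{ab}$; this costs a constant (you get $3$ where the paper gets $2$), which is immaterial since \pref{thm:psdrank-separation} uses the inequality only up to universal constants.
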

\Dcomment{}

\Pnote{}

\Dnote{this bound should be tight because for degree-$d'$ sum-of-squares we have $R=n^{d'/2}$.
So if $\lVert  D \rVert_{\infty} \le n^{\e d}$ and $d'\le (1-100\e)d$, the theorem gives a lower bound of at least $- n^{(1-100\e)d/4}\cdot n^{\e d}/n^{(1-3\e)d/4}\ge -n^{-\e d}$}

\begin{proof}
For the sake of this lemma, which uses Fourier analysis, we will think of $B$ and $D$ as functions on $\{-1,1\}^n$.
Since this is a linear transformation on the domain, it does not affect their degrees as multilinear polynomials.

For every $S\subseteq [n]$ with $\lvert S \rvert=m$, we decompose $B$ into two parts $B=B_{S,\low}+ B_{S,\hi}$ such that $B_{S,\low}$ is the part of $B$ with degree at most $d/2$ in the variables $S$:
\begin{displaymath}
  B_{S,\low} = \sum_{\substack{\alpha\subseteq [n]\\ \lvert  \alpha\cap S \rvert \le d/2}} \hat B_{\alpha} \chi_\alpha\mper
\end{displaymath}
(Recall \pref{sec:prelims} for the Fourier-analytic definitions.)

The proof consists of two steps that are captured by the following two lemmas.

\begin{lemma}
Let $\tau = \max_S \lVert  A(S)^2\rVert$. Then,
\label{lem:matrix-1}
\begin{displaymath}
  \E_{S,x} D(x_S) \lVert A(S)B(x)\rVert_F^2
  \ge - 2 \sqrt{\tau} \lVert  D \rVert_{\infty}
  \cdot \left(\E_{S,x}  \lVert  B_{S,\hi}(x) \rVert_F^2\right)^{1/2} \cdot \left( \E_{S,x} \norm{A(S)  B(x)}_F^2 \right)^{\nfrac{1}{2}}
\end{displaymath}
\end{lemma}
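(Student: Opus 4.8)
The plan is to derive the bound from two ingredients only: the defining property of a degree-$d$ pseudo-density, and a single Cauchy--Schwarz estimate, glued together by the elementary identity $\lVert U\rVert_F^2-\lVert W\rVert_F^2=\langle U-W,\,U+W\rangle$ for the Frobenius inner product. Throughout I would work on the $\pm1$ cube (as the surrounding text already stipulates), which only reparametrizes the domain and preserves multilinear degrees, and I would keep $S$ (with $\lvert S\rvert=m$) fixed for the main estimates, averaging over it at the end.

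First I would show that the ``low'' part $B_{S,\low}=\sum_{\lvert\alpha\cap S\rvert\le d/2}\hat B_\alpha\chi_\alpha$ contributes nonnegatively. Fix $S$ and an assignment $x_{\bar S}$ to the coordinates outside $S$. Each surviving character restricts, as a function of $x_S$, to a monomial of degree $\lvert\alpha\cap S\rvert\le d/2$, so every entry of $A(S)B_{S,\low}(x)$ is a polynomial of degree at most $d/2$ in the variables $x_S$, and hence $\lVert A(S)B_{S,\low}(x)\rVert_F^2$ is a sum of squares of such polynomials. Applying the degree-$d$ pseudo-density property of $D$ to these squares in the variables $x_S$, with $x_{\bar S}$ frozen, and then averaging over $x_{\bar S}$ and over $S$, yields
\[
\E_{S,x}\,D(x_S)\,\lVert A(S)B_{S,\low}(x)\rVert_F^2\ \ge\ 0 .
\]

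Second I would control the error term $\E_{S,x}D(x_S)\bigl(\lVert A(S)B(x)\rVert_F^2-\lVert A(S)B_{S,\low}(x)\rVert_F^2\bigr)$. Since $A(S)B(x)-A(S)B_{S,\low}(x)=A(S)B_{S,\hi}(x)$, the identity $\lVert U\rVert_F^2-\lVert W\rVert_F^2=\langle U-W,\,U+W\rangle$ rewrites the parenthesized quantity as $\langle A(S)B_{S,\hi}(x),\,A(S)(B(x)+B_{S,\low}(x))\rangle$. Applying Cauchy--Schwarz with the nonnegative weight $\lvert D(x_S)\rvert$, and then $\lvert D(x_S)\rvert\le\lVert D\rVert_\infty$, gives
\[
\Bigl|\E_{S,x}D(x_S)\bigl(\lVert A(S)B(x)\rVert_F^2-\lVert A(S)B_{S,\low}(x)\rVert_F^2\bigr)\Bigr|
\ \le\ \lVert D\rVert_\infty\Bigl(\E_{S,x}\lVert A(S)B_{S,\hi}(x)\rVert_F^2\Bigr)^{1/2}\Bigl(\E_{S,x}\lVert A(S)(B(x)+B_{S,\low}(x))\rVert_F^2\Bigr)^{1/2}.
\]
The two factors I would bound as follows. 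For each fixed $S$ the matrix-valued functions $A(S)B_{S,\low}$ and $A(S)B_{S,\hi}$ have disjoint Fourier support, so by Parseval $\E_x\lVert A(S)B_{S,\low}(x)\rVert_F^2\le\E_x\lVert A(S)B(x)\rVert_F^2$, whence by the triangle inequality $\E_{S,x}\lVert A(S)(B+B_{S,\low})\rVert_F^2\le 4\,\E_{S,x}\lVert A(S)B(x)\rVert_F^2$; and since $A(S)$ is symmetric, $\lVert A(S)B_{S,\hi}(x)\rVert_F^2\le\lVert A(S)\rVert^2\lVert B_{S,\hi}(x)\rVert_F^2=\lVert A(S)^2\rVert\,\lVert B_{S,\hi}(x)\rVert_F^2\le\tau\,\lVert B_{S,\hi}(x)\rVert_F^2$. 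Combining this with the nonnegativity of the low part gives
\[
\E_{S,x}D(x_S)\lVert A(S)B(x)\rVert_F^2\ \ge\ -\,2\sqrt\tau\,\lVert D\rVert_\infty\Bigl(\E_{S,x}\lVert B_{S,\hi}(x)\rVert_F^2\Bigr)^{1/2}\Bigl(\E_{S,x}\lVert A(S)B(x)\rVert_F^2\Bigr)^{1/2},
\]
which is the assertion of the lemma.

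I do not expect a genuine obstacle, but two points need care. The first is making the pseudo-density step fully rigorous: the polynomials being squared carry $x_{\bar S}$ as parameters, so one must condition on $x_{\bar S}$ and apply the degree-$d/2$ test inequality in the variables $x_S$ before averaging over the remaining randomness. The second is getting the leading constant down to $2$: a crude expansion of $\lVert B_{S,\low}+B_{S,\hi}\rVert_F^2$ and separate treatment of the cross and high-degree terms loses a bit more, whereas routing the whole error through $\lVert U\rVert_F^2-\lVert W\rVert_F^2=\langle U-W,\,U+W\rangle$ together with the Parseval bound $\E_x\lVert A(S)B_{S,\low}(x)\rVert_F^2\le\E_x\lVert A(S)B(x)\rVert_F^2$ is exactly what keeps it at $2$.
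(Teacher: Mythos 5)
Your proof is correct and follows essentially the same route as the paper: decompose $B=B_{S,\low}+B_{S,\hi}$, invoke the degree-$d$ pseudo-density property to make the low part nonnegative, and Cauchy--Schwarz the error against $\lVert D\rVert_\infty$. The only cosmetic difference is that the paper factors the scalar quantity via $|x^2-y^2|=|x+y|\,|x-y|$ with the reverse triangle inequality $\bigl|\lVert AB\rVert_F-\lVert AB_{\low}\rVert_F\bigr|\le\lVert AB_{\hi}\rVert_F$, whereas you use the matrix polarization identity $\lVert U\rVert_F^2-\lVert W\rVert_F^2=\langle U-W,U+W\rangle$, which produces $AB_{\hi}$ exactly; both routes yield the constant $2$ because both ultimately rest on the Parseval bound $\E\lVert AB_{\low}\rVert_F^2\le\E\lVert AB\rVert_F^2$.
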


\begin{proof}
For ease of notation, we will treat $A=A(S)$ and $B=B(x)$ as matrix-valued random variables that are determined by choosing $x\in \{-1. 1\}^n$ and $S\subseteq [n]$ with $\lvert S \rvert=m$ uniformly and independently at random.
In this notation, we are to lower bound the expectation $\E D(x_S) \lVert A B \rVert _F^2$ (over the joint distribution of $x$, $S$, $A$, and $B$).

Let $B_{\low}=B_{S,\low}(x)$ and $B_{\hi}=B_{S,\hi}(x)$ be matrix-valued random variables in the same probability space.
By construction, the Fourier transforms of the functions $x\mapsto  B_{S,\low}(x)$ and $x\mapsto  B_{S,\hi}(x)$ have disjoint support for every subset $S$.
Therefore, the expectation satisfies $\E B_{\low} B_{\hi}^T = 0$.
This fact allows us to control the expectations of $\lVert  A B_{\low} \rVert_F^2$ and $\lVert  A B_{\hi} \rVert_F^2$,
\begin{align*}
   \E\, \lVert  A B_{\low} \rVert_F^2 + \E\, \lVert  A B_{\hi} \rVert_F^2
   = \E\, \lVert  AB \rVert_F^2
   \label{eq:AB-decompose}
\end{align*}
Here, we have used that the quadratic formula $\lVert  AB \rVert_F^2 = \lVert AB_{\low}\rVert _F^2 + \lVert  AB_{\hi} \rVert_F^2 + 2 \langle AB_{\low} , AB_{\hi} \rangle$, where $\langle  \cdot ,\cdot \rangle$ is the inner product that induces $\lVert  \cdot  \rVert_F$, i.e., $\langle  X,Y \rangle=\Tr(X^T Y)$.
Hence, \[\E\left[ \langle  AB_{\low},AB_{\hi} \rangle\mid A
	\right]=\Tr(A^2 \cdot \E B_{\low}B_{\hi}^T)=0\,.\]
Therefore,
\begin{align*}
  &\hspace{-20mm}
    \left  \lvert \E \Brac{ D(x_S) \lVert  A B \rVert_F^2}
    -   \E \Brac{ D(x_S) \lVert  A B_{\low} \rVert_F^2}\right\rvert\\
  & \le \lVert  D \rVert_{\infty} \cdot \E \left[\Abs{\vbig\norm{A B}_F + \norm{AB_{\low}}_F}
    \cdot \Abs{\vbig\norm{A B}_F - \norm{AB_{\low}}_F}\right]\\
  & \le \lVert  D \rVert_{\infty} \cdot \Paren{\E \Abs{\vbig\norm{A B}_F + \norm{AB_{\low}}_F}^2
    \cdot \E \Abs{\vbig\norm{A B}_F - \norm{AB_{\low}}_F}^2}^{1/2}\\
  & \le 2\lVert  D \rVert_{\infty} \cdot \left(\E \norm{AB_{\hi}}^2_F\right)^{1/2}
    \cdot \left(\E \norm{AB}_F^2\right)^{1/2}\mper
\end{align*}
The first step used the identity $\lvert  x^2-y^2 \rvert=\lvert  x+y \rvert\cdot \lvert  x-y \rvert$.
In the second step, we applied Cauchy--Schwarz.
The third step used the triangle inequality,
\begin{math}
  \Abs{\vbig\norm{A B}_F - \norm{AB_{\low}}_F} \le \norm{AB_{\hi}}_F\mper
\end{math}

Since $x\mapsto \lVert  A(S) B_{S,\low}(x) \rVert_F^2$ is a sum of squares of polynomials of degree at most $d/2$ in the variables $S$
and $D$ is a degree-$d$ pseudo-density, the expectation $\E D(x_S) \lVert  A B_{\low} \rVert_F^2$  is non-negative.
It follows that
\begin{displaymath}
  \E \left[D(x_S) \lVert  A B \rVert_F^2\right] \ge - 2 \lVert  D
  \rVert_{\infty}  \cdot \left(\E\, \norm{AB_{\hi}}^2_F\right)^{1/2}
  \cdot \left(\E \norm{AB}_F^2 \right)^{1/2}\mper
\end{displaymath}
We also have
\begin{displaymath}
  \E\, \norm{A B_{\hi}}_F^2
  \le \max_S\, \lVert  A(S)^2 \rVert \cdot \E \lVert  B_{\hi} \rVert_F^2 = \tau \E \lVert  B_{\hi} \rVert_F^2\mper
\end{displaymath}

This bound implies the desired lower bound
\begin{displaymath}
  \E D(x_S) \lVert  A B \rVert_F^2
  \ge - 2 \sqrt{\tau} \lVert  D \rVert_{\infty}  \cdot  \Paren{\E \lVert
	  B_{\hi} \rVert_F^2}^{1/2} \cdot \Paren{\E \norm{AB}_F^2}
  \mper\qedhere
\end{displaymath}
\end{proof}

\begin{lemma}
\label{lem:matrix-2}
  \begin{displaymath}
    \E_{S,x} \lVert  B_{S,\hi}(x) \rVert_F^2 \le
    \frac{\ell^{d/2}m^{d/2}}{(n-m)^{d/2}} \cdot \E_{S,x} \norm{B(x)}_F^2
  \end{displaymath}
\end{lemma}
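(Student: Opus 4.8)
The plan is to pass to the Fourier side and reduce the statement to a hypergeometric tail estimate. Writing $B(x) = \sum_{\alpha} \hat B_\alpha \chi_\alpha(x)$ with $\hat B_\alpha \in \mathbb R^{p\times p}$ (and $\hat B_\alpha = 0$ whenever $|\alpha| > \ell$, since $\deg(B)\le\ell$), Parseval applied entrywise to the scalar functions $B_{ij}$ gives, for each fixed $S$ with $|S|=m$,
\[
\E_x \lVert B_{S,\hi}(x)\rVert_F^2 = \sum_{\alpha\,:\,|\alpha\cap S| > d/2} \lVert \hat B_\alpha\rVert_F^2,
\qquad
\E_x \lVert B(x)\rVert_F^2 = \sum_{\alpha} \lVert \hat B_\alpha\rVert_F^2 .
\]
Averaging the first identity over a uniformly random $m$-subset $S$ of $[n]$ and swapping the order of summation yields
\[
\E_{S,x} \lVert B_{S,\hi}(x)\rVert_F^2 = \sum_{\alpha} \lVert \hat B_\alpha\rVert_F^2 \cdot \Pr_S\bigl[|\alpha\cap S| > d/2\bigr],
\]
so it suffices to prove the uniform bound $\Pr_S[|\alpha\cap S|>d/2] \le (\ell m/(n-m))^{d/2}$ for every $\alpha$ with $|\alpha|\le\ell$.

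Next I would estimate this probability by a union bound over small subsets of $\alpha$. Set $k = \lfloor d/2\rfloor + 1$, the least integer strictly larger than $d/2$. If $k > m$ then $|\alpha\cap S| \le |S| = m < k$ always, so $B_{S,\hi}\equiv 0$ and the lemma holds trivially; so assume $k \le m$. Then the event $\{|\alpha\cap S| > d/2\}$ is exactly $\{|\alpha\cap S|\ge k\}$, and a union bound over the $k$-element subsets $T\subseteq\alpha$ gives
\[
\Pr_S\bigl[|\alpha\cap S|\ge k\bigr] \le \binom{|\alpha|}{k} \max_{|T| = k} \Pr_S[T\subseteq S].
\]
For $|T| = k \le m \le n$, the elementary hypergeometric identity
\[
\Pr_S[T\subseteq S] = \frac{\binom{n-k}{m-k}}{\binom{n}{m}} = \prod_{i=0}^{k-1} \frac{m-i}{n-i} \le \Bigl(\frac{m}{n-m}\Bigr)^{k}
\]
holds, since each factor is at most $m/(n-i) \le m/(n-m)$ using $i \le k-1 < m$. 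Combining this with $\binom{|\alpha|}{k}\le\binom{\ell}{k}\le \ell^k$ gives $\Pr_S[|\alpha\cap S|>d/2] \le (\ell m/(n-m))^k$.

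Finally I would reconcile the exponent $k$ with the target exponent $d/2$ by a short case split. If $\ell m/(n-m)\ge 1$, the asserted inequality is immediate: each $B_{S,\hi}$ keeps only a subset of the Fourier mass of $B$, so $\E_{S,x}\lVert B_{S,\hi}(x)\rVert_F^2 \le \E_{S,x}\lVert B(x)\rVert_F^2$, while the right-hand side of the lemma is at least $\E_{S,x}\lVert B(x)\rVert_F^2$. Otherwise $\ell m/(n-m) < 1$, and since $k > d/2$ we have $(\ell m/(n-m))^k \le (\ell m/(n-m))^{d/2}$; plugging this bound on $\Pr_S[|\alpha\cap S|>d/2]$ into the displayed sum and using $\E_x\lVert B(x)\rVert_F^2 = \E_{S,x}\lVert B(x)\rVert_F^2$ yields exactly the claimed estimate. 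There is no real obstacle here; the only points demanding care are the non-integrality of $d/2$ against the integer-valued intersection sizes (handled by the choice $k = \lfloor d/2\rfloor + 1$ together with the case split on whether $\ell m/(n-m) \le 1$) and the verification of the hypergeometric tail bound.
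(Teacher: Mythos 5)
Your proof is correct and follows essentially the same route as the paper's: expand $B$ in the Fourier basis, apply Parseval to get $\E_{S,x}\lVert B_{S,\hi}\rVert_F^2 = \sum_\alpha \lVert\hat B_\alpha\rVert_F^2\Pr_S[\lvert\alpha\cap S\rvert>d/2]$, and bound the hypergeometric tail by a union bound over small subsets of $\alpha$. The only difference is that you handle the possible non-integrality of $d/2$ more carefully (via $k=\lfloor d/2\rfloor+1$ and a case split on $\ell m/(n-m)\gtrless 1$), where the paper uses $\binom{\ell}{d/2}\binom{n}{m-d/2}/\binom{n}{m}$ somewhat informally.
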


\begin{proof}
  By construction the Fourier transform of $B_{S,\hi}$ satisfies
  \begin{displaymath}
    B_{S,\hi}=\sum_{\substack{\alpha\subseteq [n]\\ \lvert  \alpha\cap S \rvert > d/2}} \hat B(\alpha) \chi_\alpha\mper
  \end{displaymath}
  Therefore,
  \begin{displaymath}
    \E_x \lVert  B_{S,\hi}(x) \rVert_F^2 = \sum_{\substack{\alpha\subseteq [n]\\ \lvert \alpha\cap S  \rvert>d/2}} \lVert \hat B(\alpha)\rVert_F^2\mper
  \end{displaymath}
  The expectation satisfies
  \begin{displaymath}
    \E_S \E_x \lVert  B_{S,\hi}(x) \rVert_F^2 = \sum_{\alpha\subseteq [n]} \lVert  \hat B(\alpha) \rVert_F^2 \cdot \Pr\left\{\vbig \card{\alpha\cap S}>d/2 \right\}\mper
  \end{displaymath}
  Since $B$ has degree at most $\ell$, we can upper bound the probability of the event $\{ \card{\alpha\cap S}>d/2 \}$,
  \begin{displaymath}
    \Pr\left\{\vbig \card{\alpha\cap S}>d/2 \right\} \le \binom{\ell}{d/2} \binom{n}{m-d/2} / \binom{n}{m} \le \frac{\ell^{d/2}m^{d/2}}{(n-m)^{d/2}}\mper
  \end{displaymath}
  Together with $ \sum_\alpha \lVert \hat B_\alpha \rVert_F^2= \E_x
  \lVert  B(x) \rVert_F^2$, the desired bound on the expected norm of $B_{S,\hi}$ follows:
  \begin{displaymath}
       \E_S \E_x \lVert  B_{S,\hi}(x) \rVert_F^2
       \le\frac{\ell^{d/2}m^{d/2}}{(n-m)^{d/2}} \cdot \E_x \norm{B(x)}_F^2\,.\qedhere
  \end{displaymath}
\end{proof}

  We combine the previous two lemmas to lower bound the correlation between the pseudo-density $D(x_S)$ and the norms $\lVert  A(S) B(x)\rVert_F^2$,
  \begin{align*}
   \E_{S,x} D(x_S) \lVert  A(S) B(x) \rVert_F^2
    & \ge - 2 \sqrt{\tau} \lVert  D \rVert_{\infty}
      \left( \E_{S,x} \norm{A(S)B(x)}_F^2 \right)^{1/2}
      \left(\E_{S,x} \lVert  B_{S,\hi}(x) \rVert_F^2\right)^{1/2} \\
      & \text{(using \pref{lem:matrix-1})}\\
   & \ge - 2 \sqrt{\tau} \lVert  D \rVert_{\infty} \frac {\ell^{d/4}m^{d/4}}{(n-m)^{d/4}}
     \left( \E_{S,x} \norm{A(S)B(x)}_F^2\right)^{1/2}
     \left(\E_{x} \norm{B(x)}_F^2\right)^{1/2} \\
   &\text{(using \pref{lem:matrix-2})}\mper
   \qedhere
  \end{align*}
\end{proof}

\subsection{Proof of the main theorem}
\label{sec:mainthms}

For a function $f\from \{0, 1\}^m \to[0,1]$ and an integer $n \geq m$,
let $M^f_n\from \binom{n}{m} \times \bits^n \to [0,1]$ be the matrix,
\begin{displaymath}
  M^f_n(S,x) \defeq f(x_S)\mper
\end{displaymath}

\begin{theorem} \label{thm:gtwo-pseudodist}
For any $m,d \geq 1$, the following holds.
  Let $f : \{0,1\}^m \to [0,1]$ be a nonnegative function with $d+2=\sosdeg(f)$.
  Then for $n \geq 2m$,
  \begin{equation} \label{eq:limitthm}
	  1+ n^{1+ d/2} \geq \psdrank\left(M_n^f\right) \geq C_f \left(\frac{n}{\log n}\right)^{\frac{d}{4}} \mcom
  \end{equation}
  where $C_{f} > 0$ is a constant depending only on $f$.

  Moreover, if there exists an $\e \in (0,1]$, and a degree-$d$ pseudo-density $D : \{0,1\}^m \to \R$
  with $\E_x D(x) f(x) < -\e$, then for every $n \geq 2m$, we have
  \begin{equation} \label{eq:quantitative}
    \psdrank(M_n^f)
    \ge  \left ( \frac{ c \e  n }{d m^2 \lVert  D \rVert_\infty \log n} \right  )^{d/4} \left(\frac{\e}{\lVert  D \rVert_\infty}\right)^{3/2} \sqrt{\E_x f(x)}\,,
  \end{equation}
where $c > 0$ is a universal constant.
\end{theorem}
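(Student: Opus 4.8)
The plan is to assemble the three assertions of \pref{thm:gtwo-pseudodist} from ingredients already in hand: a psd factorization coming from an sos certificate (for the upper bound), the pseudo‑density characterization \pref{eq:sos-char}, and the separation estimate \pref{thm:psdrank-separation} (for the two lower bounds).

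\emph{Upper bound.} Since $d+2=\sosdeg(f)$, fix an sos certificate $f=\sum_{i=1}^k g_i^2$ with $\deg(g_i)\le\lfloor(d+2)/2\rfloor=:t$. Let $w(x)$ be the vector of multilinear monomials of degree at most $t$ on $\{0,1\}^n$, and for $|S|=m$ let $a_i(S)$ be the coefficient vector of the function $x\mapsto g_i(x_S)$ in this basis (its support consists of monomials on the coordinates in $S$). Then $M_n^f(S,x)=f(x_S)=\sum_i\langle a_i(S),w(x)\rangle^2=\Tr\big(P(S)\,Q(x)\big)$ with $P(S):=\sum_i a_i(S)a_i(S)^T\succeq 0$ and $Q(x):=w(x)w(x)^T\succeq 0$, both of dimension $\sum_{j=0}^{t}\binom nj$. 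Since $\sqrt f$ is a function on $\{0,1\}^m$ (hence of degree at most $m$) and $f=(\sqrt f)^2$, one has $\sosdeg(f)\le 2m$, so $n\ge 2m\ge\sosdeg(f)=d+2\ge 2t$; an elementary estimate then gives $\sum_{j=0}^{t}\binom nj\le 1+n^{1+d/2}$, which is the claimed upper bound on $\psdrank(M_n^f)$.

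\emph{Qualitative lower bound.} Since $\sosdeg(f)=d+2>d$, the characterization \pref{eq:sos-char} yields a degree-$d$ pseudo-density $D\from\{0,1\}^m\to\R$ with $\E_x D(x)f(x)<0$; pick $\e\in(0,1]$ with $\E_x D(x)f(x)<-\e$. I apply \pref{thm:psdrank-separation} to $N:=M_n^f$. The hypothesis $\|N\|_\infty\le 1$ holds because $f$ takes values in $[0,1]$. Unwinding the definitions, for each $m$-set $S$ the restriction $x_S$ of a uniform $x\in\{0,1\}^n$ is uniform on $\{0,1\}^m$, so
\[
L_D(M_n^f)=\E_x\E_S D(x_S)\,f(x_S)=\E_{y\in\{0,1\}^m}D(y)f(y)<-\e,\qquad \|M_n^f\|_1=\E_y f(y)=\E_x f(x).
\]
Hence the conclusion $L_D(N)\ge-\e$ of \pref{thm:psdrank-separation} is violated; since its first hypothesis $\|N\|_\infty\le 1$ holds, the second must fail, i.e.\ $\tfrac1{\|N\|_1}\psdrank(N)^2>\alpha\,(n/\log n)^{d/2}$. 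This gives $\psdrank(M_n^f)^2>\alpha\,(\E_x f(x))\,(n/\log n)^{d/2}$. Finally $d\ge 1$ forces $\sosdeg(f)\ge 3$, hence $f\not\equiv 0$ and $\E_x f(x)>0$; taking square roots yields \pref{eq:limitthm} with $C_f:=\sqrt{\alpha\,\E_x f(x)}>0$.

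\emph{Quantitative lower bound, and where the work is.} For \pref{eq:quantitative}, run the same argument with the explicit value $\alpha=\big(\tfrac{C\e}{dm^2\|D\|_\infty}\big)^{d/2}\big(\tfrac{\e}{\|D\|_\infty}\big)^{3}$ recorded in \pref{thm:psdrank-separation} for the given $D$ and $\e$; substituting into $\psdrank(M_n^f)^2>\alpha\,(\E_x f(x))\,(n/\log n)^{d/2}$ and taking square roots gives exactly \pref{eq:quantitative} after regrouping the powers of $n$, $\log n$, $\e$, and $\|D\|_\infty$, with $c:=C$. There is no genuinely new obstacle at this stage: the only care required is the constant bookkeeping for \pref{eq:quantitative}, the monomial count for the upper bound (which uses $n\ge\sosdeg(f)$), and the remark that $C_f\neq 0$ because $f\not\equiv 0$. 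All of the substance has already been discharged — in \pref{thm:psdrank-separation}, which rests on the degree-reduction lemma \pref{thm:overview-degree-reduction}, the factorization-rescaling lemma, and the quantum-learning density-matrix approximation, and in the convex-duality characterization \pref{eq:sos-char}.
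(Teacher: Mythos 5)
Your proof is correct and follows the same route as the paper's: the upper bound via the explicit size-$\sum_{j\le 1+d/2}\binom{n}{j}$ psd factorization built from a degree-$(d+2)$ sos certificate, and both lower bounds by passing a degree-$d$ pseudo-density through the separating functional $L_D$ and invoking \pref{thm:psdrank-separation} applied to $N=M_n^f$. The only additions are bookkeeping the paper elides (the binomial estimate using $n\ge 2m\ge\sosdeg(f)$, the observation that $\E_x f>0$ so $C_f>0$, and the explicit regrouping of $\sqrt{\alpha}$ for \pref{eq:quantitative}), all of which you handle correctly.
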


\Dnote{P:DONE:  TODO: show matching lower bound here, i.e., show that if $f$ has sos degree $d$ then the corresponding matrix has $\gtwo$ at most $n^{d/2}$.
show be fairly easy computation.
the $Q_x$ matrices in the sos relaxation are rank $1$ and average to the $n^{d/2}$ dimensional identity matrix.
the $P_S$ matrices have trace $O(1)$ assuming that $\E f =O(1)$.}

\begin{proof}
Let $d+2 = \sosdeg(f)$ and consider a degree-$d$ pseudo-density with $\E D f < -\e$ for some $\e>0$.
Recall the linear functional $L_D$ defined in \pref{sec:proof-of-main-theorem}.
One observes that $L_D(M^f_n)<-\e$.

By (the contrapositive of) \pref{thm:psdrank-separation}, it follows that $\psdrank(M^f_n)^2\ge \alpha (n/\log n)^{d/2}\cdot \lVert  M^f_n \rVert_1$,
where $\alpha$ is a constant depending only on the parameters $\e,m,d$, and the pseudo-density $D$.
Note that $\lVert  M^f_n \rVert_1=\E f$.
This immediately implies \eqref{eq:limitthm}.
Likewise, \eqref{eq:quantitative} follows directly from \pref{thm:gamma-separation}.

\medskip

Let us now prove that $\psdrank(M_n^f) \leq 1+n^{1+ d/2}$ by exhibiting an
explicit factorization of $M_n^f$.
Let $\cF \defeq \{ A \subseteq [n] : |A| \leq 1+ d/2 \}$ and set $r =
|\cF|$.
 For $x \in
\bits^n$,
we use the notation $x^A \seteq \prod_{i \in A} x_i$.
Suppose $f = \sum_{j=1}^t g_j^2$ for some $\{g_j : \bits^m \to \R\}$ such
that $\deg(g_j) \leq 1+ d/2$ for $j \in [t]$.

For each function $j \in [t]$ and subset $S \sse [n]$ with
$|S| = m$, define the function $g_{S,j} : \bits^n \to \R$ by
$g_{S,j}(x) = g_j(x_S)$.  We associate the coefficient vector $\hat{g}_{S,j} : \cF \to \R$
associated to $g_{S,j}$ by letting $\hat{g}_{S,j}(A)$ be the coefficient of the monomial
$\prod_{i \in A} x_i$ in $g_{S,j}$.
Finally, for every $|S|=m$ and $x \in \bits^m$,
we define $r \times r$ PSD matrices indexed by $\cF$ as follows:
 $(Q_x)_{A,B} \seteq x^A x^B$ and $(P_S)_{A,B} \seteq \sum_{j = 1}^t
\hat{g}_{S,j}(A) \hat{g}_{S,j}(B)$.  It is easy to check that
$$ \Tr(P_S Q_x) = \sum_{j = 1}^t \left(\sum_{A \in \cF} x^A \hat{g}_{S,j}(A)\right)^2 = \sum_{j=1}^t g_{S,j}(x)^2 = f(x_S) = M_n^f(S,x)
\mcom$$
which yields an explicit psd factorization of $M^f_n$ with matrices
$\{ P_S \} , \{Q_x\}$
of dimension $r = \sum_{i \leq 1+d/2} \binom{n}{i} \leq 1+n^{1+d/2}$.
\end{proof}

%
%
%
%
%
%
%
%
%
%
%
%
%
%
%
%
%
%
%
%
%
%
%
%
%
%
%
%
%
%
%
%
%
%

%
%
%
%

\section{Approximations for density operators}
\label{sec:learning}

We turn now to a central theme of our approach:  High-entropy states can be approximated
by ``simple'' states if the approximation is only with respect to ``simple'' tests.
In our setting, ``simple'' will mean low-degree.
In \pref{sec:single-test}, we present a basic version
of this principle with respect to a single test functional.
This suffices for essentially all our applications to psd rank lower bounds.

We believe that the maximum-entropy approximation framework is a powerful one,
so \pref{sec:gener-appr-dens} is devoted to a more general exploration of the principle.
In particular, we state and prove approximation theorems for density operators
with respect to families of tests.
In the rest of this section, we fix a finite-dimensional real inner product space $H$.

\subsection{Approximation against a single test}
\label{sec:single-test}

The following theorem shows that a linear functional over density matrices with
high entropy is approximately minimized at a density matrix that is the square of a low-degree polynomial in the linear functional.  We recall that $U = \frac{\Id}{\Tr(\Id)}$ is the uniform density matrix.

\begin{theorem}[Density matrix approximation]
  \label{thm:operator-approx}
  Let $F \in \cM(H)$ be a symmetric matrix and let $Q \in \cD(H)$ be a density matrix.
  Then, for every $\e \in (0,\frac12)$, there exists a degree-$k$ univariate polynomial $p$ with  $k \le   O(\lVert F\rVert/\e) \cdot \qe{Q}{U} + O\Paren{\frac{\log 1/\e}{\log \log 1/\e}}$ such that
  \begin{displaymath}
    \Tr \Paren{\vbig F \cdot \tfrac 1{\Tr(p(F)^2)}p(F)^2} \le \Tr (F Q) + \e\mper
  \end{displaymath}
  Moreover, the polynomial $p$ depends only on $\e$, the operator norm $\lVert  F \rVert$, and the relative entropy $\qe{Q}{U}$.)
\end{theorem}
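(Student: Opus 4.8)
The plan is to reduce the statement to a polynomial approximation of the Gibbs state of $F$ at a suitably chosen temperature, and then to transfer that approximation to the linear functional $\Tr(F\cdot)$.

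\emph{Step 1 (reduction to a Gibbs state).} For $\beta\ge 0$ write $G_\beta\defeq e^{-\beta F}/\Tr(e^{-\beta F})$ for the Gibbs state, and recall the identity $\qe{\rho}{G_\beta}=\beta\Tr(F\rho)-S(\rho)+\log\Tr(e^{-\beta F})$, valid for every density matrix $\rho$ (just expand $\log G_\beta=-\beta F-\log\Tr(e^{-\beta F})\cdot\Id$). Since $\qe{\rho}{G_\beta}\ge 0$ with equality at $\rho=G_\beta$, the state $G_\beta$ minimizes $\rho\mapsto \beta\Tr(F\rho)-S(\rho)$ over density matrices. Applying this at $\rho=Q$, and using $S(G_\beta)\le\log\Tr(\Id)$ together with $\qe{Q}{U}=\log\Tr(\Id)-S(Q)$, gives
\[
\Tr(F G_\beta)\;\le\;\Tr(FQ)+\frac{S(G_\beta)-S(Q)}{\beta}\;\le\;\Tr(FQ)+\frac{\qe{Q}{U}}{\beta}\,.
\]
Disposing of the trivial cases ($\e\gtrsim\|F\|$ or $\qe{Q}{U}=0$, where $p\equiv 1$, i.e.\ $\tilde Q=U$, already works), set $\beta\defeq 2\,\qe{Q}{U}/\e$, so that $\Tr(FG_\beta)\le\Tr(FQ)+\e/2$. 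It now suffices to find a degree-$k$ polynomial $p$ such that $\tilde Q=p(F)^2/\Tr(p(F)^2)$ obeys $\Tr(F\tilde Q)\le \Tr(F G_\beta)+\e/2$; i.e.\ the problem becomes: approximate the Gibbs state by the square of a low-degree polynomial in $F$.

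\emph{Step 2 (polynomial approximation of $e^{-\beta F/2}$).} I would take $p$ to be a truncated Taylor (or Chebyshev) expansion of the scalar function $g(t)=e^{-\beta t/2}$ on the interval $[-\|F\|,\|F\|]\supseteq\mathrm{spec}(F)$, truncated at a degree $k$ large enough that $p(t)^2=e^{-\beta t}(1+\rho(t))$ with $\lvert\rho(t)\rvert\le\delta\defeq\e/(8\|F\|)$ for all $t\in[-\|F\|,\|F\|]$. The Lagrange remainder for the exponential bounds the truncation error by roughly $\tfrac{(\beta\|F\|/2)^{k+1}}{(k+1)!}\,e^{\beta\|F\|/2}$; comparing this with $\delta\,g(t)\ge\delta\,e^{-\beta\|F\|/2}$ shows that $k=O(\beta\|F\|)+O\!\big(\log(1/\e)/\log\log(1/\e)\big)$ suffices. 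Here the first term is the unavoidable "cost of the temperature", while the sub-logarithmic second term is exactly what the super-exponential factor $(k+1)!$ in the denominator buys in the regime $\beta\|F\|=O(1)$ (in the complementary regime the first term dominates). Since $\beta=2\,\qe{Q}{U}/\e$, this matches the degree bound in the statement, and $p$ visibly depends only on $\e$, $\|F\|$, and $\qe{Q}{U}$.

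\emph{Step 3 (transfer to the functional, and the main obstacle).} Working in an eigenbasis of $F$ with eigenvalues $\lambda_i$ and setting $Z=\Tr(e^{-\beta F})$, if $p(\lambda_i)^2=e^{-\beta\lambda_i}(1+\rho_i)$ with $\lvert\rho_i\rvert\le\delta\le\tfrac12$, then with $A=\Tr(FG_\beta)$, $B=\sum_i\lambda_i\tfrac{e^{-\beta\lambda_i}}{Z}\rho_i$, and $\sigma=\sum_i\tfrac{e^{-\beta\lambda_i}}{Z}\rho_i$ one computes $\Tr(F\tilde Q)=\tfrac{A+B}{1+\sigma}$, hence
\[
\bigl|\Tr(F\tilde Q)-\Tr(FG_\beta)\bigr|\;\le\;\frac{\lvert B\rvert+\lvert A\sigma\rvert}{1-\delta}\;\le\;4\delta\|F\|\;=\;\frac{\e}{2}\,,
\]
using $\lvert A\rvert\le\|F\|$ and $\sum_i\lvert\lambda_i\rvert\tfrac{e^{-\beta\lambda_i}}{Z}\le\|F\|$. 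Combined with Step 1 this yields $\Tr(F\tilde Q)\le\Tr(FQ)+\e$. Steps 1 and 3 are routine; the real work is the quantitative polynomial approximation of Step 2 — in particular, pinning the degree down to the sharp form in all regimes of $\beta\|F\|$ (a Chebyshev economization rather than a plain Taylor truncation is likely needed to get the clean sub-logarithmic low-order term, and one must track the dependence on $\|F\|$ carefully). For the applications in the rest of the paper only the cruder estimate $k\lesssim(1+\qe{Q}{U})\|F\|/\e$ of \pref{thm:operator-approx-intro} is used, into which the low-order term is absorbed.
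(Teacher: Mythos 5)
Your proof is correct and follows essentially the same route as the paper: reduce to the Gibbs state $e^{-\beta F}/\Tr(e^{-\beta F})$ via quantum-entropy duality (the paper's \pref{lem:singletest}), then approximate the matrix exponential by a truncated Taylor series and absorb the error in the linear functional (the paper's \pref{lem:uniapprox} and \pref{cor:taylor-exp}); your Step~3 is just a slightly more direct eigenbasis computation than the paper's trace-norm estimate. Your remark that the low-order term in the degree bound should really read $\log(\lVert F\rVert/\e)/\log\log(\lVert F\rVert/\e)$ is a fair observation and applies equally to the paper's own Corollary, where $\e' = \e/(2\lVert F\rVert)$ is fed into \pref{cor:taylor-exp}; as you note, this is harmless for the downstream applications, which only use the coarser bound $k\lesssim (1+\qe{Q}{\uId})\lVert F\rVert/\e$.
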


The proof consists of two steps.
First, we will show that the theorem holds with $\tfrac 1{\Tr(p(F)^2)}p(F)^2$ replaced by $e^{-\lambda F}/\Tr(e^{-\lambda F})$ for $\lambda \le (1/\e)\cdot \qe{Q}{U}$.
Then, we will approximate the matrix exponential by the square of a low-degree polynomial.

\begin{lemma}\label{lem:singletest}
  For every symmetric matrix $F$ and every density matrix $Q$,
  \begin{displaymath}
    \Tr \left (F \cdot \tfrac 1 {\Tr e^{-\lambda F}}e^{-\lambda F} \right) \le \Tr (F Q)  + \e
    \mcom
  \end{displaymath}
  as long as $\lambda \ge 1/\e \cdot \qe{Q}{U}$.
\end{lemma}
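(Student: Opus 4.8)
The plan is to recognize $\rho_\lambda := e^{-\lambda F}/\Tr(e^{-\lambda F})$ as a Gibbs state and to invoke the Gibbs variational principle, which is nothing but the nonnegativity of quantum relative entropy (Klein's inequality). First I would record the identity obtained by taking logarithms in $\rho_\lambda = e^{-\lambda F}/Z$, where $Z := \Tr(e^{-\lambda F})$: namely $\log\rho_\lambda = -\lambda F - (\log Z)\,\Id$. Tracing this against $\rho_\lambda$ and using that $\Tr(\rho_\lambda)=1$ gives
\[
\log Z \;=\; S(\rho_\lambda) - \lambda\,\Tr(F\rho_\lambda)\,,
\]
where $S(\cdot)$ denotes the von Neumann entropy. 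This is the only computation of the proof.

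Next I would feed in the arbitrary density matrix $Q$ through the inequality $\qe{Q}{\rho_\lambda}\ge 0$. Expanding, $\qe{Q}{\rho_\lambda} = -S(Q) - \Tr(Q\log\rho_\lambda) = -S(Q) + \lambda\,\Tr(FQ) + \log Z$, so nonnegativity yields $\log Z \ge S(Q) - \lambda\,\Tr(FQ)$. (Since $\rho_\lambda$ is positive definite, this relative entropy is finite for every density matrix $Q$, so the argument is valid with no rank hypothesis on $Q$.) Combining with the identity from the first step, $S(\rho_\lambda) - \lambda\,\Tr(F\rho_\lambda) \ge S(Q) - \lambda\,\Tr(FQ)$, i.e.
\[
\lambda\bigl(\Tr(FQ) - \Tr(F\rho_\lambda)\bigr) \;\ge\; S(Q) - S(\rho_\lambda)\,.
\]

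It then remains only to control $S(\rho_\lambda) - S(Q)$ in terms of $\qe{Q}{U}$. Here I would use the two standard facts that $S(\rho_\lambda) \le \log\Tr(\Id)$ (the entropy of any state on $H$ is at most the log of $\dim H = \Tr(\Id)$, with equality for $U$) and, from $\log U = -(\log\Tr(\Id))\,\Id$ together with $\Tr(Q)=1$, that $S(Q) = \log\Tr(\Id) - \qe{Q}{U}$. Subtracting gives $S(\rho_\lambda) - S(Q) \le \qe{Q}{U}$, and dividing the displayed inequality by $\lambda>0$ yields $\Tr(F\rho_\lambda) \le \Tr(FQ) + \qe{Q}{U}/\lambda$. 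Taking $\lambda \ge \qe{Q}{U}/\e$ gives the claimed bound, the degenerate case $\qe{Q}{U}=0$ (where $Q = U = \rho_0$) being trivial. I do not expect any genuine obstacle: every step is a routine entropy manipulation, and the only point deserving a moment's care is that relative entropies against the full-rank states $\rho_\lambda$ and $U$ are automatically finite regardless of the rank of $Q$.
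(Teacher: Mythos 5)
Your proof is correct and is essentially the same argument as the paper's: the paper cites the duality formula for quantum relative entropy (the Gibbs variational principle) to assert that $X\mapsto\lambda\Tr(FX)+\qe{X}{\uId}$ is minimized at $\rho_\lambda$ and then drops $\qe{\rho_\lambda}{\uId}\ge 0$, whereas you derive that same variational inequality from scratch via Klein's inequality $\qe{Q}{\rho_\lambda}\ge 0$ and the identity $\qe{X}{\uId}=\log\Tr(\Id)-S(X)$. Unpacking the cited lemma is a fine choice, but there is no difference in route or in the key inequality being exploited.
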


\begin{proof}
  By the duality formula for quantum entropy (see, e.g., \cite[Thm. 2.13]{MR2681769}),
  the function $f: X\mapsto \lambda \Tr (F X) + \qe{X}{\uId}$ over the the set of density matrices is minimized at $X^\star=e^{-\lambda F}/\Tr (e^{-\lambda F})$.
  Therefore, using the fact $\qe{X^\star}{\uId}\ge 0$, we get
  \begin{displaymath}
    \lambda \Tr (F X^\star) \le f(X^\star)\le f(Q)= \lambda \Tr (F Q) + \qe{Q}{\uId}
    \mcom
  \end{displaymath}
  which implies that $\Tr (F X^\star) \le \Tr (F Q) + \qe{Q}{\uId}/\lambda\le \Tr(F Q) + \e$, as desired.
\end{proof}

Next we observe that one can pass from univariate approximations of $e^x$ to approximations of $e^F$ in the trace norm.

\begin{lemma}
\label{lem:uniapprox}
Let $\delta \in (0,1]$ and $\tau > 0$ be given.
Suppose there exists a univariate polynomial $p(x)$ such that
for every $x \in [-\tau/2,\tau/2]$,
\begin{equation}\label{eq:uniapprox}
\left|e^x - p(x)\right| \leq \delta e^x\,.
\end{equation}
Then for every $F \in \mathcal M(H)$ with $\|F\| \leq \tau$, we have
\begin{equation}\label{eq:deg-goal-1}
\left\|\frac{e^F}{\Tr(e^F)} - \frac{p(F/2)^2}{\Tr(p(F/2)^2)}\right\|_* \leq 6 \delta\mper
\end{equation}
\end{lemma}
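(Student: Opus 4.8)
The plan is to deduce the operator inequality \eqref{eq:deg-goal-1} from the scalar hypothesis \eqref{eq:uniapprox} by simultaneous diagonalization, and then to absorb the two different normalizations through a routine perturbation estimate for normalized positive operators. Since $\norm{F}\le\tau$, writing $F=\sum_i\mu_i v_iv_i^T$ in an orthonormal eigenbasis gives all $\mu_i\in[-\tau,\tau]$, so all eigenvalues $\mu_i/2$ of $F/2$ lie in $[-\tau/2,\tau/2]$, exactly the interval on which \eqref{eq:uniapprox} is assumed to hold. Thus the scalar bound transfers to each eigenvalue of $F/2$, and the two operators $e^F$ and $p(F/2)^2$ are diagonal in the same basis, which is what makes the reduction clean.

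First I would apply \eqref{eq:uniapprox} at each $x=\mu_i/2$: this gives $p(\mu_i/2)\in[(1-\delta)e^{\mu_i/2},(1+\delta)e^{\mu_i/2}]$, and squaring together with $\delta\le 1$ yields $\abs{p(\mu_i/2)^2-e^{\mu_i}}\le(2\delta+\delta^2)e^{\mu_i}\le 3\delta\, e^{\mu_i}$. Because $e^F$ and $p(F/2)^2$ are positive semidefinite and simultaneously diagonalizable, with eigenvalues $e^{\mu_i}$ and $p(\mu_i/2)^2$ respectively, the trace norm of their difference is the sum of the per-eigenvalue discrepancies, so summing gives $\norm{e^F-p(F/2)^2}_*\le 3\delta\sum_i e^{\mu_i}=3\delta\,\Tr(e^F)$.

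It then remains to record the elementary fact that for nonzero positive semidefinite $A,B$ one has $\norm{A/\Tr(A)-B/\Tr(B)}_*\le 2\norm{A-B}_*/\Tr(A)$, which follows by writing $A/\Tr(A)-B/\Tr(B)=(A-B)/\Tr(A)+\tfrac{\Tr(B)-\Tr(A)}{\Tr(A)\,\Tr(B)}B$ and bounding $\norm{B}_*=\Tr(B)$ and $\abs{\Tr(B)-\Tr(A)}=\abs{\Tr(B-A)}\le\norm{A-B}_*$. Applying this with $A=e^F$, $B=p(F/2)^2$ and substituting $\norm{A-B}_*\le 3\delta\,\Tr(A)$ gives exactly the claimed bound $6\delta$ in \eqref{eq:deg-goal-1}. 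There is no genuine obstacle here; the only point worth a remark is that $p(F/2)^2$ must be a nonzero operator for its normalization to make sense, which is guaranteed since $\delta\le 1$ forces $p(\mu_i/2)\ge(1-\delta)e^{\mu_i/2}$, strictly positive in the regime $\delta<1$ that is the only one used in applications, so $\Tr(p(F/2)^2)>0$.
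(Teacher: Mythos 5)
Your proof is correct and takes essentially the same route as the paper's: both diagonalize $F$, establish the per-eigenvalue bound $|e^{\mu_i}-p(\mu_i/2)^2|\leq 3\delta\, e^{\mu_i}$, and then handle the two normalizations via the decomposition $\frac{x}{y}-\frac{x'}{y'}=\frac{x-x'}{y}+\frac{y'-y}{yy'}x'$. The only differences are presentational: you get the $3\delta$ bound by squaring the interval $[(1-\delta)e^{x},(1+\delta)e^{x}]$ whereas the paper factors $e^x-p(x/2)^2=(e^{x/2}-p(x/2))(e^{x/2}+p(x/2))$, and you package the normalization step as a clean standalone estimate $\lVert A/\Tr A - B/\Tr B\rVert_* \leq 2\lVert A-B\rVert_*/\Tr A$ for PSD $A,B$ (using $|\Tr(B-A)|\leq\lVert A-B\rVert_*$), whereas the paper carries out the same calculation eigenvalue-by-eigenvalue.
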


\begin{proof}
Under the assumptions, for every $x \in [-\tau,\tau]$, one has
\begin{align}
\left|e^x - p(x/2)^2\right| &= \left|e^{x/2} - p(x/2)\right| \cdot \left|e^{x/2} + p(x/2)\right| \nonumber \\
&\leq\nonumber
e^{x/2}(2+\delta)\left|e^{x/2} - p(x/2)\right| \\
&\leq \delta e^{x} (2+\delta) \nonumber \\
&\leq 3 \delta e^x\,,\label{eq:appg}
\end{align}
where the last line follows from $\delta \leq 1$.
Note the elementary equality:  For all $x,y,x',y' > 0$,
\begin{equation}\label{eq:element}
\frac{x}{y} - \frac{x'}{y'} = \frac{x-x'}{y} + \frac{y-y'}{y y'} x'\,.
\end{equation}
Let $\lambda_1, \lambda_2, \ldots, \lambda_n \in [-\tau,\tau]$ denote the eigenvalues of $F$.
We conclude that
\begin{align*}
\sum_{i=1}^n \left|\frac{e^{\lambda_i}}{\sum_{i=1}^n e^{\lambda_i}} - \frac{p(\lambda_i/2)^2}{\sum_{i=1}^n p(\lambda_i/2)^2}\right|
&\stackrel{\eqref{eq:element}}{\leq}
\sum_{i=1}^n \frac{\left|e^{\lambda_i} - p(\lambda_i/2)^2\right|}{\sum_{i=1}^n e^{\lambda_i}} + \frac{p(\lambda_i/2)^2 \left|\sum_{i=1}^n e^{\lambda_i} - p(\lambda_i/2)^2\right|}{\left(\sum_{i=1}^n e^{\lambda_i}\right)\left(\sum_{i=1}^n p(\lambda_i/2)^2\right)}
\\
&\stackrel{\eqref{eq:appg}}{\leq}
3\delta +
\sum_{i=1}^n
\frac{p(\lambda_i/2)^2 \left(3 \delta \sum_{i=1}^n e^{\lambda_i}\right)}{\left(\sum_{i=1}^n e^{\lambda_i}\right)\left(\sum_{i=1}^n p(\lambda_i/2)^2\right)} \\
&\,\,\leq 6 \delta\mper
\end{align*}
Since $e^F$ and $p(F)$ are simultaneously diagonalizable, the preceding inequality is precisely our goal \eqref{eq:deg-goal-1}.
\end{proof}

The following corollary of \pref{lem:uniapprox} follows by checking that the Taylor expansion of $e^x$ satisfies the approximation guarantee \pref{eq:uniapprox}.

\begin{corollary}\label{cor:taylor-exp}
For every $\e \in (0,\frac12)$ and every
symmetric matrix $F \in \mathcal M(H)$,
there is a number $k \leq 3e\left(\|F\|_{\infty} + \frac{\log (1/\e)}{\log \log (1/\e)}\right)$ and a univariate degree-$k$ polynomial $p_k$
with non-negative coefficients such that
\begin{equation}%
\left\|\frac{e^F}{\Tr(e^F)} - \frac{p_k(F/2)^2}{\Tr(p_k(F/2)^2)}\right\|_* \leq \e\mper
\end{equation}
\end{corollary}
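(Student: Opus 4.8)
The plan is to reduce the operator statement to a univariate approximation problem and invoke \pref{lem:uniapprox}. Write $r \defeq \norm{F}$ for the operator norm of $F$, so that every eigenvalue of $F$ lies in $[-r,r]$, and set $\delta \defeq \e/6 \in (0,\tfrac1{12}]$. Applying \pref{lem:uniapprox} with $\tau = r$, it suffices to produce a degree-$k$ univariate polynomial $p$ with non-negative coefficients satisfying the hypothesis \eqref{eq:uniapprox}, i.e.
\[
  \bigabs{e^x - p(x)} \le \delta\, e^x \qquad\text{for all } x \in [-r/2,\,r/2]\,;
\]
then \eqref{eq:deg-goal-1} yields exactly $\bignorm{\tfrac{e^F}{\Tr(e^F)} - \tfrac{p(F/2)^2}{\Tr(p(F/2)^2)}}_* \le 6\delta = \e$, and we take $p_k \defeq p$. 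All the normalization and the passage to the trace norm are already packaged inside \pref{lem:uniapprox}; the only content of the corollary is the scalar estimate above.

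The natural candidate is the truncated exponential series $p(x) = \sum_{j=0}^{k} x^j/j!$, whose coefficients $1/j!$ are manifestly non-negative. To verify the scalar bound one controls the Taylor remainder $e^x - p(x) = \sum_{j>k} x^j/j!$ on $[-r/2,r/2]$. For $x \ge 0$ the Lagrange form gives $\abs{e^x - p(x)} \le \tfrac{x^{k+1}}{(k+1)!}\,e^x$, so the \emph{relative} error is at most $\tfrac{(r/2)^{k+1}}{(k+1)!}$. For $x < 0$ (once $k \gtrsim r$) the omitted terms alternate in sign with decreasing magnitudes, so $\abs{e^x - p(x)} \le \tfrac{(r/2)^{k+1}}{(k+1)!}$, and since $e^x \ge e^{-r/2}$ on the interval the relative error is at most $\tfrac{(r/2)^{k+1}}{(k+1)!}\,e^{r/2}$. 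In both cases it is therefore enough to guarantee
\[
  \frac{(r/2)^{k+1}}{(k+1)!} \;\le\; \frac{\e}{6}\, e^{-r/2}\,.
\]

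The last step is to check that this holds once $k$ is of the order stated in the corollary, namely $O\bigparen{r + \tfrac{\log(1/\e)}{\log\log(1/\e)}}$. Taking logarithms and using $(k+1)! \ge ((k+1)/e)^{k+1}$, the requirement becomes $(k+1)\log\tfrac{2(k+1)}{e r} \gtrsim r + \log(1/\e)$: taking $k+1$ to be a suitable constant times $r$ disposes of the ``$r$'' term, while taking $k+1$ to be a constant times $\tfrac{\log(1/\e)}{\log\log(1/\e)}$ forces the logarithmic factor $\log\tfrac{2(k+1)}{e r}$ up to order $\log\log(1/\e)$, which makes the left side exceed $\log(1/\e)$. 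I expect this nested-logarithm bookkeeping to be the only delicate point: it is precisely the \emph{super}-geometric decay of $n \mapsto a^n/n!$ that produces the $\tfrac{\log(1/\e)}{\log\log(1/\e)}$ term — a crude geometric bound of the form $(\cdot)^{k+1} \le 2^{-(k+1)}$ would only yield a $\log(1/\e)$ term in the degree. Everything else is immediate from \pref{lem:uniapprox} and elementary properties of the exponential series.
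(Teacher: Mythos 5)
Your proof is correct and takes essentially the same route as the paper: the truncated Taylor series $p_k(x) = \sum_{j \le k} x^j/j!$ fed into \pref{lem:uniapprox}, with the degree chosen so the remainder is below $\e/6$ on $[-r/2,r/2]$. In fact you handle the $x<0$ endpoint more carefully than the paper does — the paper's stated Lagrange bound $|e^x - p_k(x)| \le e^x\, x^{k+1}/(k+1)!$ is not literally valid for $x<0$ (where $e^x<1$ shrinks the right side below the true remainder), whereas your alternating-series estimate correctly yields $|e^x - p_k(x)| \le (r/2)^{k+1}/(k+1)!$ and forces the extra $e^{r/2}$ factor into the requirement $(r/2)^{k+1}/(k+1)! \le (\e/6)e^{-r/2}$, which the chosen $k \asymp r + \log(1/\e)/\log\log(1/\e)$ still absorbs.
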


\begin{proof}
Let $p_k(x) = \sum_{t=0}^k \frac{x^k}{k!}$.  By Taylor's theorem, we have
\[
\left|e^x - p_k(x)\right| \leq e^x \frac{x^{k+1}}{(k+1)!}\,.
\]
Define $\tau = \|F\|_{\infty}$ and
choose $k = \left\lfloor 3e\left(\tau + \frac{\log (1/\e)}{\log \log (1/\e)}\right)\right\rfloor$ so that for $x \in [-\tau/2,\tau/2]$, we have
$\frac{x^{k+1}}{(k+1)!}\leq \e/6$.  Finally, apply \pref{lem:uniapprox}.
\end{proof}

The proof of the main theorem in this section follows by combining \pref{lem:singletest} and \pref{cor:taylor-exp}.

\begin{proof}[Proof of \pref{thm:operator-approx}]
  Fix $\e \in (0,\frac12)$ and $F \in \cM(H)$, $q \in \cD(H)$.
  Choose $\lambda = (2/\e) \cdot \qe{Q}{\uId}$ and $F'=-\lambda F$.
  Let $p_k$ be the polynomial from \pref{cor:taylor-exp} for $k=3e\left( \lVert F'\rVert + \frac{\log (1/\e')}{\log \log (1/\e')}\right)$ and $\e'=\e / (2\lVert  F \rVert)$.
  Note that $k\le O(\lVert  F  \rVert/\e)\cdot \qe{Q}{\uId} + O\left(\frac {\log 1/\e}{\log \log 1/\e}\right)$.
  Moreover,
  \begin{align*}
    \Tr \Paren{F \cdot \tfrac 1 { \Tr (p_k(F'/2)^2)}  p_k(F'/2)^2}
    &\le \Tr \left(F \cdot \tfrac 1 {\Tr (e^{F'})} e^{F'}\right ) + \e' \cdot \lVert  F \rVert
    & \text{(by \pref{cor:taylor-exp})}\\
    &\le \Tr (F Q) +  \tfrac \e 2  + \e' \cdot \lVert  F \rVert
    & \text{(by \pref{lem:singletest})}
  \end{align*}
  Since $\e /2 + \e' \cdot \lVert  F \rVert\le \e$, the polynomial $p(x)=p_k(-\lambda x/2)$ satisfies the desired bound
  \[\Tr \left(F \cdot \tfrac 1 {\Tr(p(F)^2)}p(F)^2\right)\le \Tr (F Q) + \e\,.\]
\end{proof}

\subsection{Approximation against a family of tests}
\label{sec:gener-appr-dens}

Let $\cT \subseteq \cM(H)$ denote a compact set of matrices, and
set $\Delta(\cT) \seteq \sup_{A \in \cT} \|A\|$.
For $A \in \cM(H)$, we define the associated dual gauge
\[
[A]_{\cT}  \defeq \sup_{B \in \cT} \Tr(BA)\,.
\]

One should think of $\cT$ as a set of test functionals; for $A,A' \in \cM(H)$, the value
$[A-A']_{\cT}$ measures the extent to which $A$ and $A'$ are distinguishable
using tests from $\cT$.
It is important to note that if $\cT$ is not centrally symmetric, then $[\cdot]_{\cT}$
might also fail to be symmetric.
For future reference, we observe that fact that for any $A \in \cM(H)$,
\begin{align}
\left|[A]_{\cT}\right| &\leq \Delta(\cT) \|A\|_* \label{eq:gauge-holder} \\
[A+A']_{\cT} &\leq [A]_{\cT} + [A']_{\cT} \label{eq:gauge-triangle}\,.
\end{align}

Our main approximation theorem asserts that,
with respect to tests from a convex set $\cT$,
a high-entropy density operator can be well-approximated by
the square of a low-degree polynomial in some element of $\cT$.

\begin{theorem}[Approximation by a low-degree square]
  \label{thm:general-operator-approx}
  For every $\e \in (0,\frac12)$, the following holds.
  Let $\cT \subseteq \cM(H)$ be compact and convex, and
  let $Q \in \cD(H)$ be a density matrix.
  Then there exists a number
  \[
  k \lesssim \left(1+\qe{Q}{\uId}\right) \frac{\Delta(\cT)}{\e}\,,
  \]
  a univariate degree-$k$ polynomial $p$, and an element $F \in \cT$ such that $\Tr(p(F)^2)=1$ and
    \begin{displaymath}
  \left[Q - p(F)^2\right]_{\cT} \leq \e
        \mper
  \end{displaymath}
\end{theorem}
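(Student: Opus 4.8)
The plan is to prove \pref{thm:general-operator-approx} by a matrix multiplicative weights (equivalently, mirror descent with quantum-entropy regularizer) argument, invoking the single-test machinery \pref{cor:taylor-exp} only at the end to convert a Gibbs state into the square of a low-degree polynomial. First dispose of the degenerate case: if $\Delta(\cT)\le 2\e$, then by \eqref{eq:gauge-holder} the uniform density $\uId$ satisfies $[Q-\uId]_\cT\le\Delta(\cT)\|Q-\uId\|_*\le 2\Delta(\cT)\le 4\e$, and $\uId$ is the square of the constant polynomial, so (after rescaling $\e$) we are done with $k=0$. Hence assume $\Delta(\cT)>2\e$; set $\rho=\Delta(\cT)$ and $\eta=\e/(2\rho^2)$, so $\eta\rho<1/4$.

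Run the following process from $X_1=\uId$. At step $t$, choose $B_t\in\cT$ attaining $\max_{B\in\cT}\Tr(B(Q-X_t))=[Q-X_t]_\cT$ (attained by compactness); if this maximum is $\le\e$, stop and output $X_t$; otherwise set $X_{t+1}=\exp\big(\eta\sum_{s\le t}B_s\big)\big/\Tr\exp\big(\eta\sum_{s\le t}B_s\big)$. The crucial structural point is that, by convexity of $\cT$, we always have $X_t=\exp(\lambda_t F_t)/\Tr\exp(\lambda_t F_t)$ with $\lambda_t=\eta(t-1)\ge 0$ and $F_t=\tfrac1{t-1}\sum_{s<t}B_s\in\cT$ (and $X_1=\uId$ is this form with $\lambda_1=0$), so whenever the process halts, the output is a Gibbs state $\exp(\lambda F)/\Tr\exp(\lambda F)$ with $F\in\cT$. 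To bound the number of steps, I will use the standard potential analysis: Golden--Thompson gives $\Tr\exp\big(\eta\sum_{s\le t}B_s\big)\le \Tr\exp\big(\eta\sum_{s<t}B_s\big)\cdot\Tr\big(X_t e^{\eta B_t}\big)$, and $e^{\eta B_t}\preceq I+\eta B_t+\eta^2B_t^2$ (valid since $\|\eta B_t\|<1/4$) together with $\Tr(B_t^2X_t)\le\rho^2$ yields $\Tr(X_te^{\eta B_t})\le\exp(\eta\Tr(B_tX_t)+\eta^2\rho^2)$; telescoping from $\Tr(e^0)=\dim H$ and using the Gibbs variational principle $\log\Tr e^M\ge\Tr(MQ)+S(Q)=\Tr(MQ)+\log\dim H-\qe{Q}{\uId}$, the $\log\dim H$ terms cancel and one gets
\[
\sum_{t=1}^{T}\Tr(B_t X_t)\;\ge\;\sum_{t=1}^{T}\Tr(B_t Q)-\frac{\qe{Q}{\uId}}{\eta}-\eta\rho^2 T .
\]
If the process has not stopped after $T$ steps then $\Tr(B_t(Q-X_t))>\e$ for each $t\le T$, so $\sum_t\Tr(B_t(Q-X_t))>\e T$; combining this with the display and $\eta=\e/(2\rho^2)$ forces $T<4\rho^2\qe{Q}{\uId}/\e^2$. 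Thus the process halts at some $t^\star\le 1+4\Delta(\cT)^2\qe{Q}{\uId}/\e^2$, producing $X_{t^\star}=\exp(\lambda F)/\Tr\exp(\lambda F)$ with $F\in\cT$, $[Q-X_{t^\star}]_\cT\le\e$, and $\lambda\Delta(\cT)=\eta(t^\star-1)\Delta(\cT)\le 2\qe{Q}{\uId}\Delta(\cT)/\e$.

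Finally, apply \pref{cor:taylor-exp} to the symmetric matrix $\lambda F$ with error parameter $\delta=\e/\Delta(\cT)\in(0,\tfrac12)$: this gives a univariate degree-$k$ polynomial with $k\le 3e\big(\|\lambda F\|+\tfrac{\log(1/\delta)}{\log\log(1/\delta)}\big)$ such that, after normalizing to a univariate degree-$k$ polynomial $p$ in $F$ with $\Tr p(F)^2=1$, we have $\|X_{t^\star}-p(F)^2\|_*\le\delta$. Then by \eqref{eq:gauge-holder} and \eqref{eq:gauge-triangle},
\[
\big[Q-p(F)^2\big]_\cT\le[Q-X_{t^\star}]_\cT+\Delta(\cT)\,\big\|X_{t^\star}-p(F)^2\big\|_*\le\e+\Delta(\cT)\delta=2\e ,
\]
and running the whole argument with $\e/2$ in place of $\e$ gives exactly $[Q-p(F)^2]_\cT\le\e$. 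For the degree bound, $\|\lambda F\|\le\lambda\Delta(\cT)\le 2\qe{Q}{\uId}\Delta(\cT)/\e$, while $\tfrac{\log(1/\delta)}{\log\log(1/\delta)}=\tfrac{\log(\Delta(\cT)/\e)}{\log\log(\Delta(\cT)/\e)}\lesssim\Delta(\cT)/\e$ since $\Delta(\cT)/\e>2$; hence $k\lesssim\big(1+\qe{Q}{\uId}\big)\Delta(\cT)/\e$, as claimed.

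The step I expect to be the main obstacle is the regret analysis free of any $\log\dim H$ dependence while keeping the iterate of the restricted Gibbs form $\exp(\lambda F)/\Tr\exp(\lambda F)$ with $F\in\cT$: this needs the weights to be updated against $\uId$ but compared to the fixed comparator $Q$ (so that $\log\dim H$ telescopes away and only $\qe{Q}{\uId}$ survives), together with convexity of $\cT$ so that the rescaled cumulative loss never leaves $\cT$; controlling the width via $\|B_t\|\le\Delta(\cT)$ and tuning $\eta$ is then what yields the iteration count. Everything downstream --- the degenerate case, and the Gibbs-to-polynomial conversion via \pref{cor:taylor-exp} --- is routine bookkeeping.
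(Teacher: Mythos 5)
Your proof is correct and delivers the theorem's statement; it takes a genuinely different route through the key learning step than the paper does, while agreeing with the paper on the Gibbs-to-polynomial conversion. The paper's \pref{lem:mirror} is a \emph{continuous-time} mirror-descent argument: it parametrizes $Q_t \propto \exp\bigl(\log Q_0 - \int_0^t \Lambda_s\,ds\bigr)$, differentiates via the Duhamel formula to obtain the width bound $\Tr\bigl(X'(t)\,\tfrac{d}{dt}\tfrac{e^{X(t)}}{\Tr e^{X(t)}}\bigr) \le 2\lVert X'(t)\rVert^2$, and discretizes into intervals of length $\e/(4\Delta(\cT)^2)$ so that the chosen test remains unsatisfied across each interval. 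You instead run \emph{discrete-time} matrix multiplicative weights, bounding the potential $\log\Tr\exp\bigl(\eta\sum_{s\le t} B_s\bigr)$ via Golden--Thompson together with $e^{\eta B_t} \preceq I + \eta B_t + \eta^2 B_t^2$, then comparing against the fixed comparator $Q$ via the Gibbs variational inequality $\log\Tr e^M \ge \Tr(MQ) + S(Q)$, which cancels $\log\dim H$ and leaves only $\qe{Q}{\uId}$. Both arguments exploit exactly the two structural points you call out---a dimension-free potential, and convexity of $\cT$ keeping the averaged cumulative loss inside $\cT$---and both halt at a Gibbs state $e^{\lambda F}/\Tr(e^{\lambda F})$ with $F\in\cT$, which is then handed to \pref{cor:taylor-exp} in the same way to produce the normalized low-degree square. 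Your route is the standard Arora--Kale potential argument, avoids the Duhamel calculus entirely, and treats the degenerate case $\Delta(\cT)\le 2\e$ explicitly (output $\uId$ with a degree-$0$ polynomial), which the paper leaves implicit. A final remark worth recording: your update adds $+\eta B_t$ to the exponent, whereas the paper's \eqref{eq:prescribed} subtracts $\tfrac{\e}{4\Delta(\cT)^2} A_i$; either sign suffices for the theorem since the polynomial absorbs the scalar $\pm\lambda$, but your sign is the one under which the potential $\qe{Q}{Q_t}$ genuinely decreases---the paper's \eqref{eq:deriv} appears to omit the minus sign in $\tfrac{d}{dt}\qe{Q}{Q_t} = -\Tr\bigl(Q\,\tfrac{d}{dt}\log Q_t\bigr)$, which compensates for the sign choice in the exponent but would not hold if read literally.
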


Just as for \pref{thm:operator-approx}, this is proved in two steps:  First we find
an initial approximator of a simple form, and then we construct from that a low-degree approximator.
In the next argument, it is helpful to have the following fact:
If $X(t)$ is continuously differentiable matrix-valued function, then for any $\beta \in \mathbb R$,
we have the Duhamel formula:
\begin{equation}\label{eq:integral-rep}
\frac{d}{dt} e^{\beta X(t)} = \int_0^{\beta} e^{\alpha X(t)} \frac{d X(t)}{dt} e^{(\beta-\alpha) X(t)}\,d\alpha\,.
\end{equation}
This can be verified immediately by showing that both sides satisfy
the differential equation
\[\frac{\partial F}{\partial \beta} = e^{\beta X} \frac{dX}{dt} + X(t) F(\beta,t)
\]
with $F(0,t)=0$ for all $t$.  (This argument is taken from \cite{Wilcox67}.)

We will only require \eqref{eq:integral-rep} for $\beta=1$.
For example, \eqref{eq:integral-rep} and cyclicity of the trace
yields
\begin{equation}\label{eq:trace-deriv}
\Tr\left(\frac{d}{dt} e^{X(t)}\right) = \Tr\left(e^{X(t)} \frac{dX(t)}{dt}\right)\,.
\end{equation}

Denote $X'(t) = \frac{dX(t)}{dt}$.
If we know that $X(t)$ is symmetric,
and its eigenvalues are $\{\lambda_i\}$, then
by diagonalizing in the basis of $X(t)$,
we can also derive
\begin{align*}
\Tr\left(X'(t) \frac{d}{dt} e^{X(t)}\right)  \nonumber
&=  \int_0^1 \Tr\left(X'(t) e^{\alpha X(t)} X'(t) e^{(1-\alpha) X(t)}\right)\,d\alpha  \\
&=  \sum_{i,j} \left(X'(t)\right)_{ij}^2 e^{\lambda_i} \int_0^{1} e^{\alpha (\lambda_j - \lambda_i)}\,d\alpha \\
&= \sum_{i,j} \left(X'(t)\right)_{ij}^2 \frac{e^{\lambda_i}-e^{\lambda_j}}{\lambda_i-\lambda_j}
\qquad\, \textrm{using } \int_0^1 e^{\alpha x}\,d\alpha = \frac{e^x-1}{x} \\
&\leq \sum_{i,j} \left(X'(t)\right)_{ij}^2 e^{\max(\lambda_i,\lambda_j)}\,,
\end{align*}
where in the final line we have used the fact that if $a \geq b$, then
\[
\frac{e^a - e^b}{a-b} = \frac{e^a(1-e^{b-a})}{a-b} \leq e^a\,,
\]
since $e^{b-a} \geq 1+(b-a)$.
Thus we have
\begin{align}
\Tr\left(X'(t) \frac{d}{dt} e^{X(t)}\right) &\leq  2 \sum_i e^{\lambda_i} \sum_{j} \left(X'(t)\right)_{ij}^2 \nonumber \\
&\leq 2 \Tr(e^{X(t)}) \max_i \sum_{j} \left(X'(t)\right)_{ij}^2 \nonumber \\
&=2  \Tr(e^{X(t)}) \left\|X'(t)\right\|^2_{2\to \infty} \nonumber\\
&\leq
2 \Tr(e^{X(t)}) \left\|X'(t)\right\|^2\,.\label{eq:magic-1}
\end{align}

Together these imply
\begin{align}
	\Tr\left(X'(t) \frac{d}{dt} \frac{e^{X(t)}}{\Tr(e^{X(t)})} \right)  \nonumber
&  = \Tr
	\left( \frac{X'(t)}{\Tr(e^{X(t)})} \frac{d e^{X(t)}}{dt}\right) -
	\frac{\Tr(X'(t) e^{X(t)})}{\Tr(e^{X(t)})^2}
	\Tr\left(\frac{d e^{X(t)}}{dt} \right)  \\
	& = \Tr\left( \frac{X'(t)}{\Tr(e^{X(t)})} \frac{d e^{X(t)}}{dt}\right) -
	\frac{\Tr(X'(t) e^{X(t)})}{\Tr(e^{X(t)})} \cdot \frac{\Tr(e^{X(t)} X'(t))}{\Tr(e^{X(t)})}  \nonumber \\
	& = \frac{1}{\Tr(e^{X(t)})}\Tr\left( X'(t)\frac{d e^{X(t)}}{dt}\right) -
\left(\frac{\Tr(X'(t) e^{X(t)})}{\Tr(e^{X(t)})}\right)^2 \nonumber \\
	& \stackrel{\eqref{eq:magic-1}}{\leq} 2 \lVert X'(t) \rVert^2 \label{eq:magic}\,.
\end{align}

We will use this for the following lemma.

\begin{lemma}[Sparse approximation by mirror descent]
\label{lem:mirror}
  For every $\e > 0$, the following holds.
  Let $\cC \subseteq \cM(H)$ be a compact set, and
  let $Q,Q_0 \in \cD(H)$ be density matrices.
    If one defines $h = \lceil \frac{8}{\e^2} \qe{Q}{Q_0} \Delta(\cT)^2\rceil$ then there exist $A_1, A_2, \ldots, A_h \in \cT$ such that
  \begin{equation}\label{eq:prescribed}
  \tilde Q \defeq \frac{\exp\left(\log Q_0 - \frac{\e}{4\Delta(\cT)^2} \sum_{i=1}^h A_i\right)}{\Tr\left(\exp\left(\log Q_0 - \frac{\e}{4\Delta(\cT)^2} \sum_{i=1}^h A_i\right)\right)} \in \cD(H)
  \end{equation}
satisfies
  \begin{equation}\label{eq:mirror-goal}
  \left[Q-\tilde Q\right]_{\cT} \leq \e.
  \end{equation}
\end{lemma}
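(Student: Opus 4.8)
The plan is to realize the prescribed update as the trajectory of mirror descent (matrix multiplicative weights) on $\cD(H)$ with the von Neumann entropy as regularizer, and to control it through the potential $\Phi_t \defeq \qe{Q}{Q_t}$, where $Q_t$ is the $t$-th iterate. Assume first that $Q_0$ has full support (otherwise $\qe{Q}{Q_0}<\infty$ already forces $\supp(Q)\subseteq\supp(Q_0)$ and one restricts to that subspace, so $\log Q_0$ is well defined); set $\eta \defeq \tfrac{\e}{4\,\Delta(\cT)^{2}}$. Build the iterates greedily: given $Q_{t-1}$, if $[Q_{t-1}-Q]_{\cT}\le\e$ then stop; otherwise, since $\cT$ is compact and $A\mapsto\Tr\!\big(A(Q_{t-1}-Q)\big)$ is linear, pick $A_t\in\cT$ with $\Tr\!\big(A_t(Q_{t-1}-Q)\big)=[Q_{t-1}-Q]_{\cT}$ and put $Q_t\defeq\exp\!\big(\log Q_0-\eta\sum_{i\le t}A_i\big)/\Tr\!\big(\exp(\cdots)\big)$. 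The candidate $\tilde Q$ is the first good iterate produced; it automatically has the exponential form \eqref{eq:prescribed} (with at most $h$ summands in the exponent — one may pad with copies of $0\in\cT$ in the intended applications, or simply read $h$ as an upper bound on the number of terms), so everything reduces to showing the process halts within $h$ steps.

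The technical heart is a single one-step estimate: whenever $Q_t$ is produced, $\Phi_{t-1}-\Phi_t \ \ge\ \eta\,[Q_{t-1}-Q]_{\cT}-\eta^{2}\Delta(\cT)^{2}$. To prove it, absorb the normalization into the exponent, $\log Q_t=\log Q_{t-1}-\eta A_t-c_t I$ with $c_t=\log\Tr\!\big(\exp(\log Q_{t-1}-\eta A_t)\big)$, so that (using $\Tr Q=1$) $\Phi_{t-1}-\Phi_t=\Tr\!\big(Q(\log Q_t-\log Q_{t-1})\big)=-\eta\,\Tr(A_tQ)-c_t$. It then remains to bound $c_t$ from above, which I would do by interpolation: set $X(s)\defeq\log Q_{t-1}-s\eta A_t$ and $g(s)\defeq\log\Tr\!\big(e^{X(s)}\big)$, so $g(0)=0$ and $g(1)=c_t$; by \eqref{eq:trace-deriv}, $g'(0)=-\eta\,\Tr(A_tQ_{t-1})$, and since $X'(s)=-\eta A_t$ is constant, $g''(s)=\Tr\!\big((-\eta A_t)\,\tfrac{d}{ds}\tfrac{e^{X(s)}}{\Tr e^{X(s)}}\big)$, which is exactly the quantity bounded by \eqref{eq:magic}, giving $g''(s)\le 2\eta^{2}\|A_t\|^{2}\le 2\eta^{2}\Delta(\cT)^{2}$. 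Taylor's theorem with integral remainder yields $c_t=g(1)\le g'(0)+\eta^{2}\Delta(\cT)^{2}=-\eta\,\Tr(A_tQ_{t-1})+\eta^{2}\Delta(\cT)^{2}$, and substituting back gives $\Phi_{t-1}-\Phi_t\ge\eta\,\Tr\!\big(A_t(Q_{t-1}-Q)\big)-\eta^{2}\Delta(\cT)^{2}=\eta\,[Q_{t-1}-Q]_{\cT}-\eta^{2}\Delta(\cT)^{2}$.

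With this in hand the rest is bookkeeping. Each executed step is preceded by a state with $[Q_{t-1}-Q]_{\cT}>\e$, so by the one-step bound and $\eta=\tfrac{\e}{4\Delta(\cT)^2}$ it decreases $\Phi$ by at least $\eta(\e-\eta\Delta(\cT)^2)=\tfrac{3\e^{2}}{16\,\Delta(\cT)^{2}}$. Since each $Q_t$ is positive definite, $\Phi_t$ is finite and nonnegative, with $\Phi_0=\qe{Q}{Q_0}$; hence the number of executed steps is strictly less than $\tfrac{16}{3}\cdot\tfrac{\qe{Q}{Q_0}\Delta(\cT)^{2}}{\e^{2}}\le\tfrac{8\,\qe{Q}{Q_0}\Delta(\cT)^{2}}{\e^{2}}\le h$, so the process must stop at some $Q_k$ with $k\le h$ and $[Q-Q_k]_{\cT}\le\e$; this $Q_k$ is the desired $\tilde Q$. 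The main obstacle is the middle step: getting the log-partition correction $c_t$ under control and recognizing that its second derivative $g''$ is precisely the expression that inequality \eqref{eq:magic} was set up to bound — once that is established, the choice of $\eta$ and the iteration count are routine.
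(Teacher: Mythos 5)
Your discrete-time reformulation is a legitimate and in some ways cleaner alternative to the paper's argument. The paper parametrizes a continuous curve $t\mapsto Q_t$, computes $\tfrac{d}{dt}\qe{Q}{Q_t}$ and $\tfrac{d}{dt}\Tr(\Lambda_t Q_t)$, and derives a contradiction from the integral of the first if no good iterate appears by time $T$; you instead run textbook matrix multiplicative weights and prove a per-step potential drop. The key technical move -- absorbing the normalization into the scalar $c_t=\log\Tr\exp(\log Q_{t-1}-\eta A_t)$, interpolating along $X(s)=\log Q_{t-1}-s\eta A_t$, and recognizing $g''(s)=\Tr\bigl(X'(s)\,\tfrac{d}{ds}\tfrac{e^{X(s)}}{\Tr e^{X(s)}}\bigr)$ as exactly the quantity controlled by \eqref{eq:magic} -- is correct and uses the same matrix inequality the paper uses, but in a different place (on the log-partition function rather than on the test value). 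The per-step estimate $\Phi_{t-1}-\Phi_t\ge\eta\,\Tr\!\bigl(A_t(Q_{t-1}-Q)\bigr)-\eta^2\Delta(\cT)^2$ and the iteration count both check out.

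There is, however, a real sign problem you need to confront. Your stopping rule is $[Q_{t-1}-Q]_{\cT}\le\e$ and your greedy pick maximizes $\Tr\bigl(A_t(Q_{t-1}-Q)\bigr)$, so what the argument actually delivers at termination is $[\tilde Q - Q]_{\cT}\le\e$; but in the last sentence you assert $[Q-Q_k]_{\cT}\le\e$, silently reversing the order inside the bracket. Since $\cT$ is not assumed centrally symmetric, the paper explicitly warns that $[\cdot]_{\cT}$ may fail to be symmetric, so $[\tilde Q-Q]_{\cT}\le\e$ and $[Q-\tilde Q]_{\cT}\le\e$ are genuinely different conclusions. If you want the stated goal $[Q-\tilde Q]_{\cT}\le\e$, you must instead pick $A_t$ maximizing $\Tr\bigl(A_t(Q-Q_{t-1})\bigr)$; tracing your own one-step bound then forces the update to \emph{add} $\eta A_t$ inside the log, i.e.\ $\tilde Q\propto\exp\bigl(\log Q_0+\eta\sum A_i\bigr)$, which no longer matches the sign in \eqref{eq:prescribed}. (This tension is not yours alone: with $\qe{Q}{Q_t}=\Tr(Q\log Q)-\Tr(Q\log Q_t)$ the chain rule gives $\tfrac{d}{dt}\qe{Q}{Q_t}=-\Tr\bigl(Q\,\tfrac{d}{dt}\log Q_t\bigr)$, so \eqref{eq:deriv} as printed drops a minus sign, and the minus sign in \eqref{eq:prescribed} is only compatible with the conclusion $[\tilde Q-Q]_{\cT}\le\e$, not with \eqref{eq:mirror-goal}; for the intended applications with $Q_0=\uId$ this is harmless because the two choices of sign correspond to $\lambda>0$ versus $\lambda<0$ in $\tilde Q\propto e^{\lambda F}$.) Either flip both the greedy criterion and the update sign and state clearly that the resulting $\tilde Q$ has a plus in the exponent, or keep your construction and state the conclusion you actually prove, namely $[\tilde Q-Q]_{\cT}\le\e$; as written the final sentence does not follow from the preceding steps.
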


\begin{proof}
Consider for $t \geq 0$, the density matrix
\[
Q_t = \frac{\exp\left(\log Q_0 -\int_0^t \Lambda_s\,ds\right)}{\Tr\left(\exp\left(\log Q_0-\int_0^t \Lambda_s\,ds\right)\right)}\,,
\]
where $s \mapsto \Lambda_s \in \cT$ is any measurable function.

First, one calculates
\begin{equation}\label{eq:derivQt}
\frac{d}{dt} \log Q_t = -\Lambda_t - \Id \frac{d}{dt} \log \Tr\left(\exp\left(\log Q_0-\int_0^t \Lambda_s\,ds\right)\right)
\end{equation}
Now, we have
\[
\frac{d}{dt} \log \Tr\left(\exp\left(\log Q_0 -\int_0^t \Lambda_s\,ds\right)\right)
= \frac{\frac{d}{dt} \Tr\left(\exp\left(\log Q_0 - \int_0^t \Lambda_s\,ds\right)\right)}{\Tr\left(\exp\left(\log Q_0-\int_0^t \Lambda_s\,ds\right)\right)}
\stackrel{\eqref{eq:trace-deriv}}{=} -\Tr(\Lambda_t Q_t)\,,
\]
and thus
\begin{align}
\frac{d}{dt} \qe{Q}{Q_t} = \Tr\left(Q \frac{d}{dt} \log Q_t\right)
&= - \Tr(\Lambda_t Q) + \Tr(Q) \Tr(\Lambda_t Q_t)
= - \Tr(\Lambda_t (Q-Q_t))\,, \label{eq:deriv}
\end{align}
where in the final line we have used $\Tr(Q)=1$.

Let $T = \frac{2}{\e} \qe{Q}{Q_0}$.
Suppose the map $t \mapsto \Lambda_t \in \cT$ is such that
\begin{equation}\label{eq:ensure}
\Tr(\Lambda_t (Q-Q_t)) > \frac{\e}{2}\quad\forall\,t \in [0,T]\,.
\end{equation}
Then from \eqref{eq:deriv} and \eqref{eq:ensure}, we arrive at
\[
\qe{Q}{Q_T} < \qe{Q}{Q_0} - \frac{\e}{2} T =  0\,,
\]
which contradicts the fact that $\qe{Q}{Q_T} \geq 0$.

Finally, we define the elements $A_1, \ldots, A_h \in \cT$
and corresponding approximators $\tilde Q_0, \tilde Q_1, \ldots, \tilde Q_{h} \in \cD(H)$ inductively.
Define, for $i=0,1,2,\ldots,h$, the times $t_i = i \frac{\e}{4\Delta(\cT)^2}$.
We will choose the map $t \mapsto \Lambda_t$ and put
$\tilde Q_i = Q_{t_i}$.

We begin by setting $\tilde Q_0 = \uId$ and $\Lambda_0 = 0$.
Now if $\left[Q-\tilde Q_i\right]_{\cT} \leq \e$, then we are done.
Otherwise, let $A_{i+1} \in \cT$ be such that
\begin{equation}\label{eq:unsat}
\Tr(A_{i+1} Q - A_{i+1} \tilde Q_i) > \e\,,
\end{equation}
and define
$\Lambda_t = A_{i+1}$ for $t \in (t_i, t_{i+1}]$.

Finally, observe that for $t \in (t_i, t_{i+1})$, we have
\begin{align*}
  \frac{d}{dt} \Tr(A_{i+1} Q - A_{i+1} Q_t)
  = \frac{d}{dt} \Tr\left(\Lambda_t  Q_t\right)
  \stackrel{\eqref{eq:magic}}{\geq} -2 \|\Lambda_t\|^2\,.
  \end{align*}
  where we have used the fact that $\Lambda_t = - \frac{d}{dt} \log \left(e^{\log Q_0-\int_0^t \Lambda_s\,ds}\right)$.
We conclude that
\begin{align*}
\Tr(A_{i+1} Q - A_{i+1} Q_{t_{i+1}}) &\geq \Tr(A_{i+1} Q - A_{i+1} Q_{t_{i}}) -2\|\Lambda_t\|^2 (t_{i+1}-t_i)
\\
&\geq \Tr(A_{i+1} Q - A_{i+1} Q_{t_{i}}) - \frac{\e}{2} \\
&> \frac{\e}{2}\,,
\end{align*}
using \eqref{eq:unsat}.  Thus we either find an approximator $\tilde Q_i$ for some $i=0,1,\ldots,h$
satisfying \eqref{eq:mirror-goal} or \eqref{eq:ensure} holds.
But we have already seen that the latter possibility cannot happen.
Observe that the approximators $\tilde Q_i$ are all of the desired form \eqref{eq:prescribed}.
\end{proof}

\begin{proof}[Proof of \pref{thm:general-operator-approx}]
First, we apply \pref{lem:mirror} with $Q_0=\uId$ to
obtain an approximation $\tilde Q$ of the form
\[
\tilde Q = \frac{e^{\lambda F}}{\Tr(e^{\lambda F})}
\]
with $|\lambda| \lesssim 1+\frac{1}{\delta} \qe{Q}{U}$
and which satisfies
\begin{equation}\label{eq:app1}
\left[Q-\tilde Q\right]_{\cT} \leq \e/2\mper
\end{equation}
Note here that since $\cT$ is assumed to be convex, we have $F \in \cT$ (see the form of \eqref{eq:prescribed}).

Then we apply Corollary \ref{cor:taylor-exp} to $\lambda F$ to obtain a degree-$k$ polynomial
$p_k$ such that
\begin{equation}\label{eq:app2}
\left\|\tilde Q - \frac{p_k(\lambda F/2)^2}{\Tr(p_k(\lambda F/2)^2)} \right\|_* \leq \frac{\e}{2\Delta(\cT)}\,,
\end{equation}
where \[k \lesssim |\lambda| \Delta(\cT) + \frac{\log(\Delta(\cT)/\e)}{\log \log (\Delta(\cT)/\e)} \lesssim (1+\qe{Q}{\uId}) \frac{\Delta(\cT)}{\e} \,.\]

Thus we conclude that
\begin{eqnarray*}
\left[Q-\frac{p_k(\lambda F/2)^2}{\Tr(p_k(\lambda F/2)^2)}\right]_{\cT} &\stackrel{\eqref{eq:gauge-triangle}}{\leq}&
\left[Q-\tilde Q\right]_{\cT} +
\left[\tilde Q - \frac{p_k(\lambda F/2)^2}{\Tr(p_k(\lambda F/2)^2)}\right]_{\cT}
\\
&\stackrel{\eqref{eq:gauge-holder}}{\leq}&
\left[Q-\tilde Q\right]_{\cT} +
\Delta(\cT) \left\|\frac{p_k(\lambda F/2)^2}{\Tr(p_k(\lambda F/2)^2)}-\tilde Q\right\|_{*} \\
&\stackrel{\eqref{eq:app1} \wedge \eqref{eq:app2}}{\leq}& \e\mper \hfill \qedhere
\end{eqnarray*}
\end{proof}

\ifnum\entmaxmode=1
\input{content/ent-max.tex}
\fi

\subsubsection{Junta approximation}

We record here the following application to ``classical'' functions by
restricting \pref{lem:mirror} to the diagonal case.
If $X$ is a finite set, and $\cT$ is a collection of real-valued functions on $X$,
we extend the notation $\Delta(\cT) = \sup_{g \in \cT} \|g\|_{\infty}$.
If $\mu$ is a measure on $X$, and $f : X \to \Rnn$ satisfies $\E_{\mu} f = 1$,
we abuse notation by writing
\[
\ce{f}{\mu} = \E_{\mu} [f \log f]\,.
\]
for the relative entropy between $f \mu$ and $\mu$.
We will also allow ourselves to conflate $\mu$ with the
corresponding density by writing $\mu(x)$ for $\mu(\{x\})$ and $x \in X$.
One should note that an analog of \pref{lem:mirror} for the special
case of probability distributions
(instead of density matrices) can be proved exactly along the same lines,
but without the use of matrix inequalities.

\begin{corollary}[Sparse approximation of functions by mirror descent]
\label{cor:mirror}
  For every $\e > 0$, the following holds.
  Let $X$ be a finite set equipped with a probability measure $\mu$.
  Let $\cT \subseteq L^2(X,\mu)$ be a compact set of functions,
  and let $f :X \to \Rnn$ be such that $\E_{\mu} f = 1$.
  If one defines $h = \lceil \frac{8}{\e^2} \ce{f}{\mu} \Delta(\cT)^2\rceil$ then there exist functions $g_1, g_2, \ldots, g_h \in \cT$ such that
  \begin{equation}\label{eq:prescribed-cor}
  \tilde f \defeq \frac{\exp\left(\frac{\e}{4\Delta(\cT)^2} \sum_{i=1}^h g_i\right)}{\sum_{x \in X} \exp\left(\frac{\e}{4\Delta(\cT)^2} \sum_{i=1}^h g_i(x)\right) \mu(x)}
  \end{equation}
  so that $\E_{\mu} \tilde f = 1$, and for every $g \in \cT$,
  \begin{equation}\label{eq:mirror-cor-goal}
  \E_{x \sim \mu} g(x) \left(f(x)-\tilde f(x)\right) \leq \e.
  \end{equation}
\end{corollary}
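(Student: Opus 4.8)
The plan is to repeat, essentially verbatim, the continuous-time mirror-descent (multiplicative-weights) argument behind \pref{lem:mirror}, specialized to the commutative setting: replace $\cM(H)$ by $L^2(X,\mu)$, density operators by densities with respect to $\mu$, the trace by the expectation $\E_\mu$, matrix exponentials by pointwise exponentials, the von Neumann entropy by the Shannon entropy, and the quantum relative entropy by the Kullback--Leibler divergence $\ce{\,\cdot\,}{\mu}$. In this special case every appeal to the Duhamel formula \pref{eq:integral-rep}--\pref{eq:magic} collapses to an elementary computation with a bounded scalar random variable, so no matrix inequalities are needed. One point to keep straight is that the exponent in \pref{eq:prescribed-cor} carries a $+$ sign (rather than the $-$ sign in \pref{lem:mirror}): this is because the target bound \pref{eq:mirror-cor-goal} is one-sided, in the direction that wants the approximator to carry \emph{more} $g$-mass, so a violated test is tilted into the density with a positive coefficient.

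Concretely, for a measurable curve $s\mapsto\Lambda_s\in\cT$ define the tilted densities
\[
f_t \;\defeq\; \frac{\exp\!\left(\int_0^t \Lambda_s\,ds\right)}{\E_{y\sim\mu}\exp\!\left(\int_0^t \Lambda_s(y)\,ds\right)}\,,
\]
so that $f_0\equiv 1$ and $\E_\mu f_t = 1$ for all $t$. As in \pref{eq:derivQt}--\pref{eq:deriv}, one computes $\frac{d}{dt}\log f_t = \Lambda_t - \E_\mu[\Lambda_t f_t]$, hence the potential $\Phi(t)\defeq\ce{f}{f_t}=\E_\mu[f\log(f/f_t)]$ satisfies $\frac{d}{dt}\Phi(t) = -\,\E_\mu\!\big[\Lambda_t(f-f_t)\big]$ (using $\E_\mu f = 1$). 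The scalar replacement for the drift bound \pref{eq:magic} is that, when $\Lambda$ is held equal to a fixed $g\in\cT$ on a time interval, $\frac{d}{dt}\E_\mu[g f_t] = \Var_{f_t}(g)\ge 0$ while $\Var_{f_t}(g)\le\|g\|_\infty^2\le\Delta(\cT)^2$; consequently $\frac{d}{dt}\,\E_\mu[g(f-f_t)] = -\Var_{f_t}(g)\in[-\Delta(\cT)^2,0]$, so over an interval of length $\tfrac{\e}{4\Delta(\cT)^2}$ the quantity $\E_\mu[g(f-f_t)]$ decreases by at most $\tfrac{\e}{4}$.

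Now discretize time at $t_i = i\cdot\tfrac{\e}{4\Delta(\cT)^2}$ and build $g_1,g_2,\dots\in\cT$ together with approximators $\tilde f_i\defeq f_{t_i}$ inductively, starting from $\tilde f_0\equiv 1$. Having fixed $\Lambda$ on $[0,t_i]$: if $\tilde f_i$ already satisfies \pref{eq:mirror-cor-goal} we stop; otherwise compactness of $\cT$ supplies a witness $g_{i+1}\in\cT$ with $\E_\mu[g_{i+1}(f-\tilde f_i)]>\e$, and we set $\Lambda_t\equiv g_{i+1}$ on $(t_i,t_{i+1}]$. By the drift estimate, $\E_\mu[g_{i+1}(f-f_t)]>\tfrac{3\e}{4}>\tfrac{\e}{2}$ throughout $(t_i,t_{i+1}]$, so $\Phi$ strictly decreases there at rate exceeding $\tfrac{\e}{2}$, giving $\Phi(t_{i+1})<\Phi(t_i)-\tfrac{\e^2}{8\Delta(\cT)^2}$. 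Since $\Phi(t_0)=\ce{f}{\mu}$ and $\Phi\ge 0$ always, this cannot persist for all of $i=0,\dots,h-1$, because $h\cdot\tfrac{\e^2}{8\Delta(\cT)^2}\ge\ce{f}{\mu}$ for $h=\lceil\tfrac{8}{\e^2}\ce{f}{\mu}\Delta(\cT)^2\rceil$; hence the iteration halts at some round $i^\star\le h$ with $\tilde f\defeq\tilde f_{i^\star}$ satisfying \pref{eq:mirror-cor-goal}. Finally $\int_0^{t_{i^\star}}\Lambda_s\,ds=\tfrac{\e}{4\Delta(\cT)^2}\sum_{i=1}^{i^\star}g_i$, so $\tilde f$ is exactly of the form \pref{eq:prescribed-cor} (with $i^\star\le h$ summands; $h$ is merely an upper bound on the number of rounds), and $\E_\mu\tilde f=1$ by construction. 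As the remark preceding the corollary indicates, there is no genuine obstacle here; the only steps needing care are the constant in the scalar variance bound replacing \pref{eq:magic} and the sign of the tilt, both addressed above.
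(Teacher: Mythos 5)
Your argument is correct but takes a genuinely different route from the paper's. The paper proves \pref{cor:mirror} as a short corollary of \pref{lem:mirror}: one lifts functions and tests to diagonal matrices via $M(f)=\sum_x f(x)\,\mu(x)\,e_xe_x^T$ and $M(g)=\sum_x g(x)\,e_xe_x^T$, checks that $\qe{M(f)}{Q_0}=\ce{f}{\mu}$ with $Q_0=\sum_x\mu(x)\,e_xe_x^T$, and observes that the approximator produced by the matrix lemma is itself diagonal, hence of the form \pref{eq:prescribed-cor}. You instead rerun the continuous-time mirror-descent argument from scratch in $L^2(X,\mu)$, which is exactly the elementary alternative the remark preceding \pref{cor:mirror} alludes to: the Duhamel computation culminating in the drift bound \pref{eq:magic} collapses to the one-line scalar estimate $0\le\Var_{f_t\mu}(g)\le\|g\|_\infty^2\le\Delta(\cT)^2$, so no matrix inequalities are needed. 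Your version is self-contained, picks up a factor-$2$ improvement in the drift constant (which, since you retain the step size $\e/(4\Delta(\cT)^2)$, merely adds slack and leaves the stated bound unchanged), and correctly carries a $+$ sign in the exponent throughout. Your explanation of the sign is the right one---a violated test must be tilted \emph{into} the approximating density---and in fact the same $+$ sign is needed in the matrix argument itself for $\qe{Q}{Q_t}$ to decrease along a violated test, so the $-$ appearing in \pref{eq:prescribed} is best read as a sign slip in the statement of \pref{lem:mirror} rather than an intentional design difference between the lemma and the corollary.
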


\begin{proof}
$H$ the Euclidean space $\mathbb R^{\{0,1\}^n}$, and let $\{e_x : x \in \{0,1\}^n\}$
be an orthornormal basis of $H$.
We will represent $f$ by the diagonal matrix $M(f) \in \cD(H)$ defined by
\[
M(f) = \sum_{x \in \{0,1\}^n} f(x) e_x e_x^T \mu(x)\,.
\]
We also lift each test $g$ to a matrix $M(g) = \sum_{x \in \{0,1\}^n} g(x) e_x e_x^T$
and the set $M(\cT)$ now denotes a class of matrix tests.

Furthermore, we write
\[
Q_0 = \sum_{x \in \{0,1\}^n} e_x e_x^{T} \mu(x)\,
\]
so that $\qe{M(f)}{Q_0} = D(f\,\|\,\mu)$.

Applying Lemma \ref{lem:mirror} yields an approximation $\tilde Q$ to $M(f)$ of the form
\[
\tilde Q = Q_0 \cdot M\,,
\]
where $M$ is a diagonal matrix.  Furthermore, by construction,
the function $\tilde f : \{0,1\}^n \to \mathbb R$ given by $\tilde f(x) = \langle e_x, M e_x\rangle$
has the form \eqref{eq:prescribed-cor}.
Finally, the approximation guarantee $[M(f) - \tilde Q]_{M(\cT)} \leq \e$ is precisely \eqref{eq:mirror-cor-goal}.
\end{proof}

We now apply the preceding corollary to prove an approximation-by-juntas theorem.
An essentially equivalent result for Boolean domains is proved in \cite{DBLP:conf/focs/ChanLRS13},
but it is instructive to see that it falls easily out of the learning framework.
Fix $n \geq 1$ and a finite set $X$.
We recall that for a subset $S \subseteq \{1,\ldots,n\}$, a function $f : X^n \to \mathbb R$ is called an {\em $S$-junta}
if $f$ only depends (at most) on the coordinates in $S$.  In other words, for all $x,x' \in X^n$, if $x|_S = x'|_S$ then $f(x)=f(x')$.
We say that $f$ is a {\em $k$-junta} if it is an $S$-junta for a set with $|S|=k$.

\begin{theorem}[Junta approximation]
\label{thm:junta-approx}
Let $X$ be an arbitrary finite set, and
let $\mu$ denote a probability measure on $X^n$.
Consider a non-negative function $f : X^n \to \mathbb R_+$ with $\E_{\mu} f = 1$,
and let $\cT$ be a collection of $k$-juntas.
Then for every $\e > 0$, there exists a non-negative $k'$-junta $\tilde f : X^n \to \mathbb R_+$
with $\E_{\mu} \tilde f = 1$, where
\[
k' \lesssim \frac{k}{\e^2} D(f \,\|\,\mu) \Delta(\cT)^2\,,
\]
and such that for every $g \in \cT$,
\begin{equation}\label{eq:mu-approx}
\E_{x \sim \mu} g(x) \left(f(x)-\tilde f(x)\right) \leq \e\mper
\end{equation}
\end{theorem}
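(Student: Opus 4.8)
The plan is to obtain $\tilde f$ as a direct application of \pref{cor:mirror}: instantiate that corollary with the finite ground set taken to be $X^n$ (in place of the abstract $X$ appearing there), the probability measure $\mu$ as given, the test family $\cT$, and the same accuracy parameter $\e$. Here $\Delta(\cT) = \sup_{g \in \cT}\|g\|_\infty < \infty$ because $\cT$ is compact, and $D(f\,\|\,\mu) = \E_\mu[f\log f] < \infty$ because $X^n$ is finite, so the hypotheses of \pref{cor:mirror} are met. It yields functions $g_1,\dots,g_h \in \cT$ with $h = \lceil \tfrac{8}{\e^2} D(f\,\|\,\mu)\,\Delta(\cT)^2\rceil$ and an approximator $\tilde f$ of the explicit form \pref{eq:prescribed-cor}, namely a normalized exponential of $\tfrac{\e}{4\Delta(\cT)^2}\sum_{i=1}^h g_i$. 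By construction $\tilde f \ge 0$ (an exponential divided by a positive constant) and $\E_\mu \tilde f = 1$, and \pref{eq:mirror-cor-goal} is exactly the desired inequality \pref{eq:mu-approx}. So the entire content of the theorem, apart from the junta bound on $\tilde f$, is immediate.

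It remains to bound the arity of $\tilde f$ as a junta. First I would record, for each $i \in [h]$, a set $S_i \subseteq [n]$ with $|S_i| \le k$ such that $g_i$ is an $S_i$-junta; this exists since $g_i \in \cT$ and $\cT$ consists of $k$-juntas. Put $S = \bigcup_{i=1}^h S_i$, so that $|S| \le kh \le k\bigl(1 + \tfrac{8}{\e^2} D(f\,\|\,\mu)\Delta(\cT)^2\bigr) \lesssim \tfrac{k}{\e^2} D(f\,\|\,\mu)\Delta(\cT)^2 =: k'$. The function $\sum_{i=1}^h g_i$ is then an $S$-junta. Since the denominator in \pref{eq:prescribed-cor} is a single scalar — it may be rewritten as a sum over the values of $x_S$ weighted by the $S$-marginal of $\mu$, which introduces no dependence on the coordinates outside $S$ — the function $\tilde f$ depends on $x$ only through $\sum_{i=1}^h g_i(x)$, hence only through $x_S$. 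Therefore $\tilde f$ is an $S$-junta, i.e.\ a $k'$-junta, as claimed.

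I do not anticipate any genuine obstacle: the theorem is essentially a repackaging of \pref{cor:mirror} together with the observation that the mirror-descent approximator lies in the algebra generated by finitely many of the tests, so a union bound on their supports controls its complexity. The one point deserving slight care is that normalization does not enlarge the junta support, which holds precisely because the normalizing factor is a global constant; and, if one wants the cleanest bound, absorbing the additive $k$ coming from the ceiling in $h$ into the asymptotic notation (harmless, since the statement is stated only up to constants).
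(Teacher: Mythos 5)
Your proposal is correct and is exactly the paper's proof: apply \pref{cor:mirror} to get $\tilde f$ of the form \pref{eq:prescribed-cor}, and observe that since each $g_i$ is a $k$-junta and the normalizer is a scalar, $\tilde f$ is an $hk$-junta with $h \lesssim \frac{1}{\e^2}D(f\,\|\,\mu)\Delta(\cT)^2$. The extra care you take about the normalizing constant not enlarging the support is a detail the paper leaves implicit but is the same argument.
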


\begin{proof}
Applying Corollary \ref{cor:mirror} yields an approximation $\tilde f$.
One simply notes that from \eqref{eq:prescribed-cor}, $\tilde f$
is an $hk$-junta where $h \lesssim \frac{1}{\e^2} D(f \,\|\,\mu) \Delta(\cT)^2$
\end{proof}

\section{The correlation polytope}
\label{sec:corr}

Recall the correlation polytope  $\corr_n \subseteq \mathbb R^{n^2}$ given by
\[
\corr_n = \mathrm{conv}\left(\{ xx^T : x \in \{0,1\}^n \}\right).
\]
This polytope is also known as the {\em Boolean quadric polytope} \cite{Padberg89} for
the following reason.

\begin{proposition}[Restatement of \pref{prop:quadric-intro}]
\label{prop:quadric}
If $f : \{0,1\}^m \to \Rnn$ is a nonnegative quadratic function over $\{0,1\}^m$,
then for any $n \geq m$, $M_n^f$ is a submatrix of some slack matrix associated to $\corr_n$.
\end{proposition}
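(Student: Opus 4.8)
The plan is to produce, for each index set $S$, an explicit linear inequality valid for $\corr_n$ whose slack at the point $xx^T$ equals $f(x_S)$; appending these inequalities to any inequality description of $\corr_n$ then exhibits $M_n^f$ as a submatrix of a slack matrix. Throughout I identify $\R^{n^2}$ with the space of $n\times n$ matrices, so that a linear inequality on $\R^{n^2}$ has the form $\iprod{A,Z}\le b$ with $\iprod{A,Z}=\Tr(A^T Z)$.

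First I would rewrite $f$ as an affine function of $xx^T$. Since $\deg(f)\le 2$, there are reals $c_0$, $\{c_i\}_{i\in[m]}$ and $\{c_{ij}\}_{\{i,j\}\subseteq[m]}$ with $f(x)=c_0+\sum_{i\in[m]}c_ix_i+\sum_{i<j}c_{ij}x_ix_j$ for $x\in\{0,1\}^m$. Using the identity $x_i^2=x_i$ valid on the cube, let $W\in\R^{m\times m}$ be the symmetric matrix with $W_{ii}=c_i$ and $W_{ij}=W_{ji}=\tfrac12 c_{ij}$ for $i\ne j$. Then $\iprod{W,xx^T}=\sum_i c_i x_i^2+\sum_{i<j}c_{ij}x_ix_j=f(x)-c_0$ for every $x\in\{0,1\}^m$.

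Next, for $S=\{s_1<\cdots<s_m\}\subseteq[n]$ let $W_S\in\R^{n\times n}$ be the symmetric matrix obtained by placing $W$ in the rows and columns indexed by $S$ and zeros elsewhere, \ie $(W_S)_{s_a,s_b}=W_{a,b}$. For any $x\in\{0,1\}^n$ one then has $\iprod{W_S,xx^T}=\iprod{W,x_Sx_S^T}=f(x_S)-c_0$, using $x_S\in\{0,1\}^m$. Since $f\ge 0$ on $\{0,1\}^m$, this shows $\iprod{-W_S,xx^T}=c_0-f(x_S)\le c_0$ at every vertex of $\corr_n$, and hence, being linear, the inequality $\iprod{-W_S,Z}\le c_0$ holds for every $Z\in\corr_n$.

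Finally I would assemble the slack matrix. Take any inequality description of $\corr_n$ and adjoin to it the $\binom nm$ valid inequalities $\iprod{-W_S,Z}\le c_0$ over all $S$ with $|S|=m$; adjoining valid inequalities does not change the feasible set, so this is again a representation of $\corr_n$. Take the generating set $V=\{xx^T:x\in\{0,1\}^n\}$, so that $\conv(V)=\corr_n$ by definition. In the associated slack matrix, the row indexed by the inequality $\iprod{-W_S,Z}\le c_0$ and the column indexed by $x$ carries the entry $c_0-\iprod{-W_S,xx^T}=f(x_S)=M_n^f(S,x)$. Thus $M_n^f$ appears verbatim as the submatrix of this slack matrix cut out by the adjoined rows and all columns, which is exactly the claim. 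No step here is a genuine obstacle; the only point to keep in mind is that ``submatrix of some slack matrix'' lets us both choose the inequality description (so that we may insert the $W_S$-inequalities) and restrict to a subset of rows and columns.
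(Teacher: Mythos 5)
Your proof is correct and follows essentially the same route as the paper's: express the nonnegative quadratic $f$ as an affine function of $xx^T$ (using $x_i^2=x_i$ to absorb the linear terms into the diagonal), observe that each choice of $S$ yields a valid inequality for $\corr_n$, and read off the slacks to obtain $M_n^f$. Your write-up is simply more explicit than the paper's terse version, spelling out the embedding matrix $W_S$ and the final assembly of the adjoined rows into a slack matrix.
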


\begin{proof}
Let $\langle A,B\rangle = \Tr(A^T B)$ denote the Frobenius inner product on $\mathbb R^{n^2}$.
Suppose that $f(x) = \sum_{i \leq j} a_{ij} x_i x_j + a_0 \geq 0$
for all $x \in \{0,1\}^n$.
We claim that this gives a valid linear inequality
for $\corr_n$ as follows:  For all $x \in \{0,1\}^n$,
\[
f(x) = \langle A, xx^T\rangle + a_0 \geq 0\,,
\]
where $A$ is the matrix $A = (a_{ij})$.
Since this inequality holds at the vertices, it holds for all of $\corr_n$.
\end{proof}

We now recall the relationship between the correlation, cut, TSP, and stable set polytopes.  The first fact is from \cite{DeSimone90}, while the second two
are taken from \cite{FMPTW12}.

\begin{proposition}
\label{prop:polytope-relations}
For every $n \geq 1$, the following hold:
\begin{enumerate}
\item $\corr_n$ is linearly isomorphic to $\cut_{n+1}$.
\item There exists a number $a_n \leq O(n^2)$ such that
some face of $\tsp_{a_n}$ linearly projects to $\corr_n$.
\item There exists a graph $H_n$ on $b_n \leq O(n^2)$ vertices
such that some face of $\stab_{b_n}(H_n)$ linearly projects to $\corr_n$.
\end{enumerate}
\end{proposition}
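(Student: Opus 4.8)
The plan is to treat the three assertions separately: Item~1 by an explicit map, and Items~2 and~3 by the classical gadget reductions, which (via Item~1 itself) only need to reach $\cut_{n+1}$.

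\emph{Item 1 (correlation $\cong$ cut).} First I would write down the covariance isomorphism. Index the vertex set of $K_{n+1}$ by $\{0,1,\dots,n\}$ and recall that every cut of $K_{n+1}$ equals $\partial S$ for a unique $S\subseteq[n]$ (the side not containing $0$). Define the linear map $\Phi\from\mathbb R^{n\times n}\to\mathbb R^{\binom{n+1}{2}}$ by $\Phi(Z)_{\{0,i\}}=Z_{ii}$ and $\Phi(Z)_{\{i,j\}}=Z_{ii}+Z_{jj}-2Z_{ij}$ for $1\le i<j\le n$. Using $x_i^2=x_i$ and $x_i+x_j-2x_ix_j=\1[x_i\neq x_j]$, one checks that $\Phi(xx^T)=\1_{\partial S}$ with $S=\{i:x_i=1\}$ (and $\Phi(0)=\1_{\partial\emptyset}=0$), so $\Phi$ carries the vertices of $\corr_n$ bijectively onto the vertices of $\cut_{n+1}$, whence $\Phi(\corr_n)=\cut_{n+1}$. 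Since $Z_{ii}$ and then $Z_{ij}=\tfrac12\big(Z_{ii}+Z_{jj}-\Phi(Z)_{\{i,j\}}\big)$ are recovered from $\Phi(Z)$, the restriction of $\Phi$ to the affine hull of $\corr_n$ (the symmetric matrices) is invertible, giving the asserted linear isomorphism.

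\emph{Items 2 and 3 (TSP and stable set).} By Item~1 it suffices to realize $\cut_{n+1}$ as a linear image of a face of $\tsp_{a_n}$, resp.\ of $\stab_{b_n}(H_n)$, for suitable $a_n,b_n=O(n^2)$; both facts are standard. For stable sets I would follow Yannakakis and \cite{FMPTW12}: build a graph $H_n$ with one vertex for each pair (edge $e$ of $K_{n+1}$, candidate state ``$e$ cut''/``$e$ uncut''), and add edges---within each edge's pair of vertices, and across the three edges of every triangle of $K_{n+1}$ via parity gadgets---so that the independent sets choosing exactly one state per edge are precisely the indicators of genuine bipartitions. The maximum such independent sets form a face of $\stab_{b_n}(H_n)$, and projecting onto the ``$e$ cut'' coordinates outputs $\cut_{n+1}$, with $b_n=O\!\big(\binom{n+1}{2}\big)=O(n^2)$. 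For TSP I would invoke the Hamiltonian-cycle gadget of \cite{FMPTW12} on $O(n^2)$ cities: the tours compatible with the gadget form a face of $\tsp_{a_n}$ and project coordinatewise onto the cut vectors of $K_{n+1}$.

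\emph{Main obstacle.} Item~1 is a routine verification once $\Phi$ is written down. The real work is in Items~2 and~3: specifying the gadget graph (resp.\ gadget instance), proving that the set of compatible configurations is \emph{exactly} a face---cut out by a single valid inequality tight precisely there, not merely a subpolytope---and that the coordinate projection of that face is \emph{exactly} $\cut_{n+1}$, together with the $O(n^2)$ size bookkeeping. Since the proposition only asserts existence and explicitly cites \cite{DeSimone90,FMPTW12}, I would present the map of Item~1 in full and reduce Items~2 and~3 to those references, recapping the gadget constructions only as far as space permits.
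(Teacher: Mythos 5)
Your proposal is correct and follows essentially the same route as the paper: the paper gives no argument for this proposition, simply deferring Item~1 to \cite{DeSimone90} and Items~2,~3 to \cite{FMPTW12}, and you likewise defer Items~2 and~3 to \cite{FMPTW12} while writing out the standard covariance map for Item~1. Your map $\Phi$ and its inverse $Z_{ii}=\Phi(Z)_{\{0,i\}}$, $Z_{ij}=\tfrac12\bigl(Z_{ii}+Z_{jj}-\Phi(Z)_{\{i,j\}}\bigr)$ check out, so the added detail on Item~1 is a correct supplement to the paper's bare citation.
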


\ifnum\nnrmode=1
\subsection{Positive semidefinite rank}
\fi
\label{sec:psd-corr}

We will now prove a lower bound on the psd rank of $\corr_n$.
Our first goal is to construct a suitable family of pseudo-densities.
We will employ Grigoriev's work \cite{DBLP:journals/cc/Grigoriev01}
on degree lower bounds for
Positivstellensatz calculus refutations.
The primary difficulty will be in expressing
Grigoriev's lower bound using a pseudo-density of small norm.

\begin{theorem} \label{thm:knapsack-pseudo}
	Fix an odd integer $m \geq 3$.  There exists a degree-$m$
	pseudo-density $D : \bits^m \to \R$ such that
$$  \E_x D(x) \left( \sum_{i=1}^m x_i - \frac{m}{2}
 \right)^2 = 0 \mcom$$
 and $$ \norm{D}_\infty \leq m^{3/2}\,.$$
\end{theorem}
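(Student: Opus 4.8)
The plan is to write down Grigoriev's knapsack pseudo-density in closed form, in its symmetric incarnation, and then estimate its $\ell_\infty$-norm: the two equalities will come from classical binomial identities, the positivity from Grigoriev's Positivstellensatz degree lower bound, and the norm bound is the genuinely new ingredient.

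First I would reduce to the symmetric case. All three conditions on $D$ are invariant under permuting the coordinates of $\{0,1\}^m$, so replacing $D$ by its average over $S_m$ preserves them, and we may assume $D(x)=d_{|x|}$, where $|x|=\sum_i x_i$. Such a $D$ is encoded by the signed measure $\mu_k=2^{-m}\binom mk d_k$ on $\{0,\dots,m\}$, equivalently by its pseudo-moments $a_j:=\E_x D(x)\prod_{i\le j}x_i=\binom mj^{-1}\sum_k\mu_k\binom kj$. The pseudo-density we use is the one pinned down by demanding, beyond $\E_x D=1$, that $\E_x D(x)\bigl(\sum_i x_i-\tfrac m2\bigr)\prod_{i\in S}x_i=0$ for every $|S|\le m-1$; this forces the recursion $a_{j+1}=\tfrac{m-2j}{2(m-j)}a_j$, $a_0=1$, hence $a_j=\binom{m/2}{j}\big/\binom mj$, and inverting the unitriangular transform $\mu\mapsto(a_j)$ (using the partial alternating-sum identity $\sum_{l\le L}(-1)^l\binom\alpha l=(-1)^L\binom{\alpha-1}{L}$ together with the reflection rule for binomial coefficients) gives the clean formula $\mu_k=\binom{m/2}{k}\binom{m/2}{m-k}$ valid for all $k$.

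With $\mu$ explicit, the two equality constraints are short binomial computations. Vandermonde gives $\E_x D(x)=\sum_k\mu_k=\sum_k\binom{m/2}{k}\binom{m/2}{m-k}=\binom mm=1$. For the quadratic constraint, $\mu_k=\mu_{m-k}$ forces $\sum_k\mu_k(k-\tfrac m2)=0$, so $\sum_k\mu_k k=\tfrac m2$; writing $k^2=k(k-1)+k$, using $k(k-1)\binom{m/2}{k}=\tfrac m2(\tfrac m2-1)\binom{m/2-2}{k-2}$ and Vandermonde twice yields $\sum_k\mu_k k^2=\tfrac m2(\tfrac m2-1)+\tfrac m2=\tfrac{m^2}4$, whence $\E_x D(x)\bigl(\sum_i x_i-\tfrac m2\bigr)^2=\sum_k\mu_k(k-\tfrac m2)^2=\tfrac{m^2}4-\tfrac{m^2}4=0$. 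That $D$ is a genuine degree-$m$ pseudo-density---i.e.\ the moment matrix $\bigl(a_{|S\cup T|}\bigr)_{|S|,|T|\le (m-1)/2}$ is positive semidefinite---is exactly the SDP-dual form of Grigoriev's degree lower bound for Positivstellensatz refutations of ``$\sum_i x_i=m/2$'' with $m$ odd~\cite{DBLP:journals/cc/Grigoriev01}: by $S_m$-equivariance this matrix block-diagonalizes with one block per isotypic level $j=0,\dots,(m-1)/2$, and Grigoriev's argument is precisely the verification that the blocks built from $a_j=\binom{m/2}{j}/\binom mj$ are all PSD; I would recall that computation (alternatively one cites his theorem as a black box and invokes symmetrization plus the uniqueness of the symmetric ideal-consistent pseudo-density).

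It remains to bound $\|D\|_\infty=\max_k|d_k|$, which is the crux of the statement. From $d_k=2^m\mu_k/\binom mk=2^m\binom{m/2}{k}\binom{m/2}{m-k}/\binom mk$, I would write each binomial coefficient via $\Gamma$ and apply the reflection formula $\Gamma(1+z)\Gamma(1-z)=\pi z/\sin\pi z$ with $z=\tfrac m2-k$: since $m$ is odd and $k$ an integer, $z$ is a half-integer, $\sin\pi z=\pm1$, and the two ``dangerous'' Gamma factors collapse to $|\Gamma(\tfrac m2-k+1)\Gamma(k-\tfrac m2+1)|=\pi|\tfrac m2-k|$. Combined with $\Gamma(\tfrac m2+1)=\sqrt\pi\,m!!/2^{(m+1)/2}$ this gives the exact identity $|d_k|=\dfrac{(m!!)^2}{2\,m!\,|\tfrac m2-k|}$. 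Since $|\tfrac m2-k|\ge\tfrac12$ for every integer $k$, we get $\|D\|_\infty=(m!!)^2/m!=m!!/(m-1)!!$, and the elementary Wallis-type bound $\bigl(\tfrac{(2n-1)!!}{(2n)!!}\bigr)^2\le\prod_{i=1}^n\tfrac{2i-1}{2i+1}=\tfrac1{2n+1}$ gives $m!!/(m-1)!!\le\sqrt m\le m^{3/2}$. The main obstacle is the positivity step (Grigoriev's lemma, which we import); the norm estimate, though the point of the lemma, becomes short once the Gamma-reflection cancellation is spotted---absent that trick one would instead have to control the alternating sum $\mu_k=\sum_{j\ge k}(-1)^{j-k}\binom jk\binom{m/2}{j}$ by isolating its (vanishing) untruncated value and bounding the remaining tail, a more delicate route.
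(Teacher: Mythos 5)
Your proposal is correct, and it is essentially the same proof as the paper's: both reduce to the permutation-symmetric case, single out the same pseudo-density (the one whose symmetric pseudo-moments are $a_j = \binom{m/2}{j}/\binom{m}{j}$), invoke Grigoriev's Positivstellensatz degree bound as a black box for the PSD condition, and then compute $\|D\|_\infty$ exactly. The only real differences are presentational. Where you derive the recursion $a_{j+1}=\tfrac{m-2j}{2(m-j)}a_j$ from the ideal constraints and then binomial-invert to the closed form $\mu_k=\binom{m/2}{k}\binom{m/2}{m-k}$, the paper writes $D$ directly via Lagrange interpolation, $D(x)=2^m c_{|x|}(m/2)/\binom{m}{|x|}$, and checks $\E_x D(x)x^S=\mathcal G(X^S)$ using the univariate polynomial identity $\sum_w p(w)c_w(t)=p(t)$ with $p(t)=\binom{t}{\ell}$. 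These are the same object in two disguises (your $\mu_k$ equals the paper's $c_k(m/2)$). For the norm, your Gamma-reflection collapse and the paper's expansion of the Lagrange product both land on the same exact identity $|d_k|=(m!!)^2/\bigl(m!\,|m-2k|\bigr)$, hence $\|D\|_\infty=m!!/(m-1)!!$; you finish with a Wallis telescoping bound, the paper with Stirling. Your route is arguably a touch cleaner and makes the symmetry $\mu_k=\mu_{m-k}$ manifest, and it actually gives the sharper estimate $\|D\|_\infty\leq\sqrt m$ — the paper's displayed computation carries a small slip (a $(w-1)!$ that should be $w!$), which coincidentally inflates the exact value by a factor of roughly $m$ and leads to the weaker, but still sufficient, bound $m^{3/2}$.
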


\begin{proof}

Grigoriev constructs a linear functional $\mathcal{G}$ on the space of $m$-variate real polynomials
modulo the ideal $\mathcal{I}$ generated by $\{X_i^2-X_i : i=1 \in [m]\}$:
\[
\mathcal{G} : \mathbb R[X_1, \ldots, X_m]/\mathcal{I} \to \bbR\,.
\]
His functional satisfies
\begin{equation} \label{eq:grigfunctional2}
 \mathcal{G}\left(p(X)^2\right) \geq 0 \qquad  \forall p \in \mathbb R[X_1, \ldots, X_m]/\mathcal I,\,\,  \deg(p) \leq m/2\,,
\end{equation}
and
\begin{equation}\label{eq:grigfunctional1}
 \mathcal{G}\left(\left( \sum_{i=1}^m X_i - \tfrac{m}{2}
 \right)^2\right) = 0\,.
\end{equation}

The functional is uniquely defined by the values
$$\mathcal{G} (X^S) \defeq
\frac{\binom{m/2}{|S|}}{\binom{m}{|S|}} \mcom$$
for each multilinear monomial $X^S = \prod_{i \in S} X_i$ with $S \subseteq [m]$.
Observe that $m/2$ is not an
integer and the (generalized) binomial coefficient $\binom{m/2}{k}$ is defined using
the formal expression
$$ \binom{r}{k} = \frac{ r \cdot (r-1)  \cdots  (r-k+1) }{k \cdot
(k-1)  \cdots 1} \mper$$
It is easy to check that $\mathcal{G}$ satisfies
\eqref{eq:grigfunctional1}:  \Jnote{}
\begin{align*}
 \mathcal{G} \left(\left(\sum_{i=1}^m X_i - \tfrac{m}{2}\right)^2\right)  &=
 \sum_{i=1}^m \mathcal{G} (X_i^2) + 2 \sum_{i \neq j}
\mathcal{G} (X_i X_j)  - m \sum_{i=1}^m \mathcal{G}(X_i) + \frac{m^2}{4} \\
&= \frac{m}{2} +  m(m-1) \frac{m-2}{4(m-1)} - m \frac{m}{2} + \frac{m^2}{4}
= 0
\mper
\end{align*}
Grigoriev shows that $\mathcal{G}$ satisfies
\eqref{eq:grigfunctional2} \cite[Lem. 1.4]{DBLP:journals/cc/Grigoriev01}.

We will construct a pseudo-density $D : \{0,1\}^m \to \R$ such that $\E_x D(x)
p(x) = \mathcal{G}(p(X_1, \ldots, X_m))$ for every multilinear polynomial $p$.
Observe that $\mathcal{G}$ is invariant under
permutation of variables $\{X_1, \ldots, X_m\}$.
For $w=0,1,\ldots, m$, let $c_w$ denote the unique degree $m$ polynomial
such that,
$$ c_w(t) = \begin{cases}  1 & \text{ if } t = w \\ 0 & \text{ if } t
\in \{0,1,\ldots, m\} \setminus \{w\} \end{cases}$$
We claim that for any univariate real polynomial $p$ with $\deg(p) \leq m$,
	\begin{equation} \label{eq:unipoly}
	 \sum_{w = 0}^m p(w) \cdot c_{w}(t)  = p(t) \mper
 \end{equation}
Both sides of the claimed identity are univariate polynomials in $t$ of degree at most
$m$ and agree with each other on the $m+1$ points given by $t \in \{0,1,\ldots, m\}$.
Hence, the two polynomials are identically equal.

For each $x \in \bits^m$,
let $|x|$ denote its hamming weight, and
define
$$ D(x) \defeq 2^m  \cdot
\frac{c_{|x|}\left(\nfrac{m}{2}\right)}{\binom{m}{|x|}} \mper$$

We claim that $D$ satisfies $\E_x D(x)
p(x) = \mathcal{G} \left(p(X)\right)$ for every polynomial multilinear real polynomial $p$.
To see this, consider any monomial $x^S=\prod_{i \in S} x_i$ with $S \subseteq [m]$.  Put $\ell = |S|$.  Then we have:
\begin{align*}
\E_x D(x) x^S & =  \E_x D(x)\cdot \frac{1}{\binom{m}{\ell}} \left(
\sum_{T \sse [m], |T| = \ell} x^T \right) \qquad (\text{symmetry
of } D)\\
& =  \E_x D(x) \cdot \frac{1}{\binom{m}{\ell}} \binom{ |x|}{\ell}
\\
& = \sum_{w = 0}^m \frac{\binom{m}{w}}{2^m} \E_{x}\left[ D(x) \cdot
\frac{1}{\binom{m}{\ell}} \binom{ |x|}{\ell} \,\,\Big|\,\, |x| = w
\right]\\
& = \sum_{w = 0}^m c_w(\nfrac{m}{2})
\frac{\binom{w}{\ell}}{\binom{m}{\ell}} \\
& = \frac{ \binom{ \nfrac{m}{2}}{\ell}}{\binom{m}{\ell}} \\
& =
\mathcal{G}\left(X^S\right)  \qquad\qquad\qquad ( \text{using } \eqref{eq:unipoly}
\text{ with } p(t) = \textrm{\small $\binom{t}{\ell}$} )\,.  \\
\end{align*}
Finally, in order to bound $\norm{D}_\infty$, observe that the
polynomials $c_w(t)$ are given by the interpolation formula
$$ c_w(t) = \frac{\prod_{a = 0, a \neq w}^m (t-a) }{ \prod_{a = 0, a
\neq w}^m (w-a)} \mper$$
For an $x \in \bits^m$ with $|x| = w$ we have,
\begin{align*}
|D(x)|  = \left| 2^m \cdot
\frac{c_w(\nfrac{m}{2})}{\binom{m}{w}} \right|
       & = \left| \frac{\prod_{a = 0, a\neq w}^m (m - 2a)}{\prod_{a = 0, a
\neq w}^m (w - a)} \cdot \frac{1}{\binom{m}{w}} \right| \\
	& =\left| \frac{\prod_{a = 0, a\neq w}^m (m - 2a)}{(w-1)! (m-w)!}
\cdot  \frac{1}{\binom{m}{w}} \right| \\
& = m \cdot \frac{1}{2^{m-1}}  \cdot \binom{m-1}{\nfrac{(m-1)}{2}}   \cdot
\frac{w}{|m-2w|} \\
&\leq \frac{m}{\sqrt{\frac32 (m-1)+1}} \left(\frac{m+1}{2}\right) \\
&\leq m^{3/2}\,,
 \end{align*}
 where in the last step we have used Sterling's approximation
 for the inequality $\binom{m-1}{(m-1)/2}
 \leq 2^{m-1}/\sqrt{\frac32 (m-1)+1}$, valid for $m \geq 3$.
\end{proof}

\begin{theorem}  There is a constant $\alpha > 0$ such that for every $n \geq 1$,
	$$\psdrank(\corr_n) \geq 2^{\alpha n^{2/13}}\,.$$
\end{theorem}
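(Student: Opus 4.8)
The plan is to drive the positive-semidefinite-rank lower bound machinery of \pref{sec:theproof} with Grigoriev's Knapsack pseudo-density (\pref{thm:knapsack-pseudo}), pulled back to $\corr_n$ through the quadratic embedding of \pref{prop:quadric}. Fix an odd integer $m\ge 3$, eventually of order $n^{2/13}$, and let $f\from\{0,1\}^m\to\Rnn$ be the Knapsack function $f(x)=\bigl(\sum_{i=1}^m x_i-\tfrac m2\bigr)^2-\tfrac14$. Since $m$ is odd, $f\ge 0$; moreover $\max f=\tfrac{m^2-1}{4}$, $\E_x f(x)=\tfrac{m-1}{4}$, and $f$ is quadratic. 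Rescale to $\hat f\defeq f/\max f\from\{0,1\}^m\to[0,1]$. As $\psdrank$ is invariant under positive scaling of a matrix and does not increase on submatrices, \pref{prop:quadric} yields $\psdrank(\corr_n)\ge\psdrank(M^{\hat f}_n)$ for every $n\ge m$. By \pref{thm:knapsack-pseudo} there is a degree-$m$ pseudo-density $D$ with $\|D\|_\infty\le m^{3/2}$ and $\E_x D(x)\hat f(x)=-\tfrac1{m^2-1}$; hence the separating functional $L_D$ of \pref{sec:proof-of-main-theorem} satisfies $L_D(M^{\hat f}_n)=\E_x D(x)\hat f(x)<-\tfrac1{m^2}$.

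The obvious move --- substituting $d=m$, $\|D\|_\infty\asymp m^{3/2}$, $\e\asymp m^{-2}$ into \eqref{eq:quantitative} --- gives a base $\tfrac{c\e n}{dm^2\|D\|_\infty\log n}\asymp\tfrac{n}{m^{13/2}\log n}$, and optimizing the trade-off against the exponent $d/4$ only delivers $\psdrank(\corr_n)\ge 2^{\Omega((n/\log n)^{2/13})}$. To remove the logarithm I would instead argue by contradiction: suppose $\psdrank(\corr_n)<2^{\alpha n^{2/13}}$ for a small constant $\alpha>0$ to be fixed, so $L\defeq\log_2\psdrank(M^{\hat f}_n)<\alpha n^{2/13}$. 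Now rerun the proof of \pref{thm:psdrank-separation} --- scale the psd factorization with \pref{thm:overview-psd-factorization}, approximate the resulting density operator $Q$ by a low-degree square with \pref{thm:operator-approx}, and reduce degree with \pref{thm:overview-degree-reduction} --- but keep the degree of the approximant governed by the relative entropy rather than by $\log n$. This gives $k=\deg(p)\lesssim(1+\qe{Q}{\uId})\,\|D\|_\infty/\e\lesssim L\,m^{7/2}$, using $\qe{Q}{\uId}\lesssim\log\psdrank(M^{\hat f}_n)$, $\e\asymp m^{-2}$, and $\|D\|_\infty\le m^{3/2}$; consequently the degree-reduction loss is $\bigl(\tfrac{\ell m}{n-m}\bigr)^{d/4}\le\bigl(\tfrac{km^2}{n-m}\bigr)^{m/4}\lesssim\bigl(\tfrac{L\,m^{11/2}}{n-m}\bigr)^{m/4}$, since $\deg(D)\le m$ forces $\ell\le km$.

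Next choose $m$ to be the largest odd integer at most $n^{2/13}$. Then $L\,m^{11/2}\le\alpha n^{2/13}\cdot n^{11/13}=\alpha n$, so the loss base is at most $O(\alpha)$, which for $\alpha$ small is a constant $\e_0<1$; the loss $\e_0^{m/4}$ then dominates every polynomial-in-$m$ error and prefactor term in \eqref{eq:functional-lb}. Carrying the constants exactly as in the proof of \pref{thm:psdrank-separation} (with $\e\asymp m^{-2}$, $\|D\|_\infty\le m^{3/2}$, and $\|M^{\hat f}_n\|_1=\E_x\hat f(x)=\tfrac1{m+1}$) shows that $L_D(M^{\hat f}_n)\ge-\tfrac1{m^2}$ \emph{unless} $\psdrank(M^{\hat f}_n)\ge 2^{cn^{2/13}}$ for an absolute constant $c>0$. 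Since $L_D(M^{\hat f}_n)<-\tfrac1{m^2}$, the latter must hold, so $\psdrank(\corr_n)\ge\psdrank(M^{\hat f}_n)\ge 2^{cn^{2/13}}$. Taking $\alpha<c$ contradicts the assumption; hence $\psdrank(\corr_n)\ge 2^{\alpha n^{2/13}}$ for all large $n$, and the finitely many remaining $n$ are absorbed by shrinking $\alpha$ (using $\psdrank(\corr_n)\ge 2$).

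The exponent $2/13$ is precisely the point where the two forces balance: the conclusion grows like $2^{\Theta(m)}$, so we want $m$ large, while degree reduction demands $L\,m^{11/2}\lesssim n$, and self-consistency with $L\asymp m$ turns this into $m^{13/2}\lesssim n$; the $13/2=7/2+3$ reads off the chain $\|D\|_\infty\le m^{3/2}$, $\e\asymp m^{-2}$, $\deg(D)\le m$, $d=m$. I expect the genuine difficulty to be exactly this self-referential degree bound --- the degree of the learned approximant is controlled by $\qe{Q}{\uId}$, which is only bounded in terms of $\log\psdrank$ --- so one cannot merely quote \eqref{eq:quantitative} and must instead run the bootstrap, checking carefully that the polynomially bounded slack in \pref{thm:overview-psd-factorization} and \pref{thm:operator-approx} is indeed swamped by the $2^{\Omega(m)}$ main term at the chosen scale $m\asymp n^{2/13}$.
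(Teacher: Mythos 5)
Your proof follows the paper's own approach step for step: the Knapsack quadratic $f$ (your normalization by $\max f$ is equivalent, up to constants, to the paper's division by $m^2$), Grigoriev's pseudo-density with $\|D\|_\infty\le m^{3/2}$ from \pref{thm:knapsack-pseudo}, the embedding into $\corr_n$ via \pref{prop:quadric}, and the scale $m\asymp n^{2/13}$. The one place you part ways with the paper is the very last step, and there your treatment is actually the more careful one. You are right that a black-box invocation of \pref{thm:gtwo-pseudodist}\eqref{eq:quantitative} forces $m\lesssim(n/\log n)^{2/13}$ and therefore only yields $2^{\Omega((n/\log n)^{2/13})}$; the paper does exactly this invocation (``Fix $\e=1/(4m^2)$ and $d=m$\ldots Choosing $n\ge\frac{2}{\alpha'}m^{13/2}\log n$, an easy calculation shows\ldots'') and silently drops the $(\log n)^{2/13}$ factor from the exponent. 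Your bootstrap --- assume $\psdrank(M^{\hat f}_n)<2^{\alpha n^{2/13}}$, replace the blanket entropy bound $\qe{Q}{\uId}\lesssim d\log n$ inside the proof of \pref{thm:psdrank-separation} with the sharper $\qe{Q}{\uId}\lesssim\log\psdrank(M^{\hat f}_n)+\log m$ available from \eqref{eq:entropy-deficit}, and re-derive \eqref{eq:functional-lb} --- is sound: at $m\asymp n^{2/13}$ the degree-reduction base is $O(\alpha)$, so for $\alpha$ small enough $(C\alpha)^{m/4}\,m^{O(1)}\,2^{\alpha m}\to 0$ and hence $L_D(M^{\hat f}_n)\ge-4\e$, which, taking $\e$ a constant factor below $1/(4m^2)$, contradicts $L_D(M^{\hat f}_n)=-1/(m^2-1)$. (One small imprecision: you should carry the $\log m$ alongside $\log\psdrank$ in the entropy bound, since $\|M^{\hat f}_n\|_1=1/(m+1)$, but this is dominated by the $\alpha m$ term at the relevant scale.) In short: same route as the paper, but you identify and correctly repair a logarithmic loss that the paper's ``easy calculation'' glosses over.
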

\begin{proof}
For $m \geq 1$,
define $f: \bits^m \to \Rnn$ by
	\begin{equation}\label{eq:f-form}
f(x) =  \frac{1}{m^2} \left(\left(\sum_{i=1}^m x_i -
	\frac{m}{2}\right)^2 - \frac{1}{4} \right)\mcom
\end{equation}
and let $M^f_n : \binom{n}{m} \times \bits^n \to \Rnn$ be given by
$M^f_n(S,x) = f(x_S)$ as in \eqref{eq:restriction}.

By \pref{thm:knapsack-pseudo}, there exists a degree-$m$
pseudo-density $D: \bits^m \to \R$ such that $\E_x
D(x) f(x) = -\frac{1}{4m^2}$ and $\norm{D}_\infty \leq m^{3/2}$.
Fix $\e = 1/(4m^2)$ and $d=m$ and
apply \pref{thm:gtwo-pseudodist} to conclude that there is a constant $\alpha' > 0$
such that for $n \geq 2m$, we have
\[
\psdrank(M_n^f) \geq \left(\frac{\alpha' n}{m^{13/2} \log
n}\right)^{m/4} \cdot m^{-21/4} \cdot m^{-1/2}\]
Choosing $n \geq \frac{2}{\alpha'} m^{13/2} \log n$, an easy calculation shows that
\[
\psdrank(M_n^f) \geq 2^{\Omega(n^{2/13})}\,.
\]
By \pref{prop:quadric}, we have
$\psdrank(\corr_n) \geq \psdrank(M_n^f)$, completing the proof.
\end{proof}

\section{Optimality of low-degree sum-of-squares for max CSPs}
\label{sec:maxcsp}

Constraint satisfaction problems form a broad class of discrete
optimization problems that include, for example, \maxcut and \maxthreesat.
For simplicity of presentation, we will focus on constraint satisfaction problems with a boolean alphabet, though
similar ideas extend to larger domains (see an analogous
generalization in Section \ref{sec:nnr}).
We begin our presentation with a formal definition of semidefinite
programming relaxations for max-CSPs.

\subsection{The SDP approximation model}
\label{sec:sdp-model}

\newcommand{\tinst}{\tilde\inst}
\newcommand{\tx}{\tilde x}

In order to write an SDP relaxation for a max-CSP,
one needs to {\it linearize} the objective function.
For $n\in \N$, let $\maxpi_n$ be the set of \maxpi instances on $n$ variables.
An {\em SDP-relaxation of size $r$} for $\maxpi_n$ consists of
the following.
\begin{description}
\item[Linearization:]
  Let $r$ be a natural number.
  For every $\inst\in\maxpi_n$, we associate a vector $\tinst \in
  \R^{r \times r}$ and for every assignment $x\in\{0,1\}^n$, we associate a point
  $\tx\in\R^{r \times r}$, such that $\inst(x)=\iprod{\tinst,\tx}$ for all
  $\inst\in \maxpi_n$ and all $x\in\{0,1\}^n$.
\item[Feasible region:] The feasible region is a closed, convex (possibly unbounded)
	spectrahedron $ \cS \sse \R^{r \times r}$ described as the intersection
of the cone of $r \times r$ PSD matrices with an affine linear
subspace:
	$$ \cS = \{ y \in \R^{r \times r} | Ay = b, y \in
\cS_r^+\} \mcom$$
 such that $\tx\in \cS$ for all assignments
  $x\in\{0,1\}^n$.  Note that the spectrahedron $\cS$ is independent of the instance
  $\inst$ of $\maxpi_n$.
\end{description}

Given an instance $\inst\in \maxpi_n$, the SDP relaxation $\mathcal S$
has value \[\mathcal S(\inst)\defeq \max_{y\in \cS} \iprod{\tinst,y}\,.\]
Since $\tx\in \cS$ for all assignments $x\in\{0,1\}^n$ and
$\iprod{\tinst,\tx}=\inst(x)$, we have $\mathcal S(\inst)\ge
\opt(\inst)$ for all instances $\inst\in\maxpi_n$.

\paragraph{Low-degree sum-of-squares relaxations}
We will now describe the low-degree sum-of-squares relaxation as
it applies to a max-CSP.
\newcommand{\degdcone}{\cC^{\mathrm{sos}}_d}
Let $\Pi$ be a max-CSP with arity $k$.  Given an instance $\inst$ of
$\maxpi_n$, we recall
that we think of it as a function $\inst: \bits^n \to \R$
given by $\inst(x) = \frac{1}{m} \sum_{i = 1}^m P_i(x)$ where
$\{P_i\}_{i \in [m]}$ are the constraints in $\inst$.
Define the cone $\degdcone \sse \R^{\bits^n}$ as the cone generated by
squares of polynomials of degree at most $d/2$, i.e.,$$\degdcone =
\mathrm{Cone}\left(\{g^2 \mid g: \bits^n \to \R, \deg(g) \leq d/2\}\right) \mper $$

The {\em degree-$d$ sos relaxation} for $\inst$ is given by
\begin{equation} \label{eq:sosrelaxation}
 \sosub_d(\inst) \defeq \min \left\{c \mid c - \inst \in \degdcone\right\}
 \end{equation}
We will now write the dual formulation of the above semidefinite
program to expose the underlying spectrahedron and linearization.
The dual of \eqref{eq:sosrelaxation} can be written as,
\begin{align}
\sosub_d(\inst) =   \max_{D : \bits^n \to \R}   &\ \ \ \iprod{D, \inst}
  \label{eq:sosdual}\\
  \text{ subject to } & \ \ \   \iprod{D, 1} = 1 \mcom \nonumber \\
	& \ \ \  \iprod{D, h} \geq 0 \ \ \ \forall h \in \degdcone
	\mper \nonumber
\end{align}
The function $D : \bits^n \to \R$ is referred to as a pseudo-density
over $\bits^n$, since it satisfies that for every degree $d/2$
function $g$,  $\E_x D(x) g^2(x) \geq 0$.

Notice that all the constraints on the pseudodensity $D : \bits^n \to \R$ correspond to
inner products with functions of degree at most $d$.  Hence, without
loss of generality, we may assume $\deg(D) \leq d$.
Alternately, the convex program \eqref{eq:sosdual} can be written
succinctly in terms of the low-degree part of $D$.  We will now carry
this out explicitly and thereby identify the feasible region
associated with the degree-$d$ sos relaxation.

To this end, set $\cF \seteq \{ A : A \sse [n], |A| \leq d/2\}$ and let $r = | \cF | \leq
\sum_{i = 0}^{d/2} \binom{n}{i}$.
Recall that $\cS_r^+ \sse \R^{r \times r}$ is the cone of $r \times r$ PSD
matrices.  We will index the matrices in $\cS^+_r$ using elements of $\cF$
in the natural way.
Define a matrix $Y : \cF \times \cF \to \R$ as follows,
$$ Y(A,B) = \left\langle D, \prod_{i \in A \cup B} x_i\right\rangle \mper $$
By definition of $Y$, it is clear that $Y(A,B) = Y(B,A) = Y(A
\cup B, \emptyset)$ for all $A,B \in \cF$.  Moreover, we have $Y(\emptyset,
\emptyset) = \iprod{D,1} = 1$.
Furthermore, the matrix $Y$ is PSD since, for all
$\hat{g}: \cF \to \R$, we have
\begin{align*}
 \langle \hat{g}, Y \hat{g}\rangle & = \sum_{A,B \in \cF} \hat{g}_A \hat{g}_B Y(A,B) \\
& = \left \langle D, \sum_{A,B \in \cF} \hat{g}_A \hat{g}_B \prod_{i
\in A \cup B} x_i \right \rangle \\
& = \left \langle D, \left(\sum_{A \in \cF} \hat{g}_A
\prod_{i \in A} x_i\right)^2 \right \rangle  &  \text{ using } x_i^2 =
x_i \quad \forall i  \in [n], x\in \bits^n
\\
&\geq 0
\end{align*}
where the final inequality used the fact that $\iprod{D, g^2} \geq 0$
for  all functions $g$ with $\deg(g) \leq d/2$.

From the above discussion, it is clear that the {\it feasible region} of the
degree $d$-sos relaxation \eqref{eq:sosdual} corresponds to
the spectrahedron,
$$ \cS \defeq \{ Y \in \R^{r \times r} \mid Y \in \cS_r^+, Y(\emptyset,\emptyset) =
1\text{ and
}Y_{A,B} = Y_{B,A} = Y_{A \cup B, \emptyset}\ \ \forall
A,B \in \cF \}$$

Now we describe the {\it linearization} associated with the degree-$d$
sos relaxation.
For every assignment $x \in \bits^n$, associate the matrix
$\tilde{x} : \cF \times \cF \to \R$
given by
\begin{equation} \label{eq:cspsos1}
\tilde{x} (A,B) \defeq \prod_{i \in A \cup B}
x_i.
\end{equation}
By definition, we have $\tilde{x}(A,B) = \tilde{x}(B,A) = \tilde{x}(A \cup B,
\emptyset)$ and $x(\emptyset,\emptyset) = 1$.  Moreover,  the matrix $\tilde{x}$ is positive semidefinite
since it can be written as $\tilde{x} = X X^T$ wherein $X: \cF \to
\R$ is given by $X(A) = \prod_{i \in A} x_i$.  Therefore, for each
assignment $x$, we have $\tilde{x} \in \cS$.

Finally, given an instance $\inst \in \maxpi_n$ its linearization
$\tinst$ is written as follows.  Fix $d \geq 2\lceil k/2\rceil$,
and for every subset $S \subseteq [d]$ with $|S| \leq k$,
define a disjoint union $S = A_S \cup B_S$
where $A_S$ contains (up to) the $\lceil k/2 \rceil$
smallest elements of $S$, and $B_S$ contains the rest
(or is empty).

Each constraint $P_0$ in $\inst$
is of the form $P_0(X) = P(X_{i_1},\ldots, X_{i_k})$ for a predicate
$P: \bits^n \to \bits$ in $\Pi$.  Therefore the function $\inst :
\bits^n \to \R$ given by $\inst(x) = \frac{1}{m} \sum_{i =1}^m P_i(x)$
can be expressed as a degree-$k$ multilinear polynomial in $x$,i.e.,
$$ \inst(x) = \sum_{A \sse [n], |A| \leq k} \hat{\inst}_A \prod_{i \in A}
x_i \mper $$
The linearization $\tinst : \cF \times \cF \to \R$ is given by,
\begin{equation} \label{eq:cspsos2}
\tinst (A,B) \defeq \begin{cases} \hat{\inst}_S & \text{ if } A=A_S, B=B_S
	\\
	0 & \text{ otherwise}	
	\end{cases}
\end{equation}
From \eqref{eq:cspsos1} and \eqref{eq:cspsos2}, for every instance
$\inst \in \maxpi_n$ and every assignment $x \in \bits^n$ we have
$$ \iprod{\tinst, \tilde{x}} = \sum_{A, |A| \leq d/2} \hat{\inst}_A
\prod_{i \in A} x_i = \inst(x) \mper $$

Now the degree-$d$ sos relaxation corresponding to an instance $ \inst \in
\maxpi_n$ in \eqref{eq:sosrelaxation} and \eqref{eq:sosdual}
can be equivalently formulated as
$$ \sosub_d(\inst) \defeq \max_{y \in \cS}\, \iprod{ \tinst, y} \mper
$$

\Pnote{}
\paragraph{$(c,s)$-approximations}
For $0 \leq s \leq c \leq 1$,
a sequence of SDP relaxations $\{ \cS_n \}_{n=1}^{\infty}$ for
$\maxpi$ is said to achieve a {\em $(c,s)$-approximation to
$\maxpi$} if for each $n \in \N$ and every instance $\inst$ of $\maxpi_n$ with $\opt(\inst)
\leq s$, we have $\cS_n(\inst) \leq c$.  In order to study
$(c,s)$-approximations for $\maxpi$, we recall (from \pref{prop:intro-csp2})
the set of matrices $\{M^{n,\Pi}_{c,s}\}_{n=1}^{\infty}$ associated with it, defined as:
\[
M^{n,\Pi}_{c,s}(\inst, x) = c - \inst(x)\,,
\]
where the first index of $M^{n,\Pi}_{c,s}$ ranges over all instances on $n$ variables satisfying $\opt(\inst) \leq s$.
A simple consequence of \pref{prop:gpt} is the following.
\begin{proposition} \label{prop:gptreboot}
	There exists a sequence of SDP relaxations $\cS_n$ of size $r_n$
	achieving a $(c,s)$-approximation to $\maxpi_n$ if and only if
	$\psdrank(M^{n,\Pi}_{c,s}) \leq r_n$.
\end{proposition}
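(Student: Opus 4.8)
The plan is to derive the statement from the positive semidefinite factorization characterization of psd lifts (\pref{prop:gpt}), by recognizing $M^{n,\Pi}_{c,s}$ as (a submatrix of) a slack matrix. Throughout, fix $n$, the CSP $\Pi$, and $0\le s\le c\le 1$. Recall that an SDP relaxation of size $r$ for $\maxpi_n$ consists of a linearization $\inst\mapsto\tinst\in\R^{r\times r}$ and $x\mapsto\tx\in\R^{r\times r}$ with $\iprod{\tinst,\tx}=\inst(x)$, together with a spectrahedron $\cS=\cS_r^+\cap\cL$ ($\cL$ an affine subspace) such that $\tx\in\cS$ for all $x\in\{0,1\}^n$. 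For an instance $\inst$ with $\opt(\inst)\le s$, the affine inequality $\iprod{\tinst,y}\le c$ holds at every point $\tx$ — since $\iprod{\tinst,\tx}=\inst(x)\le\opt(\inst)\le s\le c$ — and its slack there is exactly $c-\inst(x)=M^{n,\Pi}_{c,s}(\inst,x)$. So the rows of $M^{n,\Pi}_{c,s}$ record slacks of valid inequalities of the polytope $P_n:=\conv\{\tx:x\in\{0,1\}^n\}$, evaluated at its generating points.

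For the direction ``SDP relaxation $\Rightarrow$ small psd rank'', I would read off a rank-$r$ psd factorization of $M^{n,\Pi}_{c,s}$ exactly as in the proof of \pref{prop:gpt}. Fix $\inst$ with $\opt(\inst)\le s$. The affine function $y\mapsto c-\iprod{\tinst,y}$ is nonnegative on $\cS=\cS_r^+\cap\cL$, so by conic (semidefinite) duality there is a matrix $W_\inst\sgeq 0$ of size $r$ with $c-\iprod{\tinst,y}=\Tr(W_\inst y)$ for all $y\in\cL$ (the dual certificate may carry an extra linear term vanishing on $\cL$, which we absorb since we only evaluate on $\cL$). Evaluating at $y=\tx\in\cS\subseteq\cL$ gives $M^{n,\Pi}_{c,s}(\inst,x)=c-\inst(x)=\Tr(W_\inst\tx)$, and since $\tx\in\cS\subseteq\cS_r^+$ is psd of size $r$, this is a rank-$r$ psd factorization of $M^{n,\Pi}_{c,s}$; hence $\psdrank(M^{n,\Pi}_{c,s})\le r=r_n$.

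For the converse ``small psd rank $\Rightarrow$ SDP relaxation'', suppose $\psdrank(M^{n,\Pi}_{c,s})\le r$, so $c-\inst(x)=\Tr(A_\inst B_x)$ with $A_\inst,B_x\in\cS_r^+$ for all $\inst$ with $\opt(\inst)\le s$ and all $x$. I would take $\tx:=B_x$ (psd of size $r$). Each instance $\inst$ of $\maxpi_n$ gives a degree-$\le k$ multilinear polynomial $x\mapsto\inst(x)$; since the functions $\{c-\inst(\cdot):\opt(\inst)\le s\}$ together with the constants span all such polynomials and each of them is of the form $x\mapsto\iprod{E,B_x}$ for a symmetric $E$, there is a symmetric $E_\inst$ with $\iprod{E_\inst,B_x}=\inst(x)$ for all $x$; set $\tinst:=E_\inst$. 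Then for $\inst$ with $\opt(\inst)\le s$ we have $\iprod{A_\inst+E_\inst,B_x}=\inst(x)+(c-\inst(x))=c$ for all $x$. Take the instance-independent feasible region
\[
\cS:=\cS_r^+\cap\cL,\qquad \cL:=\bigcap_{\inst:\,\opt(\inst)\le s}\bigl\{y\in\R^{r\times r}:\iprod{A_\inst+E_\inst,y}=c\bigr\}.
\]
Every $\tx=B_x$ lies in $\cS$, and for $y\in\cS$ and $\inst$ with $\opt(\inst)\le s$,
\[
\iprod{\tinst,y}=\iprod{A_\inst+E_\inst,y}-\iprod{A_\inst,y}=c-\Tr(A_\inst y)\le c,
\]
since $A_\inst,y\sgeq 0$. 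Thus this size-$r$ SDP relaxation achieves the $(c,s)$-approximation, completing the proof.

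The main obstacle is the conic-duality step in the first direction: a linear functional nonnegative on a spectrahedron need not in general equal $\Tr(W\cdot)$ for $W\sgeq 0$ modulo functionals vanishing on the affine hull, and this can fail when Slater's condition is violated (equivalently, when the image of $\cS_r^+$ under the relevant projection is not closed). This is precisely the point handled in the proof of \pref{prop:gpt} for bounded polytopes, which I would invoke in the same form; alternatively, for SDP relaxations of max-CSPs one may always arrange Slater's condition (for instance by placing the uniform pseudo-distribution in the relative interior of $\cS$), after which the duality is routine. A minor secondary point is the spanning hypothesis used in the converse; it holds for every nondegenerate CSP, and if one wishes to dispense with it one can enlarge the psd factorization by an $O(1)$-dimensional block, which is irrelevant for the applications.
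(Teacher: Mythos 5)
Your plan — reading a rank-$r$ psd factorization off the SDP duality certificate in one direction, and building a size-$r$ relaxation from a psd factorization in the other — is the natural approach, and the paper itself supplies no written argument, stating only that the proposition is a ``simple consequence'' of \pref{prop:gpt}, so there is no paper proof to compare against. You correctly flag the two delicate points, but as written neither is resolved, and the forward direction in particular has a concrete gap. Even under Slater's condition, SDP duality gives a dual optimal $\lambda^*$ with $\langle\lambda^*,b\rangle=\cS(\inst)\le c$ (equality with $c$ is \emph{not} guaranteed), so for $y\in\cL$ one obtains
$c-\langle\tilde{\inst},y\rangle=(c-\langle\lambda^*,b\rangle)+\Tr(W_{\inst}\, y)$
with a possibly strictly positive additive constant. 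This is not of the form $\Tr(W_{\inst}\, y)$, so your asserted identity ``$c-\langle\tilde{\inst},y\rangle=\Tr(W_{\inst}\, y)$ for all $y\in\cL$'' does not follow from duality. Absorbing the constant into a size-$r$ psd factorization requires either a psd $E_0$ with $\Tr(E_0\tilde x)=1$ for all $x$ (not guaranteed by the model), or a $1\times 1$ augmentation (which changes the size from $r$ to $r+1$), or an argument that a dual-feasible $\lambda$ with $\langle\lambda,b\rangle$ \emph{exactly} $c$ exists (a recession-direction claim which can fail, e.g.\ when $b=0$); your parenthetical about ``an extra linear term vanishing on $\cL$'' addresses a different and harmless degree of freedom, not this one. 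In the converse direction, the spanning hypothesis — that $\{c-\inst : \opt(\inst)\le s\}$ together with constants spans the linear span of all instance functions, so that $\tilde{\inst}$ can be defined for \emph{every} $\inst$ as the model requires — is a genuine additional assumption rather than something holding for ``every nondegenerate CSP''; for small $s$ the set of instances with $\opt(\inst)\le s$ may span a proper subspace, and the augmentation workaround again changes the claimed size. For the paper's applications, where only the asymptotic order of psd rank matters, both gaps are immaterial, but the exact ``if and only if at size $r_n$'' is not established by the argument as written.
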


\subsection{General SDPs vs. sum-of-squares}

Our main theorem is that general SDP relaxations for max-CSPs are no more powerful
than low-degree sum-of-squares relaxations in the polynomial-size regime.

\begin{theorem}
  \label{thm:maincsp}
    Fix a positive number $d \in \mathbb N$,
    and a $k$-ary CSP $\maxpi$ with $d \geq 2 \lceil k/2\rceil$.
  Suppose that the degree-$d$ sos relaxation cannot achieve a $(c,s)$-approximation for $\maxpi$.
  Then no sequence of SDP relaxations of size at most
  $o\left(\left(\frac{n}{\log n}\right)^{d/4}\right)$ can achieve a $(c,s)$-approximation
  for $\maxpi$.
\end{theorem}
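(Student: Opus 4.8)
The plan is to deduce \pref{thm:maincsp} by combining the psd rank lower bound of \pref{thm:gtwo-pseudodist} with the dictionary between SDP relaxations of a CSP and the psd rank of its slack matrices recorded in \pref{prop:gptreboot}; all of the analytic work is already done in those statements, so the proof here is a matter of setting up the right reduction.

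First I would unpack the hypothesis. The assertion that the degree-$d$ sos relaxation does not achieve a $(c,s)$-approximation for $\maxpi$ means that there are an integer $m$ and an instance $\inst_0 \in \maxpi_m$ with $\opt(\inst_0) \le s$ but $\sosub_d(\inst_0) > c$. Define $f \defeq c - \inst_0 \from \{0,1\}^m \to \R$. Since $c \le 1$ and $\inst_0(x) \le \opt(\inst_0) \le s \le c$ for every $x$, the function $f$ takes values in $[0,1]$; it is also not identically zero (otherwise $\inst_0 \equiv c$ and $f \equiv 0$ would be trivially a sum of squares, contradicting $\sosub_d(\inst_0) > c$), so $\E_x f(x) > 0$. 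The inequality $\sosub_d(\inst_0) > c$ says precisely that $f$ is not a sum of squares of functions of degree at most $d/2$, i.e. $\sosdeg(f) > d$; hence by the characterization \eqref{eq:sos-char} there is a degree-$d$ pseudo-density $D \from \{0,1\}^m \to \R$ with $\E_x D(x) f(x) < 0$. Setting $\e \defeq \min\{1, -\tfrac12 \E_x D(x) f(x)\} \in (0,1]$, we have $\E_x D(x) f(x) < -\e$, which puts us in the scope of the ``moreover'' clause of \pref{thm:gtwo-pseudodist}.

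Next I would apply that clause to $f$, $D$ and $\e$. It yields a constant $C_f > 0$ --- depending only on $m$, $d$, $\e$, $\|D\|_\infty$ and $\E_x f$, and crucially \emph{not} on $n$ --- such that
\[
\psdrank\bigl(M_n^f\bigr) \;\ge\; C_f \left(\frac{n}{\log n}\right)^{d/4} \qquad \text{for all } n \ge 2m\,,
\]
with $M_n^f(S,x) = f(x_S)$ as in \eqref{eq:restriction}. It remains to transfer this bound to the CSP slack matrix $M^{n,\Pi}_{c,s}(\inst, x) = c - \inst(x)$. For $S \subseteq [n]$ with $|S| = m$, let $\inst_0^{(S)} \in \maxpi_n$ be the instance obtained by applying each constraint of $\inst_0$ to the variables indexed by $S$; this is a legitimate $\maxpi_n$ instance because $\inst_0$'s constraints already involve distinct indices of $[m]$, and $\opt(\inst_0^{(S)}) = \opt(\inst_0) \le s$, so $\inst_0^{(S)}$ is a valid row index of $M^{n,\Pi}_{c,s}$. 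Moreover $M^{n,\Pi}_{c,s}(\inst_0^{(S)}, x) = c - \inst_0(x_S) = f(x_S) = M_n^f(S,x)$ for all $x \in \{0,1\}^n$, so every row of $M_n^f$ coincides, over the common column set $\{0,1\}^n$, with some row of $M^{n,\Pi}_{c,s}$. Restricting any psd factorization of $M^{n,\Pi}_{c,s}$ to these rows gives $\psdrank(M^{n,\Pi}_{c,s}) \ge \psdrank(M_n^f)$, hence $\psdrank(M^{n,\Pi}_{c,s}) \ge C_f (n/\log n)^{d/4}$ for all $n \ge 2m$.

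Finally I would conclude via \pref{prop:gptreboot}: if some sequence of SDP relaxations of size $r_n = o\bigl((n/\log n)^{d/4}\bigr)$ achieved a $(c,s)$-approximation for $\maxpi$, then that proposition would force $r_n \ge \psdrank(M^{n,\Pi}_{c,s}) \ge C_f (n/\log n)^{d/4}$ for every $n \ge 2m$, which is impossible for large $n$; so no such sequence exists. The substantive content is entirely in \pref{thm:gtwo-pseudodist} (and behind it \pref{thm:psdrank-separation} and the density-operator learning arguments of \pref{sec:learning}); the only points that need a little care here are the translation from ``degree-$d$ sos fails $(c,s)$'' to ``there is a bounded-variable $f$ with a low-norm degree-$d$ pseudo-density negatively correlated with it'', and the verification that embedding $M_n^f$ into $M^{n,\Pi}_{c,s}$ introduces no $n$-dependence into the constant $C_f$.
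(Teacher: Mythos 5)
Your proposal is correct and takes essentially the same route as the paper: extract an instance $\inst_0 \in \maxpi_m$ witnessing $\sosdeg(c-\inst_0)>d$, apply \pref{thm:gtwo-pseudodist}/\pref{thm:sos-vs-psd} to lower bound $\psdrank(M_n^f)$ for $f = c-\inst_0$, observe that $M_n^f$ is a submatrix of $M^{n,\Pi}_{c,s}$, and conclude via \pref{prop:gptreboot}. The extra care you take (checking $f\in[0,1]$ and $f\not\equiv 0$, choosing $\e$, and spelling out the $\inst_0^{(S)}$ embedding) is simply detail the paper leaves implicit.
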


\begin{proof}
	Given that the degree-$d$ sos relaxation cannot achieve
a $(c,s)$-approximation, there exists an instance $\inst$ of
$\maxpi_m$ for some $m$ such that $\opt(\inst) \leq s$ but
$\sosdeg(c-\inst) > d$.

	By \pref{prop:gptreboot}, it is sufficient to lower bound the
	psd rank of the matrix $M^{n,\Pi}_{c,s}$.
Fix $f = c-\inst$ and define the matrix $M_{n}^f : \binom{n}{m} \times \{0,1\}^n \to [0,1]$ as
$ M_n^f(S,x) \defeq f(x_S)$.  Since $M_n^f$ is a submatrix of
$M^{n,\Pi}_{c,s}$, we have $\psdrank(M^{n,\Pi}_{c,s}) \geq
\psdrank(M_n^f)$.
By \pref{thm:sos-vs-psd}, for some constant $C \geq 1$, we have
$$ \psdrank(M_n^f) \geq \left(\frac{n}{C\log
n}\right)^{d/4} \mper$$
This implies that no sequence of SDP relaxations of size at most
$o( (\frac{n}{\log n})^{d/4})$ can achieve a $(c,s)$-approximation for $\maxpi$.
\end{proof}

For a stronger quantitative bound, we require the following simple fact.

\begin{fact} \label{fact:pseudodist-fourier}
For every positive even integer $d$, every
degree-$d$ pseudo-density $D : \{0,1\}^m \to
	\R$ and every subset $\alpha \subseteq [m], |\alpha| \leq d$, we have
	$$ | \E_x D(x) \chi_{\alpha}(x) | \leq 1\mper$$
\end{fact}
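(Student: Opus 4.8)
The plan is to apply the defining positivity condition of a pseudo-density to a single, well-chosen square. First I would pass to the $\{-1,1\}^m$ representation: the affine identification $\{0,1\}\leftrightarrow\{-1,1\}$ used throughout \pref{sec:prelims} preserves the degree of every multilinear polynomial, so a degree-$d$ pseudo-density on $\{0,1\}^m$ corresponds to a degree-$d$ pseudo-density on $\{-1,1\}^m$, and the quantity $\E_x D(x)\chi_\alpha(x)$ is exactly the Fourier coefficient $\widehat D(\alpha)$ in that representation, where $\chi_\alpha(x)=\prod_{i\in\alpha}x_i$ and every $\chi_S$ satisfies $\chi_S^2\equiv 1$.

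Next, since $|\alpha|\le d$ and $d$ is even, I would split $\alpha$ into two disjoint sets $\alpha=\beta\sqcup\gamma$ with $|\beta|,|\gamma|\le d/2$ (for instance, put the $\lceil |\alpha|/2\rceil$ smallest indices of $\alpha$ into $\beta$ and the rest into $\gamma$). Consider the two functions $g_{\pm}=\chi_\beta\pm\chi_\gamma$, each of degree at most $d/2$. Using $\chi_\beta^2=\chi_\gamma^2\equiv 1$ and $\chi_\beta\chi_\gamma=\chi_\alpha$, one computes $g_{\pm}^2=2\pm 2\chi_\alpha$.

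Finally, I would invoke the pseudo-density inequalities $\E_x D(x)g_{\pm}(x)^2\ge 0$ together with the normalization $\E_x D(x)=1$ to get $0\le \E_x D(x)g_{\pm}(x)^2 = 2\pm 2\,\E_x D(x)\chi_\alpha(x)$, hence $-1\le \E_x D(x)\chi_\alpha(x)\le 1$, which is exactly the claimed bound. There is essentially no obstacle; the only points deserving an explicit sentence are the reduction to the $\{-1,1\}$ picture (checking that the change of variables simultaneously preserves the degree-$\le d/2$ constraint on the test functions and the value of the expectation $\E_x D(x)\chi_\alpha(x)$) and the use of the evenness of $d$, which is what guarantees that the half-size split $|\beta|,|\gamma|\le d/2$ stays within the degree allowed by the pseudo-density.
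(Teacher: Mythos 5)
Your proof is correct and is essentially the paper's own argument, just stated a little more explicitly: the paper writes $\chi_\alpha=\chi_A\chi_B$ with $|A|,|B|\le d/2$ and uses $\chi_\alpha = 1-\tfrac12(\chi_A-\chi_B)^2 = \tfrac12(\chi_A+\chi_B)^2-1$ together with $\E_x D(x)=1$, which is the same split $\alpha=\beta\sqcup\gamma$ and the same two squares $g_\pm=\chi_\beta\pm\chi_\gamma$ that you use. Your extra sentence about passing to the $\{-1,1\}$ representation (so that $\chi_S^2\equiv 1$) makes explicit a step the paper leaves implicit, and is a worthwhile clarification, but there is no substantive difference in method.
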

\begin{proof}

Write $\chi_{\alpha} = \chi_{A} \chi_{B}$ for some
$A, B$ with $|A|, |B| \leq d/2$ and observe that
$$ \E_x D(x) \chi_\alpha(x) =  \E_x D(x)
\left(1 -
\frac{(\chi_A-\chi_{B})^2}{2}\right) \leq \E_x D(x) \cdot 1 =
1 \mcom$$
where we used the fact that $\E_x D(x) p(x)^2 \geq 0$ whenever
$\deg(p) \leq d/2$.  Using $\chi_{\alpha} = \frac{1}{2} (\chi_A +
\chi_B)^2 - 1$, we get the other direction of the inequality.
\end{proof}

\begin{theorem}
  \label{thm:csplogn}
Fix a $k$-ary CSP $\maxpi$ and a
   monotone increasing function $d: \N \to \N$
such that the following three conditions are true:  $d(1) \geq 2 \lceil k/2\rceil$, and $d(n) \leq n$ for all $n \geq 1$, and
and $\lim_{n \to \infty} d(n) = \infty$.
   Fix $\epsilon > 0$ and $0 < s < c \leq 1$.  There is a constant $K > 0$ such that the following holds.

   Suppose that for every $n \geq 1$, the degree-$d(n)$ sos
  relaxation cannot achieve a $(c + \e,s)$-approximation for
  $\maxpi_n$.  Then for all $n \geq 1$, no SDP relaxation
  of size at most $K n^{d(n)^2/8}$ can achieve a $(c,s)$-approximation
  for $\maxpi_N$ for every $N > n^{4d(n)}$.
\end{theorem}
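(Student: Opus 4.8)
The plan is to rerun the argument of \pref{thm:maincsp} but with the quantitative bound \eqref{eq:quantitative} of \pref{thm:gtwo-pseudodist} in place of \pref{thm:sos-vs-psd}, feeding in a sup-norm bound on the separating pseudo-density supplied by \pref{fact:pseudodist-fourier}. Note first that the hypothesis for a given $\e$ implies the same hypothesis for every smaller positive value (an instance $\inst$ with $\opt(\inst)\le s$ and $\sosub_{d(n)}(\inst)>c+\e$ also satisfies $\sosub_{d(n)}(\inst)>c+\e'$ for $\e'\le\e$), so we may assume $\e\le1$. Fix $n\ge1$ and an integer $N>n^{4d(n)}$; by \pref{prop:gptreboot} it suffices to show $\psdrank(M^{N,\Pi}_{c,s})>Kn^{d(n)^2/8}$.

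First I would extract the hard instance and its certificate. By hypothesis the degree-$d(n)$ sos relaxation fails to give a $(c+\e,s)$-approximation for $\maxpi_n$, so there is an instance $\inst$ of $\maxpi_n$ with $\opt(\inst)\le s$ and $\sosub_{d(n)}(\inst)>c+\e$; the dual \eqref{eq:sosdual} then yields a degree-$d(n)$ pseudo-density $D\from\{0,1\}^n\to\R$ with $\E_x D(x)\inst(x)>c+\e$. Put $f=c-\inst$. Then $\E_x D(x)f(x)<-\e$, and since $\opt(\inst)\le s<c\le1$ we have $f\from\{0,1\}^n\to[c-s,c]\subseteq[0,1]$ and $\E_x f(x)=c-\E_x\inst(x)\ge c-s>0$. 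Exactly as in the proof of \pref{thm:maincsp}, for $N\ge n$ the matrix $M^f_N$ is a submatrix of $M^{N,\Pi}_{c,s}$ --- restrict $\inst$ to each $n$-element subset $S\subseteq[N]$ of the variables; the resulting $\maxpi_N$ instance has optimum $\opt(\inst)\le s$ and its row is $M^f_N(S,\cdot)$ --- so it is enough to bound $\psdrank(M^f_N)$ from below.

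Next I would bound $\|D\|_\infty$. Every function entering \eqref{eq:sosdual} --- the constant $1$, the squares $g^2$ with $\deg g\le d(n)/2$, and $\inst$, of degree $\le k\le d(1)\le d(n)$ --- has degree at most $d(n)$, so we may Fourier-truncate $D$ to degree $\le d(n)$ without changing its feasibility or the value $\E_x D(x)f(x)$; assume then $\deg(D)\le d(n)$. By \pref{fact:pseudodist-fourier} (applied with the even integer $2\lfloor d(n)/2\rfloor$, of which $D$ is also a pseudo-density, and using that when $d(n)$ is odd $D$ may in fact be taken of degree $\le d(n)-1$) all Fourier coefficients of $D$ satisfy $|\hat D(\alpha)|\le1$, hence $\|D\|_\infty\le\sum_{i=0}^{d(n)}\binom ni\le 2n^{d(n)}$. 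Now apply the bound \eqref{eq:quantitative} of \pref{thm:gtwo-pseudodist} to $M^f_N$ (here $f$ has $n$ variables and we take $d=d(n)$; the hypothesis $N\ge2n$ holds since $N>n^{4d(n)}$), with the pseudo-density $D$ and the given $\e$. Using $\|D\|_\infty\le2n^{d(n)}$, $\E_x f(x)\ge c-s$, $d(n)\le n$, and the monotonicity of $t\mapsto t/\log t$ together with $N\ge n^{4d(n)}$, a routine estimate gives
\[
\psdrank(M^f_N)\;\ge\;C_1\Bigl(\tfrac{c''\e}{8}\Bigr)^{d(n)/4}\,n^{\frac{3d(n)^2-11d(n)}{4}},
\]
where $c''>0$ is the universal constant from \eqref{eq:quantitative} and $C_1=C_1(\e,c,s)>0$. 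Since $\frac{3d(n)^2-11d(n)}{4}-\frac{d(n)^2}{8}=\frac{d(n)(5d(n)-22)}{8}$, for all $n$ with $d(n)\ge5$ and $n$ above a threshold depending only on $\e,c,s$ the right-hand side is at least $n^{d(n)^2/8}$; because $d(n)\to\infty$ only finitely many $n$ are excluded, and on those $\psdrank(M^f_N)\ge1$ already beats $Kn^{d(n)^2/8}$ once $K$ is small enough. Taking $K$ below $C_1$ and below that threshold finishes the proof.

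The conceptual work is all in \pref{thm:gtwo-pseudodist} and \pref{prop:gptreboot}; what remains is bookkeeping, and the main obstacle is keeping every polynomial and entropy factor in hand. Concretely, one must secure $\|D\|_\infty\le n^{d(n)+o(d(n))}$ --- any worse bound would push the threshold on $N$ above $n^{4d(n)}$ --- and one must track the $d$, $m^2$, $\log N$, and $(\e/\|D\|_\infty)^{3/2}$ factors in \eqref{eq:quantitative} precisely enough to land on the stated exponents $d(n)^2/8$ and $4d(n)$ with one constant $K$ valid for all $n$; since \eqref{eq:quantitative} degrades for small $d$, this forces one to peel off the finitely many small values of $n$ by hand.
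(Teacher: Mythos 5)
Your proof matches the paper's: extract $\inst$ and a degree-$d(n)$ pseudo-density $D$ with $\E_x D(x)\bigl(c-\inst(x)\bigr)<-\e$, bound $\|D\|_\infty\le 1+n^{d(n)}$ via \pref{fact:pseudodist-fourier}, and invoke \eqref{eq:quantitative} of \pref{thm:gtwo-pseudodist} on the submatrix $M^f_N$ of $M^{N,\Pi}_{c,s}$. The paper dispatches parity and small-$n$ issues with a blanket ``WLOG $d(n)\ge 24$ is even'' and asserts the final exponent $n^{d(n)^2/8}$ without displaying the arithmetic, whereas you carry the exponents explicitly and handle the finitely many problematic $n$ by shrinking $K$ --- same argument, with more of the bookkeeping shown.
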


\begin{proof}
Without loss of generality, we may assume that $d(n) \geq 24$ is always an even integer.
Given that the degree-$d(n)$ sos relaxation cannot achieve
a $(c + \e,s)$-approximation for $\maxpi_n$, there exists an instance $\inst$ of
$\maxpi_n$ such that $\opt(\inst) \leq s$, along with a
degree-$d(n)$ pseudo-density $D(x)$ such that
$ \E_x D(x) (c-\inst(x)) < -\e  $.
The pseudo-density $D(x)$ can be written as
$$ D(x) = \sum_{\alpha \sse [n], |\alpha| \leq d(n)}
\E_x[D(x) \chi_{\alpha}(x)] \cdot \chi_{\alpha}(x) \mcom$$
where for each $\alpha$, $|\E_x[D(x) \chi_{\alpha}(x)]| \leq
1$ by \pref{fact:pseudodist-fourier}.  Hence,
\[ \norm{D}_\infty \leq \sum_{i=0}^{d(n)} \binom{n}{i} \leq 1+n^{d(n)}\,.\]

Fix $f = c-\inst$ and define the matrix $M_{N}^f : \binom{N}{n} \times
\{0,1\}^n \to [0,1]$ as
$ M_N^f(S,x) \defeq f(x_S)$.  By \pref{thm:gtwo-pseudodist} \eqref{eq:quantitative},
whenever $N > n^{4 d(n)}$, we have
we have
$\psdrank(M_n^f) \geq n^{d(n)^2/8}$.
As $M^{N,\Pi}_{c,s}$ contains $M_N^f$ as a submatrix, the
same lower bound applies to $M^{N,\Pi}_{c,s}$.
This implies that no SDP relaxation of size at most
$n^{d(n)^2/4}$ can achieve a $(c,s)$-approximation for $\maxpi_N$ when
$N > n^{4d(n)}$.
\end{proof}

Using known lower bounds for low degree sum-of-squares relaxations
for max-CSPs \cite{Grigoriev2001, schoenebeck2008linear, DBLP:conf/stoc/Tulsiani09},
\pref{thm:csplogn} implies lower bounds against general SDP
relaxations for a range of specific max-CSPs.  For instance, the lower bounds of
Grigoriev \cite{Grigoriev2001} and Schoenebeck
\cite{schoenebeck2008linear} imply a lower bound for \maxthreesat
(see \pref{thm:maxthreesat}).

\begin{theorem}[\cite{Grigoriev2001,schoenebeck2008linear}]
For every $\e > 0$, there exists a constant $c_{\e}$ such that the following holds.
For every $n \geq 1$, there is a $\maxpi_n$ instance $\inst_n$ such that
$\opt(\inst_n) \leq 7/8 + \e$, but $\sosub_{c_\e n}(\inst)=1$.
\end{theorem}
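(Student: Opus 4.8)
The plan is to obtain $\inst_n$ from a random instance of $3$-XOR, invoking the degree lower bounds of Grigoriev \cite{Grigoriev2001} and Schoenebeck \cite{schoenebeck2008linear} for the sum-of-squares proof system on such instances, and then applying the standard XOR-to-SAT gadget reduction. Two facts carry the argument: (i) a random sparse system of $\GF2$-parity constraints is, with high probability, simultaneously far from satisfiable and ``expanding''; and (ii) expansion forces the existence of a degree-$\Omega(n)$ pseudo-density that formally satisfies every parity constraint.

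First I would fix the soundness instance. Let $\Delta = \Delta(\e)$ be a sufficiently large constant and draw a random system $\Phi$ of $m = \Delta n$ equations $x_i \oplus x_j \oplus x_k = b$ over $\GF2$, each on an independent uniformly random triple of distinct coordinates with an independent uniformly random right-hand side. A union bound over the $2^n$ assignments together with a Chernoff bound shows that, once $\Delta$ is large enough in terms of $\e$, with high probability every $x \in \{0,1\}^n$ violates at least a $(\tfrac12 - \tfrac\e4)$-fraction of the equations of $\Phi$; a routine first-moment computation shows that with high probability $\Phi$ also has the property that every set $T$ of at most $\delta(\Delta)\,n$ equations mentions strictly more than $\card{T}$ variables, for some constant $\delta(\Delta) > 0$. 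Fix one such $\Phi$.

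Next comes the reduction. By \cite{Grigoriev2001, schoenebeck2008linear}, the expansion property of $\Phi$ yields, for some constant $c_\e = c_\e(\Delta) > 0$, a degree-$(c_\e n)$ pseudo-density $D : \{0,1\}^n \to \R$ consistent with $\Phi$: for each equation $e$ of $\Phi$, writing $\mathrm{viol}_e : \{0,1\}^n \to \{0,1\}$ for the indicator (a degree-$3$ multilinear polynomial) that $e$ is violated, one has $\E_x D(x)\,\mathrm{viol}_e(x) = 0$. Now form the \maxthreesat instance $\inst_n$ by replacing each equation $e$ on coordinates $\{i,j,k\}$ with the four $3$-clauses on $\{i,j,k\}$ whose unique falsifying assignments are exactly the four assignments of $(x_i,x_j,x_k)$ that violate $e$. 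An assignment satisfies all four of these clauses if it satisfies $e$ and exactly three of them otherwise, so $\inst_n(x) = 1 - \tfrac{1}{4m}\sum_{e\in\Phi}\mathrm{viol}_e(x)$, and hence $\opt(\inst_n) = 1 - \tfrac14\min_x\tfrac1m\sum_e\mathrm{viol}_e(x) \le 1 - \tfrac14(\tfrac12 - \tfrac\e4) \le \tfrac78 + \e$. For the sum-of-squares value: $D$ is a degree-$(c_\e n)$ pseudo-density with $\E_x D(x)\,\inst_n(x) = 1 - \tfrac1{4m}\sum_e \E_x D(x)\,\mathrm{viol}_e(x) = 1$, so $D$ is feasible for the dual program \eqref{eq:sosdual} and certifies $\sosub_{c_\e n}(\inst_n) \ge 1$; conversely $1 - \inst_n = \tfrac1{4m}\sum_e \mathrm{viol}_e^2$ (each $\mathrm{viol}_e$ is $\{0,1\}$-valued, hence equal to its own square) is a sum of squares of degree-$3$ functions, giving $\sosub_6(\inst_n) \le 1$ and therefore $\sosub_{c_\e n}(\inst_n) = 1$ whenever $c_\e n \ge 6$ (the finitely many smaller $n$ are absorbed by shrinking $c_\e$ and taking a fixed unsatisfiable instance there).

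The hard part is fact (ii): that expansion of $\Phi$ produces a degree-$\Omega(n)$ consistent pseudo-density. This is exactly the substantive content of \cite{Grigoriev2001, schoenebeck2008linear} — one builds the pseudo-expectation by assigning Fourier coefficients according to the linear consequences of bounded subsets of $\Phi$ and then verifies the positivity condition up to degree $\Omega(\delta(\Delta)\,n)$ using the expansion hypothesis. Reproving this is orthogonal to the machinery of the present paper, so I would cite it directly; everything else in the argument is elementary.
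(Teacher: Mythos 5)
Your proposal correctly reconstructs the standard argument underlying the cited result: a random sparse $3$-XOR system, the linear-degree consistent pseudo-density of \cite{Grigoriev2001, schoenebeck2008linear}, and the four-clause XOR-to-3-SAT gadget, which simultaneously yields the $7/8+\e$ soundness bound and the $\sosub_{c_\e n}(\inst_n) = 1$ completeness certificate. The one soft spot is your patch for small $n$: shrinking $c_\e$ only pushes more values of $n$ below the threshold $c_\e n \ge 6$ needed for the upper bound $\sosub_{6}(\inst_n) \le 1$, rather than absorbing them, but since the theorem is only invoked asymptotically (and the customary reading is for $n$ larger than a constant depending on $\e$) this boundary behavior is immaterial.
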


Observe that one can obtain the bound of \pref{thm:maxthreesat}
using the preceding result as follows.
In \pref{thm:csplogn}, choose $n \asymp \log N$ and $d(n) \asymp \frac{\log N}{\log \log N}$
so that $n^{4 d(n)} \asymp N$.
In that case, the lower bound obtained is of the order $N^{d(n)/32} \asymp N^{\Omega(\log N/\log \log N)}$.

\section{Nonnegative rank}
\label{sec:nnr}

\pref{thm:gtwo-pseudodist} exhibits
a connection between psd rank and sos degree.
There is a similar connection between nonnegative rank and junta-degree.
The results of \pref{sec:nnr-vs-junta} generalize those of \cite{DBLP:conf/focs/ChanLRS13},
while the method of proof is closely related.  As opposed to \cite{DBLP:conf/focs/ChanLRS13},
we use the learning approach of \pref{sec:learning}
to approximate by juntas.
In \pref{sec:corr-lp}, we demonstrate
an application to the correlation polytope.

\subsection{Nonnegative rank vs. junta degree}
\label{sec:nnr-vs-junta}

We recall that the {\em nonnegative rank} of a matrix $M \in \Rnn^{p \times q}$ is the
smallest integer $r \geq 1$ such that there exist $v_1, \ldots, v_p, u_1, \ldots, u_q \in \mathbb R_+^r$
satisfying $M_{ij} = \langle u_i, v_j\rangle$ for all $i \in [p], j \in [q]$.
We denote the minimal value $r$ by $\nnrank(M)$.

\medskip
\noindent
{\bf Junta degree and pseudo-densities.}
Fix a finite set $X$.
For a nonnegative function $f : X^n \to \Rnn$, we say that $f$ has a {\em nonnegative junta certificate
of degree $d$}
if there exist nonnegative $d$-juntas $g_1, g_2, \ldots, g_k : X^n \to \Rnn$ such that
$f = \sum_{i=1}^k g_i$ (as functions on the discrete cube).  The {\em junta degree of $f$}, denoted
$\juntadeg(f)$, is the minimal $d$ such that $f$ has a nonnegative junta certificate of degree $d$.

Consider an arbitrary measure $\mu$ on $X^n$.
A function $D : X^n \to \mathbb R$ is called a {\em $d$-local pseudo-density (with respect to
the measure $\mu$)} if $\E_{\mu} D=1$
and furthermore $\E_{x \sim \mu} D(x) g(x) \geq 0$ for all nonnegative $d$-juntas $g$.
If a measure $\mu$ is unspecified, we always refer to the uniform measure by default.
The following characterization
is immediate from the fact that the set of functions satisfying $\juntadeg(f) \leq d$ is a closed convex cone.

\begin{lemma}
For every $f : X^n \to \mathbb R_+$ and $d \geq 0$, we have
$\juntadeg(f) > d$ if and only if there exists a $d$-local pseudo-density such that $\E_x D(x) f(x) < 0$.
\end{lemma}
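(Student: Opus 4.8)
The plan is to deduce this from convex duality, exactly mirroring the sum-of-squares characterization \eqref{eq:sos-char}. Set $\cC_d \seteq \{ h\from X^n\to\R : \juntadeg(h)\le d \}$, which by the cited fact is a closed convex cone inside the finite-dimensional space $\R^{X^n}$; I equip this space with the inner product $\langle g,h\rangle = \E_{x\sim\mu} g(x)h(x)$ for the uniform measure $\mu$. By definition, $\juntadeg(f) > d$ is the same as $f\notin\cC_d$, so the task is to show that $f\notin\cC_d$ is equivalent to the existence of a $d$-local pseudo-density negatively correlated with $f$.

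For the ``only if'' part of this equivalence (the easy direction) I would argue by contraposition: if $\juntadeg(f)\le d$, write $f = g_1+\cdots+g_k$ with each $g_i$ a nonnegative $d$-junta; then for any $d$-local pseudo-density $D$ one has $\E_x D(x) f(x) = \sum_{i=1}^k \E_x D(x) g_i(x) \ge 0$ by the defining nonnegativity property, so no such $D$ can be negatively correlated with $f$.

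For the ``if'' part, I would apply the separating hyperplane theorem for closed convex cones to the point $f\notin\cC_d$: this produces a function $D_0\from X^n\to\R$ with $\langle D_0,f\rangle < 0$ and $\langle D_0,g\rangle\ge 0$ for every $g\in\cC_d$, in particular $\E_x D_0(x) g(x)\ge 0$ for every nonnegative $d$-junta $g$. The remaining step is to normalize so that $\E_\mu D_0 = 1$. Since the constant function $1$ is a nonnegative $0$-junta (hence lies in $\cC_d$), we get $\E_\mu D_0 = \langle D_0,1\rangle\ge 0$; if this quantity is strictly positive, rescaling $D_0$ finishes the job. If instead $\E_\mu D_0 = 0$, I would replace $D_0$ by $D_0+\eta$ for a small $\eta>0$: this keeps the functional nonnegative on $\cC_d$ (because $\E_\mu g\ge 0$ for every nonnegative junta $g$), keeps $\langle D_0+\eta,f\rangle<0$ provided $\eta$ is small enough, and makes $\E_\mu(D_0+\eta) = \eta > 0$, so that rescaling is again possible. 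Either way one obtains the desired $d$-local pseudo-density $D$.

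The whole argument is routine; the only point that is not completely automatic is the normalization $\E_\mu D = 1$, since the cone $\cC_d$ need not be ``wide'' enough to force $\langle D_0,1\rangle > 0$ at the separation step --- hence the small perturbation above. I expect no other difficulty, and indeed this is essentially why the paper can call the characterization ``immediate''.
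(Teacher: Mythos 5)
Your argument is exactly the convex-duality reasoning the paper has in mind when it calls the characterization ``immediate from the fact that the set of functions satisfying $\juntadeg(f)\le d$ is a closed convex cone''; the paper supplies no further detail. Your treatment of the normalization $\E_\mu D = 1$ --- perturbing the separating functional by $\eta\,\E_\mu[\cdot]$ when $\langle D_0,1\rangle = 0$ --- correctly resolves a minor subtlety the paper glosses over, since the cone $\cC_d$ need not have interior and the separating hyperplane theorem does not by itself guarantee strict positivity on the constant function.
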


We also define a more quantitative notion:  {\em Approximate} junta degree with respect
to an arbitrary measure.
Given $\e > 0$ and a measure $\mu$ on $X^n$, we define
\[
\juntadeg^{\e}(f;\mu) \defeq 1+\max \left\{ d : \exists \text{ a $d$-local pseudo-density $D$ wrt $\mu$ and } \E_{x \sim \mu} D(x) f(x) < -\e \|D\|_{\infty} \E_{\mu} f \right\}\,,
\]
where we take the maximum to be equal to $-1$ if no such pseudo-density exists.
(See \pref{sec:corr-lp} for an example where a biased measure $\mu$
is used to analyze the nonnegative rank of the lopsided disjointness matrix.)
We can now state our main theorem on nonnegative rank.

For any measure $\mu$ on $X$, we use $\mu^n$ to denote
the corresponding product measure on $X^n$.
In the following theorem, we write
$\|f\|_1 \seteq \E_{\mu^m} f$.

\begin{theorem}
\label{thm:nnr-main}
For any finite set $X$, any measure $\mu$ on $X$,
and any $\e > 0$, the following holds.
For any $f : X^m \to \Rnn$ and
all $n \geq 2m$,
\begin{equation}\label{eq:nnr-eq}
	1+  n^{d+1} \geq \nnrank(M_n^f) \geq
\left(\frac{c \e^2 n}{m^2 (d \log n + \log (\|f\|_{\infty}/\|f\|_1))}\right)^{d}\,,
\end{equation}
where $d + 1= \juntadeg^{\e}(f;\mu^m)$ and $c > 0$ is a universal constant.
\end{theorem}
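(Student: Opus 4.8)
The plan is to transport the proof of \pref{thm:gtwo-pseudodist} from the sum-of-squares / psd-rank setting to the junta / nonnegative-rank setting, replacing the quantum learning inputs (\pref{thm:operator-approx-intro}, \pref{thm:general-operator-approx}) by their ``classical'' diagonal counterparts, the mirror-descent junta approximations \pref{cor:mirror} and \pref{thm:junta-approx}. The upper bound $\nnrank(M_n^f)\le 1+n^{d+1}$ is the easy direction: writing $f$ as a nonnegative sum of $(d{+}1)$-juntas and expanding $M_n^f(S,x)=\sum_j g_j(x_S)$, then grouping terms by the $(\le d{+}1)$-subset of coordinates of $x$ that each summand actually depends on, gives an explicit nonnegative factorization of $M_n^f$ with inner dimension $O(n^{d+1})$, exactly parallel to the explicit psd factorization constructed at the end of \pref{thm:gtwo-pseudodist}.

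For the lower bound, extract from $d+1=\juntadeg^{\e}(f;\mu^m)$ a $d$-local pseudo-density $D\colon X^m\to\R$ (with respect to $\mu^m$) satisfying $\E_{\mu^m}D=1$ and $\E_{x\sim\mu^m}D(x)f(x)<-\e\|D\|_\infty\|f\|_1$, and define the separating functional $L_D(N)\defeq \E_{x\sim\mu^n}\E_{|S|=m}D(x_S)N(S,x)$ on matrices $N\colon\binom{[n]}{m}\times X^n\to\R$, in analogy with \eqref{eq:ldn}; then $L_D(M_n^f)<-\e\|D\|_\infty\|f\|_1$. Two features of $L_D$ mirror the psd proof. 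First, if $N(S,x)=\sum_i P_i(S)Q_i(x)$ with $P_i\ge 0$ and each $Q_i$ a nonnegative $d$-junta of $x$, then $L_D(N)\ge 0$: conditioning on $x_S$ turns $Q_i$ into a nonnegative junta of at most $d$ of the $m$ coordinates of $x_S$, on which $D$ integrates nonnegatively (the classical analogue of ``$L_D$ is nonnegative on low-degree psd factorizations''). Second, $L_D$ is an average over small sets $S$, which is precisely what enables random-restriction degree reduction.

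Now fix any nonnegative $N$ with $\|N\|_\infty\le 1$ and $\nnrank(N)=r$, and carry out three steps. (i) Rescale a rank-$r$ nonnegative factorization $N(S,x)=\sum_iP_i(S)Q_i(x)$ — the diagonal specialization of \pref{thm:overview-psd-factorization}, and genuinely simpler here — so that $\tfrac1{\binom nm}\sum_SP(S)$ is the all-ones vector, $\max_S\|P(S)\|_\infty$ is bounded polynomially in $r$ and $1/\eta$, and the ``column density'' $\rho$ on $X^n\times[r]$ built from $x\sim\mu^n$ and the weights $Q_i(x)$ has relative entropy $\ce{\rho}{\mu^n\otimes\mathrm{unif}_{[r]}}\lesssim \log(r/\|N\|_1)$. (ii) Write $L_D(N)=\E_{x\sim\mu^n}\sum_iF_i(x)Q_i(x)$ with $F_i(x)=\E_SD(x_S)P_i(S)$, a $(\le m)$-junta of $x$ with $\|F_i\|_\infty\lesssim\|D\|_\infty$; apply \pref{thm:junta-approx}/\pref{cor:mirror} to $\rho$ with the test class generated by the $F_i$ and accuracy $\asymp\e\|D\|_\infty$, obtaining an approximator $\tilde\rho$ that is a nonnegative $k'$-junta with a \emph{common} junta set $J$, where $k'\lesssim \tfrac{m}{\e^2}\log(r/\|N\|_1)$ after the two powers of $\|D\|_\infty$ cancel, and such that $|L_D(N)-L_D(N')|\lesssim \e\|D\|_\infty\|N\|_1$ for the matrix $N'(S,x)=\sum_iP_i(S)\tilde Q_i(x)$. (iii) Run the random restriction: sets $S$ with $|J\cap S|\le d$ contribute nonnegatively to $L_D(N')$ by the first feature above, while the rest contribute at least $-\|D\|_\infty\,\Pr_S[|J\cap S|>d]\cdot\max_S\|P(S)\|_\infty\cdot\|N'\|_1$, and $\Pr_S[|J\cap S|>d]\le\binom{k'}{d+1}\binom{n-d-1}{m-d-1}/\binom nm\le(2k'm/n)^{d+1}$ exactly as in \pref{lem:matrix-2}. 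Combining these bounds gives $L_D(N)\gtrsim -\|D\|_\infty\|N\|_1\big((k'm/n)^{d+1}\,\mathrm{poly}(r)/\eta+\e\big)$. Finally take $N=M_n^f$ normalized by $\|f\|_\infty$, so $\|N\|_1=\|f\|_1/\|f\|_\infty$: if $r=\nnrank(M_n^f)$ were below the bound in \eqref{eq:nnr-eq}, then $k'\lesssim \tfrac m{\e^2}\big(d\log n+\log(\|f\|_\infty/\|f\|_1)\big)$, hence $(k'm/n)^{d+1}\,\mathrm{poly}(r)/\eta<\e$ for a suitable $\eta\asymp\e$, so $L_D(M_n^f)\ge -O(\e)\|D\|_\infty\|f\|_1$, contradicting $L_D(M_n^f)<-\e\|D\|_\infty\|f\|_1$; solving for $r$ yields the stated lower bound.

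The main obstacle is the same circularity already present in \pref{thm:gtwo-pseudodist}: the junta degree $k'$ of the learned approximator is controlled only through $\log(\nnrank(N)/\|N\|_1)$, which is exactly the quantity being bounded, so the argument must be run as a guess-and-verify (assume $\nnrank$ is at most the target, deduce $k'$ is small, deduce the contradiction). The one conceptually new ingredient relative to the psd case — learning a low-junta-degree approximator by mirror descent, and arranging that its junta set is shared across the index $[r]$ so the restriction argument applies uniformly — is supplied by \pref{cor:mirror}; otherwise the analysis is strictly easier than the psd one, since nonnegative factorizations rescale cleanly (no analogue of the $r^2$ loss in \pref{lem:scaling} is needed) and the only probabilistic input is the union bound over $(d{+}1)$-subsets rather than the Fourier/Cauchy--Schwarz machinery of \pref{thm:degree-reduction}.
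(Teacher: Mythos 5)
The high-level plan — transport the sum-of-squares / psd-rank proof of \pref{thm:gtwo-pseudodist} to the junta / nonnegative-rank setting, replacing the quantum approximation theorems by \pref{cor:mirror} and \pref{thm:junta-approx} — is indeed the paper's plan, and your upper bound and the structure of the separating functional $L_D$ match the paper's. But your lower-bound argument diverges from the paper at step (ii) and there is a genuine gap there: you assert that $F_i(x)=\E_S D(x_S)P_i(S)$ is a $(\le m)$-junta of $x$, and it is not. Each individual term $D(x_S)P_i(S)$ is an $m$-junta, but the average over all $\binom{n}{m}$ subsets $S\subseteq[n]$ is only a degree-$\le m$ function; as $S$ varies, $D(x_S)$ touches every coordinate, so $F_i$ generically depends on all of $x$. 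The mirror-descent junta approximations (\pref{cor:mirror}, \pref{thm:junta-approx}) produce an approximator whose junta set is contained in the union of the junta sets of the \emph{test functions}, and they say nothing when the tests are merely degree-bounded. This is exactly where the diagonal setting refuses to mimic the psd one: in the psd proof $F$ is used as a \emph{low-degree} test and the learned hypothesis $p(F)^2$ inherits low degree because degree is closed under univariate polynomial application, whereas junta-ness is not. So the ``one joint density, one approximation, one common junta set $J$'' version of the argument you propose does not go through.

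The paper sidesteps this with a per-column scheme that is both simpler and correct. Normalize each column so that $\E_{\mu^n}q_i=1$ (whence $\sum_i\lambda_i(S)=\lVert f\rVert_1$ for every $S$), split the columns into $\Lambda_\tau=\{i : \lVert q_i\rVert_\infty\le\tau\}$ and its complement (the complement contributes at least $-r\lVert D\rVert_\infty\lVert f\rVert_\infty/\tau$ to $L_D$), and then apply \pref{thm:junta-approx} to each $q_i$ with $i\in\Lambda_\tau$ \emph{separately}, using the test class $\cT=\{D_S : |S|=m\}$, which really does consist of $m$-juntas with $\Delta(\cT)\le\lVert D\rVert_\infty$; the entropy input $D(q_i\,\|\,\mu^n)\le\log\tau$ comes for free from the bucketing, so no analogue of the psd rescaling \pref{thm:overview-psd-factorization} and no $\poly(r)/\eta$ loss is needed. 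This yields a $k'$-junta $\tilde q_i$ with its own (not necessarily common) junta set $J_i$; the restriction probability $\Pr_S[|S\cap J_i|>d]$ is bounded per $i$ and then averaged, so a common junta set is unnecessary. With those two corrections — per-column approximation against $\{D_S\}$ rather than a single approximation against $\{F_i\}$, and the bucketing in place of your step (i) — the rest of your outline (width bound $\lVert D\rVert_\infty$, accuracy $\delta$, the binomial bound on the restriction probability, and the final optimization over $\tau$ and $\delta$) goes through essentially as you sketch.
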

\Pnote{Changed $1+ dn^d$ to $n^d$ since $\sum_{i=0}^d \binom{n}{d}
\leq 1+ n^d$}

\begin{proof}
The left-hand-side inequality of \eqref{eq:nnr-eq}
follows from the fact that the cone of nonnegative $d$-juntas
is spanned by $\sum_{i=0}^{d+1} \binom{n}{i} \leq 1+ n^{d+1}$ nonnegative $d$-juntas.
We move on to right-hand inequality.

We will use $\langle \cdot,\cdot\rangle$ for the inner product on $L^2(X^n;\mu^n)$, i.e.
$\langle g,h\rangle = \E_{\mu^n} [gh]$.
Consider a rank-$r$ nonnegative factorization
\begin{equation}\label{eq:nn-fact}
M_n^f(S,x) = \sum_{i=1}^r \lambda_i(S) q_i(x)\,.
\end{equation}
By rescaling, we may assume that $\E_{\mu^n} q_i=1$ for each $i \in [r]$.
Observe that, by taking expectation on both sides with respect to $\mu^n$,
for any fixed $S$ we have
\begin{equation}\label{eq:lambdas}
\sum_{i=1}^r \lambda_i(S) = \E_{x\sim \mu^n} M_n^f(S,x) = \E_{\mu^m} f = \|f\|_1\,.
\end{equation}

Let $\Lambda_{\tau} = \{ i : \|q_i\|_{\infty} \leq \tau \}$.
Note that for $i \notin \Lambda_{\tau}$, we must have
\begin{equation}\label{eq:badlam}
\lambda_i(S) \leq \frac{\|f\|_{\infty}}{\tau} \quad \forall |S|=m\,.
\end{equation}

Let $D : X^m \to \mathbb R$ be a $d$-local pseudo-density witnessing
$\juntadeg^{\e}(f;\mu^m) > d$.
For $S \subseteq [n]$ with $|S|=m$, we define
a function $D_S : X^n \to \mathbb R$ by $D_S(x)=D(x_S)$.
Note that each $D_S$ is clearly an $m$-junta.

For some $\delta > 0$ to be chosen later,
for each $q_i$ with $i \in \Lambda_{\tau}$, we apply
\pref{thm:junta-approx} to obtain a density $\tilde q_i$
that is a $k'$-junta for $k'=O(\|D\|_{\infty}^2 \frac{m}{\delta^2} \log \tau)$, and such
that for every $S \subseteq [n]$ with $|S|=m$, we have
\begin{equation}\label{eq:approx-q}
\langle D_S, q_i\rangle \geq \langle D_S, \tilde q_i\rangle - \delta\,.
\end{equation}
Let $J_i$ denote the set of coordinates on which $q_i$ depends, so that $|J_i| \leq k'$.

We now take the inner product of both sides of \eqref{eq:nn-fact} with the function $\E_{|S|=m} D_S$.
On the left-hand side, using our assumption on the pseudo-density $D$,
\begin{equation}\label{eq:lhs-q}
\E_S \E_{x\sim \mu^n} D_S(x) M_n^f(S,x) < -\e \|D\|_{\infty} \|f\|_1\,.
\end{equation}
We break the right-hand side of \eqref{eq:nn-fact} into two parts.  First, using \eqref{eq:badlam},
\begin{equation}\label{eq:partone}
\sum_{i \notin \Lambda_{\tau}} \E_S \E_{x\sim \mu^n} D_S(x) \lambda_i(S) q_i(x) \geq -\frac{\|f\|_{\infty}}{\tau} \|D\|_{\infty} r\,.
\end{equation}

For the second part, we use \eqref{eq:approx-q} so that for every $i \in \Lambda_{\tau}$ and $|S|=m$,
\begin{align*}
\E_{x\sim \mu^n} D_S(x)  q_i(x) & \geq
-\delta + \E_{x \sim \mu^n} D_S(x) \tilde q_i(x) \\
&=
-\delta + \E_{\stackrel{y \in X^S}{y \sim \mu^m}} D_S(y) \E_{x \sim \mu^n} [\tilde q_i(x) \mid x_S=y] \\
&\geq -\delta - \|D\|_{\infty} \1_{\{|S \cap J_i| > d\}}\,,
\end{align*}
where in the final line we have used the facts that the function $y \mapsto \E_{x \sim \mu^n} [\tilde q_i(x) \mid x_S=y]$
is a nonnegative $(S \cap J_i)$-junta, and that $D_S : X^m \to \R$ is a $d$-local pseudo-density.

This implies that for $i \in \Lambda_{\tau}$,
\begin{align*}
\E_S \lambda_i(S) \E_{x\sim \mu^n} D_S(x)  q_i(x)
&\geq -\delta \E_S \lambda_i(S) - \|D\|_{\infty} \|\lambda_i\|_{\infty} \Pr\left(|S \cap J_i| > d\right) \\
&\geq -\delta \E_S \lambda_i(S) - \|D\|_{\infty} \|\lambda_i\|_{\infty} \frac{\binom{|J_i|}{d} \binom{n}{m-d/2}}{\binom{n}{m}} \\
&\geq -\delta \E_S \lambda_i(S) - \|D\|_{\infty} \|\lambda_i\|_{\infty} \frac{|J_i|^d m^d}{(n-m)^d} \\
&\geq -\delta \E_S \lambda_i(S) - \|D\|_{\infty} \|f\|_1 \frac{(k')^d (2m)^d}{n^d}\,,
\end{align*}
where in the final line we have used $\|\lambda_i\|_{\infty} \leq \|f\|_1 $ from \eqref{eq:lambdas}
and also $|J_i| \leq k'$ and $n \geq 2m$.

Combining this with \eqref{eq:partone} and \eqref{eq:lhs-q}, we conclude that
\begin{align*}
-\e \|D\|_{\infty} \|f\|_1
&>
\sum_{i=1}^r \E_S \lambda_i(S) \E_{x\sim \mu^n} D_S(x) q_i(x)   \\
&\geq
- \frac{\|f\|_{\infty}}{\tau} \|D\|_{\infty} r
- |\Lambda_{\tau}| \cdot \|D\|_{\infty} \|f\|_1  \frac{(k')^d (2m)^d}{n^d}
-\delta \sum_{i \in \Lambda_{\tau}} \E_S \lambda_i(S) \\
&\geq
- r \|D\|_{\infty} \left(\frac{\|f\|_{\infty}}{\tau}
+ \|f\|_1 \frac{(k')^d (2m)^d}{n^d}\right)
-\delta \|f\|_1 \,.
\end{align*}
Let us now set $\tau = 3 r \frac{\|f\|_{\infty}}{\|f\|_1}$ and $\delta = \e \|D\|_{\infty}/3$, yielding
\begin{equation}\label{eq:penultimate}
r  \geq \frac{1}{3} \left(\frac{n}{2 k' m}\right)^{d}
\geq  \left(c \frac{\e^2 n}{m^2 \log \tau}\right)^{d}
\end{equation}
for some universal constant $c > 0$.

Now, if $r \geq n^d$, then we are done.  Otherwise,
\eqref{eq:penultimate} yields
\[
r
\geq \left(\frac{c \e^2 n}{m^2 (d \log n + \log (\|f\|_{\infty}/\|f\|_1))}\right)^{d}\,,
\]
completing the proof.
\end{proof}

\subsection{The correlation polytope and lopsided disjointness}
\label{sec:corr-lp}

We now illustrate a particularly simple application of our method to nonnegative rank.

\begin{lemma}
\label{lem:lopsided}
There is a constant $\e_0 > 0$ such that for all $m \geq 3$,
the following holds.
Define $f : \{0,1\}^m \to \Rnn$ by
\begin{equation}\label{eq:fdef}
f(x) = \left(1-\sum_{i=1}^m x_i\right)^2\,,
\end{equation}
and let $\mu$ be the measure on $\{0,1\}$ satisfying $\mu(0)=1-2/m$ and $\mu(1)=2/m$.
Then,
\[
\juntadeg^{\e_0}(f;\mu^m) \geq \frac{m}{2}+1\,.
\]
\end{lemma}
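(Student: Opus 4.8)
The plan is to write down an explicit $\lceil m/2\rceil$-local pseudo-density $D\from\{0,1\}^m\to\R$ (with respect to $\mu^m$) that is strictly negatively correlated with $f$ while having $\|D\|_\infty$ bounded by a universal constant; then the correlation condition in the definition of $\juntadeg^{\e_0}$ holds for a universal $\e_0>0$.

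First I would unwind the definition of a $d$-local pseudo-density. A nonnegative $d$-junta $g$ supported on a coordinate set $T$ with $|T|\le d$ satisfies $\E_{x\sim\mu^m}[D(x)g(x)]=\sum_{a\in\{0,1\}^T}\mu^T(a)\,g(a)\,\E_{x_{\bar T}\sim\mu^{\bar T}}[D(a,x_{\bar T})]$, so $D$ is a $d$-local pseudo-density iff $\E_{\mu^m}D=1$ and, for every $|T|\le d$ and $a\in\{0,1\}^T$, the ``local mass'' $p_T(a):=\mu^T(a)\,\E_{x_{\bar T}\sim\mu^{\bar T}}[D(a,x_{\bar T})]$ is nonnegative. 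The key structural observation is that $f(x)=(1-\sum_i x_i)^2=1-\sum_i x_i+2\sum_{i<j}x_ix_j$, so $\E_{\mu^m}[Df]=1-m\,p_{\{1\}}(1)+m(m-1)\,p_{\{1,2\}}(11)$ depends only on level-$1$ and level-$2$ local masses, whereas we retain freedom to shape the local masses all the way up to level $\lceil m/2\rceil$.

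Next I would exploit this freedom by ``over-concentrating'' the local masses onto Hamming weight $\le 1$. Concretely, set $d=\lceil m/2\rceil$, $\beta=1/d$, and let $D$ be the symmetric function with $D(x)=c_1$ when $|x|=1$, $D(x)=c_0$ when $|x|=0$, and $D(x)=0$ otherwise, where $c_1$ is chosen so that $c_1\,\mu(1)(1-\mu(1))^{m-1}=\beta$ and $c_0$ is then forced by $\E_{\mu^m}D=1$. Using $\mu(1)=2/m$, one gets $c_1=\tfrac{m\beta}{2}(1-2/m)^{-(m-1)}$ and $c_0=(1-m\beta)(1-2/m)^{-m}$, whence $\|D\|_\infty\le(1-2/m)^{-m}\le 27$ for all $m\ge 3$ (since $(1-2/m)^m$ is increasing in $m$ and $m\beta\in(1,2]$). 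Conditioning on an assignment to $T$ of weight $w$ leaves the complementary block distributed as $\mathrm{Bin}(m-j,2/m)$ with $j=|T|$, so $\E_{x_{\bar T}}[D(a,x_{\bar T})]$ depends only on $w$ and $j$; a short telescoping computation gives $p_T(0^T)=1-j\beta\ge 1-d\beta=0$, $p_T(e_i)=\beta\ge 0$, and $p_T(a)=0$ for $|a|\ge 2$. Hence $D$ is a $d$-local pseudo-density. Finally $p_{\{1\}}(1)=\beta$ and $p_{\{1,2\}}(11)=0$ give $\E_{\mu^m}[Df]=1-m\beta=1-m/d\le\frac{1-m}{m+1}\le-\tfrac12$, while $\E_{\mu^m}f=3-4/m\le 3$, so choosing $\e_0$ small enough that $\e_0\cdot 27\cdot 3<\tfrac12$ (e.g.\ $\e_0=1/200$) yields $\E_{\mu^m}[Df]<-\e_0\|D\|_\infty\,\E_{\mu^m}f$, which shows $\juntadeg^{\e_0}(f;\mu^m)\ge d+1\ge m/2+1$.

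I do not expect a serious obstacle: the content is a definition-unwinding plus an explicit construction. The two points to be careful about are (i) getting the correspondence between $d$-local pseudo-densities and nonnegative consistent local masses exactly right, and (ii) tracking the parity of $m$ in the choice $d=\lceil m/2\rceil$ together with the resulting bounds on $c_0,c_1$ — here the biased weight $\mu(1)=2/m$ is exactly what keeps both constants bounded while still permitting $\beta=1/d$ to be large enough ($m\beta>1$) to force strictly negative correlation; an unbiased measure would blow up $c_1$. The only mildly delicate verification is that the single symmetric function $D$ above simultaneously realizes the claimed local masses at all levels $j\le d$, which is what the $\mathrm{Bin}(m-j,2/m)$ conditioning identity delivers.
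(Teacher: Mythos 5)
Your proof is correct and follows essentially the same approach as the paper: a pseudo-density $D$ supported on Hamming weight at most one, nonnegativity verified on the junta indicators $\1_{\{x_T=a\}}$, and the same bounds on $\|D\|_\infty$ and on $\E_{\mu^m}[Df]$. The one substantive difference is that you keep the ``mass'' $\beta$ placed on each weight-one string as a free parameter and set $\beta=1/\lceil m/2\rceil$. Since the local masses satisfy $p_T(\mathbf{0}^T)=1-|T|\,\beta$, this makes $D$ exactly $\lceil m/2\rceil$-local, which is precisely what the lemma needs for odd $m$. The paper hard-codes the analogue of $\beta=2/m$; for odd $m$ that choice only yields a $\lfloor m/2\rfloor$-local pseudo-density (its displayed simplification $\E_{\mu^m}D\1_b=(1-2/m)^{m-1}\bigl(1-2(|S|-1)/m\bigr)$ has a small arithmetic slip --- the correct value at $b=\mathbf{0}$ is $1-2|S|/m$), so your parameterization both streamlines and quietly repairs the odd-$m$ case. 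Everything else (the $\|D\|_\infty\le(1-2/m)^{-m}\le 27$ bound via $|c_1|\le|c_0|$, the value $\E_{\mu^m}[Df]=1-m\beta\le -1/2$, $\E_{\mu^m}f=3-4/m\le 3$, and the final choice of $\e_0$) checks out.
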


Plugging this result into \pref{thm:nnr-main} yields the following.

\begin{theorem}
\label{thm:lopside}
There is a constant $c > 0$ such that
for every $m \geq 3$ and $n \geq 2m$, we have
\[
\nnrank(M_n^f) \geq \left(\frac{cn}{m^3 \log n}\right)^{m/2}\,.
\]
\end{theorem}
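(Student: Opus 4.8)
The plan is to obtain \pref{thm:lopside} as an immediate corollary of the general nonnegative-rank bound \pref{thm:nnr-main}, fed with the function $f$ of \eqref{eq:fdef}, the biased product measure $\mu^m$, and the approximate junta-degree lower bound of \pref{lem:lopsided}. The only genuine work in this corollary is to compute the two scalar quantities $\|f\|_\infty$ and $\|f\|_1$ that appear in that theorem and then to collect constants.

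First I would record that $f(x)=(1-\sum_{i=1}^m x_i)^2$ is nonnegative on $\{0,1\}^m$ (so $M_n^f$ is a legitimate nonnegative matrix), and that, since $\sum_{i=1}^m x_i$ ranges over $\{0,1,\dots,m\}$, the maximum of $f$ is attained at the all-ones point, giving $\|f\|_\infty=(m-1)^2\le m^2$. For $\|f\|_1=\E_{\mu^m}f$, write $Y=\sum_{i=1}^m x_i$ with the $x_i$ independent Bernoulli of mean $2/m$; then $\E Y=2$ and $\Var Y=m\cdot\tfrac2m(1-\tfrac2m)=2-\tfrac4m$, so
\[
\|f\|_1=\E[(1-Y)^2]=\Var Y+(\E Y-1)^2=3-\tfrac4m .
\]
In particular $1\le\|f\|_1\le 3$ for every $m\ge3$, and hence $\log(\|f\|_\infty/\|f\|_1)\le 2\log m\le 2\log n$ using $n\ge 2m$. (This is precisely the point of choosing $\mu(1)=2/m$: it keeps $\|f\|_1=\Theta(1)$, so that the $\log(\|f\|_\infty/\|f\|_1)$ term in \pref{thm:nnr-main} cannot spoil the bound.)

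Next, \pref{lem:lopsided} supplies an absolute constant $\e_0>0$ with $\juntadeg^{\e_0}(f;\mu^m)\ge m/2+1$. Since any $d'$-local pseudo-density is also $d$-local for $d\le d'$, I may apply \pref{thm:nnr-main} with $\e=\e_0$ and the integer $d=\lceil m/2\rceil$, which satisfies $m/2\le d\le\tfrac{m+1}{2}$ and $d+1\le\juntadeg^{\e_0}(f;\mu^m)$ for all $m\ge3$. Plugging in the estimates above,
\[
\nnrank(M_n^f)\ \ge\ \left(\frac{c\,\e_0^2\,n}{m^2\bigl(d\log n+\log(\|f\|_\infty/\|f\|_1)\bigr)}\right)^{d}\ \ge\ \left(\frac{c\,\e_0^2\,n}{m^2\cdot 3m\log n}\right)^{m/2},
\]
where I used $d\log n+\log(\|f\|_\infty/\|f\|_1)\le\tfrac{m+1}{2}\log n+2\log n\le 3m\log n$ for $m\ge3$, together with the fact that raising the base to the power $d\ge m/2$ only helps when that base is at least $1$ — and if the base is smaller than $1$ the claimed inequality holds trivially since $\nnrank(M_n^f)\ge1$. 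Absorbing $c\,\e_0^2/3$ into a fresh constant (still denoted $c$) yields exactly $\nnrank(M_n^f)\ge\bigl(cn/(m^3\log n)\bigr)^{m/2}$, as desired.

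The step carrying the real weight is not this corollary but \pref{lem:lopsided} itself, whose proof must recast a disjointness-style junta-degree lower bound for the ``at most one coordinate is set'' constraint into the \emph{approximate}, biased-measure notion $\juntadeg^{\e_0}(\,\cdot\,;\mu^m)$; and, of course, \pref{thm:nnr-main}, whose proof does the genuine work via the junta-learning machinery of \pref{sec:learning}. Given both, the bookkeeping above is the only remaining obstacle, and it is routine. As a coda, one can note via \pref{prop:quadric} that $M_n^f$ is a submatrix of a slack matrix of $\corr_n$, so optimizing $m$ also recovers a (quantitatively weaker but self-contained) exponential lower bound on the linear extension complexity of the correlation polytope.
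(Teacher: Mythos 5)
Your proposal is correct and takes exactly the paper's approach: the paper's own ``proof'' of \pref{thm:lopside} is nothing more than the remark that it follows by plugging \pref{lem:lopsided} into \pref{thm:nnr-main}, and your write-up supplies precisely the routine bookkeeping that this entails. Your computations $\|f\|_\infty=(m-1)^2$ and $\|f\|_1 = 3-\tfrac{4}{m}\in[1,3]$ are right (the observation that the biased measure is chosen to keep $\|f\|_1=\Theta(1)$ is a nice remark), the bound $\log(\|f\|_\infty/\|f\|_1)\le 2\log m\le 2\log n$ is correct using $n\ge 2m$, and you handle the base-$<1$ degeneracy correctly. The one place where a reader might stumble is the assertion $d+1\le\juntadeg^{\e_0}(f;\mu^m)$ with $d=\lceil m/2\rceil$: for odd $m$ this is not an immediate consequence of the stated bound $\juntadeg^{\e_0}(f;\mu^m)\ge m/2+1$, and you should add a line noting either that $\juntadeg^{\e_0}$ is integer-valued (so the bound upgrades to $\ge\lceil m/2\rceil+1$), or, more directly, that the pseudo-density built in the proof of \pref{lem:lopsided} is $\lfloor m/2+1\rfloor$-local, which is $\ge\lceil m/2\rceil$ for all $m\ge 3$. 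This does not affect the conclusion.
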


In particular, by setting $m=m(n)$ appropriately,
\pref{prop:quadric} implies that $\nnrank(\corr_n) \geq 2^{\Omega(n^{1/3})}$.
One should note that this is somewhat weaker than the lower bound $\nnrank(\corr_n) \geq 2^{\Omega(n)}$
proved in \cite{FMPTW12}.

\begin{proof}[Proof of \pref{lem:lopsided}]

\newcommand{\lf}{\cG}
\newcommand{\zero}{\mathbf{0}}
\newcommand{\one}{\mathtt{1}}
\newcommand{\ei}[1]{\mathbf{e_{#1}}}

For $x \in \bits^m$, let $|x|$ denote the hamming weight of $x$.
Define the pseudo-density $D : \{0,1\}^m \to \R$ with respect to $\mu^m$ by
\begin{align*}
	D(x) \defeq \begin{cases}
		-\frac{1}{\mu^m(\mathbf{0})} &  |x|= 0\\
		\frac{2}{m \mu^m(x)} & |x| = 1 \\
		0    & |x| > 1
	\end{cases}
\end{align*}

We now verify that $D$ is a $d$-local pseudo-density (with respect to $\mu^m$)
for $d=\frac{m}{2}+1$.
Observe first that
\[
\E_{x \sim \mu^m} D(x) = -1 + m \cdot \frac{2}{m} = 1\,.
\]

Let $\beta = \frac{2}{m} \left(1-\frac{2}{m}\right)^{m-1}$ denote $\mu^m(1,0,\ldots,0)$.
Consider a subset $S \subseteq [m]$ and some fixed string $b \in \{0,1\}^S$.
Let $\1_b : \{0,1\}^m \to \{0,1\}$ denote the indicator of whether $x_S=b$.
If $b=\mathbf{0}$, then
\[
\E_{x \sim \mu^m} D(x) \1_b(x) = \beta \left(m-|S|\right) - \left(1-\frac{2}{m}\right)^m
= \left(1-\frac{2}{m}\right)^{m-1} \left(1-2 \frac{|S|-1}{m}\right)
\]
The latter quantity is nonnegative as long as $|S| \leq \frac{m}{2}+1$.

If $|b| > 1$, then $\E_{x \sim \mu^m} D(x) \1_b(x) = 0$, and if $|b|=1$, then
$\E_{x \sim \mu^m} D(x) \1_b(x) \geq 0$ since $D(x) \geq 0$ on the support of $\1_b$.
But any nonnegative $S$-junta is a nonnegative combination of the functions $\1_b$
as $b$ ranges over $\{0,1\}^S$.  We conclude that as long as $d \leq \frac{m}{2}+1$,
$D$ is a $d$-local pseudo-density.

Moreover we have
\[
\E_{x \sim \mu^m} D(x) f(x) = \E_{x \sim \mu^m} D(x) \left(1- 2 \sum_{i=1}^m x_i + \sum_{i=1}^m x_i^2 + 2 \sum_{i \neq j} x_ix_j\right) = -1\,.
\]
Also observe that since $m \geq 3$,
\[
\|D\|_{\infty} = |D(\mathbf{0})| = \left(1-\frac{2}{m}\right)^{-m} \leq 27\,.
\]
Lastly, it is easy to see that $\E_{\mu^m} f \geq \Omega(1)$.
These facts together imply that for some universal constant $\e_0 > 0$, we have
$\juntadeg^{\e_0}(f;\mu^m) \geq m/2+1$, as desired.
\end{proof}
	An interesting feature of the pseudodensity $D$ is that it is
	supported only on $x \in \bits^m$ with $|x| \leq 1$.
	Therefore, the lower bound on the approximate junta degree
	established in \pref{lem:lopsided} applies to any function $f:
	\bits^m \to \R_+$ that satisfies
	$$ f(x) = \begin{cases} 1 & \text{ for } |x| = 0\\
				0 & \text{ for } |x| = 1\mper
		\end{cases}$$
	Moreover, the lower bound on $\nnrank(M_n^f)$ also applies in
	this general setting.  To restate this generalization of \pref{thm:lopside}, let us interpret an element of $\bits^n$ as a subset
of $\{1,2,\ldots,n\}$.
		\begin{corollary}[Lopsided unique disjointness]
	There is a fixed constant $c > 0$ such that for every $m \geq
	3$ and $n \geq 2m$, given a matrix $M : \binom{[n]}{m} \times 2^{[n]} \to \R_+$ satisfying
	$$M(S,T) = \begin{cases} 1 & \text{ if } |S \cap T| = 1
		\\
			0 & \text{ if } |S \cap T| = 0 \mcom
		\end{cases} $$
	we have $\nnrank(M) \geq \left( \frac{cn}{m^3 \log{n}}
	\right)^{m/2}$.	
\end{corollary}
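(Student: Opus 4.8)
The plan is to deduce the corollary from \pref{lem:lopsided} and \pref{thm:nnr-main} by exploiting the one structural feature of the pseudo-density $D$ built in the proof of \pref{lem:lopsided}: it is supported on $\{x\in\bits^m:\lvert x\rvert\le 1\}$. The first step is to record that, because of this support property, every quantity entering the definition of $\juntadeg^{\e_0}(f;\mu^m)$---the normalization $\E_{\mu^m}D=1$, the local pseudo-density inequalities $\E_{\mu^m}[D\,\1_b]\ge 0$, and the correlation $\E_{\mu^m}[D\,f]$---depends only on $D$ and on the values of $f$ at Hamming weights $0$ and $1$. Since $\lVert D\rVert_\infty\le 27$ and $\E_{\mu^m}f=\Omega(1)$ whenever $f(\mathbf 0)=1$ (as $\mu^m(\mathbf 0)=(1-2/m)^m$ is bounded below), the very same constant $\e_0$ works: every nonnegative $f:\bits^m\to\R_+$ that agrees with $(1-\sum_i x_i)^2$ on weight-$\le 1$ inputs satisfies $\juntadeg^{\e_0}(f;\mu^m)\ge m/2+1$, with no assumption on $f$ elsewhere.

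Next I would translate to the matrix. Interpreting $T\in 2^{[n]}$ as $x=\1_T\in\bits^n$ gives $\lvert S\cap T\rvert=\lvert x_S\rvert$, so on every pair $(S,T)$ with $\lvert S\cap T\rvert\le 1$ the given matrix $M$ takes exactly the prescribed values and therefore coincides with $M_n^f(S,x)=f(x_S)$ for $f(x)=(1-\sum_i x_i)^2$ (any member of the class from the previous step is an equally valid choice). Now I run the proof of \pref{thm:nnr-main} with this $f$ and $d=m/2$: fix a nonnegative factorization $M(S,T)=\sum_{i=1}^r\lambda_i(S)q_i(T)$ with $\E_{\mu^n}q_i=1$, and pair both sides against the functional $N\mapsto\E_S\E_{T\sim\mu^n}D(\1_{S\cap T})\,N(S,T)$. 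Because $D$ is supported on $\lvert S\cap T\rvert\le 1$, the left-hand side equals a fixed negative constant (the value $\E_{\mu^m}[Df]$ computed in \pref{lem:lopsided}), independently of how the remaining entries of $M$ are filled in; the right-hand side is handled exactly as in \pref{thm:nnr-main}, splitting the $q_i$ into light ones ($\lVert q_i\rVert_\infty\le\tau$), which are approximated by $O(\tfrac{m}{\e_0^2}\log\tau)$-juntas via \pref{thm:junta-approx} and then absorbed using the $d$-locality of $D(\1_{S\cap T})$, and heavy ones, which are killed by the choice $\tau\asymp r\lVert M\rVert_\infty/\lVert M\rVert_1$. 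With $\lVert f\rVert_\infty=O(m^2)$ and $\lVert f\rVert_1=\Theta(1)$ the logarithmic term is $O(\log n)$, and one arrives at $\nnrank(M)\ge(cn/(m^3\log n))^{m/2}$; choosing $m\asymp n^{1/3}$ and invoking \pref{prop:quadric} also recovers $\nnrank(\corr_n)\ge 2^{\Omega(n^{1/3})}$.

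The step I expect to be the only genuinely fiddly one is the claim that the bound is insensitive to the entries of $M$ outside the promise region. The functional above only ``sees'' $M$ on $\{\lvert S\cap T\rvert\le 1\}$, so its value is pinned down; what still must be checked is that the bookkeeping of \pref{thm:nnr-main}---in particular the threshold $\tau$ and the junta degree $k'$ of the approximators---tolerates large off-promise entries. As in that proof these enter only through an additive $\log\lVert M\rVert_\infty$ inside $k'$, which is harmless once $\lVert M\rVert_\infty$ is at most quasi-polynomial in $n$; otherwise one argues directly from the rectangle structure forced by the $0$-entries (for every $i,S$, $\lambda_i(S)>0$ forces $q_i$ to vanish on the $\mu^n$-measure-$\Omega(1)$ set $\{T:\lvert S\cap T\rvert=1\}$) that the heavy rank-one terms contribute negligibly regardless of the magnitude of the other entries. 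I expect this insensitivity verification, rather than any new inequality, to be the main obstacle.
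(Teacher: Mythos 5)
Your argument follows the paper's own short justification---observe that the pseudo-density $D$ of \pref{lem:lopsided} is supported on inputs of Hamming weight at most $1$, so the separating functional $\E_S\E_x D(x_S)M(S,x)$ only sees the promise entries, and then rerun the machinery of \pref{thm:nnr-main}---so at the level of ideas you are on the intended route. One issue deserves an explicit flag, however: the corollary as printed has its two promise cases reversed relative to $M_n^f$. For $f(x)=(1-\sum_i x_i)^2$ one has $M_n^f(S,T)=(1-\lvert S\cap T\rvert)^2$, which equals $1$ when $\lvert S\cap T\rvert=0$ and $0$ when $\lvert S\cap T\rvert=1$; the corollary states the opposite pattern, and for the opposite pattern the claimed bound is false, since $M(S,T)=\lvert S\cap T\rvert=\sum_{j}\1[j\in S]\,\1[j\in T]$ meets the stated promise yet has $\nnrank(M)\le n$. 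Moreover the sign works out only one way: with the $D$ from \pref{lem:lopsided} one gets $\E_{\mu^m}[Df]=-1$ for the $M_n^f$ pattern but $+2$ for the reversed pattern. Your write-up implicitly reads the corollary in the corrected sense (it asserts that $M$ ``coincides with $M_n^f$ on the promise''), which is what makes the argument go; the discrepancy should be stated.

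The concern you raise at the end about off-promise entries is the genuine soft spot, and it is a gap in the paper's one-sentence justification as well, not just in your proposal. The off-promise data does not enter \pref{thm:nnr-main} only through $\log\lVert M\rVert_\infty$ inside $k'$. Three steps lean on $M=M_n^f$ specifically: the normalization $\sum_i\lambda_i(S)=\lVert f\rVert_1$ becomes $\E_{x\sim\mu^n}M(S,x)$, which is $S$-dependent and controlled only by the unspecified entries; the heavy-term bound $\lambda_i(S)\le\lVert f\rVert_\infty/\tau$ becomes $\lVert M\rVert_\infty/\tau$; and the light-term bound $\lVert\lambda_i\rVert_\infty\le\lVert f\rVert_1$ becomes $\max_S\E_x M(S,x)$, which multiplies the main error term rather than appearing inside a logarithm. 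If the off-promise entries are allowed to be arbitrarily large, all three bounds degrade, not just the $\log\tau$ inside $k'$. Your fallback via the rectangle structure forced by the $0$-entries is a genuinely different mechanism and is not carried out; note that it constrains only the supports of $\lambda_i$ and $q_i$, not $\lVert q_i\rVert_\infty$ or $\max_S\E_x M(S,x)$, so it is not obvious it closes the gap on its own. In short, you have correctly identified the intended mechanism and correctly located its weak point, but the insensitivity claim remains unproved in your proposal just as it does in the paper's brief remark preceding the corollary.
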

	In other words, the lower bound of
	\pref{thm:lopside} applies to all matrices that have a subset
	of entries corresponding to the {\it unique disjointness}
	problem.

\subsection{Unique games hardness for LPs}
\label{sec:unique-games}

As an illustrative application of the relation between nonnegative
rank and junta-degree (\pref{thm:nnr-main}), we present
an LP hardness result for the Unique Games problem.\footnote{We thank Ola
Svensson for the suggestion to make this explicit.}

Fix an integer $q \geq 1$.  An instance $\inst$ of unique games $\ug^{q}$ consists
of variables $X_1,\ldots, X_n$ taking values in $[q]$ and a
collection of predicates $P_1,\ldots, P_M$ over these variables.
Each constraint $P_i$ is over a pair of distinct variables
$\{X_{a_i},X_{b_i}\}$ and is specified by a bijection $\pi_i: [q] \to
[q]$ as follows:
$$P_i (X_{a_i},X_{b_i}) \defeq \Ind[ \pi_i(X_{a_i}) = \pi_{i}(X_{b_i})
] \mper$$
The goal is to find an assignment that maximizes, over $x \in [q]^n$, the number of
satisfied constraints:
$$ \inst(x) \defeq \frac{1}{M} \sum_{i=1}^M P_i(x)$$
Recall that $\opt(\inst) = \max_{x \in [q]^n} \inst(x)$.
Let $\ug^q_n$ denote the family of Unique Games instances on $n$ variables.
The authors of
\cite{charikar2009integrality} exhibit a
strong integrality gap for Sherali-Adams linear programming
relaxations of $\ug^q$
\begin{theorem}[\cite{charikar2009integrality}]
\label{thm:cmm}
	Fix a number $t \geq 1$ and let $q = 2^t$.  Then for every
	$\delta >
	0$, there exist $\gamma, \e > 0$, an $m \geq 1$, and
	an instance $\inst \in \ug^q_m$ such that
	$$\opt(\inst) \leq \frac{1}{q} + \delta\mcom$$
	but $\juntadeg^{\eps}(1-\delta-\inst) \geq m^{\gamma}$.
\end{theorem}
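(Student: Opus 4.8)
The plan is to derive \pref{thm:cmm} by translating the Sherali--Adams integrality gap for Unique Games of \cite{charikar2009integrality} into the language of approximate junta degree, in direct analogy with the way \pref{thm:knapsack-pseudo} recasts Grigoriev's knapsack bound as a small-norm pseudo-density. The translation rests on the dictionary between Sherali--Adams solutions and $d$-local pseudo-densities that is implicit in \cite{DBLP:conf/focs/ChanLRS13} and underlies \pref{sec:nnr-vs-junta}: a level-$\ell$ Sherali--Adams solution for an $m$-variable CSP over $[q]$ is a consistent family $\{\mu_S\}_{|S|\le \ell}$ of local probability distributions, and from it one reads off the unique function $D\from [q]^m\to \R$ lying in the span of the $\ell$-juntas with $\E_x D(x)\mathbf 1[x_S=b]=\mu_S(b)$ for all $|S|\le \ell$ and $b\in[q]^S$ (well-definedness is exactly the consistency of the family). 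Such a $D$ is automatically an $\ell$-local pseudo-density with respect to the uniform measure $\mu^m$ on $[q]^m$: $\E_x D = 1$ because each $\mu_S$ is a probability distribution, and $\E_x D(x)g(x)\ge 0$ for every nonnegative $\ell$-junta $g$ because such a $g$ is a nonnegative combination of the indicators $\mathbf 1[x_S=b]$. Moreover, if the Sherali--Adams objective value is $v$, then $\E_x D(x)\inst(x)=v$, since $\inst$ is an average of arity-$2$ predicates and hence lies in the span of the $\ell$-juntas for $\ell\ge 3$.

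With this in hand I would argue as follows. For each $\delta>0$, invoke the CMM construction (or a refinement of it) to obtain, for arbitrarily large $m$ and hence an infinite family, an instance $\inst\in \ug^q_m$ with $\opt(\inst)\le 1/q+\delta$ together with a consistent family of local distributions at level $\ell = m^{\Omega(1)}$ whose Sherali--Adams objective is at least $1-\delta/2$ (one is free to shrink CMM's internal error parameter; if the raw construction survives only sub-polynomially many rounds, one first passes to a Unique Games gap surviving $m^{\Omega(1)}$ rounds, e.g.\ by tensoring). Put $f=1-\delta-\inst$, let $D$ be the pseudo-density furnished by the dictionary, and set $d=\ell-1$, so that $D$ is a $d$-local pseudo-density. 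Then $\E_x D(x)f(x)=(1-\delta)-v\le -\delta/2$, whereas $\E_{\mu^m} f = 1-\delta - \E_{\mu^m}\inst = 1-\delta-1/q$, a positive constant depending only on $\delta$ and $q$ (a uniformly random assignment satisfies each Unique Games constraint with probability $1/q$). Choosing $\e$ to be a small enough constant multiple of $\delta\big/\big(\lVert D\rVert_\infty(1-\delta-1/q)\big)$ yields $\E_x D(x)f(x) < -\e\lVert D\rVert_\infty \E_{\mu^m} f$, so $\juntadeg^{\e}(1-\delta-\inst)\ge 1+d=\ell\ge m^{\gamma}$ with $\gamma>0$ the exponent in $\ell=m^{\Omega(1)}$.

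The step that is not formal bookkeeping is forcing $\lVert D\rVert_\infty$ to be bounded by a quantity depending only on $\delta$ and $q$, so that the resulting $\e$ is a genuine constant uniform in $m$. The crude estimate $|\hat D(\alpha)|\le O(1)$ for each Fourier coefficient (obtained as in \pref{fact:pseudodist-fourier}, using the nonnegative junta tests $|1\pm\chi_\alpha|^2$) only gives $\lVert D\rVert_\infty\le (mq)^{O(d)}$, which is useless once $d=m^{\Omega(1)}$; the Fourier coefficients must conspire for the sup-norm to stay small. I therefore expect this to be the main obstacle, and I would handle it exactly as \pref{sec:psd-corr} handles Grigoriev's functional: rather than the generic $D$ read off from CMM's local distributions, one opens up their construction and checks that the local distributions may be chosen with density bounded by an absolute constant relative to uniform (``smoothness''), or re-runs their local-to-global argument while tracking the sup-norm of the pseudo-density it outputs. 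Controlling $\lVert D\rVert_\infty$ uniformly in $m$ is the crux; once it is in place, the argument above delivers \pref{thm:cmm}.
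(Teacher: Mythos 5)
Your proposal follows the same route as the paper. The paper states \pref{thm:cmm} with a citation to \cite{charikar2009integrality} and gives no detailed proof; the two sentences after the theorem simply note that CMM construct a $d$-round Sherali--Adams integrality gap with $d \asymp m^{\gamma}$ and that such gaps are equivalent to the existence of $d$-local pseudo-densities, referring to \cite{DBLP:conf/focs/ChanLRS13} for the dictionary. Your first paragraph is precisely that dictionary made explicit: the consistent family of local marginals determines the degree-$\le\ell$ part of a function $D$, nonnegativity of local marginals yields $\ell$-locality, consistency with the uniform one-variable marginal yields $\E_{\mu^m} D = 1$, and the pairing $\E D\,\inst$ recovers the SA objective value (here $\ell\ge 2$ already suffices for arity-$2$ predicates, not $\ell\ge 3$, but this is immaterial). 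The computation that $\E_{\mu^m} f = 1-\delta-1/q$ and $\E D f \le -\delta/2$ is also exactly what one needs.

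Where I would push back is on the claim that controlling $\lVert D\rVert_\infty$ uniformly in $m$ is the crux and that the crude bound $\lVert D\rVert_\infty\le (mq)^{O(d)}$ is ``useless.'' In the paper's only use of \pref{thm:cmm} (the corollary just below it), both $m$ and $\e$ are absorbed into the constant $c$: for a fixed target exponent $d$ one fixes $m$ once and for all with $m^{\gamma}\gg d$, and then $\e$ may depend on $m$, $\delta$, $q$ in whatever way it must. Plugging $\e \asymp (mq)^{-O(m^{\gamma})}$ into the bound of \pref{thm:nnr-main} still gives $\nnrank(M_n^f)\gtrsim (\e^2 n / m^2\log n)^{d'}\ge c\,n^{d}$ for $n$ large, and small $n$ is handled by shrinking $c$. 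So the crude Fourier-coefficient bound you derive and then discard in fact suffices. (The theorem's quantifiers ``there exist $\gamma,\e>0$, an $m\ge 1$'' are admittedly loose; the intended reading consistent with the corollary is that $\gamma$ is uniform in $m$, $m$ can be made as large as one likes, and $\e$ is allowed to depend on $m$.) Your proposal would give the stronger form in which $\e$ depends only on $\delta$ and $q$ — which would be a clean improvement — but establishing it is not needed for what the paper proves, and your instinct that it requires opening up CMM's construction to check smoothness of the local distributions is the right assessment of that extra cost.
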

In fact, the authors of \cite{charikar2009integrality}
construct a lower bound for the $d$-round Sherali-Adams LP 
relaxation (where $d \asymp m^{\gamma})$.  But there is an equivalence between such lower bounds
and the existence of a $d$-local pseudo-density; we refer to 
\cite{DBLP:conf/focs/ChanLRS13} for a discussion.
Applying \pref{thm:nnr-main} (with $X=[q]$ and $\mu$ as the uniform measure on $[q]$),
we obtain the following corollary.

Let $M^{n,\ug^q}_{c,s}$
denote the matrix with entries
\[
M^{n,\ug^q}_{c,s}(\inst,x) = c - \inst(x)\,,
\]
where $\inst$ runs over all $\ug^q_n$ instances with $\opt(\inst) \leq s$, and
all values $x \in [q]^n$.

\begin{corollary}
For every $t \geq 1$, $\delta > 0$, and $d \geq 1$,
there exists a constant $c > 0$ such that for all $n \geq 1$,
\[
\nnrank\left(M^{n,\ug^{q}}_{1-\delta, 1/q+\delta}\right) \geq c n^d\,,
\]
where $q=2^t$.
\end{corollary}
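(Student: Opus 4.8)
The plan is to feed the Unique Games Sherali--Adams integrality gap of \pref{thm:cmm} into the nonnegative-rank lower bound \pref{thm:nnr-main}, in exactly the way that \pref{thm:maincsp} and \pref{thm:csplogn} convert sos-degree lower bounds into psd-rank lower bounds. Fix $t \geq 1$, $q = 2^t$, $d \geq 1$, and a small $\delta > 0$ (we may assume $\delta < \tfrac14$; for larger $\delta$ the matrix $M^{n,\ug^q}_{1-\delta,1/q+\delta}$ need not be nonnegative and the statement is vacuous). Write $c = 1-\delta$ and $s = 1/q+\delta$.

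First I would invoke \pref{thm:cmm} --- or rather the underlying construction of \cite{charikar2009integrality}, whose number of Sherali--Adams rounds (hence instance size $m$, with rounds $\asymp m^\gamma$) may be taken arbitrarily large --- to produce, for our fixed $\delta$: constants $\gamma, \e > 0$, an integer $m$ chosen so that $m^\gamma - 1 \geq 2d$, and an instance $\inst \in \ug^q_m$ with $\opt(\inst) \leq 1/q+\delta$ and $\juntadeg^{\e}\big((1-\delta)-\inst;\, \mu^m\big) \geq m^\gamma$, where $\mu$ is uniform on $[q]$. Set $f \defeq (1-\delta) - \inst : [q]^m \to \R$. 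The next step is to record the elementary facts \pref{thm:nnr-main} needs: $f$ is nonnegative with $\|f\|_\infty = 1-\delta \leq 1$ (since $0 \leq \inst(x) \leq \opt(\inst) \leq 1/q+\delta$ and $\delta < \tfrac14$), and a uniformly random assignment satisfies each unique-games constraint with probability exactly $1/q$, so $\|f\|_1 = \E_{\mu^m} f = 1 - \delta - 1/q = \Omega_{q,\delta}(1)$; hence $\log(\|f\|_\infty/\|f\|_1) \leq K$ for a constant $K = K(q,\delta)$.

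Then I would use the submatrix embedding: for $n \geq m$, the matrix $M_n^f$ with $M_n^f(S,x) = f(x_S)$ ($S$ ranging over $m$-subsets of $[n]$, $x \in [q]^n$) is a row-submatrix of $M^{n,\ug^q}_{c,s}$, because the row indexed by $S$ is the row of $M^{n,\ug^q}_{c,s}$ belonging to the instance that plants $\inst$ on the variables of $S$ (with no other constraints), whose optimum is $\opt(\inst) \leq s$; thus $\nnrank(M^{n,\ug^q}_{c,s}) \geq \nnrank(M_n^f)$. For $n \geq 2m$, \pref{thm:nnr-main} with $X = [q]$ and $\mu$ uniform yields
\[
\nnrank(M_n^f) \geq \left(\frac{c_0\, \e^2\, n}{m^2\,(d^* \log n + K)}\right)^{d^*},
\]
where $d^* + 1 = \juntadeg^{\e}(f;\mu^m) \geq m^\gamma$, so $d^* \geq m^\gamma - 1 \geq 2d$, and $c_0$ is the universal constant of \pref{thm:nnr-main}. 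Since $m, \e, d^*, K$ are now fixed, for all $n$ beyond a threshold $n_0 = n_0(q,\delta,d)$ the base in parentheses exceeds $\sqrt n$, giving $\nnrank(M_n^f) \geq n^{d^*/2} \geq n^d$; and for the finitely many $n < \max(2m, n_0)$ the inequality $\nnrank(M^{n,\ug^q}_{c,s}) \geq c\, n^d$ holds after shrinking the constant $c > 0$ (the matrix is nonzero, having the all-positive planted-instance row). This proves the corollary.

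The only real subtlety --- the main obstacle, such as it is --- is the parameter bookkeeping between the two black boxes: one must take the Sherali--Adams round count (equivalently the instance size $m$) large enough that the guaranteed junta degree $d^* \geq m^\gamma - 1$ exceeds the target exponent $d$ by a constant factor, so that the extra $\log n$ loss in the denominator of \pref{thm:nnr-main} is absorbed and one still obtains a clean $n^d$ rather than $(n/\log n)^d$. Everything else --- nonnegativity of $f$, the exact random-assignment value $1/q$ controlling $\|f\|_1$, and the submatrix embedding --- is routine and parallels the CSP arguments already in the paper.
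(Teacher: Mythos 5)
Your proof is correct and takes exactly the route the paper intends; the paper itself gives only the one-line instruction ``apply \pref{thm:nnr-main} with $X=[q]$ and $\mu$ uniform,'' so you have filled in the proof the paper leaves implicit. All the ingredients check out: $f=(1-\delta)-\inst$ is nonnegative (using $\inst(x)\leq\opt(\inst)\leq 1/q+\delta$), $\|f\|_1=1-\delta-1/q=\Omega_{q,\delta}(1)$ because a uniformly random assignment satisfies each unique-games constraint with probability exactly $1/q$, the planted-on-$S$ instance $\inst_S$ satisfies $\opt(\inst_S)=\opt(\inst)\leq s$ and gives $M_n^f$ as a row-submatrix of $M^{n,\ug^q}_{c,s}$, and the $\sqrt{n}$ trick (base of the bound $\gtrsim n/\log n \geq \sqrt n$ for large $n$, then $d^*\geq 2d$) cleanly upgrades the $(n/\log n)^{d^*}$ bound of \pref{thm:nnr-main} to $n^d$, mirroring how \pref{thm:csplogn} handles the analogous log loss in the SDP setting.

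The one place where you go beyond a literal reading of the paper is your observation that \pref{thm:cmm} as stated produces a \emph{single} $m$ for each $\delta$, whereas the argument needs $m$ large enough that $m^\gamma-1\geq 2d$; you are right that this requires appealing to the underlying construction of \cite{charikar2009integrality}, which works for all sufficiently large $m$ with a fixed $\gamma$. This is a genuine (if minor) imprecision in the paper's statement of \pref{thm:cmm}, and your flagging it is appropriate. The small-$n$ handling by shrinking the constant is also fine; the only degenerate case is $n$ so small that no instance has $\opt\leq 1/q+\delta$ (e.g.\ $n=1$), but this is a vacuous edge case the paper also glosses over.
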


In the language of 
\cite{DBLP:conf/focs/ChanLRS13} (see also \pref{sec:maxcsp} for related
definitions in the SDP setting), this shows that polynomial-size families of
LP relaxations cannot achieve a $(1-\delta,\frac{1}{q}+\delta)$-approximation
for the Unique Games problem.

\ifnum\stocmode=0

\addreferencesection
\bibliographystyle{alpha}
\bibliography{bib/mr,bib/dblp,bib/scholar,bib/lpsize}
\fi





\end{document}